\documentclass[colorlinks=true,citecolor=blue,linkcolor=magenta,a4paper,UKenglish, autoref]{lipics-v2019}
%This is a template for producing LIPIcs articles. 
%See lipics-manual.pdf for further information.
%for A4 paper format use option "a4paper", for US-letter use option "letterpaper"
%for british hyphenation rules use option "UKenglish", for american hyphenation rules use option "USenglish"
%for section-numbered lemmas etc., use "numberwithinsect"
%for enabling cleveref support, use "cleveref"
%for enabling autoref support, use "autoref"
%for anonymousing the authors (e.g. for double-blind review), add "anonymous"
%for enabling thm-restate support, use "thm-restate"

%\graphicspath{{./graphics/}}%helpful if your graphic files are in another directory

\bibliographystyle{plainurl}% the mandatory bibstyle

\title{Dynamic Byzantine Reliable Broadcast \\\normalsize(Technical Report) \thanks{Author names appear in alphabetical order. This is an extended version of a conference article, appearing in the proceedings of the 24th Int. Conference on Principles of Distributed Systems (OPODIS 2020).
This work has been supported in part by the Interchain Foundation, Cross-Chain Validation project.}}

\titlerunning{DBRB} %TODO optional, please use if title is longer than one line

% \author{
% Daniel Collins, Rachid Guerraoui, Jovan Komatovic,\authorcr Matteo Monti, and Athanasios Xygkis\\EPFL \and Matej Pavlovic\\IBM Research
%     \and Petr Kuznetsov\\LTCI, T\'el\'ecom Paris\\Institut Polytechnique Paris
%     \and Yvonne-Anne Pignolet\\DFINITY
%     \and Dragos-Adrian Seredinschi\\Informal Systems
%     \and Andrei Tonkikh\\National Research University\\Higher School of Economics
% }

 \author{Rachid Guerraoui}{EPFL}{}{}{}%TODO mandatory, please use full name; only 1 author per \author macro; first two parameters are mandatory, other parameters can be empty. Please provide at least the name of the affiliation and the country. The full address is optional
\author{Jovan Komatovic}{EPFL}{}{}{}
\author{Petr Kuznetsov}{LTCI, T\'el\'ecom Paris\\Institut Polytechnique Paris}{}{}{}
\author{Yvonne-Anne Pignolet}{DFINITY}{}{}{}
\author{Dragos-Adrian Seredinschi}{Informal Systems}{}{}{}
\author{Andrei Tonkikh}{National Research University\\Higher School of Economics}{}{}{}

\authorrunning{R. Guerraoui, J. Komatovic, P. Kuznetsov, Y.-A. Pignolet, D.-A. Seredinschi, A. Tonkikh} %TODO mandatory. First: Use abbreviated first/middle names. Second (only in severe cases): Use first author plus 'et al.'

\Copyright{Rachid Guerraoui, Jovan Komatovic, Petr Kuznetsov, Yvonne-Anne Pignolet, Dragos-Adrian Seredinschi, Andrei Tonkikh} %TODO mandatory, please use full first names. LIPIcs license is "CC-BY";  http://creativecommons.org/licenses/by/3.0/

% \ccsdesc[100]{Theory of computation $\to$ Distributed algorithms} %TODO mandatory: Please choose ACM 2012 classifications from https://dl.acm.org/ccs/ccs_flat.cfm 
\ccsdesc{Theory of computation~Distributed algorithms}

\keywords{Byzantine reliable broadcast, deterministic distributed algorithms, dynamic distributed systems} %TODO mandatory; please add comma-separated list of keywords

% \category{} %optional, e.g. invited paper

% \relatedversion{} %optional, e.g. full version hosted on arXiv, HAL, or other respository/website
%\relatedversion{A full version of the paper is available at \url{...}.}

% \supplement{}%optional, e.g. related research data, source code, ... hosted on a repository like zenodo, figshare, GitHub, ...

%\funding{(Optional) general funding statement \dots}%optional, to capture a funding statement, which applies to all authors. Please enter author specific funding statements as fifth argument of the \author macro.

% \acknowledgements{I want to thank \dots}%optional

\nolinenumbers %uncomment to disable line numbering

\hideLIPIcs  %uncomment to remove references to LIPIcs series (logo, DOI, ...), e.g. when preparing a pre-final version to be uploaded to arXiv or another public repository

%Editor-only macros:: begin (do not touch as author)%%%%%%%%%%%%%%%%%%%%%%%%%%%%%%%%%%
\EventEditors{----}
\EventNoEds{2}
\EventLongTitle{OPODIS 2020}
\EventShortTitle{OPODIS 2020}
\EventAcronym{OPODIS}
\EventYear{2016}
\EventDate{December 24--27, 2016}
\EventLocation{Little Whinging, United Kingdom}
\EventLogo{}
\SeriesVolume{42}
\ArticleNo{23}
%%%%%%%%%%%%%%%%%%%%%%%%%%%%%%%%%%%%%%%%%%%%%%%%%%%%%%

\usepackage{xspace}

\newcommand{\dbrbBroadcast}[1]{\textsc{dbrb-broadcast}($#1$)}
\newcommand{\dbrbBroadcastAlone}{\textsc{dbrb-broadcast}{}}

\newcommand{\dbrbDeliver}[1]{\textsc{dbrb-deliver}($#1$)}

\newcommand{\dbrbJoin}{\textsc{dbrb-join}{}}
\newcommand{\dbrbLeave}{\textsc{dbrb-leave}{}}
\newcommand{\sdbrbLeave}{\textsc{sdbrb-leave}{}}

\newcommand{\brb}{\textsc{brb}{}}

\newcommand{\operation}[2]{$\langle$\textsc{#1}, \ensuremath{#2\rangle}\xspace}
\newcommand{\msg}[1]{\text{\textsc{#1}}}

\newcommand{\reliablbcast}{\textsc{brb}\xspace}
\newcommand{\name}{\textsc{dbrb}\xspace}

%%%%% for chunks of text pending review
\usepackage{xcolor}
\usepackage{framed}
\definecolor{lightgray}{gray}{0.90}
\renewenvironment{leftbar}[1][\hsize]
{%
\MakeFramed{\hsize#1\advance\hsize-\width\FrameRestore}%
}
{\endMakeFramed}

%%

% \newcommand{\needsrev}[1]{#1}

%%%%% for chunks of text pending review

% \usepackage{tikz}
% \newcommand*\account[1]{\tikz[baseline=(char.base)]{
%             \node[shape=circle,draw,inner sep=1pt] (char) {\texttt{#1}};}}

% \newcommand*\replica[1]{\tikz[baseline=(char.base)]{\node[shape=rectangle,inner sep=1pt,draw,fill=black,text=white] (char) {\texttt{#1}};}}

%%%%%%%%%%%%%%%%%%%%%%%%%%%%%%%%%%%%%%%%%%%%%
\newcommand\ignore[1]{}
%%%%%%%%%%%%%%%%%%%%%%%%%%%%%%%%%%%%%%%%%%%%%

% \newtheorem{assumption}{Assumption}
\usepackage{amsthm}
\usepackage[utf8]{inputenc}
\usepackage{algorithm}
\usepackage{algpseudocode}
\usepackage{caption}

\usepackage{algorithmicx}
\usepackage{algorithm} 
\usepackage{xifthen}
\usepackage{graphicx}

\newcommand{\ap}[1]{{\left\langle#1\right\rangle}}

\renewcommand{\algorithmiccomment}[1]{\bgroup\hfill//~#1\egroup}

% Algorithms

\newcommand{\event}[3]{
    \ifthenelse
    {\equal{#3}{}}
    {\ap{#1.\textrm{#2}}}
    {\ap{#1.\textrm{#2} \mid #3}}
}

\algnewcommand\Instance[2]{\State #1, \textbf{instance} #2}
\algnewcommand\InstanceSystem[3]{\State #1, \textbf{instance} #2, \textbf{system} #3}
\algnewcommand\Trigger[3]{\State \textbf{trigger} $\event{#1}{#2}{#3}$}

\newcommand\algvspace{\vspace{0.6\baselineskip}}

\algsetblockdefx[Implements]{Implements}{EndImplements}{}{}{\textbf{Implements:}}{}
\algtext*{EndImplements}

\algsetblockdefx[Uses]{Uses}{EndUses}{}{}{\textbf{Uses:}}{}
\algtext*{EndUses}

\algsetblockdefx[Parameters]{Parameters}{EndParameters}{}{}{\textbf{Parameters:}}{}
\algtext*{EndParameters}

\algsetblockdefx[Upon]{Upon}{EndUpon}{}{}[3]{\textbf{upon event} $\event{#1}{#2}{#3}$ \textbf{do}}{}
\algtext*{EndUpon}

\algsetblockdefx[UponRecipt]{UponReceipt}{EndUponReceipt}{}{}[2]{\textbf{upon receipt of} #1 from $#2$}{}
\algtext*{EndUponReceipt}

\algsetblockdefx[UponReciptFromST]{UponReceiptFromST}{EndUponReceiptFromST}{}{}[3]{\textbf{upon receipt of} #1 from $#2$ \textbf{such that} $#3$}{}
\algtext*{EndUponReceiptFromST}

\algsetblockdefx[UponReciptFromSTLong]{UponReceiptFromSTLong}{EndUponReceiptFromSTLong}{}{}[3]{\textbf{upon receipt of} #1 from $#2$ \Statex \textbf{such that} $#3$}{}
\algtext*{EndUponReceiptFromSTLong}

\algsetblockdefx[UponReciptST]{UponReceiptST}{EndUponReceiptST}{}{}[4]{\textbf{upon receipt of} $\langle #1, #2 \rangle$ from $#3$ \textbf{such that} $#4$}{}
\algtext*{EndUponReceiptST}

\algsetblockdefx[UponReciptS]{UponReceiptS}{EndUponReceiptS}{}{}[4]{\textbf{upon receipt of} $\langle #1, #2 \rangle_{\sigma_{#3}}$ from $#4$}{}
\algtext*{EndUponReceiptS}

\algsetblockdefx[UponReciptCond]{UponReceiptCond}{EndUponReceiptCond}{}{}[1]{\textbf{upon receipt of} #1 from $v.q$ processes in $v$}{}
%  \textbf{such that} $#2$
\algtext*{EndUponReceiptCond}

\algsetblockdefx[UponReciptCondST]{UponReceiptCondST}{EndUponReceiptCondST}{}{}[2]{\textbf{upon receipt of} #1 from $v.q$ processes in $v$ \textbf{such that} $#2$}{}
\algtext*{EndUponReceiptCondST}

\algsetblockdefx[UponReciptCondS]{UponReceiptCondS}{EndUponReceiptCondS}{}{}[2]{\textbf{upon receipt of} $\langle #1, #2 \rangle_{\sigma_{SK_p}}$ from $v.q$ processes in $v$}{}
\algtext*{EndUponReceiptCondS}

\algsetblockdefx[UponExists]{UponExists}{EndUponExists}{}{}[2]{\textbf{upon exists} $#1$ \textbf{such that} $#2$ \textbf{do}}{}
\algtext*{EndUponExists}

\algsetblockdefx[UponExistsLong]{UponExistsLong}{EndUponExistsLong}{}{}[2]{\textbf{upon exists} $#1$ \Statex \textbf{\;such that} $#2$ \textbf{do}}{}
\algtext*{EndUponExistsLong}

\algsetblockdefx[UponCondition]{UponCondition}{EndUponCondition}{}{}[1]{\textbf{upon} \emph{#1} \textbf{do}}{}
\algtext*{EndUponCondition}

\algsetblockdefx[Procedure]{Procedure}{EndProcedure}{}{}[2]{\textbf{procedure} \emph{#1}($#2$)}{}
\algtext*{EndProcedure}

\algsetblockdefx[Function]{Function}{EndFunction}{}{}[2]{\textbf{function} \emph{#1}($#2$)}{}
\algtext*{EndFunction}

\algsetblockdefx[Variables]{Variables}{EndVariables}{}{}[0]{\textbf{variables:}}{}
\algtext*{EndVariables}

\algnewcommand{\IfThen}[2]{% \IfThen{<if>}{<then>}
  \State \algorithmicif\ #1\ \algorithmicthen\ #2}

\algsetblockdefx[IfExists]{IfExists}{EndIfExists}{}{}[2]{\textbf{if exists} $#1$ \textbf{such that} $#2$}{}
\algsetblockdefx[While]{While}{EndWhile}{}{}[1]{\textbf{while} $#1$ \textbf{do}}{\textbf{end while}}
\algsetblockdefx[NTimes]{NTimes}{EndNTimes}{}{}[1]{\textbf{for} $#1$ \textbf{times do}}{\textbf{end for}}
\algsetblockdefx[For]{For}{EndFor}{}{}[3]{\textbf{for} $#1 \in #2..#3$ \textbf{do}}{\textbf{end for}}
\algsetblockdefx[ForAll]{ForAll}{EndForAll}{}{}[2]{\textbf{for all} $#1 \in #2$ \textbf{do}}{\textbf{end for}}

\algrenewcommand{\algorithmiccomment}[1]{\hskip3em$\triangleright$ #1}
\algnewcommand{\LineComment}[1]{\State \(\triangleright\) #1}
\usepackage{amssymb,amsmath,amsthm}
\newtheorem{assumption}{Assumption}
\newtheorem{property}{Property}

\usepackage{cleveref}
\crefname{section}{\S}{\S\S}
\Crefname{section}{\S}{\S\S}
\crefformat{section}{\S#2#1#3}

\Crefname{assumption}{Assumption}{Assumptions}
\Crefname{property}{Property}{Properties}

\usepackage{paralist}
\usepackage[subtle]{savetrees}

\begin{document}

\maketitle

%!TEX root = ../main.tex

\begin{abstract}

\vspace*{-1mm}
Reliable broadcast is a communication primitive guaranteeing, intuitively,  that all processes in a distributed system deliver the same set of messages.
The reason why this primitive is appealing is twofold:
\emph{(i)} we can implement it deterministically in a completely asynchronous environment, 
unlike stronger primitives like consensus and total-order broadcast, and yet 
\emph{(ii)} reliable broadcast is powerful enough to implement important applications like payment systems.

The problem we tackle in this paper is that of \emph{dynamic} reliable broadcast, 
i.e., enabling processes to join or leave the system.
This property is desirable for long-lived applications 
(aiming to be highly available), yet has been precluded in previous asynchronous reliable broadcast protocols.
We study this property in a general adversarial (i.e., Byzantine) environment.

We introduce the first specification of a dynamic Byzantine reliable broadcast (\name) primitive that is amenable to an asynchronous implementation.
We then present an algorithm implementing this specification in an asynchronous network. 
Our \name algorithm ensures that if any correct process in the system broadcasts a message, 
then every correct process delivers that message unless it leaves the system. 
Moreover, if a correct process delivers a message, then every correct process that has not expressed its will to leave the system delivers that message.
We assume that more than $2/3$ of processes in the system are correct at all times, which is tight in our context.

We also show that if only one process in the system can fail---and it can fail only by crashing---then it is impossible to implement a stronger primitive, ensuring that if any correct process in the system 
%[[PK simpler
broadcasts or delivers
%broadcasts (resp., delivers) 
%]]
a message, then every correct process in the system delivers that message---including those that leave.
\end{abstract}

%!TEX root = ../main.tex

\section{Introduction}

\vspace*{-2mm}
Networks typically offer a reliable form of communication channels: TCP. 
As an abstraction, these channels ensure that if neither the sender nor the destination of a message fail, then the message is eventually delivered. 
Essentially, this abstraction hides the unreliability of the underlying IP layer, so the user of a TCP channel is unaware of the lost messages.

% the very fact that certain messages might be lost when sent through the 

Yet, for many applications, TCP is not reliable enough.
Indeed, think of the situation where a message needs to be sent to all processes of a distributed system.
If the sender does not fail, TCP will do the job; but otherwise, the message might reach only a strict subset of processes.
This can be problematic for certain applications, such as a financial notification service when processes subscribe to information published by other processes.
For fairness reasons, one might want to ensure that if the sender fails, either \emph{all or no process} delivers that message.
Moreover, if the correct processes choose to deliver, they must deliver the same message, even when the sender is Byzantine.     
We talk, therefore, about \emph{reliable broadcast}.
Such a primitive does not ensure that messages are delivered in the same total order, but 
simply in the ``all-or-nothing'' manner. 

%It turns out that reliable broadcast is very appealing. 
Reliable broadcast is handy for many applications, including, for example, 
cryptocurrencies. Indeed, in contrast to what was 
implicitly considered since Nakamoto's original paper \cite{nakamotobitcoin}, there 
is no need to ensure consensus on the ordering of messages, i.e., 
to totally order messages, if the goal is to perform secure payments. 
A reliable broadcast scheme suffices~\cite{guerraoui2019consensus}. 

Reliable broadcast is also attractive because, unlike stronger primitives 
such as total order broadcast and consensus, it can
be implemented deterministically in a completely asynchronous environment \cite{bracha1987asynchronous}.
The basic idea uses a quorum of correct processes, and makes that 
quorum responsible for ensuring that a message is transmitted to all 
processes if the original sender of the message fails. 
If a message does not reach the quorum, it will not be delivered by any process.
% In the case where 
% the message did not reach the quorum, it is not delivered by any process.
It is important to notice at this point a terminology difference 
between the act of ``receiving'' and the act of ``delivering'' a message. 
A process indeed might ``receive'' a message $m$, but not necessarily ``deliver'' $m$ to its application until it is confident that the ``all-or-nothing'' property of the reliable broadcast is ensured.

A closer look at prior asynchronous implementations of reliable broadcast reveals, however, a gap between 
theory and practice.
% All these implementations But a closer look at previous work on reliable broadcast algorithm reveals a % a gap, usually hidden under the hood. 
The implementations described so far all assume a \emph{static} system.
Essentially, the set of processes in the system remains the same, except that some of them might fail.
The ability of a process to join or leave the system, which is very desirable 
in a long-lived application supposed to be highly available, is precluded in all asynchronous reliable broadcast protocols published so far. 

% \pending{@Jovan: can you please cite DynaStore and the related work considering dynamic systems? State in a couple of sentences how we are different from them.}

% Some very interesting approaches have been proposed to provide a shared memory abstraction  in an asynchronous and dynamic message passing environment (i.e., without consensus)~\cite{aguilera2011dynamic,alchieri2016efficient,gafni2015elastic,jehl2015smartmerge}.
% These include \cite{aguilera2011dynamic}, \cite{jehl2015smartmerge}, \cite{gafni2015elastic} and \cite{alchieri2016efficient}. We take inspiration from the work of Alchieri~\emph{et. al.}~\cite{alchieri2016efficient}, and use similar algorithmic and theoretical approaches. All related prior work, however, considers a crash-stop failure model~\cite{aguilera2011dynamic,alchieri2016efficient,gafni2015elastic,jehl2015smartmerge}. Our contributions are twofold: \textbf{(1)} we tackle the problem of a dynamic broadcasting system, instead of shared memory emulation, and \textbf{(2)} consider an arbitrary (i.e., Byzantine) failure model, instead of crash failures.
% We are highly inspired by \cite{alchieri2016efficient} which algorithmic and theoretical approach we adapt to an arbitrary failure model.

% We give in this paper the first specification of a dynamic Byzantine reliable broadcast primitive that allows for joins and leaves and that is amenable to an asynchronous implementation.
In this paper, we introduce the first specification of a \emph{dynamic} Byzantine reliable broadcast (\name) primitive that is amenable to an asynchronous implementation.
The specification allows any process outside the broadcast system to join;
%and will eventually join if the process is correct; 
any process that is inside the system can ask to leave.
%, and will eventually leave.
Processes inside the system can broadcast and deliver messages, whereas processes outside the system  cannot.
Our specification is intended for an asynchronous system for it does not require the processes to agree on the system membership. 
Therefore, our specification does not build on top of a group membership scheme, as does the classical \emph{view synchrony} abstraction \cite{chockler2001group}.

Our asynchronous \name implementation ensures that if any correct process in the system broadcasts a message, then eventually every correct process, unless it asks to leave the system, delivers that message.
Moreover, if any correct process delivers a message, then every correct process, if it has not asked to leave prior to the delivery, delivers that message.
%Our implementation is optimal according to two dimensions.
% The main challenge we have to tackle in implementing \name is that of \emph{dynamic quorums}.
% Correct processes within the system need to know which quorums it can trust.
% In an asynchronous communication model we assume this poses a big challenge since it is not guaranteed that correct processes will always be ``up-to-date" with respect to the membership of the system.
% Asynchronous communication model we assume poses the uncertainty
% \needsrev{The reason why it is impossible to ensure totality in a dynamic asynchronous model is that, intuitively, a delivery event at a process can be concurrent with a leave request of some other process -- yet the very asynchronous nature of the system makes it impossible to totally order these two events.}
% % \needsrev{Jovan: Yes, exactly that!}
% \needsrev{We organize the rest of this paper as follows.
% We first describe the system model we consider and give a specification of \name (\Cref{sec:model}).
% Next, we provide high level insights about our algorithm implementing \name (\Cref{sec:overview}).
% We then present the full \name algorithm (\Cref{sec:algorithm}) and discuss correctness (\Cref{sec:correctness-informal}).
% Finally, we conclude this paper with a discussion about related work (\Cref{sec:conclusions}).
% In the optional appendix we provide a more thorough treatment of correctness and optimality (\Cref{sec:proof,sec:impossibility}).
% }
% and finally conclude this paper (\Cref{sec:conclusions}).
The main technical difficulty addressed by our algorithm is to combine asynchrony and dynamic membership, which makes it impossible for processes to agree on the exact membership.
% in \name to agree on the membership.
Two key insights enable us to face this challenge.
First, starting from a known membership set at system bootstrap time, we construct a sequence of changes to this set; at any time, there is a majority of processes that record these changes. 
% and the transition into 
% towards a new membership set.
Based on this sequence, processes can determine the validity of messages.
Second, before transitioning to a new membership, correct processes exchange their current state with respect to ``in-flight'' broadcast messages and membership changes.
This prevents equivocation and conflicts.
% Third, upon leaving the system, a correct process helps disseminating ``in-flight'' broadcast messages, if any.
% This ensures the ``all-or-nothing" property our \name algorithm provides.

Our algorithm assumes that, at any point in time, more than 2/3 of the processes inside the broadcast system are correct, which is tight.
Moreover, we show that the ``all-or-nothing'' property we ensure is, in some sense, maximal.
% We show that the stronger property that is usually considered in static systems, sometimes called ``uniformity'' or ``totality'' is impossible in a dynamic asynchronous system.
% \needsrev{Besides that, we prove that in it is possible when a correct process broadcasts a message, it is possible to guarantee that only processes that do not leave the system deliver this message.}
More precisely, we prove that in an asynchronous system, even if only one process in the system can fail, and it can merely fail by crashing, then it is impossible to implement a stronger property, ensuring that if any correct process in the system broadcasts (resp., delivers) a message, then every correct process in the system delivers that message, including those that are willing to leave.

The paper is organized as follows.
In~\Cref{sec:model}, we describe our system model and introduce the specification of \name. 
In~\Cref{sec:overview}, we overview the structure of our algorithm. 
In~\Cref{sec:algorithm}, we describe our implementation, and in~\Cref{sec:correctness-informal}, we argue its correctness. 
We conclude in~\Cref{sec:conclusions} with a discussion of related and future work.
Detailed proofs are delegated to the optional appendix.
%!TEX root = ../main.tex
% \newpage
\vspace*{-4mm}
\section{Model and Specification}
\label{sec:model}

\vspace*{-3mm}
We describe here our system model (\Cref{sec:universe_processes}) and specify our \name primitive (\Cref{sec:spec,sec:assumptions,sec:properties}).
% , and contrast our work in this paper with prior results in the literature (\Cref{sec:prior-work}).

\vspace*{-5mm}
\subsection{A Universe of Asynchronous Processes}
\label{sec:universe_processes}

\vspace*{-2mm}
We consider a universe $\mathcal{U}$
of processes, subject to \emph{Byzantine} failures: a faulty process may arbitrarily deviate from the algorithm it is assigned.  
Processes that are not subject to failures are \emph{correct}.
%We assume $\mathcal{U}$ is finite, and that e
%[[PK simpler?
We assume an asymmetric cryptographic system.   
%Each process owns a public/private key-pair.
Correct processes communicate with signed messages: prior to sending a message $m$ to a process $q$, a process $p$ signs $m$, labeled $\langle m\rangle_{\sigma_{p}}$. 
Upon receiving the message, $q$ can verify its authenticity and use it to prove its origin to others (non-repudiation).
%To avoid complex public key infrastructure, we assume that processes' public keys are implicit in their identifiers.
To simplify presentation, we omit the signature-related notation and, thus, whenever we write $m$, the identity of sender $p$ and the signature are implicit and correct processes only consider messages, whether received directly or relayed by other processes, if they are equipped with valid signatures.
%]]
We also use the terms ``send'' and ``disseminate'' to differentiate the points in our algorithm when a process sends a message, resp., to a single or to many destinations.

The system $\mathcal{U}$ is \emph{asynchronous}: we make no assumptions on communication delays or relative speeds of the processes.We assume that communication is \emph{reliable}, i.e., every message sent by a correct process to a correct process is eventually received.
To describe the events that occur to an external observer and prove the correctness of the protocol, we assume a global notion of time, outside the control of the processes (not used in the protocol implementation).
We consider a subset of $\mathcal{U}$ called the {\em broadcast system}.
We discuss below how processes join or leave the broadcast system.
% once their join operation complete and \emph{stop} being participants once they invoke the leave operation of \name, as we discuss below. 
% We discuss these operations below. 
% The composition of this set changes over time and we denote it by function $S: T \to 2^{\mathcal{U}}$, where $T$ represents time.
% % and $\upsilon$ represents the set of all subsets of $\mathcal{U}$.
% Processes in $S(t)$ can \dbrbBroadcastAlone{} and \dbrbDeliverAlone{} messages in a reliable manner, according to properties we will describe below.
% We say that a process is a {\em participant in the broadcast system}, or simply {\em participant}, at time $t$, if it is a member of $S(t)$.
% If $p \notin S(t)$, we say that $p$ is \emph{not a participant in the broadcast system} at time $t$.
% We shall explain this in more details below.

% \needsrev{Jovan: We often use "participant" just to ``tag" a process that is participant at ``current" time. Should we be more careful with respect to this? I think we should.}
% ADI: There are latex errors in the following files that need to be fixed.
%!TEX root = ../main.tex

\vspace*{-5mm}
\subsection{DBRB Interface}
\label{sec:spec}

\vspace*{-2mm}
Our \name primitive exposes an interface with three operations and one callback:
\begin{compactenum}
    \item \dbrbJoin{}: used by a process outside the system to \emph{join}. 
    %and is a participant in the system.
    % If this invocation returns at time $t$, we say $p$ joined the system and this means $p \in S(t)$. 
    % Between the time $p$ has invoked \dbrbJoin{} and has returned from the invocation, we say that $p$'s join is \emph{pending}.
    \item \dbrbLeave{}: used by a process inside the system to \emph{leave}.
    %When a participant invokes this operation, it is not considered a participant in the system anymore.
    %When this invocation returns, we say that $p$ \emph{leaves} the system.
    % If process $p$ returns from this invocation at time $t$, we say that $p$ left the system and $p \notin S(t)$. 
    \item \dbrbBroadcast{m}: used by a process inside the system to broadcast a message $m$.
    %to participants in the broadcast system.
    \item \dbrbDeliver{m}: this callback is triggered to handle the delivery of a message $m$.
    % \item \textbf{\dbrbJoin{}}: a process $p$ invokes this operation to become a participant in the broadcast system. If this invocation returns at time $t$, we say $p$ joined the system and this means $p \in S(t)$. 
    % % Between the time $p$ has invoked \dbrbJoin{} and has returned from the invocation, we say that $p$'s join is \emph{pending}.
    % \item \textbf{\dbrbLeave{}}: a participant process $p$ invokes this operation to leave the broadcast system. If process $p$ returns from this invocation at time $t$, we say that $p$ left the system and $p \notin S(t)$. 
    % Between the time $p$ has invoked \dbrbLeave{} and has returned from the invocation, we say that $p$'s leave is \emph{pending}.
\end{compactenum}
If a process is in the system initially, or if it has returned from the invocation of a {\dbrbJoin{}} call, we say that it has \emph{joined} the system.
Furthermore, it is considered \emph{participating} (or, simply, a \emph{participant}) if it has not yet invoked {\dbrbLeave{}}.
When the invocation of {\dbrbLeave{}} returns, we say that the process \emph{leaves} the system.
Note that in the interval between the invocation and the response of a {\dbrbLeave{}} call, the process is no longer participating, but has not yet left the system.   

The following rules (illustrated in \Cref{fig:process-states}) govern the behavior of correct processes: 
% \begin{itemize}
    % \item
(i) a \dbrbJoin{} operation can only be invoked if the process is not   participating; moreover, we assume that \dbrbJoin{} is invoked at most once;
(ii) only a participating process can invoke a \dbrbBroadcast{m} operation;
% at time $t$ only if $s \in S(t)$.
    % \item
% $p \notin S(t)$. 
(iii) a \dbrbDeliver{m} callback can be triggered only if a process has previously joined but has not yet left the system;
% t time $t$ only if $p \in S(t)$.
    % \item
    % \item
(iv) a \dbrbLeave{} operation can only be invoked by a participating process.
% at time $t$  if $p \in S(t)$. 
%[[PK redundant: you can only become a participant once.
%Moreover, we assume $p$ invokes \dbrbLeave{} at most once.
%]]
% \Cref{fig:process-states} presents the state transition diagram for correct processes.

\begin{figure}[htp]
    \centering
    \includegraphics[width=.5\columnwidth]{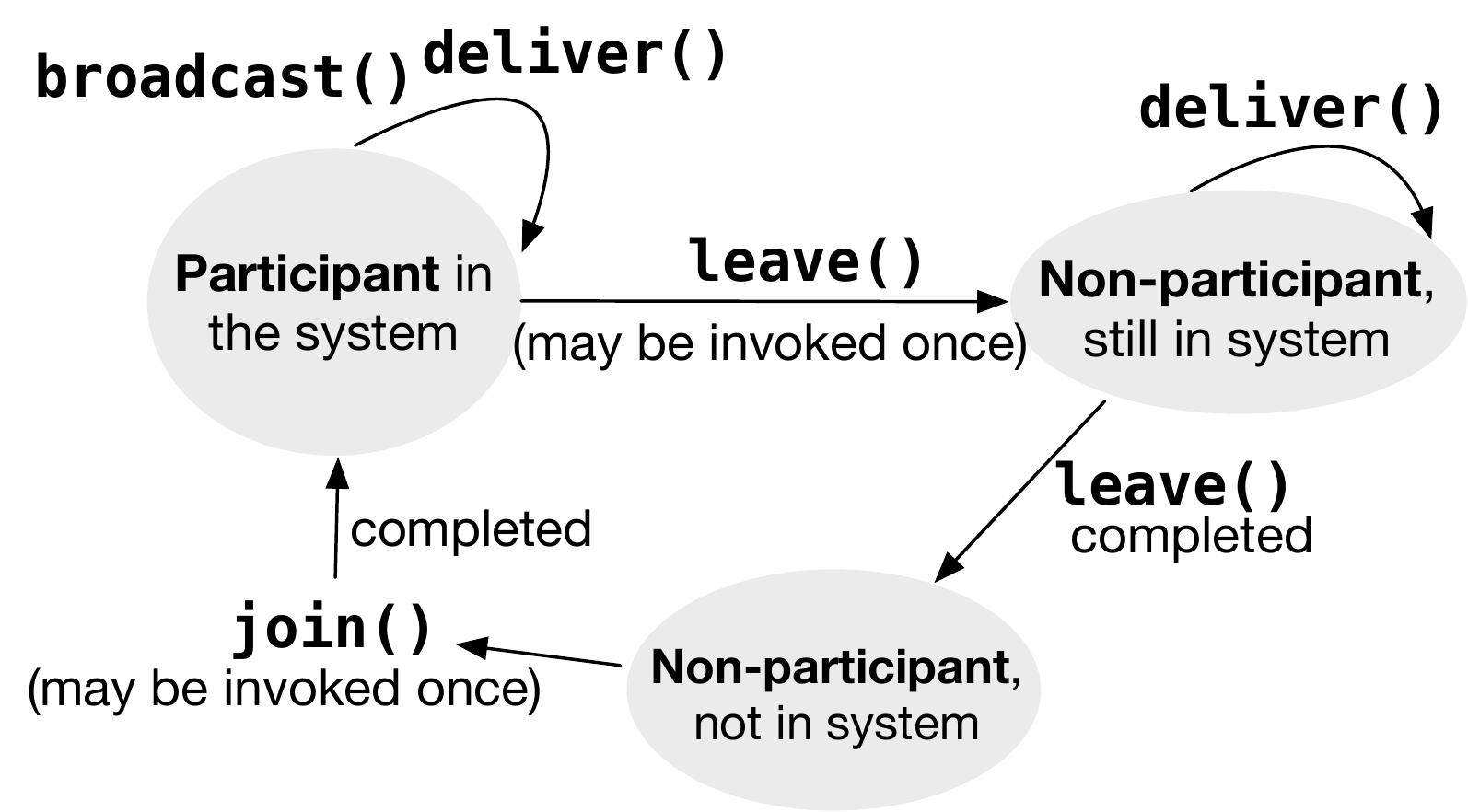}
    \vspace*{-3mm}
    \caption{State transition diagram for correct processes. 
    % \needsrev{TODO: This should be referenced and briefly described in text below.}
    }
    \label{fig:process-states}
    \vspace{-.6cm}
\end{figure}

\vspace*{-2mm}
\subsection{Standard Assumptions}
\label{sec:assumptions}

% The following rules govern the behavior of correct processes.
% % \begin{itemize}
%     % \item
% (i) A correct process $s$ can invoke a \dbrbBroadcast{m} operation only if $s$ is a participant in the broadcast system.
% % at time $t$ only if $s \in S(t)$.
%     % \item
% (ii) A correct process $p$ can trigger a \dbrbDeliver{m} operation only if $p$ did not leave the system.
% % t time $t$ only if $p \in S(t)$.
%     % \item
% (iii) A correct process $p$ can invoke a \dbrbJoin{} operation only if $p$ is not a participant in the broadcast system.
% % $p \notin S(t)$. 
% Moreover, we assume $p$ invokes \dbrbJoin{} at most once.
%     % \item
% (iv) Finally, a correct process $p$ can invoke a \dbrbLeave{} operation only if $p$ is a participant in the broadcast system.
% % at time $t$  if $p \in S(t)$. 
% Moreover, we assume $p$ invokes \dbrbLeave{} at most once.
% \end{itemize}
\vspace*{-2mm}
We make two  standard assumptions in asynchronous reconfiguration protocols~\cite{aguilera2011dynamic,alchieri2016efficient,baldoni2009implementing,spiegelman2017liveness}, which we restate below for the sake of completeness.

\vspace*{-2mm}
\begin{assumption}[Finite number of reconfiguration requests]\label{assumption:requests}
In every execution, the number of processes that want to join or leave the system is finite.
% \footnote{This is a standard assumption in the literature of dynamic systems (\cite{aguilera2011dynamic,alchieri2016efficient}), essential to guarantee liveness. Without this assumption, the system may reconfigure infinitely often, which implies that \name operations issued by correct processes may never complete. Moreover, it is shown in \cite{baldoni2009implementing} that even a regular register can not be implemented if the system reconfigures infinitely often.}
% \needsrev{@Jovan: Why is this necessary? complete the previous sentence..}
\end{assumption}
\vspace*{-5mm}

\begin{assumption}\label{assumption:init}
Initially, at time $0$, the set of participants 
% $S(0)$
is nonempty 
%and the participants have invoked and returned from a {\normalfont \dbrbJoin{}} operation. 
%Moreover, 
and known to every process in $\mathcal{U}$.
\end{assumption}
\vspace*{-2mm}
\Cref{assumption:requests} captures the assumption that no new reconfiguration requests will be made for ``sufficiently long'', thus ensuring that started operations do complete.
\Cref{assumption:init} is necessary to bootstrap the system and guarantees that all processes have the same starting conditions.
Additionally, we make standard cryptographic assumptions regarding the power of the adversary, namely that it cannot subvert cryptographic primitives, e.g., forge a signature.

We also assume that a weak broadcast primitive is available.
The primitive guarantees that if a correct process broadcasts a message~$m$, then every correct process eventually delivers $m$.
%
% \begin{itemize}
%   \item \textit{RB-Liveness}: If a forever-correct process $p$ broadcasts a message $m$, then $p$ eventually delivers $m$;
%   \item \textit{RB-Totality}: If some message $m$ is delivered by a \emph{forever-correct} process,
%     every forever-correct process eventually delivers $m$.
% \end{itemize}
%
In practice, such primitive can be implemented by some sort of a gossip protocol~\cite{kermarrec2007gossiping}.
This primitive is ``global'' in a sense that it does not require a correct process to know all the members of $\mathcal{U}$.

\vspace*{-3mm}
\subsection{Properties of DBRB}
\label{sec:properties}
%We define the guarantees of our \name abstraction (\Cref{definition:spec,definition:non-triv-spec}).
\vspace*{-2mm}
For simplicity of presentation, we assume a specific 
% \needsrev{\emph{@Jovan}: Is this correct? Is this what we mean by an \emph{instance}? Jovan: Yes, exactly this.}
instance of \name in which a predefined sender process $s$ disseminates a  single message via \dbrbBroadcastAlone{} operation.
The specification can easily be extended to the general case in which every participant can broadcast multiple messages, assuming that every message is uniquely identified.  

\vspace*{-2mm}
\begin{definition}[\name basic guarantees]
\label{definition:spec}
\phantom{Empty line}
\begin{compactitem}
% \item \emph{Validity}. If a process $p$ \textbf{finalized broadcast} of a message $m$ with identifier $id = (p, s)$ to view $v$, then a correct process $q \in v' (v' \supseteq v)$ eventually delivers $m$.
\item \emph{Validity}. If a correct participant $s$ broadcasts a message $m$ at time $t$, then every correct process, if it is a participant at time $t' \geq t$ and never leaves the system, eventually delivers $m$.
% then every correct participant at time $t' \geq t$ eventually delivers $m$ or leaves the system.

\item \emph{Totality}. If a correct process $p$ delivers a message $m$ at time $t$, then every correct process, if it is a participant at time $t' \geq t$, eventually delivers $m$.
% then every correct participant at $t' \geq t$ eventually delivers $m$.
% that did not invoke \dbrbLeave{} before $t$ eventually delivers $m$.

\item \emph{No duplication}. 
A message is delivered by a correct process at most once.
%No correct process delivers more than one message.

\item \emph{Integrity}. If some correct process delivers a message $m$ with sender $s$ and $s$ is correct, then $s$ previously broadcast $m$.\footnote{Recall that the identity of sender process $s$ for a given message $m$ is implicit in the message (\Cref{sec:universe_processes}).}

\item \emph{Consistency}. If some correct process delivers a message $m$ and another correct process delivers a message $m'$, then $m=m'$.

% \item \emph{Totality}. If any correct participant delivers a message $m$, then every correct participant eventually delivers $m$.
% \item \emph{Quasi Totality}. If any correct process delivers a message $m$, then every correct participant eventually delivers $m$.
% \item \emph{Non-triviality}. If a correct process $p$ is not a participant in the system, then $p$ does not receive any message $m$. \TODO{We need to change this. A process can just broadcast to everyone in the $\Pi$ and that's it; receive not equal to deliver}
% \item \emph{Join Safety}. If a correct process $p$ is participant in the system, then process $p$ has invoked the $join()$ operation or $p$ was the member of the system at time $t = 0$.

%[[PK redundant? 
%\item \emph{Join Safety}. If a correct process $p$ is a participant in the system at time $t$, then $p$ has invoked a {\normalfont \dbrbJoin{}} operation.
% or $p$ was a participant in the system at time $t' = 0$.

%\item \emph{Leave Safety}. If a correct process $p$ is not a participant in the system at time $t$ and $p$ was a participant in the system at time $t' < t$, then $p$ has invoked a {\normalfont \dbrbLeave{}} operation.
%]]

\item \emph{Liveness}. Every operation invoked by a correct process eventually completes.

% \item \emph{Leave Safety}. If a process $p$ installs a view $v$ such that $q \not\in v \land (\exists v':q \in v' \land v' \subset v)$ and $q$ is correct, then process $q$ has invoked the $leave$ operation.
% \item \emph{Join Liveness}. Eventually, every correct process that invokes a $join()$ operation becomes a participant in the system. \TODO{or just a member, that is enough}
% \item \emph{Leave Liveness}. Eventually, every correct process that has invoked a $leave$ operation \textbf{leaves} the system.
% \item \emph{Join Liveness}. A correct process that has invoked a {\normalfont \dbrbJoin{}} eventually joins the system.

% \item \emph{Leave Liveness}. A correct participant that has invoked a {\normalfont \dbrbLeave{}} eventually leaves the system.

\end{compactitem}
\end{definition}

% \pending{should we introduce here that there are two kinds of messages: MEMBERSHIP and BROADCAST related? \needsrev{Adi: if it's possible to separate them so neatly, then yes!}}

% \needsrev{Jovan: We need to distinguish two types of messages here because of the Non-triviality}

%It is important to notice that one could implement the properties above (\Cref{definition:spec}) rather trivially by instantiating a (static) Byzantine reliable broadcast primitive encompassing all processes in $\mathcal{U}$ (with the additional assumption that less than one third of all processes in $\mathcal{U}$ are Byzantine), whether they are participants or not.
%Nothing prevents all processes from executing such a protocol.
%Of course, this would be against the idea of a dynamic system where only participants in the broadcast system execute the protocol.
\vspace*{-2mm}
To filter out implementations that involve \emph{all} processes in the broadcast protocol, we add  the following non-triviality property.

% \pending{Jovan: BROADCAST MESSAGES = \{PREPARE, ACK, COMMIT\}; we need to clearly state this. \needsrev{Adi: We have to stay high-level here.}}

\vspace*{-2mm}
\begin{definition}[Non-triviality]
\label{definition:non-triv-spec}
No correct process sends any message before invoking {\normalfont \dbrbJoin{}} or after returning from {\normalfont \dbrbLeave{}} operation.
\end{definition}

%!TEX root = ../main.tex
\vspace*{-6mm}
\section{Overview}
\label{sec:overview}

\vspace*{-2mm}
We now present the building blocks underlying our \name algorithm (\Cref{sec:building-blocks}) and describe typical scenarios:
 (1)~a correct process joining or leaving the system (\Cref{sec:join-overview}), and (2)~a broadcast (\Cref{sec:broadcast-overview}).

\vspace*{-2mm}
\subsection{Building Blocks}
\label{sec:building-blocks}

\vspace*{-2mm}
\textbf{Change.} We define a set of \emph{system updates} $\mathit{change} = \{+, -\} \times \mathcal{U}$, where the tuple $\langle +, p \rangle$ (resp., $\langle -, p \rangle$) indicates that process $p$ asked to join (resp., leave) the system.
This abstraction captures the evolution of system membership throughout time.
It is inevitable that, due to asynchrony, processes might not be able to agree on an unique system membership.
In other words, two processes may concurrently consider different sets of system participants to be valid.
To capture this divergence, we introduce the \emph{view} abstraction, which defines the system membership through the lenses of some specific process at a specific point in time.

\smallskip
\noindent\textbf{View.} A view $v$ comprises a set of updates $v.\mathit{changes}$.
%and the corresponding membership, represented by $v.\mathit{members}$.
The set determines the view \emph{membership} as $v.\mathit{members} = \{p \in \mathcal{U}: \langle +, p \rangle \in v.\mathit{changes} \land \langle -, p \rangle \notin v.\mathit{changes} \}$.
For simplicity, sometimes we use $p \in v$ instead of $p \in v.\mathit{members}$; $|v|$ is a shorthand for $|v.\mathit{members}|$.
% Recall that we assume correct process $p$ can join and leave the system only once.

Intuitively, each correct process $p$ in \name uses a view as an append-only set, to record all the \emph{changes} that the broadcast system underwent up to a point in time, as far as $p$ observed.
% When we compare two views, the one which has a larger set of $\mathit{changes}$ is the \emph{more recent} view.
Some views are ``instantiated'' in \name protocol and those views are marked as \emph{valid} (a formal definition is deferred to \Cref{sec:correctness-informal}). 
Our protocol ensures  that  
all valid views are \emph{comparable}.
Formally, $v_1 \subset v_2$ means that $v_1.\mathit{changes} \subset v_2.\mathit{changes}$, we say that  $v_2$ is \emph{more recent} than $v_1$.
Two different views are comparable if one is more recent than the other, otherwise they \emph{conflict}.
%[[PK trivial? 
%We write $v_1 = v_2$ if $v_1.\mathit{changes} = v_2.\mathit{changes}$.
We assume that the \emph{initial view}, i.e., the set of participants at time $0$, is publicly known (\Cref{assumption:init}).
%]]

A valid view $v$ must be equipped with a \emph{quorum system}: a collection of subsets of $v.\mathit{members}$. 
We choose the quorums to be all subsets of size $v.q = |v| - \lfloor \frac{|v| - 1}{3} \rfloor$.
\vspace*{-2mm}
\begin{assumption}[Quorum systems]\label{assumption:quorums} 
In every valid view $v$, the number of Byzantine processes is less than or equal to $\lfloor \frac{|v| - 1}{3} \rfloor$ and at least one quorum in $v$ contains only correct processes.  
\end{assumption}
\vspace*{-2mm}
Thus, every two quorums of a valid view have a correct process in common and at least one quorum contains only correct processes.\footnote{ 
%
%
%During an execution of \name, some views are ``instantiated" by our protocol.
%We refer to those views as \emph{valid} views.
%To satisfy the properties of \name our algorithm relies on quorums, such that for every valid view $v$ (in any execution) there is an upper bound of less than one third on the number of faulty processes exhibiting Byzantine behavior.
Note that this bound applies both to processes that are active participants, as well as processes leaving the system. This requirement can be relaxed in practice by enforcing  a correct process that leaves the system to destroy its private key. Even if the process is later compromised, it will not be able to send any protocol messages. Note that we assume that messages sent while the process was correct cannot be withdrawn or modified.}
%(We revisit these assumptions in \Cref{sec:correctness-informal}.)
%For any valid view $v$, the size of a quorum is $v.q = |v| - \lfloor \frac{|v| - 1}{3} \rfloor$.

\smallskip
\noindent\textbf{Sequence of views.} 
We now build upon the comparability of views to obtain the abstraction of a \emph{sequence of views}, or just \emph{sequence}.
A sequence $seq$ is a set of mutually comparable views.
%such that no two different views in $seq$ conflict.
Note that a set with just one view is, trivially, a sequence of views, so is the empty set.
% A sequence $seq$ is a set of views that is ordered, i.e.:
% $\forall v_i, v_j \in seq: (v_i \neq v_j) \implies (v_i \subset v_j \lor v_i \supset v_j)$.
% \needsrev{Note that a set with just one view is, trivially, a sequence of views, as well as an empty set.}

\smallskip
\noindent\textbf{Reliable multicast.}
In addition to the use of signed messages (\Cref{sec:universe_processes}), we build our algorithm on top of an elementary (static) reliable Byzantine broadcast protocol.
We instantiate this protocol from a standard solution in the literature for a static set of processes, as described e.g., in~\cite{cachin2011introduction}.
The terms ``\emph{R-multicast}'' and ``\emph{R-delivery}'' refer to the request to broadcast a message and deliver a message via this protocol to (or from) a static set of processes.
For completeness, we provide the pseudocode of the static reliable Byzantine broadcast primitive in \Cref{sec:preliminary_lemmata}.

\vspace*{-3mm}
 \subsection{DBRB-JOIN and DBRB-LEAVE Operations}
 \label{sec:join-overview}
 \begin{figure}[t]
     \centering
     \includegraphics[width=\columnwidth]{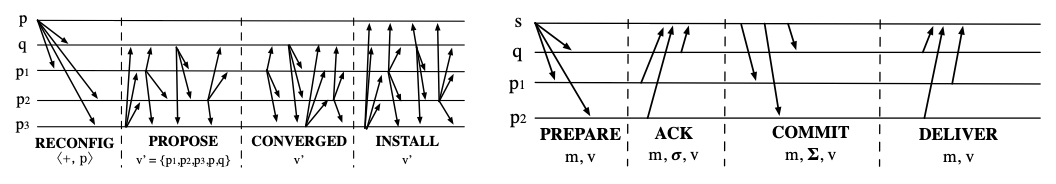}
     \vspace*{-7mm}
     \caption{Protocol overview for \dbrbJoin{} or \dbrbLeave{} (left), and \dbrbBroadcastAlone{} (right).}
     \label{fig:join-scenario}
     \vspace{-.4cm}
 \end{figure}

\vspace*{-2mm}
 Upon invoking the \dbrbJoin{} operation, a process $p$ first learns the current membership -- i.e., the most recent view $v$ -- of the broadcast system through a \emph{View Discovery} protocol (\Cref{sec:join-leave-algorithms}).
 %This is necessary for $p$ 
%  to learn the current membership -- i.e., the most recent view $v$ -- of the broadcast system.
 %
 %Once $p$ learns $v$, 
 The joining operation then consists of four steps 
 (the left part of \Cref{fig:join-scenario}).
 First, $p$ disseminates a \operation{reconfig}{\langle +, p \rangle} message to members of $v$.
 In the second step, when any correct process $q$ from $v$ receives the \msg{reconfig} message, $q$ proposes to change the system membership to a view $v'$, where $v'$ is an extension of $v$ including the change $\langle +, p \rangle$.
 To do so, $q$ disseminates to members of $v$ a \msg{propose} message, containing the details of $v'$.
 Third, any other correct member in $v$ waits until $v.q$ matching \msg{propose} messages (a quorum of $v$ confirms the new view).
%  We say that processes that are members of view $v$ are trying to \emph{converge} on a new membership.
 Once a process collects the confirmation, it disseminates a \operation{converged}{v'} message to members of $v$.
 This concludes step three.
 In the fourth step, each correct process $q$ in $v$ waits to gather matching \msg{converged} messages from a quorum (i.e., $v.q$) of processes.
 We say that processes that are members of view $v$ are trying to \emph{converge} on a new membership.
 Then, $q$ triggers an \emph{R-multicast} of the \operation{install}{v'} message to members of $v \cup v'$; recall that the process $p$ belongs to $v'$.
 Upon \emph{R-delivery} of an \msg{install} message for $v'$, any process $q$ updates its current view to $v'$.
 The \dbrbJoin{} operation finishes at process $p$ once this process receives the \msg{install} message for $v' \ni p$.
 From this instant on, $p$ is a participant in the system.
%  will also send \msg{propose}, \msg{converged}, and \msg{install} messages and participate in the message exchange for \dbrbBroadcastAlone{}.

 The steps executed after a correct process $p$ invokes the \dbrbLeave{} operation are almost identical, except for the fact that $p$ still executes its ``duties'' in \name until \dbrbLeave{} returns.\footnote{
 There is a detail we deliberately omitted from this high-level description and we defer to \Cref{sec:join-leave-algorithms}: multiple processes may try to join the system concurrently, and thereby multiple \msg{propose} messages may circulate at the same time.
 These messages comprise different views, e.g., one could be for a view $v'$ and another for $v''$.
 These conflicts are unavoidable in asynchronous networks.
 For this reason, \msg{propose} messages (and other protocol messages) operate at the granularity of sequences, not individual views.
 If conflicts occur, sequences support union and ordering, allowing reconciliation of $v'$ with $v''$ on a sequence that comprises their union.}
 % We discuss these details to \Cref{sec:join-leave-algorithms}.-2mm}

\vspace*{-3mm}
\subsection{DBRB-BROADCAST Operation}
\label{sec:broadcast-overview}
% \begin{figure}[t]
%     \centering
%     \includegraphics[width=.7\columnwidth]{}
%     \caption{Overview of a \dbrbBroadcastAlone{} operation.}
%     \label{fig:broadcast-scenario}
%     \vspace{-.4cm}
% \end{figure}

\vspace*{-2mm}
A correct process $s$ that invokes \dbrbBroadcast{m} first disseminates a \msg{prepare} message to every member of the $s$' current view $v$.
When a correct process $q$ receives this message, $q$ sends an \msg{ack} message to $s$, representing a signed statement asserting that $q$ indeed received $m$ from $s$.
Once $s$ collects a quorum of matching \msg{ack} messages for $m$, $s$ constructs a \emph{message certificate} $\Sigma$ out of the collected signatures $\rho$, and disseminates this certificate to every member of $v$ as part of a \msg{commit} message.
When any correct process $q$ receives a \msg{commit} message with a valid certificate for $m$ for the first time, $q$ relays this message to all members of view $v$.
% All correct processes wait to collect a quorum of matching \msg{deliver} messages, and upon doing so they disseminate a \msg{deliver} message for $m$.
Moreover, $q$ sends a \msg{deliver} message to the sender of the \msg{commit} message.
Once any process $q$ collects a quorum of matching \msg{deliver} messages, $q$ triggers \dbrbDeliver{m}.
The right part of \Cref{fig:join-scenario} presents the overview of this operation.
%\needsrev{
In \Cref{fig:join-scenario}, we depict process $s$ collecting enough \msg{deliver} messages to deliver $m$, assuming that all processes in the system use the same view. 
The details of how views are changed during an execution of a broadcast operation are given in~\Cref{sec:broadcast-algorithm}.

%  \begin{figure}[t]
%      \centering
%      \includegraphics[width=\columnwidth]{figs/overview_complete.png}
%      \vspace*{-7mm}
%      \caption{Overview of a \dbrbJoin{} or \dbrbLeave{} (left) and of a \dbrbBroadcastAlone{} (right).}
%      \label{fig:join-scenario}
%      \vspace{-.4cm}
%  \end{figure}
%}

% step of \msg{con-commit} message is necessary to ensure totality in the case when the sender $s$ is malicious
% ., it is impossible to ensure totality in our implementation without taking one additional ``all-to-all'' communication step.
% Hence, $q$ also disseminates a \msg{commit} message to processes that are members of $S(t)$.
% \TODO{Jovan: Modify the figures. I will do that once we reach a consensus about names of the messages.}

%!TEX root = ../main.tex
\vspace*{-6mm}
\section{DBRB Algorithm}
\label{sec:algorithm}

\vspace*{-2mm}
In this section, we describe our \name algorithm, starting with dynamic membership (\Cref{sec:join-leave-algorithms}), and continuing with broadcast (\Cref{sec:broadcast-algorithm}).
We also present an illustrative execution of \name (\Cref{sec:execution}).

\Cref{algorithm:variables} introduces the variables that each process $p$ maintains, as well as two  helper functions to compute the least recent and most recent view of a given sequence, respectively.

% \footnote{Note that we divide the algorithm into two parts solely for the presentational purposes. There are no limitations in interleavings of any of four primitives, except for those explicitly stated in \Cref{algorithm:dynamic,algorithm:install,algorithm:b}.}
% First, this section gives the formal definition of concepts we use in order 
% Secondly, we give the algorithm for four primitives offered by \name abstraction.

\vspace*{-3mm}
\subsection{Dynamic Membership}
\label{sec:join-leave-algorithms}

\vspace*{-2mm}
\Cref{algorithm:dynamic,algorithm:install} contain the pseudocode of the \dbrbJoin{} and \dbrbLeave{} operations.
Let us first discuss the join operation. 
% Join operation incorporates seven message types and all of them refer to one or several views to be considered when evaluating message validity.

After a correct process $p$ 
%which is currently not a participant of the system
invokes the \dbrbJoin{} operation, $p$ obtains the most recent view of the system, and it does so through the View Discovery protocol.
% enters a view discovery phase to obtain the most recent view in the system.
% (we prove that there is exactly one view that is the most recent one at any time in \Cref{sec:preliminary_lemmata}).
We describe the View Discovery protocol at the end of this section; for the moment it suffices to say that $p$ obtains the most recent view $v$ and updates its local variable $\mathit{cv}$ to reflect this view.
Next, process $p$ disseminates a \operation{reconfig}{\langle +, p \rangle, \mathit{cv}} message to every member of $\mathit{cv}$ (\Cref{line:send_join}) notifying members of $cv$ of its intention to join. 
The view discovery and the dissemination are repeated until the \emph{joinComplete} event triggers or a quorum of confirmation messages has been collected for some view $v$ to which \msg{reconfig} message was broadcast (\Cref{line:stop_asking_to_join}).

Every correct member $r$ of the view $cv$ proposes a new system membership that includes process $p$, once $r$ receives the aforementioned \msg{reconfig} message from process $p$.
The new proposal is incorporated within a \emph{sequence} of views $\mathit{SEQ^v}$, $v = cv$, (containing, initially, just one view) and disseminated to all members of the view $cv$ via a \msg{propose} message (\Cref{line:propose_system_configuration}).
% to all members of $cv$, once $r$ receives the aforementioned \msg{reconfig} message from process $p$.

% When a correct process $r \in \mathit{cv}$ receives a \msg{reconfig} message from $p$, $r$ checks whether its current view is equal to $\mathit{cv}$, and then adds the $\langle +, p \rangle$ change to its set $\mathit{RECV}$ of pending requests (\Cref{line:add_reconfig_brief}).
% $r$ builds a sequence starting from the current view $\mathit{cv}$ and the set $\mathit{RECV}$, and disseminates that as part of a \msg{propose} message to every member of $\mathit{cv}$ (\Cref{line:propose_system_configuration}).

%[[PK for 
\ignore{
Importantly, we note that \msg{propose} messages carry signed \msg{reconfig} messages for updates they propose and correct processes only account for messages if they have been able to verify all signatures in the messages (we omit this in the pseudocode, but it is relevant for the correctness).
}
%]]

The leaving operation invocation is similar: Process $p$ disseminates a \msg{reconfig} message with $\langle -, p \rangle$ as an argument, and process $r$ proposes a new system membership that \emph{does not} include $p$.
% adds $\langle -, p \rangle$ to its set $\mathit{RECV}$ of pending changes.
The main difference with the joining operation is that if $p$ delivered or is the sender of a message, $p$ must ensure validity and totality properties of \name before disseminating a \msg{reconfig} message (\Cref{line:totality_validity}).

Let us now explain how a new 
%system membership 
view
is installed in the system.
The correct process $r \in cv$ receives \msg{propose} messages disseminated by other members of $cv$.
First, $r$ checks whether it \emph{accepts}\footnote{Process $r$ accepts a sequence of views $seq$ to replace a view $v$ (see \Cref{sec:preliminary_definitions}) if $seq \in \mathit{FORMAT}^v$ or $\emptyset \in \mathit{FORMAT}^v$ (\Cref{line:deliver_propose_brief}). The following holds at every correct process that is a member of the initial view of the system $v_0$: $\emptyset \in \mathit{FORMAT^{v_0}}$. Note that $\mathit{FORMAT}^v$, for any view $v$, is a set of sequences, i.e., a set of sets.} the received proposal (recall that a proposal is a sequence of views).
Moreover, $r$ checks whether the received proposed sequence $seq$ is well-formed, i.e., whether $seq$ satisfies the following: (1) $seq$ is a sequence of views, (2) there is at least one view in $seq$ that $r$ is not aware of, and (3) every view in $seq$ is more recent than $cv$.

% Once a process $r$ receives a \msg{propose} message with a sequence of views $seq$, $r$ checks whether $r$ \emph{accepts}\footnote{Process $r$ accepts a sequence of views $seq$ to replace a view $v$ (see \Cref{sec:preliminary_definitions}) if $seq \in \mathit{FORMAT}^v$ or $\emptyset \in \mathit{FORMAT}^v$ (\Cref{line:deliver_propose_brief}). The following holds at every correct process that is a member of the initial view of the system $v_0$: $\emptyset \in \mathit{FORMAT^{v_0}}$.} $seq$ to replace $cv$.
% If $seq$ is accepted by at least one correct process that is a member of $cv$ to replace $cv$, then our algorithm ensures that $seq$ can be converged on to replace $cv$. 
% Moreover, $r$ checks (through valid($seq$); \Cref{line:only_more_updated_1}) whether $seq$ satisfies the following: (1) $seq$ is a sequence of views, (2) there is at least one view in $seq$ that $r$ is not aware of, and (3) every view in $seq$ is more recent than $cv$.
% \begin{compactitem}
%     \item $seq$ is a sequence of views; and
%     \item there is at least one view in $seq$ that $r$ is not aware of; and
%     \item every view in $seq$ is more recent than $cv$. 
% \end{compactitem}
% If $r$ accepts $seq$ to replace $cv$, then 

If all the checks have passed, the process $r$ uses the received \msg{propose} message to update its own proposal.
This is done according to two cases:
\begin{compactenum}
    \item There are conflicts between $r$'s and the received proposal (\Cref{line:conflicting_start} to \Cref{line:update_conflicting}).
    % Recall that two views $v$ and $v'$ conflict if: $v \not\subset v' \land v \not\supset v'$.
    In this case, $r$ creates a new proposal containing $r$'s last converged sequence for the view\footnote{We say that $seq$ is the last converged sequence for a view $v$ of a process if the process receives the same proposal to replace the view $v$ from a quorum of members of $v$ (variable $\mathit{LCSEQ}^v$).} and a new view representing the union of the most recent views of two proposals.
    \item There are no conflicts (\Cref{line:update_nonconflicting}).
    In this case, $r$ executes the union of its previous and received proposal in order to create a new proposal.
\end{compactenum}
Once $r$ receives the same proposal from a quorum of processes, $r$ updates its last converged sequence (\Cref{line:lcseq}) and disseminates it within a \msg{converged} message (\Cref{line:send_converged}).

\begin{algorithm}[t]
% \small
\footnotesize
\caption{\name algorithm: local variables of process $p$ and helper functions.}
\label{algorithm:variables}
\begin{algorithmic}[1]
\Variables
    \State $\mathit{cv} = v_0$ \Comment{current view; $v_0$ is the initial view}
    \State $\mathit{RECV} = \emptyset$ \Comment{set of pending \emph{updates} (i.e., join or leave)}
    \State $\mathit{SEQ^v} = \emptyset$ \Comment{set of proposed sequences to replace $v$}
    \State $\mathit{LCSEQ^v} = \emptyset$ \Comment{last converged sequence to replace $v$}
    \State $\mathit{FORMAT^v} = \emptyset$ \Comment{replacement sequence for view $v$}
    % \State $cv$ \Comment{current view of the system}
    %\State $\mathit{installed^v} = \mathit{false}$ \comment{$v$ has been installed or not}
    \State $\mathit{cer} = \bot$ \Comment{message certificate for $m$}
    \State $\mathit{v_{cer}} = \bot$ \Comment{view in which certificate is collected}
    \LineComment{set of messages allowed to be acknowledged; initially, any message could be acknowledged by a process}
    \State $\mathit{allowed\_ack} = \bot$ \Comment{$\bot$ - any message, $\top$ - no message}
    \State $\mathit{stored} = \mathit{false}; \text{ } \mathit{stored\_value} = \bot$
    \State $\mathit{can\_leave} = \mathit{false}$ \Comment{process is allowed to leave}
    % \State $validity = false$ \Comment{validity for $m$ is ensured or not}
    % \State $totality = false$ \Comment{totality for $m$ is ensured or not}
    \State $\mathit{delivered} = \mathit{false}$ \Comment{$m$ delivered or not}
    \LineComment{for every process $q \in \mathcal{U}$ and every valid view $v$}
    \State $\mathit{acks[q, v]} = \bot; \text{ } \mathit{\Sigma[q, v]} = \bot; \text{ } \mathit{deliver[q, v]} = \bot$ 
    \State $\mathit{State} = \bot$ \Comment{state of the process; consists of $\mathit{ack}$, $\mathit{conflicting}$ and $\mathit{stored}$ fields}
    % \Comment{for every process $q \in \mathcal{U}$ and every view $v$}
\EndVariables

% \algvspace
% \UponCondition{Init}
%     \State $p$ is allowed to acknowledge any $m$ 
% \EndUponCondition

\algvspace
% \LineComment{returns the view $v$ such that $v$ is a subset of any other view $v'$}
\Function{least\_recent}{seq} \textbf{ returns } $\omega \in seq : \nexists \omega' \in seq:\omega' \subset \omega$
\EndFunction
% \LineComment{returns the view $v$ such that $v$ is a superset of any other view $v'$}
\Function{most\_recent}{seq} \textbf{ returns } $\omega \in seq : \nexists \omega' \in seq:\omega \subset \omega'$
\EndFunction
\algstore{dynamic}
\end{algorithmic}
\end{algorithm}
\setlength{\textfloatsep}{5pt}
% \setlength{\textfloatsep}{5pt}% Remove \textfloatsep
% \vspace*{-8mm}

When $r$ receives a \msg{converged} message for some sequence of views $seq'$ and some view $v$ (usually $v$ is equal to the current view $cv$ of process $r$, but it could also be a less recent view than $cv$) from a quorum of members of the view $v$ (\Cref{line:seq_conv}), $r$ creates and reliably disseminates an \msg{install} message that specifies the view that should be replaced (i.e., $v$), the least recent view of the sequence $seq'$ denoted by $\omega$ (\Cref{line:calculate_least_updated}) and the entire sequence $seq'$ (\Cref{line:send_reliable}).
Moreover, we say that $seq'$ is \emph{converged on} to replace $v$.
An \msg{install} message is disseminated to processes that are members of views $v$ or $\omega$ (\Cref{line:send_reliable}).
Note that \msg{install} messages include a quorum of signed \msg{converged} messages which ensures its authenticity (omitted in \cref{algorithm:dynamic,algorithm:install} for brevity).
% , i.e., that the message is properly ``instantiated" by \name protocol (omitted in \Cref{algorithm:dynamic,algorithm:install} for brevity).
% Lastly, since \msg{install} messages carry a proof of authenticity, reliable broadcast can be implemented by a simple retransmission by the receiving side before the delivery (as we explain in \Cref{sec:preliminary_lemmata}). 

Once the correct process $r$ receives the \msg{install} message (\Cref{line:install_seq_delivery}), 
% $r$ checks whether $seq$ is a sequence (omitted from \Cref{algorithm:install} for brevity) and 
$r$ enters the installation procedure in order to update its current view of the system.
There are four parts to consider: %in this procedure:
\begin{compactenum}
    \item Process $r$ was a member of a view $v$ (\cref{line:stop_processing,line:send_state_update}):
    Firstly, $r$ checks whether $cv \subset \omega$, where $cv$ is the current view of $r$.
    If this is the case, $r$ stops processing \msg{prepare}, \msg{commit} and \msg{reconfig} messages (\Cref{line:stop_processing}; see \Cref{sec:broadcast-algorithm}). 
    Therefore, process $r$ will not send any \msg{ack} or \msg{deliver} message for \msg{prepare} or \msg{commit} messages associated with $v$ (and views preceding $v$).
    The same holds for \msg{reconfig} messages.
    % Hence, this is the point in the execution of $r$ when it is clear which messages $r$ has acknowledged (sent \msg{ack} message for) and stored (sent \msg{deliver} message for) in the view $v$ and views preceding $v$. 
    We refer to acknowledged and stored messages by a process as the \emph{state} of the process (represented by the $\mathit{State}$ variable).
    The fact that $r$ stops processing the aforementioned messages is important because $r$ needs to convey this information via the \msg{state-update} message (\Cref{line:send_state_update})  
    to the members of the new view $\omega$.
    Therefore, a conveyed information is ``complete'' since a correct process $r$ will never process any \msg{prepare}, \msg{commit} or \msg{reconfig} message associated with ``stale'' views (see \Cref{sec:broadcast-algorithm}).
    % Therefore, $r$ stops processing the aforementioned messages in order for the information conveyed to the view $\omega$ to be complete, i.e., after $r$ conveys this information, it would never process any \msg{prepare}, \msg{commit} or \msg{reconfig} message associated with ``stale'' views (as we explain in \Cref{sec:broadcast-algorithm}).
    %about which message $r$ has acknowledged and approved (which represents the \emph{state} of the process $r$) 
    % and therefore, we must avoid ambiguity, as we explain in \Cref{sec:broadcast-algorithm}.
    \item View $\omega$ is more recent than $r$'s current view $cv$ (\Cref{line:state-update} to \Cref{line:state-transfer}):
    Process $r$ waits for $v.q$ of \msg{state-update} messages (\Cref{line:state-update}) and processes received states (\Cref{line:state-transfer}).
    \msg{state-update} messages carry information about: (1) a message process is allowed to acknowledge ($\mathit{allowed\_ack}$ variable), (2) a message stored by a process ($\mathit{stored\_value}$ variable), and (3) reconfiguration requests observed by a process (see \Cref{sec:broadcast-algorithm}).
    Hence, a \msg{state-update} message contains at most two \msg{prepare} messages associated with some view and properly signed by $s$ (corresponds to (1)).
    Two \msg{prepare} messages are needed if a process observes that $s$ broadcast two messages and are used to convince other processes not to acknowledge any messages (variable $\mathit{State}.\mathit{conflicting}$; \cref{line:no_ack_1,line:no_ack_2}).
    Moreover, \msg{state-update} messages contain at most one \msg{commit} message associated with some view with a valid message certificate (variable $\mathit{State}.\mathit{stored}$) and properly signed by $s$ (corresponds to (2)), and a (possibly empty) list of properly signed \msg{reconfig} messages associated with some installed view (corresponds to (3)).
    Note that processes include only \msg{prepare}, \msg{commit} and \msg{reconfig} messages associated with some view $v'' \subseteq v$ in the \msg{state-update} message they send (incorporated in the $\mathit{state}(v)$ function). 
    The reason is that processes receiving these \msg{state-update} messages may not know whether views $v'' \supset v$ are indeed ``created'' by our protocol and not ``planted'' by faulty processes. 
    % We revisit the structure of \msg{state-update} messages in \Cref{sec:correctness-informal}.
    \item Process $r$ is a member of $\omega \supset cv$ (\Cref{line:p_in} to \Cref{line:trigger_view_installed}): If this is the case, $r$ updates its current view (\Cref{line:set_cv}).
    % waits for $v.q$ of \msg{state-update} messages (\Cref{line:state-update}), updates its current view (\Cref{line:set_cv}), processes received states (\Cref{line:state-transfer}) and 
    % disseminates \msg{view-updated} messages (\Cref{line:send_view_updated}). 
    % to processes that were members of $v$, but are not members of $\omega$ (i.e., processes that want to leave the system; \Cref{line:send_view_updated}).
    Moreover, $r$ \emph{installs} the (updated) current view $cv$ if the sequence received in the \msg{install} message does not contain other views that are more recent than $cv$ (\Cref{line:installed_cv}).
    \item Process $r$ is not a member of $\omega \supset cv$ (\Cref{line:finish_wait_updated} to \Cref{line:left}): 
    % In this case, 
    % $r$ waits for $\omega.q$ \msg{view-updated} messages (\Cref{line:wait_view_updated}).
    % This is important to ensure that a correct leaving process participates in the reconfiguration protocol that installs a new view without itself, i.e., a correct leaving process stays available to transfer its state.
    A leaving process $r$ executes the View Discovery protocol (\Cref{line:discover}) in order to ensure totality of \name (we explain this in details in \Cref{sec:broadcast-algorithm}).
    When $r$ has ``fulfilled'' its role in ensuring totality of \name, $r$ leaves the system (\Cref{line:left}).
\end{compactenum}

\begin{algorithm}[h!]
% \small
\footnotesize
\caption{\dbrbJoin{} and \dbrbLeave{} implementations at process $p$.}
\label{algorithm:dynamic}
\begin{algorithmic}[1]
\algrestore{dynamic}

% \algvspace
\Procedure{\emph{\dbrbJoin{}}}{\hphantom{}}
    % \State start executing View Discovery
    \Repeat
    \State $\mathit{cv} = \text{view\_discovery}(cv)$ \label{line:start_view_discovery_1}
    \State disseminate \operation{reconfig}{\langle +, p \rangle, \mathit{cv}} to all $q \in \mathit{cv}.\mathit{members}$ \label{line:send_join}
    \Until{\textit{joinComplete} is triggered or $v.q$ \operation{rec-confirm}{v} messages collected for some $v$} \label{line:stop_asking_to_join}
    % \State stop executing View Discovery
    \State \textbf{wait for} \textit{joinComplete} \textbf{to be triggered}
\EndProcedure

\algvspace
\Procedure{\emph{\dbrbLeave{}}}{\hphantom{}}
    \IfThen
        {$\mathit{delivered}$ $\lor$ $p = s$}
        {\textbf{wait until} $\mathit{can\_leave}$}
        \label{line:totality_validity}
    \Repeat{} in each installed view $\mathit{cv}$ \textbf{do} \Comment{in each subsequent view $p$ installs}
        % \State $\mathit{cv}$ is installed by $p$
        % \State $\mathit{cv} = \text{view\_discovery}(\mathit{cv})$
        \State disseminate \operation{reconfig}{\langle -, p \rangle, \mathit{cv}} to all $q \in \mathit{cv}.\mathit{members}$ \label{line:send_leave}
    \Until{\textit{leaveComplete} is triggered or $v.q$ \operation{rec-confirm}{v} messages collected for some $v$}
    \State \textbf{wait for} \textit{leaveComplete} \textbf{to be triggered}
    %\Until{$p \notin cv$ or $v.q$ \operation{rec-confirm}{v} messages collected for some $v$}
    % \needsrev{$v$ not introduced, use $cv$ instead? jovan: $v$ is the view to which reconfig is broadcast. It is not clear from the context? YAP: now it's clear to me :)}
    % \State $\mathit{cv} = \text{view\_discovery}(\mathit{cv})$
    % \State disseminate \operation{reconfig}{\langle -, p \rangle, \mathit{cv}} to $\mathit{cv}$ \label{line:send_leave}
    % \State \textbf{wait} for \textit{leaveComplete}
    % \While{p \in cv \text{ and } cv.q \text{ } $\operation{rec-confirm}{cv}$ \text{ messages not collected}} \label{line:rec-confirm_leave}
    %     \State discover $cv$ \label{line:discover)}
    %     \State disseminate \operation{reconfig}{\langle -, p \rangle, cv} to $cv$ \label{line:send_leave}
    % \EndWhile
    
    % \State $\forall k \in cv, \text{send \operation{reconfig}{\langle -, p \rangle, cv}} \text{ to } k$ \label{line:send_leave} 
    % \State \textbf{wait} for $\langle LEAVE-CONFIRM, cv \rangle$ replies from $cv.q$ processes in $cv$
    % \State \textbf{wait} for \textit{leaveComplete}
\EndProcedure

\algvspace
\UponReceipt{\operation{reconfig}{\langle c, q \rangle, v}}{q} \Comment{$c \in \{-, +\}$}
    \If{$v = \mathit{cv}$ $\land$ $\langle c, q \rangle \notin v$ $\land$ $($\textbf{if} $(c = -)$ \textbf{then} $\langle +, q \rangle \in v)$}
        \State $\mathit{RECV} = \mathit{RECV} \cup \{\langle c, q \rangle\}$ \label{line:add_reconfig_brief}
        \State $\text{send \operation{rec-confirm}{cv}} \text{ to } q$ \label{line:send_rec-confirm}
    \EndIf
    % \State $\text{send \operation{rec-confirm}{cv}} \text{ to } q$ \label{line:send_rec-confirm}
\EndUponReceipt

\algvspace
\UponCondition{$\mathit{RECV} \neq \emptyset$ $\land$ $\mathit{installed}(cv)$}
    % \State $\mathit{seq} = \{cv \cup RECV\}$
    \If{$\mathit{SEQ^{cv}}= \emptyset$} \label{line:check_propose_1}
        \State $\mathit{SEQ^{cv}} = \{cv \cup RECV\}$
        \State $\text{disseminate}$ \operation{propose}{\mathit{SEQ^{cv}}, \mathit{cv}} to all $q \in \mathit{cv}.\mathit{members}$ \label{line:propose_system_configuration}
    \EndIf
\EndUponCondition

\algvspace
\UponReceiptFromST{\operation{propose}{seq, v}}{q \in v.\mathit{members}}{seq \in \mathit{FORMAT^v} \;\lor\; \emptyset \in \mathit{FORMAT^v}} \label{line:deliver_propose_brief}
    \If{valid($seq$)} \label{line:only_more_updated_1} \Comment{filter incorrect proposals}
    % \If{$\exists \omega \in seq: \omega \notin SEQ^{v} \land \forall \omega \in seq: v \subset \omega \land \forall \omega, \omega' \in seq: (\omega \neq \omega') \implies (\omega \subset \omega' \lor \omega \supset \omega')$} \label{line:only_more_updated_1}
        % \LineComment{check whether there are conflicting views}
        \If{conflicting($seq, \mathit{SEQ^{v}}$)} 
        % \If{$\exists \omega, \omega': \omega \in seq \land \omega' \in SEQ^{v} \land \omega \not\subset \omega' \land \omega' \not\subset \omega$} 
            \State $\omega = most\_recent(seq)$ \label{line:conflicting_start}
            \State $\omega' = most\_recent(\mathit{SEQ^{v}})$
            \LineComment{merge the last view from the local and $q$'s proposal}
            \State $\mathit{SEQ^{v}} = \mathit{LCSEQ^{v}} \cup \{\omega \cup \omega'\}$ \label{line:update_conflicting} 
        \Else \Comment{no conflicts, just merge the proposals}
            \State $\mathit{SEQ^{v}} = \mathit{SEQ^{v}} \cup seq$ \label{line:update_nonconflicting}
        \EndIf
        \State disseminate \operation{propose}{\mathit{SEQ^v}, v} to all $q' \in v.\mathit{members}$ \label{line:propose_second_case}
    \EndIf
\EndUponReceiptFromST

\algvspace
\UponReceiptCond{\operation{propose}{SEQ^{v}, v}} \label{line:verification3}
    \State $LCSEQ^{v} = SEQ^{v}$ \label{line:lcseq} 
    % \Comment{$p$ converges to $SEQ^{v}$ since a quorum of processes have proposed the same proposal}
    \State disseminate \operation{converged}{SEQ^v, v} to all $q \in v.\mathit{members}$ \label{line:send_converged}
\EndUponReceiptCond

\algvspace
\UponReceiptCond{\operation{converged}{seq', v}} \label{line:seq_conv}
    \State $\omega = \textit{least\_recent}(seq')$ \label{line:calculate_least_updated}
    \State $\textit{R-multicast}(\{j: j \in v.\mathit{members} \lor j \in \omega.\mathit{members}\}, \text{\operation{install}{\omega, seq', v}})$ \label{line:send_reliable} 
\EndUponReceiptCond
\algstore{dynamic}
\end{algorithmic}
\end{algorithm}

\begin{algorithm}[h!]
\footnotesize
\caption{\name algorithm: installing a view at process $p$.}
\label{algorithm:install}
\begin{algorithmic}[1]
\algrestore{dynamic}

% \UponCondition{R-delivery$(\{j: j \in ov \lor j \in \omega\}, \langle INSTALL, \omega, seq, ov \rangle)$}\label{line:install_seq_delivery}
\UponCondition{R-delivery$(\{j: j \in v.\mathit{members} \lor j \in \omega.\mathit{members}\}, \emph{\operation{install}{\omega, seq, v}})$}\label{line:install_seq_delivery}
    % \If{$p \notin ov \land p \notin \omega$} \Comment{$p$ never joined; maybe $p$ should change the current view of the system}
    %     \If{$|seq| = 1 \land \omega \supset cv$} \Comment{$\omega$ can be installed and $\omega$ is more recent than $p$'s current view}
    %         \State $cv = \omega$
    %         \If{$want\_to\_join = true \land join\_stop = false$} \Comment{$p$ wants to join but did not collect a quorum of acknowledgments so far; needs to try in view $cv$}
    %             \State $\forall k \in cv, \text{send} \langle JOIN, cv \rangle \text{ to } k$
    %         \EndIf
    %     \EndIf
    %     \State \textbf{return}
    %     \State
    % \EndIf

    \State $\mathit{FORMAT^{\omega}} = \mathit{FORMAT^{\omega}} \cup \{seq \setminus \{\omega\}\}$ \label{line:set_format} 
    % \Comment{accept $seq \setminus \{\omega\}$ in $\omega$}
    
    % \State $\emph{G\textsupsub{\omega}{p}}.relax(seq')$
    
    \If{$p \in v.\mathit{members}$} \Comment{$p$ was a member of $v$}
        \IfThen
            {$\mathit{cv} \subset \omega$}
            {stop processing \msg{prepare}, \msg{commit} and \msg{reconfig} messages}
            \label{line:stop_processing}
            % \Comment{$p$'s current view is less recent than $\omega$}
        \State $\textit{R-multicast}(\{j: j \in v.\mathit{members} \lor j \in \omega.\mathit{members}\}, \text{\operation{state-update}{state(v), RECV}})$ \label{line:send_state_update}
        % \State disseminate \operation{state-update}{\mathit{state}(v), \mathit{RECV}} to all $q \in \omega.\mathit{members} \cup v.\mathit{members}$ \label{line:send_state_update}
    \EndIf
    \If{$\mathit{cv} \subset \omega$} \label{line:omega_more} \Comment{$\omega$ is more recent than $p$'s current view}
        \State \textbf{wait} for \operation{state-update}{*, *} messages from $v.q$ processes in $v$ \label{line:state-update} \Comment{from the reliable broadcast}
        \State $\mathit{req} = \{\text{reconfiguration requests from \msg{state-update}}$ messages\}
        \State $\mathit{RECV} = \mathit{RECV} \cup (req \setminus \omega.changes)$ \label{line:recv}
        \State $\mathit{states} = \{\text{states from \msg{state-update} messages}$\}
        % \State $\mathit{cv} = \omega$ \label{line:set_cv}
        \State $\mathit{installed}(\omega) = \mathit{false}$
        \State \textbf{invoke } \text{\emph{state-transfer($\mathit{states}$)}} \label{line:state-transfer} \Comment{\Cref{algorithm:b}}
        \If{$p \in \omega.\mathit{members}$} \Comment{$p$ is in $\omega$} \label{line:p_in}
            \State $\mathit{cv} = \omega$ \label{line:set_cv}
            \IfThen
                {$p \notin v.\mathit{members}$}
                {\textbf{trigger} \textit{joinComplete}}
                \label{line:join_completed} 
                \Comment{can return from \dbrbJoin{}}
                % \State \text{stop executing View Discovery protocol}
            % \State disseminate \operation{view-updated}{\mathit{cv}} to all $q \in \mathcal{U}$ \label{line:send_view_updated} \Comment{inform leaving processes that they can leave}
            \If{$\exists \omega' \in seq: \mathit{cv} \subset \omega'$} \label{line:more_updated_views_1}
                \State $seq' = \{\omega' \in seq: \mathit{cv} \subset \omega'\}$ 
                \If{$\mathit{SEQ^{cv}} = \emptyset \land \forall \omega \in seq': \mathit{cv} \subset \omega$} \label{line:check_propose_2}
                    \State $\mathit{SEQ^{cv}} = seq'$
                    \State disseminate \operation{propose}{\mathit{SEQ^{cv}}, \mathit{cv}} to all $q \in \mathit{cv}.\mathit{members}$ \label{line:propose_rest} 
                    % \Comment{$p$ proposes the "rest" of the sequence}
                \EndIf
                % \State $\emph{G\textsupsub{cv}{p}}.gen\_view(seq')$ 
            \Else
                % \LineComment{view $cv$ is installed now (by $p$)} \label{line:installed_cv}
                \State $\mathit{installed}(cv) = \mathit{true}$ \label{line:installed_cv}
                \State resume processing \msg{prepare}, \msg{commit} and \msg{reconfig} messages\label{line:install_complete}
                \State \textbf{invoke } \text{\emph{new-view()}} \label{line:trigger_view_installed} \Comment{\Cref{algorithm:b}}
                % \If{$want\_to\_leave = true \land leave\_stop = false$} \Comment{$p$ wants to leave but did not collect a quorum of acknowledgments so far; needs to try in view $cv$}
                %     \State $\forall k \in cv, \text{send} \langle LEAVE, cv \rangle \text{ to } k$ 
                % \EndIf
            \EndIf
        \Else \Comment{$p$ is leaving the system}
            % \State \textbf{wait} for \operation{view-updated}{\omega} messages from $\omega.q$ processes in $\omega$ \label{line:wait_view_updated}
            % \State \text{start executing View Discovery protocol} \label{line:start_view_discovery_2}
            % \State start executing View Discovery \label{line:start_view_discovery_2}
            
            \If{$\mathit{stored}$}  \label{line:finish_wait_updated}
                \While{\lnot \mathit{can\_leave}}
                    \State $\mathit{cv} = \text{view\_discovery}(\mathit{cv})$ \label{line:discover}
                    \State disseminate \operation{commit}{m, \mathit{cer}, \mathit{v_{cer}}, \mathit{cv}} to all $q \in \mathit{cv}.\mathit{members}$ \label{line:relay_store_brief_2}
                \EndWhile
            \EndIf
            
            \State\textbf{trigger} \textit{leaveComplete} \label{line:left} \Comment{can return from \dbrbLeave{}}
        \EndIf
    \EndIf
\EndUponCondition

\algstore{dynamic}
\end{algorithmic}
\end{algorithm}
\noindent\textbf{View Discovery.}
We show in \Cref{sec:preliminary_lemmata} that views ``created'' during an execution of \name form a sequence.
The View Discovery subprotocol provides information about the sequence of views incorporated in an execution so far.
Since every correct process in the system knows the initial view (\Cref{assumption:init}) and valid transition between views implies the existence of an \msg{install} message with a quorum of properly signed \msg{converged} messages, any sequence of views starting from the initial view of the system such that appropriate \msg{install} messages ``connect'' adjacent views can be trusted.

A correct process that has invoked the \dbrbJoin{} operation and has not left the system executes the View Discovery subprotocol constantly.
Once a correct process starts trusting a sequence of views, it disseminates that information to all processes in the universe.
A correct process executing the View Discovery subprotocol learns which sequences of views are trusted by other processes.
Once the process observes a sequence of views allegedly trusted by a process, it can check whether the sequence is properly formed (as explained above) and if that is the case, the process can start trusting the sequence and views incorporated in it (captured by the view\_discovery function for the joining and leaving process; \cref{line:start_view_discovery_1,line:discover}).

The View Discovery protocol addresses two main difficulties: (1) it enables processes joining and leaving the system to learn about the current membership of the system (\Cref{sec:dynamicity_proof}), (2) it is crucial to ensure the consistency, validity and totality properties of \name since it supplies information about views ``instantiated'' by the protocol and associated quorum systems (\Cref{sec:broadcast_proof}).
We formally discuss the View Discovery protocol in \Cref{subsection:view_discovery}.

\vspace*{-3mm}
\subsection{Broadcast}
\label{sec:broadcast-algorithm}

\begin{algorithm} [h!]
\footnotesize
\caption{\dbrbBroadcast{m} and \dbrbDeliver{m} implementations at process $p$.}
\label{algorithm:b}
\begin{algorithmic}[1]
\algrestore{dynamic}

% \Variables
%     % \State $cv$ \Comment{current view of the system}
%     \State $cer = \bot$ \Comment{message certificate for $m$}
%     \State $v_{cer} = \bot$ \Comment{view in which certificate is collected}
%     \State $stored = false; \text{ } stored\_value = \bot$;
%     \State $can\_leave = false$ \Comment{process fulfilled its ``duty" and can leave}
%     % \State $validity = false$ \Comment{validity for $m$ is ensured or not}
%     % \State $totality = false$ \Comment{totality for $m$ is ensured or not}
%     \State $delivered = false$ \Comment{$m$ delivered or not}
%     \State $acks[q, v] = \bot; \text{ } \Sigma[q, v] = \bot; \text{ } deliver[q, v] = \bot$ \Comment{for every process $q \in \mathcal{U}$ and every view $v$}
% \EndVariables

% \UponCondition{Init}
%     \State $p$ is allowed to acknowledge any $m$ 
% \EndUponCondition

% \algvspace  % This is the first procedure, space is not needed
\Procedure{state-transfer}{states}
    \If{($\mathit{allowed\_ack} = \bot \lor \mathit{allowed\_ack} = m) \land m$ is the only acknowledged message among $\mathit{states}$} \label{line:set_can_ack_brief}
        \State $\mathit{allowed\_ack} = m$; $\mathit{update\_if\_bot}(\mathit{State}.\mathit{ack}, \mathit{prepare\_msg)}$ \Comment{updated only if it is $\bot$}
    \ElsIf{there exist at least two different messages acknowledged among $\mathit{states}$} \label{line:no_ack_1}
        \LineComment{$p$ and $p'$ are different \msg{prepare} messages}
        \State $\mathit{allowed\_ack} = \top$; $\mathit{update\_if\_bot}(\mathit{State}.\mathit{conflicting}, p, p')$; $\mathit{State}.\mathit{ack} = \bot$
    \ElsIf{there exists a state among $\mathit{states}$ such that it provides two different broadcast messages} \label{line:no_ack_2}
        \State $\mathit{allowed\_ack} = \top$; $\mathit{update\_if\_bot}(\mathit{State}.\mathit{conflicting}, p, p')$; $\mathit{State}.\mathit{ack} = \bot$
    \EndIf
    
    \If{$\lnot \mathit{stored}$ $\land$ there exists a stored message $m$ with a valid message certificate among $\mathit{states}$} \label{line:verify_cer_2_brief}
        \State $\mathit{stored} = \mathit{true}; \text{ } \mathit{stored\_value} = (m, \mathit{cer}, \mathit{v_{cer}})$ \Comment{$\mathit{cer}$ is the message certificate collected in view $\mathit{v_{cer}}$} \label{line:store_first_part}
        \State $\mathit{update\_if\_bot}(\mathit{State}.\mathit{stored}, \mathit{commit\_msg})$ \label{line:state_stored_first_part} \Comment{updated only if it is $\bot$}
    \EndIf

    % \State (1) $p$ stores $m$ if $p$ has not stored any value and some process $q$ claims that it has stored $m$ with a valid message certificate \label{line:verify_cer_2_brief}
    % \State (2) $p$ is allowed to acknowledge $m$ if only $m$ is acknowledged by processes that have sent their state \label{line:set_can_ack_brief}
    % \State (3) $p$ is not allowed to acknowledge any message if some process $q_1$ has acknowledged $m$ and some process $q_2$ has acknowledged $m'$ \label{line:no_ack_1}
    % \State (4) $p$ is not allowed to acknowledge any message if some process $q$ claims that $p$ should not acknowledge any message and provides $p$ with two different messages $s$ broadcast \label{line:no_ack_2}
\EndProcedure

\algvspace
\Procedure{new-view}{\hphantom{}} \label{line:new_view_start}
    % \State $cv = v$
    \IfThen
        {$p = s \land \mathit{cer} = \bot$} 
        {$\text{disseminate}$ \operation{prepare}{m, \mathit{cv}} to all $q \in \mathit{cv}.\mathit{members}$}
        \label{line:rebroadcast_prepare_brief}
        % \Comment{$s$ did not collect a message certificate}
        % \State $\forall k \in cv, \text{send \operation{prepare}{m, cv}} \text{ to } k$ \label{line:rebroadcast_prepare_brief}
    \IfThen
        {$p = s$ $\land$ $\mathit{cer} \neq \bot$ $\land$ $\lnot \mathit{can\_leave}$} 
        {$\text{disseminate}$ \operation{commit}{m, \mathit{cer}, \mathit{v_{cer}}, \mathit{cv}} to all $q \in \mathit{cv}.\mathit{members}$}
        \label{line:rebroadcast_commit_brief}
        % \Comment{$s$ did not ensure validity}
        % \State $\forall k \in cv, \text{send \operation{commit}{m, cer, v_{cer}, cv}} \text{ to } k$
    \IfThen
        {$p \neq s$ $\land$ $\mathit{stored}$ $\land$ $\lnot\mathit{can\_leave}$} 
        {$\text{disseminate}$ \operation{commit}{m, \mathit{cer}, \mathit{v_{cer}}, \mathit{cv}} to all $q \in \mathit{cv}.\mathit{members}$} 
        \label{line:relay_commit_2_brief}
        % \Comment{$p$ did not ensure totality}
        % \State $\forall k \in cv, \text{send \operation{commit}{m, cer, v_{cer}, cv}} \text{ to } k$ \label{line:relay_commit_2_brief}
\EndProcedure

\algvspace
\Procedure{\emph{\dbrbBroadcastAlone{}}}{m}
    \IfThen
        {$\mathit{installed}(cv)$} \label{line:is_installed_prepare}
        {$\text{disseminate}$ \operation{prepare}{m, \mathit{cv}} to all $ q \in \mathit{cv}.\mathit{members}$} 
        \label{line:send_prepare_brief}
        % \State $\forall k \in cv, \text{send \operation{prepare}{m, cv}} \text{ to } k$ \label{line:send_prepare_brief}
    % \State $broadcast = broadcast \cup \{m\}$ \Comment{add $m$ to the list of messages $p$ has broadcast}
    % % \State $m_e = \langle id(M), cv, M, \bot, \bot \rangle$\label{line:create_m_e}
    % \State $\forall k \in cv, \text{send} \langle PREPARE, m, cv \rangle_{\sigma_{SK_p}} \text{ to } k$ \label{line:send_prepare}
\EndProcedure

\algvspace
\UponReceiptFromST{\operation{prepare}{m, v}}{s \in v.\mathit{members}}{v = \mathit{cv}} \label{line:deliver_prepare_brief}
    \If{$\mathit{allowed\_ack} = m$ $\lor$ $\mathit{allowed\_ack} = \bot$} \label{line:check_allowed_brief}
        \State $\mathit{allowed\_ack} = m$; $\mathit{update\_if\_bot}(\mathit{State}.\mathit{ack},$ \operation{prepare}{m, v}) \label{line:set_state_ack} \Comment{updated only if it is $\bot$}
        % $\mathit{State}.\mathit{ack} = $ \operation{prepare}{m, v} \label{line:set_state_ack} \Comment{$\mathit{State}.\mathit{ack}$ is updated only if it is $\bot$}
        \State $\sigma = sign(m, \mathit{cv}); \text{ send \operation{ack}{m, \sigma, \mathit{cv}}} \text{ to } s$ \label{line:send_ack_brief} 
        % \Comment{$p$ sends a signature to sender $s$}
        % \State $\text{send} \langle ACK, m, \sigma, cv \rangle \text{ to } s$ \label{line:send_ack_brief} \Comment{$p$ sends an signature to sender $q$}
    \EndIf
\EndUponReceiptFromST

\algvspace
\UponReceipt{\operation{ack}{m, \sigma, v}}{q \in v.\mathit{members}} \Comment{only process $s$}
    \IfThen{$\mathit{acks[q, v]} = \bot$ $\land$ $\mathit{verifysig}(q, m, v, \sigma)$}
        {$\mathit{acks[q, v]} = m; \text{ } \Sigma[q, v] = \sigma$}
\EndUponReceipt

\algvspace
\UponExists{m \neq \bot \text{ and } v}{|\{q \in v.\mathit{members}| \mathit{acks[q, v]} = m\}| \geq v.q \land \mathit{cer} = \bot} \label{line:certificate_collected_brief}
    \State $\mathit{cer} = \{\Sigma[q, v]: \mathit{acks[q, v]} = m \}; \text{ } \mathit{v_{cer}} = v$
    \IfThen
        {$\mathit{installed}(cv)$}
        {$\text{disseminate}$ \operation{commit}{m, \mathit{cer}, \mathit{v_{cer}}, \mathit{cv}} to all $q' \in \mathit{cv}.\mathit{members}$}
        \label{line:send_commit_brief}
    % \State $\forall k \in cv, \text{send \operation{commit}{m, cer, v_{cer}, cv}} \text{ to } k$ \label{line:send_commit_brief}
\EndUponExists

% \UponReceiptFromST{\operation{commit}{m, cer, v_{cer}, v}}{q}{v = cv} \label{line:deliver_commit_brief}
%     \If{$verify\_certificate(cer, v_{cer}, m)$} \label{line:verify_cert_1_brief}
%         \If{$delivered = false$} \label{line:verify_delivered_brief}
%             % \State $delivered = true$
%             % \State \textbf{trigger } \dbrbDeliver{m}
%             \State $\forall k \in cv, \text{send \operation{store}{m, cer, v_{cer}, cv}} \text{ to } k$ \label{line:relay_store_brief}
%             \State $\forall k \in cv, \text{send \operation{commit}{m, cer, v_{cer}, cv}} \text{ to } k$ \label{line:relay_commit_brief}
%         \EndIf
%         % \State $\text{send \operation{con-commit}{m, cv}} \text{ to } q$ \label{line:send_con-commit_brief}
%     \EndIf
% \EndUponReceiptFromST

\algvspace
\UponReceiptFromST{\operation{commit}{m, cer, v_{cer}, v}}{q}{v = \mathit{cv}} \label{line:deliver_store}
    \If{$\mathit{verify\_certificate(cer, v_{cer}, m)}$} \label{line:verify_cert_3_brief}
        \If{$\lnot \mathit{stored}$} \label{line:verify_stored_brief}
            \State $\mathit{stored} = \mathit{true}; \text{ } \mathit{stored\_value} = (m, \mathit{cer}, \mathit{v_{cer}})$ \label{line:store}
            \State $\mathit{update\_if\_bot}(\mathit{State}.\mathit{stored},$ \operation{commit}{m, cer, v_{cer}, v}) \Comment{updated only if it is $\bot$} \label{line:update_state_stored}
            % $\mathit{State}.\mathit{stored} = $ \operation{commit}{m, cer, v_{cer}, v} \label{line:store}
            \State $\text{disseminate}$ \operation{commit}{m, \mathit{cer}, \mathit{v_{cer}}, \mathit{cv}} to all $q' \in \mathit{cv}.\mathit{members}$ \label{line:relay_store_brief}
            % \State $\forall k \in cv, \text{send \operation{commit}{m, cer, v_{cer}, cv}} \text{ to } k$ \label{line:relay_store_brief}
        \EndIf
        \State $\text{send \operation{deliver}{m, \mathit{cv}}} \text{ to } q$ \label{line:send_ack-store_brief}
    \EndIf
\EndUponReceiptFromST

\algvspace
\UponReceipt{\operation{deliver}{m, v}}{q \in v.\mathit{members}}
    \IfThen{$\mathit{deliver[q, v]} = \bot$}
        {$\mathit{deliver[q, v]} = \top$}
\EndUponReceipt

\algvspace
\UponExists{v}{|\{q \in v.\mathit{members}| \mathit{deliver[q, v]} = \top\}| \geq v.q \text{ for the first time}} \label{line:ack-store_quorum_brief}
    % \If{$delivered = false$}
        \State $\mathit{delivered} = \mathit{true}$
        \State \textbf{invoke } \dbrbDeliver{m} \label{line:deliver}
        \State $\mathit{can\_leave} = \mathit{true}$ \label{line:set_can_leave} \Comment{If $p = s$, \dbrbBroadcastAlone{} is completed}
    % \EndIf
    % \If{$p = s$}
    %     \State $validity = true$
    % \EndIf
    % \State $totality = true$
\EndUponExists

% \UponReceipt{\operation{con-commit}{m, v}}{q \in v}
%     \If{$concommits[q, v] = \bot$}
%         \State $concommits[q, v] = \top$
%     \EndIf
% \EndUponReceipt

% \UponExists{v}{\#(\{q \in v| concommits[q, v] = \top\}) \geq v.q \text{ for the first time}} \label{line:con-commit_quorum_brief}
%     % \If{$p = s$}
%     %     \State $validity = true$
%     % \Else
%     %     \State $totality = true$
%     % \EndIf
%     \If{$p = s$}
%         \State $validity = true$
%     \EndIf
%     \State $totality = true$
% \EndUponExists
\end{algorithmic}
\end{algorithm}

\vspace*{-3mm}
In order to broadcast some message $m$, processes in \name use the following types of messages:

% \begin{compactenum}
    \noindent\msg{prepare}: When a correct process $s$ invokes a \dbrbBroadcast{m} operation, the algorithm creates a $m_{prepare} = \text{\operation{prepare}{m, \mathit{cv_s}}}$ message, where $\mathit{cv_s}$ is the current view of the system of process $s$.
    Message $m_{prepare}$ is sent to every process that is a member of $\mathit{cv_s}$ (\Cref{line:send_prepare_brief}).
    Process $s$ disseminates the \msg{prepare} message if $\mathit{cv_s}$ is installed by $s$; otherwise, $s$ does not disseminate the message to members of $cv_s$ (\Cref{line:is_installed_prepare}), but rather waits to install some view and then disseminates the \msg{prepare} message (\Cref{line:rebroadcast_prepare_brief}).
    
    \noindent\msg{ack}: When a correct process $q$ receives $m_{prepare}$ message, $q$ firstly checks whether view specified in $m_{prepare}$ is equal to the current view of $q$ (\Cref{line:deliver_prepare_brief}). 
    If that is the case, $q$ checks whether it is allowed to send an \msg{ack} message for $m$ (see Consistency paragraph in \Cref{sec:correctness-informal}; \Cref{line:check_allowed_brief}) and if it is, $q$ sends $m_{ack} = \text{\operation{ack}{m, \sigma, \mathit{cv_q}}}$ message to process $s$ (i.e., the sender of $m_{prepare}$), where $\sigma$ represents the signed statement that $s$ sent $m$ to $q$ (\Cref{line:send_ack_brief}).
    When some process $q$ sends an \msg{ack} message for $m$ ($m$ is a second argument of the message), we say that $q$ \emph{acknowledges} $m$.
    Moreover, if an \msg{ack} message is associated with some view $v$, we say that $q$ acknowledges $m$ in a view $v$.
    
    \noindent\msg{commit}: When process $s$ receives a quorum of appropriate \msg{ack} messages associated with the same view $v$ for $m$ (\Cref{line:certificate_collected_brief}), $s$ collects received signed statements into a \emph{message certificate}.
    Process $s$ then creates $m_{commit} = \text{\operation{commit}{m, \mathit{cer}, \mathit{v_{cer}}, \mathit{cv_s}}}$ message and disseminates $m_{commit}$ to every process that is a member of $\mathit{cv_s}$ (\Cref{line:send_commit_brief}).
    Note that $\mathit{cv_s}$ may be different from $v$ (we account for this in the rest of the section).
    Moreover, $s$ disseminates the \msg{commit} message (\Cref{line:send_commit_brief}) if $\mathit{cv_s}$ is installed by $s$; otherwise, $s$ does not disseminate the message to members of $cv_s$, but rather waits to install some view and then disseminates the \msg{commit} message (\Cref{line:rebroadcast_commit_brief}).
    
    \noindent\msg{deliver}: When a correct process $q$ receives $m_{commit}$ message, it firstly checks whether view specified in $m_{commit}$ is equal to the current view of $q$ (\Cref{line:deliver_store}).
    If that is the case and the message certificate is valid (\Cref{line:verify_cert_3_brief}), $q$ ``stores'' $m$ (\Cref{line:store}) and sends $m_{deliver} = \text{\operation{deliver}{m, cv_q}}$ to process $s$ as a an approval that $s$ can deliver $m$ (\Cref{line:send_ack-store_brief}).
    When a process $q$ executes \Cref{line:store} or \Cref{line:store_first_part} for a message $m$, we say that $q$ \emph{stores} $m$.
    Observe that $q$ also disseminates $m_{commit}$ in order to deliver $m$ itself (\cref{line:relay_store_brief,line:relay_commit_2_brief}).
    
    Lastly, once a correct process receives a quorum of appropriate \msg{deliver} messages associated with the same view $v$ for $m$ (\Cref{line:ack-store_quorum_brief}), it delivers $m$ (\Cref{line:deliver}).
% \end{compactenum}

Every \msg{prepare}, \msg{ack}, \msg{commit} and \msg{deliver} message is associated with one specific view.

We can divide the broadcasting of message $m$ by the correct sender $s$ into two phases:
\begin{compactitem}
    \item\textbf{Certificate collection phase}: This phase includes a dissemination of an appropriate \msg{prepare} message and a wait for a quorum of \msg{ack} messages by process $s$.
    Note that \msg{prepare} and \msg{ack} messages are associated with the same view $v$.
    We say that certificate collection phase is \emph{executed} in view $v$.
    Moreover, if $s$ indeed receives a quorum of \msg{ack} messages associated with $v$, we say that certificate collection phase is \emph{successfully executed} in $v$.
    In that case, sometimes we say that $s$ collects a message certificate in $v$.
    
    \item\textbf{Storing phase}: In this phase, each correct process $p$ (including $s$) disseminates a \msg{commit} message (containing a valid message certificate collected in the previous phase), and waits for a quorum of \msg{deliver} messages.
    Note that \msg{commit} and \msg{deliver} messages are associated with the same view $v$.
    We say that storing phase is \emph{executed} in view $v$.
    Moreover, if $p$ indeed receives a quorum of \msg{deliver} messages associated with $v$, we say that storing phase is \emph{successfully executed} in $v$.
\end{compactitem}
Observe that the certificate collection phase can be successfully executed in some view $v$, whereas the storing phase can be executed in some view $v' \supset v$.
This is the reason why we include $v_{cer}$ argument in a \msg{commit} message, representing the view in which a message certificate is collected.
Lastly, in order to ensure validity and totality, processes must disseminate \msg{prepare} and \msg{commit} messages in new views they install until they collect enough \msg{ack} and \msg{deliver} messages, respectively.
This mechanism is captured in the \textit{new-view} procedure (\Cref{line:new_view_start}) that is invoked when a view is installed (\Cref{line:trigger_view_installed}).
% at a process (\Cref{line:trigger_view_installed}).
% a process considers some view as the new current state of the system.
% A correct process $p$ installs some view $v$ when $p$ delivers an \msg{install} message for $v$ and when the condition of \Cref{line:more_updated_views_1} of \Cref{algorithm:install} is not satisfied.
% Moreover, an aforementioned \msg{install} message is reliably sent to all processes of $\mathcal{U}$ (\Cref{line:send_reliable} of \Cref{algorithm:dynamic}).
% Since any process can collect a valid message certificate only in an installed view (we prove this in \Cref{sec:proof}) and because of the fact that an \msg{install} message is reliably sent to all processes of $\mathcal{U}$, a correct process "accepts" a message certificate for some message $m$ collected in some view $v$ once it delivers an \msg{install} message that proves that view $v$ is indeed properly installed in the system.
% In this way, a correct process can not be tricked by a malicious one that plants a corrupted view (more than one third of malicious processes) $v'$ and claims that it has successfully executed a certificate collection phase in $v'$.

\vspace*{-6mm}
\subsection{Illustration} 
\label{sec:execution}

\vspace*{-2mm}
Consider four participants at time $t = 0$.
Process $p_1$ broadcasts a message $m$.
Hence, $p_1$ sends to processes $p_1$, $p_2$, $p_3$, $p_4$ a $m_{prepare} = \text{\operation{prepare}{m, v_0}}$ message, where $v_0 = \{\langle +, p_1 \rangle, \langle +, p_2 \rangle, \langle +, p_3 \rangle, \langle +, p_4 \rangle\}$.
Since all processes consider $v_0$ as their current view of the system at time of receiving of $m_{prepare}$ message, they send to $p_1$ an appropriate \msg{ack} message and $p_1$ collects a quorum (with respect to $v_0$) of \msg{ack} messages for $m$.

However, process $p_5$ invokes a \dbrbJoin{} operation and processes $p_2$, $p_3$, $p_4$, $p_5$ set $v_1 = \{\langle +, p_1 \rangle, \langle +, p_2 \rangle, \langle +, p_3 \rangle, \langle +, p_4 \rangle, \langle +, p_5 \rangle\}$ as their current view of the system.
Process $p_1$ still considers $v_0$ as its current view of the system and disseminates a $m_{commit} = \text{\operation{commit}{m, \mathit{cer}, \mathit{v_{cer}} = v_0, v_0}}$ message.
Processes $p_2$, $p_3$ and $p_4$ do not store $m$, since $v_0$ (specified in $m_{commit}$ message) is not their current view.
Observe that $p_1$ stores $m$ since $v_0$ is still the current view of the system from $p_1$'s perspective.

Once process $p_1$ assigns $v_1$ as its current view of the system, it disseminates $m_{commit} = \text{\operation{commit}{m, \mathit{cer}, \mathit{v_{cer}} = v_0, v_1}}$ message to processes that are members of $v_1$ and they all store $m$ and relay $m_{commit}$ message to all processes that are members of $v_1$.
Hence, they all deliver message $m$ once they collect a quorum (with respect to $v_1$) of matching \msg{deliver} messages.
In this execution $p_1$ has successfully executed a certificate collection phase in $v_0$ and then reused the message certificate to relay an appropriate \msg{commit} message to processes that are members of $v_1$ (since the system has reconfigured to $v_1$).
Note that \Cref{fig:example} depicts the described execution.
For presentation simplicity, \Cref{fig:example} just shows the \msg{commit} and \msg{deliver} messages that allow process $p_1$ to deliver $m$.

\begin{figure}[h]
    \centering
    \includegraphics[width=.6\columnwidth]{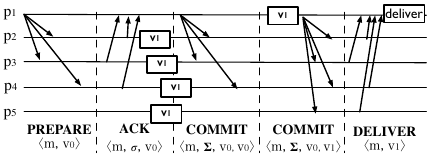}
    \vspace*{-3mm}
    \caption{Example of a broadcast operation in \name algorithm, considering a dynamic membership.}
    \label{fig:example}
\end{figure}
\vspace*{-8mm}

\section{DBRB Algorithm Correctness}
\label{sec:correctness-informal}

\vspace*{-2mm}
We now give an intuition of why  our \name algorithm is correct; we give formal arguments in~\cref{sec:preliminary_definitions,sec:preliminary_lemmata,subsection:view_discovery,sec:dynamicity_proof,sec:broadcast_proof}.

We first define the notions of \emph{valid} and \emph{installed} views.
A view $v$ is \emph{valid} if: (1) $v$ is the initial view of the system, or (2) a sequence $seq = v \to ...$ is converged on to replace some valid view $v'$.
% \begin{compactitem}
%     \item $v$ is the initial view of the system; or
%     \item a sequence $seq = v \to ...$ is converged on to replace some valid view $v'$.
% \end{compactitem}
% For every valid view $v$ (in any execution) there is an upper bound of less than one third on the number of faulty processes exhibiting Byzantine behavior.
% Moreover, this bound applies both to processes that are active participants as well as processes leaving the system.
% (We revisit these in \Cref{sec:preliminary_lemmata}.)
% For any valid view $v$, the size of a quorum is $v.q = |v| - \lfloor \frac{|v| - 1}{3} \rfloor$.
A valid view $v$ is \emph{installed} if a correct process $p \in v$ processed \msg{prepare}, \msg{commit} and \msg{reconfig} messages associated with $v$ during an execution.
By default, the initial view of the system is installed.
Lastly, our implementation ensures that installed views form a sequence of views (we prove this formally in \Cref{sec:preliminary_lemmata}).
% We also revisit \msg{state-update} messages here.

% \msg{state-update} messages carry information about: (1) a message acknowledged by a process, (2) a message stored by a process, and (3) reconfiguration requests observed by a process.
% Hence, a \msg{state-update} message contains at most two \msg{prepare} messages associated with some installed view and properly signed by $s$ (corresponds to (1)).
% Two \msg{prepare} messages are needed if a process observes that $s$ broadcast two messages and are used to convince other processes not to acknowledge any messages (\cref{line:no_ack_1,line:no_ack_2}).
% Moreover, \msg{state-update} messages contain at most one \msg{commit} message associated with some installed view with a valid message certificate and properly signed by $s$ (corresponds to (2)), and a (possibly empty) list of properly signed \msg{reconfig} messages associated with some installed view (corresponds to (3)).
% Our implementation ensures that installed views form a sequence of views (we prove this formally in \Cref{sec:preliminary_lemmata}).

% \paragraph{Join and leave safety.} 
% Join and leave safety are mainly ensured by the fact that \msg{propose} messages carry signed \msg{reconfig} messages for changes they propose.

\smallskip
\noindent\textbf{Liveness.}
\dbrbJoin{} and \dbrbLeave{} operations complete because any 
% Join and leave liveness are ensured by 
change ``noticed'' by a quorum of processes is eventually  processed.
Intuitively, a sequence can be converged on if a quorum of processes propose that sequence.
Moreover, noticed changes are transferred to new valid views.
% instantiated by \name.
\dbrbBroadcastAlone{} operation completes since a correct sender eventually collects a quorum of \msg{deliver} messages associated with an installed view (see the next paragraph).

\smallskip
\noindent\textbf{Validity.}
Recall that we assume a finite number of reconfiguration requests in any execution of \name (\Cref{assumption:requests}), which means that there exists a view $v_{final}$ from which the system will not be reconfigured (we prove this in \Cref{sec:preliminary_lemmata}).
In order to prove validity, it suffices to show that every correct member of $v_{final}$ delivers a broadcast message.
% message broadcast by $s$.

A correct process $s$ that broadcasts a message $m$ executes a certificate collection phase in some installed view $v$ (the current view of $s$).
Even if $s$ does not successfully execute a certificate collection phase in views that precede $v_{final}$ in the sequence of installed views, $s$ successfully executes a certificate collection phase in $v_{final}$ (proven in \Cref{sec:broadcast_proof}).
Note that process $s$ does not leave the system before it collects enough \msg{deliver} messages (ensured by the check at \Cref{line:totality_validity} and the assignment at \Cref{line:set_can_leave}).

Moreover, a correct process $p$ that stored a message $m$ eventually collects a quorum of \msg{deliver} messages associated with some installed view $v$.
As in the argument above, even if $p$ does not collect a quorum of \msg{deliver} messages associated with views that precede $v_{final}$ in the sequence of installed views, $p$ does that in $v_{final}$ (proven in \Cref{sec:broadcast_proof}).
Observe that if at least a quorum of processes that are members of some installed view $v$ store a message $m$, then every correct process $p \in v'$, where view $v' \supseteq v$ is installed, stores $m$.
Let us give the intuition behind this claim.
Suppose that view $v'$ directly succeeds view $v$ in the sequence of views installed in the system.
Process $p \in v'$ waits for states from at least a quorum of processes that were members of $v$ (\Cref{line:state-update}) before it updates its current view to $v'$.
Hence, $p$ receives from at least one process that $m$ is stored and then $p$ stores $m$ (\Cref{line:verify_cer_2_brief}).
The same holds for the correct members of $v$.
% Once $p$ stores $m$, it executes a storing phase in $v'$.
% Even if $p$ does not successfully execute storing phase in $v'$ (and other installed views that precede $v_{final}$), it succeeds in $v_{final}$ and then delivers $m$ (\Cref{line:deliver} of \Cref{algorithm:b}).

It now suffices to show that $s$ collects a quorum (with respect to some installed view) of confirmations that $m$ is stored, i.e., \msg{deliver} messages.
Even if the correct sender does not collect a quorum of \msg{deliver} messages in views that precede $v_{final}$, it collects the quorum when disseminating the \msg{commit} message to members of $v_{final}$.
Suppose now that the sender collects the aforementioned quorum of \msg{deliver} messages in some installed view $v$.
If $v \neq v_{final}$, every correct member of $v_{final}$ stores and delivers $m$ (because of the previous argument).
If $v = v_{final}$, the reliable communication and the fact that the system can not be further reconfigured guarantee that every correct member of $v_{final}$ stores and, thus, delivers $m$.

\smallskip
\noindent\textbf{Totality.}
The intuition here is similar to that behind ensuring validity.
Consider a correct process $p$ that delivers a message $m$: 
$p$ successfully executed a storing phase in some installed view $v$.
This means that every member of an installed view $v' \supseteq v$ stores $m$.
% In order to ensure totality, $p$ must relay message $m$ (\Cref{line:relay_commit_brief,line:relay_commit_2_brief} of \Cref{algorithm:b}) until $p$ collects a quorum of \msg{con-commit} messages from processes that are members of some installed view $v$.
Consider a correct participant $q$ that expressed its will to leave after process $p$ had delivered $m$.
This implies that $q \in v''$, where $v'' \supseteq v$ is an installed view, which means that process $q$ eventually stores $m$.
As in the previous paragraph, we conclude that process $q$ eventually collects enough \msg{deliver} messages associated with some installed view and delivers $m$. 
% If $v \neq v_{final}$, every correct member of $v_{final}$ delivers $m$ (because of the argument we made in the previous paragraph).
% If $v = v_{final}$, reliable links and fact that the system can not be further reconfigured guarantee that every correct member of $v_{final}$ delivers $m$.
% Hence, totality is ensured.
% In this manner, it is ensured that every correct process that is a member of $v_{final}$ (i.e., that does not leave the system) delivers $m$.
% Moreover, $p$ relays received \msg{commit} message to all processes (\Cref{line:relay_commit_brief}).

\smallskip
\noindent\textbf{Consistency.}
A correct process delivers a message only if there exists a message certificate associated with the message (the check at \Cref{line:verify_cert_3_brief}).
Hence, the malicious sender $s$ must collect message certificates for two different messages in order for the consistency to be violated.

Suppose that process $s$ has successfully executed a certificate collection phase in some installed view $v$ for a message $m$.
Because of the quorum intersection and the verification at \Cref{line:check_allowed_brief}, it is impossible for $s$ to collect a valid message certificate in $v$ for some message $m' \neq m$.
Consider now an installed view $v'$ that directly succeeds view $v$ in the sequence of installed views.
Since $s$ collected a message certificate for $m$ in $v$,
every correct process $p \in v'$ receives from at least one process from the view $v$ that it is allowed to acknowledge only message $m$ (\Cref{line:set_can_ack_brief}).
% , since $s$ collected a message certificate for $m$ in $v$.
It is easy to see that this holds for every installed view $v'' \supset v'$.
Therefore, if $s$ also collects a message certificate for some message $m'$, then $m' = m$ and the consistency holds.

\smallskip
\noindent\textbf{No duplication.}
Trivially follows from \Cref{line:ack-store_quorum_brief}.

\smallskip
\noindent\textbf{Integrity.}
Consider a correct process $q$ that delivers a message $m$.
There is a message certificate for $m$ collected in some installed view $v$ by $s$.
A message certificate for $m$ is collected since a quorum of processes in $v$ have sent an appropriate \msg{ack} message for $m$.
A correct process sends an \msg{ack} message only when it receives an appropriate \msg{prepare} message.
Consequently, message $m$ was broadcast by $s$.
\vspace*{-4mm}
\section{Related Work \& Conclusions}
\label{sec:conclusions}

% \subsection{Comparison with Prior Work}
% \label{sec:prior-work}

% % \needsrev{@Jovan: try to cite every paper you read here... FreeStore, DynaStore, lattice agreement if useful, reconfiguration based on consensus}

\vspace*{-2mm}
\noindent\textbf{DBRB vs. Static Byzantine Reliable Broadcast.}
Our \name abstraction generalizes \emph{static} \reliablbcast (Byzantine reliable broadcast~\cite{br85acb,ma97secure}).
%[[PK too textbook?
\ignore{
A textbook specification of \reliablbcast ensures these five properties:
% We provide the specification of the textbook example of the Byzantine reliable broadcast primitive and contrast it to the \name abstraction.
% \begin{itemize}
    % \item
(1) \emph{Validity}: If a correct process $s$ broadcasts a message $m$, then every correct process eventually delivers $m$.
    % \item
(2) \emph{Totality}: If a correct process $p$ delivers a message $m$, then every correct process $q$ eventually delivers $m$.
    % \item
(3) \emph{No duplication}: No correct process delivers more than one message.
    % \item
(4) \emph{Integrity}: If some correct process delivers a message $m$ with sender $s$ and process $s$ is correct, then $s$ previously broadcast $m$.
    % \item
(5) \emph{Consistency}: If some correct process delivers a message $m$ and another correct process delivers a message $m'$, then $m = m'$.
% \end{itemize}
}
%Our \name ensures the properties of \emph{duplication}, \emph{integrity} and \emph{consistency} from textbook definition of {\reliablbcast}~\cite{cachin2011introduction}.
%Notice that  No duplication, integrity and consistency are the same for both \name and \reliablbcast .
%Moreover, 
Assuming that no process joins or leaves the system, the two abstractions coincide. % \name is equivalent to \reliablbcast.
%Validity and totality properties are similar across the two abstractions, except that 
In a dynamic setting, the validity property of \name stipulates that only processes that do not leave the system deliver the appropriate messages.
Moreover, the totality property guarantees that only processes that have not expressed their will to leave deliver the message.
We prove that stronger variants of these properties are impossible in our model.

\smallskip
\noindent\textbf{Passive and Active Reconfiguration.}
Some reconfigurable systems~\cite{BaldoniBKR09,AttiyaCEKW19,collect-churn-tr} assume that processes join and leave the system under a specific \emph{churn} model.  
Intuitively, the consistency properties of the implemented service, e.g., an atomic storage, are ensured assuming that the system does not evolve too quickly and there is always a certain fraction of correct members in the system.   
In \name, we model this through the quorum system assumption on valid views (\Cref{assumption:quorums}).
Our system model also assumes that booting finishes by time $0$ (\Cref{assumption:init}), thus avoiding the problem of unbounded booting times which could be problematic in asynchronous network~\cite{wid07booting}.

%    
% Recent work describes an implementation of a Byzantine-tolerant atomic register in this model, assuming that replicas are subject to Byzantine failures (clients are crash-prone) and that there is known upper bound on the number of faulty replicas~\cite{KW19}.
\emph{Active} reconfiguration allows the processes to explicitly propose configuration updates, e.g., sets of new process members. 
 In \emph{DynaStore}~\cite{aguilera2011dynamic},   reconfigurable dynamic atomic storage is implemented in an asynchronous environment (i.e., without relying on consensus).
Dynastore implicitly generates a graph of views which provides a way of identifying a sequence of views in which clients need to execute their r/w operations. 
%
%In the worst case, when $n$ configurations are proposed, a Dynastore client go through $2^{n-1}$ candidate configurations before converging.
%
SpSn~\cite{gafni2015elastic} proposes to capture this order via the \emph{speculating snapshot} algorithm (SpSn).
%, improving time complexity of Dynastore. %to $O(n^2)$. 
%
SmartMerge~\cite{jehl2015smartmerge} implements a reconfigurable storage %with complexity $O(n)$ 
in which not only system membership but also its quorum system can be reconfigured, assuming that a \emph{static} external lattice agreement is available. 
In~\cite{rla}, it was shown that \emph{reconfigurable lattice agreement} 
can 
get rid of this assumption and still implement a large variety of reconfigurable objects.
The approach was then extended to  the Byzantine fault model~\cite{bla}. 
 FreeStore~\cite{alchieri2016efficient} introduced \emph{view generator}, an abstraction that captures the agreement demands of reconfiguration protocols.
 Our work is highly inspired by FreeStore, which algorithmic and theoretical approach we adapt to an arbitrary failure model.

All reconfigurable solutions discussed above were applied exclusively to shared-memory emulations.
Moreover, most of them assumed the crash fault model.
In contrast, in this paper, we address the problem of dynamic \emph{reliable broadcast}, assuming an arbitrary (Byzantine) failure model. 
Also, we do not distinguish between clients and replicas, and assume 
%active, but ``individualistic'' reconfiguration calls: 
that every process can only suggest itself as a candidate to join or leave the system.  
Unlike the concurrent work by Kumar and Welch on Byzantine-tolerant registers~\cite{KW19}, our solution can tolerate unbounded number of Byzantine failures, as long as basic quorum assumptions on valid views are maintained.        

\smallskip
\noindent\textbf{Broadcast Applications.}
Reliable broadcast is one of the most pervasive primitives in distributed applications~\cite{pedone2002handling}.
For instance, broadcast can be used for maintaining caches in cloud services~\cite{Geng2013}, or in a publish-subscribe network~\cite{broadcastApplications}.
Even more interestingly, Byzantine fault-tolerant reliable broadcast (e.g., dynamic solution such as our \name, as well as static solutions~\cite{br85acb,guerr19:11329,malkhi1997high}) are sufficiently strong for implementing decentralized online payments, i.e., cryptocurrencies~\cite{guerraoui2019consensus}.
%,sliw19abc}.
% It is shown in \cite{} that consensus is not necessary in order to implement safe money transfer.
% Moreover, \cite{DBLP:journals/corr/abs-1812-10844} provides an algorithm for implementing a crypto-currency using a secure broadcast \cite{malkhi1997high}.

\smallskip
\noindent\textbf{Summary.}
This paper presents the specification of \name (dynamic Byzantine reliable broadcast), as well as an asynchronous algorithm implementing this primitive.
\name generalizes traditional Byzantine reliable broadcast, which operates in static environments, to work in a dynamic network.
% \needsrev{Andrei: Do you want to keep the following sentence despite the existence of the (unpublished) paper by Kumar \& Welch?}
To the best of our knowledge, we are the first to investigate an arbitrary failure model in implementing dynamic broadcast systems.
%, in contrast to previous work that considered a crash-stop failure model.
The main merit of our approach is that we did not rely on a consensus building blocks, i.e., \name can be implemented completely asynchronously.
\vspace*{-6mm}

%%
%% Bibliography
%%

%% Please use bibtex, 

\bibliography{main}

\appendix

\newpage
\section{Appendix}
This appendix includes the formal proof of the correctness of \name algorithm and its optimality.
Namely, \Cref{sec:preliminary_definitions} introduces a few definitions that we use throughout the entire section.
In \Cref{sec:preliminary_lemmata}, we show that all views ``created'' by our algorithm do form a sequence of views.
Moreover, we show that processes update system membership from their perspective if the system reconfigures.
We discuss the View Discovery protocol in \Cref{subsection:view_discovery}.
In \Cref{sec:dynamicity_proof}, we show that correct processes do eventually join or leave the system.
\Cref{sec:broadcast_proof} builds on previous subsections of the appendix and shows that our \name algorithm satisfies properties from \cref{definition:spec,definition:non-triv-spec}.
Lastly, in \Cref{sec:optimality-impossibilities}, we show that guarantees we provide are indeed optimal in the model we consider.

\vspace*{-2mm}
% This is appendix 

%\subsection{Proof of \name Correctness}
%\label{sec:proof}

\subsection{Preliminary Definitions}
\label{sec:preliminary_definitions}

\vspace*{-2mm}
\begin{definition}
Suppose that a correct process $p \in v$ broadcasts a \emph{\operation{propose}{seq, v}} message at \Cref{line:propose_system_configuration} or at \Cref{line:propose_rest}.
% \Cref{line:propose_system_configuration,line:propose_rest} of \Cref{algorithm:dynamic}.
We say that $p$ \textbf{proposes} a sequence of views $seq$ to replace a view $v$.
\end{definition}

\begin{definition}
Suppose that a correct process $p \in v$ collects a quorum (with respect to $v$, i.e., $v.q$) of \emph{\operation{converged}{seq, v}} messages (\Cref{line:seq_conv}).
We say that $p$ \textbf{converges on} a sequence of views $seq$ to replace a view $v$.
\end{definition}

\begin{definition}
We say that a sequence of views $seq$ is \textbf{converged on} to replace a view $v$ if some (correct or faulty) process $p \in v$ collects a quorum (with respect to $v$, i.e., $v.q$) of \emph{\operation{converged}{seq, v}} messages.
% if some correct process $p$ converges on $seq$ to replace $v$.
\end{definition}

\begin{definition} \label{definition:accepts}
Suppose that $seq \in FORMAT^v$ at a correct process $p$ for some view $v \ni p$.
We say that $p$ \textbf{accepts} a sequence of views $seq$ to replace a view $v$.
Moreover, if $\emptyset \in FORMAT^v$, then $p$ accepts \textbf{any} sequence of views to replace $v$.
\end{definition}

\begin{definition}
We say that a sequence of views $seq$ is \textbf{accepted} to replace a view $v$ if some correct process $p \in v$ accepts $seq$ to replace $v$.
\end{definition}

We prove in \Cref{sec:preliminary_lemmata} that if $seq$ is converged on or a correct process proposes $seq$ to replace some view, then $seq$ is a sequence of views.
Finally, if $seq$ is accepted to replace some view, then $seq$ is a sequence of views.

%!TEX root = ../../main.tex

\subsection{Preliminary Lemmata} \label{sec:preliminary_lemmata}

\vspace*{-2mm}
We define the \emph{system} as a state machine $\Pi = \langle s_i, \mathcal{S}, \mathcal{E}, \mathcal{D} \rangle$, where $s_i$ is the initial state of the system, $\mathcal{S}$ represents the set of all possible states, $\mathcal{E}$ is the set of events system can observe and $\mathcal{D} \subseteq \mathcal{S} \times \mathcal{E} \times \mathcal{S}$ is a set of transitions.

Each state $s \in \mathcal{S}$ is defined as a set of views, i.e., $s = \{v_1, ..., v_n\}$, where $n \geq 1$.
Moreover, for each view $v \in s$, we define a logical predicate $\alpha_s(v)$.
If $\alpha_s(v) = \top$ for some state $s \in \mathcal{S}$ and some view $v \in s$, we say that $v$ is \textbf{\emph{installable in}} $s$.
Otherwise, we say that $v$ is \textbf{\emph{not installable in}} $s$.
Initial state $s_i = \{v_0\}$, where $v_0$ represents the initial view of the system.
Lastly, $\alpha_{s_i}(v_0) = \top$.

Moreover, a set of sequences of views denoted by $\beta_s(v)$ is associated with each view $v \in s$.
If $seq \in \beta_s(v)$ and $v$ is not installable in $s$, we say that $v$ \emph{accepts $seq$ to replace $v$ in $s$}.
Consider a set $\mathcal{V}_v$ that represents the set of all possible sequences of views $seq$ such that $\forall \omega \in seq: v \subset \omega$.
If $v$ is installable in $s$, then $v$ accepts any sequence of views from $\mathcal{V}_v$.
% then $\beta_s(v) = \mathcal{V}_v$.
% Therefore, $\beta_{s_i}(v_0) = \mathcal{V}_{v_0}$.

Recall that we say that view $v_i \in seq$ is \emph{more recent} than view $v_j \in seq$ if $v_j \subset v_i$.
Additionally, we say that view $v_i \in seq$ is the \emph{most recent} view of $seq$ if $\not\exists v_j \in seq: v_i \subset v_j$.

Each event $e \in \mathcal{E}$ is a tuple $e = \langle v, seq = v_1 \to ... \to v_m \rangle$, such that $m \geq 1$ and it represents the fact that a sequence of views $seq$ is converged on to replace some view $v$.

If $(s, e, s') \in \mathcal{D}$, we say that event $e$ is \emph{enabled} in state $s$.
As we describe below, some event $e = \langle v, seq \rangle$ is enabled in some state $s$ if
$v \in s$.
% : i) $v \in s$; and ii) $seq \in \beta_s(v)$.
Now, we define the transition set $\mathcal{D}$.
For some states $s$ and $s'$ and some event $e = \langle v, seq = v_1 \to ... \to v_m \rangle$, where $s \in \mathcal{S}$, $s' \in \mathcal{S}$ and $e \in \mathcal{E}$, $(s, e, s') \in \mathcal{D}$ if:
% \begin{compactitem}
%     \item if $v_1 \notin s$ and $m > 1$, then $s' = s \cup \{v_1\}$, $\alpha_{s'}(v_1) = \bot$ and $\alpha_{s'}(v) = \alpha_{s}(v)$, for every $v \in s$; or
%     \item if $v_1 \in s$ and $m > 1$, then $s' = s$ and $\alpha_{s'}(v) = \alpha_{s}(v)$, for every $v \in s$; or
%     \item if $v_1 \notin s$ and $m = 1$, then $s' = s \cup \{v_1\}$, $\alpha_{s'}(v_1) = \top$, $\alpha_{s'}(v) = \alpha_{s}(v)$, for every $v \in s$; or
%     \item if $v_1 \in s$ and $m = 1$, then $s' = s$, $\alpha_{s'}(v_1) = \top$, $\alpha_{s'}(v) = \alpha_{s}(v)$, for every $v \in s$, where $v \neq v_1$.
% \end{compactitem}
\begin{compactitem}
    \item if $v_1 \notin s$ and $m > 1$, then $s' = s \cup \{v_1\}$, $\alpha_{s'}(v_1) = \bot$, $\beta_{s'}(v_1) = \{seq \setminus \{v_1\}\}$, $\alpha_{s'}(v) = \alpha_{s}(v)$ and $\beta_{s'}(v) = \beta_s(v)$, for every $v \in s$; or
    \item if $v_1 \in s$ and $m > 1$, then $s' = s$, $\alpha_{s'}(v_1) = \alpha_{s}(v_1)$, $\beta_{s'}(v_1) = \beta_{s}(v_1) \cup \{seq \setminus \{v_1\}\}$, $\alpha_{s'}(v) = \alpha_{s}(v)$ and $\beta_{s'}(v) = \beta_s(v)$, for every $v \in s$, where $v \neq v_1$; or
    \item if $v_1 \notin s$ and $m = 1$, then $s' = s \cup \{v_1\}$, $\alpha_{s'}(v_1) = \top$, $\beta_{s'}(v_1) = \emptyset$, $\alpha_{s'}(v) = \alpha_{s}(v)$ and $\beta_{s'}(v) = \beta_s(v)$, for every $v \in s$; or
    \item if $v_1 \in s$ and $m = 1$, then $s' = s$, $\alpha_{s'}(v_1) = \top$, $\beta_{s'}(v_1) = \beta_{s}(v_1)$, $\alpha_{s'}(v) = \alpha_{s}(v)$ and $\beta_{s'}(v) = \beta_{s}(v)$, for every $v \in s$, where $v \neq v_1$.
\end{compactitem}

The system observes events without delay, i.e., as soon as some sequence of views $seq$ is converged on to replace some view $v$, system observes the event $e = \langle v, seq \rangle$.

% At any point of an execution of \name, we consider a set of views denoted by $\Psi$.
% At the beginning of the execution, the following holds: $\Psi = \{v_0\}$, where $v_0$ is the initial view of the system.
% Moreover, for every view $v \in \Psi$, we define a logical predicate $\alpha(v, \Psi)$ that returns either $\top$ or $\bot$.

% In the rest of the section, we denote with $\Psi$ as  the set of the views at the given time.
% If there is a change to $\Psi$, we denote with $\Psi'$ the set of views obtained from $\Psi$ by introducing that change.
% Hence, we prove properties of $\Psi$ and then prove that those properties hold for $\Psi'$, which is obtained from $\Psi$ by introducing a new change.
% The notion of change is discussed later in this subsection.

\begin{definition} \label{definition:valid-view}
% Suppose that current state of the system is $s \in \mathcal{S}$.
We say that view $v$ is \textbf{valid} if $v \in s$, where $s$ is the current state of the system.
\end{definition}

% \begin{definition} \label{definition:installable-view}
% We say that view $v \in \Psi$ is \textbf{installable} in $\Psi$ if $\alpha(v, \Psi) = \top$.
% Otherwise, we say that $v$ is \textbf{not installable} in $\Psi$.
% \end{definition}

% \begin{definition} \label{definition:valid-view}
% We say that view $v$ is \textbf{valid} if one of the following holds:
% \begin{compactitem}
%     \item View $v$ is the initial view of the system; or
%     \item Some sequence of views $seq = v \to v_1 \to ... \to v_n, (n \geq 0)$ is converged on to replace some valid view $v'$.
% \end{compactitem}
% Otherwise, we say that view $v$ is \textbf{invalid}.
% \end{definition}

% \begin{definition}
% Consider a valid view $v$ which is different than the initial view of the system.
% Moreover, suppose that some sequence of views $seq = v \to v_1 \to ... \to v_n, (n \geq 0)$ is converged on to replace some valid view $v'$.
% We say that $v$ \textbf{follows} $v'$.
% \end{definition}

% \begin{definition} \label{definition:active-view}
% Suppose that current state of the system is $s \in \mathcal{S}$.
% We say that view $v \in s$ is \textbf{active} at time $t$ if at least $v.q$ correct processes that are members of $v$ are participants in the system at $t$.
% \end{definition}

The assumptions regarding the size of the quorum and number of faulty processes in a view (see \Cref{sec:building-blocks}) apply only to valid views, i.e., only valid views need to satisfy the aforementioned assumptions.
Moreover, correct processes process \msg{propose}, \msg{converged} and \msg{install} messages only if they are associated with valid views.
Lastly, a correct process $p$ stores some message $m$ which message certificate is collected in some view $v$ once $p$ ``discovers'' that view $v$ is valid.
% (\Cref{lemma:collected_in_installed} proves that $v$ is installed in that case).

% From now on, we consider a static set of views $\Psi$ and prove its properties.

\begin{lemma} \label{lemma:sequence_of_views}
Suppose that the current state of the system is $s \in \mathcal{S}$.
Suppose that $seq$ is converged on to replace some view $v \in s$.
% a correct process $p \in v$ converges on a set $seq$ to replace some valid view $v$.
Then, $seq$ is a sequence of views, i.e., the following holds: $\forall v_i, v_j \in seq: (v_i \neq v_j) \implies (v_i \subset v_j \lor v_i \supset v_j)$.
\end{lemma}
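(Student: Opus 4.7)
The plan is to reduce the lemma to an invariant about the proposals emitted by correct processes: namely, every \textsc{propose} message a correct process sends carries a sequence of views. Once that invariant is established, the lemma follows from the quorum intersection properties of valid views (\Cref{assumption:quorums}) applied to the definition of ``converged on''.

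First I would unpack the hypothesis. By definition, some process $r \in v$ received a \textsc{converged}$(seq, v)$ message from a quorum (i.e.\ $v.q$ processes) in $v$. Since $v$ is valid, the Byzantine bound in \Cref{assumption:quorums} guarantees that at least one correct process $q \in v$ sent \textsc{converged}$(seq, v)$. Inspecting \Cref{line:send_converged}, $q$ did this only after receiving $v.q$ matching \textsc{propose}$(seq, v)$ messages and setting its local $SEQ^v = seq$. Applying quorum intersection once more to those $v.q$ \textsc{propose} messages yields a correct process $q' \in v$ that actually proposed $seq$ to replace $v$. So it suffices to show the invariant
\begin{center}
\emph{every \textsc{propose}$(seq', v')$ message sent by a correct process carries a sequence of views $seq'$.}
\end{center}

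Next I would prove the invariant by induction over the execution, simultaneously maintaining two auxiliary statements for every correct process and every view $v'$: (i) $SEQ^{v'}$ is either $\emptyset$ or a sequence of views in which each element strictly extends $v'$, and (ii) $LCSEQ^{v'}$ satisfies the same property. The base case is immediate because both variables start as $\emptyset$. A correct process writes to $SEQ^{v'}$ in only three places: at \Cref{line:check_propose_1} it sets $SEQ^{cv}=\{cv\cup RECV\}$, a singleton; at \Cref{line:check_propose_2} it copies a subsequence $seq'$ of a sequence just delivered in an \textsc{install} message (hence a sequence by the accompanying R-delivered certificate); and in the merging block (\Cref{line:conflicting_start}--\Cref{line:update_nonconflicting}), which is the only nontrivial case. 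Finally $LCSEQ^{v'}$ is only updated at \Cref{line:lcseq} to the value of $SEQ^{v'}$, so (ii) follows from (i).

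The main obstacle is the merging step, so I would handle its two branches separately. In the non-conflicting branch (\Cref{line:update_nonconflicting}), the received $seq$ was filtered by \texttt{valid}$(seq)$ at \Cref{line:only_more_updated_1} and by the gate at \Cref{line:deliver_propose_brief}, so $seq$ is a sequence of views all strictly extending $v'$; by the inductive hypothesis $SEQ^{v'}$ is also such a sequence; and the check \texttt{conflicting}$(seq,SEQ^{v'})$ failed, which by construction of \texttt{conflicting} means that every view in $seq$ is comparable with every view in $SEQ^{v'}$, so the union is again totally ordered, i.e.\ a sequence. In the conflicting branch (\Cref{line:update_conflicting}), $SEQ^{v'}$ is overwritten with $LCSEQ^{v'}\cup\{\omega\cup\omega'\}$, where $\omega=\mathit{most\_recent}(seq)$ and $\omega'=\mathit{most\_recent}(SEQ^{v'})$. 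The inductive hypothesis gives that $LCSEQ^{v'}$ is a sequence, and that $\omega'$ extends every view in $LCSEQ^{v'}$ (since $LCSEQ^{v'}$ was the converged predecessor of the current $SEQ^{v'}$, whose most recent element is $\omega'$). Because $\omega\cup\omega'\supseteq\omega'$, it likewise extends every view in $LCSEQ^{v'}$, so the whole set is totally ordered by $\subset$. This closes the induction, and combined with the quorum-intersection argument in the first paragraph it proves that $seq$ is a sequence of views, as required.
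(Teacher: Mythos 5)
Your proof is correct and follows essentially the same route as the paper's: both reduce the claim to the invariant that $SEQ^v$ (and hence every \msg{propose} message sent by a correct process) is always a sequence of views, and both discharge the invariant by the same case analysis on the conflicting versus non-conflicting merge branches at \Cref{line:update_conflicting} and \Cref{line:update_nonconflicting}. Your version is somewhat more careful than the paper's (it spells out the step from ``converged on'' to ``some correct process proposed $seq$'' and covers the initialization sites at \Cref{line:check_propose_1} and \Cref{line:check_propose_2}), though note that the two quorum arguments in your first paragraph use only the fact that every quorum contains a correct process, not quorum intersection proper.
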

\begin{proof}
Since $seq$ is converged on, a correct process $p \in v$ sent a \msg{propose} message for $seq$.
The value of $SEQ^v$ variable is sent inside of the \msg{propose} message.
Let us show that our algorithm satisfies the invariant that $SEQ^v$ is always a sequence of views. Initially $SEQ^v$ is empty.
We analyze the two possible ways for modifying the $SEQ^v$ variable of $p$:
% Let us analyze two possible ways for modifying a $SEQ^v$ variable of $p$ which is sent inside of a \msg{propose} message:
\begin{compactitem}
    \item There are no conflicting views upon delivering a \msg{propose} message (\Cref{line:update_nonconflicting})\footnote{
        Recall that two different views $v$ and $v'$ conflict if $v \not\subset v' \land v \not\supset v'$.
    }:
    New value of $SEQ^v$ is a sequence of views because of the fact that there are no conflicting views and both value of $SEQ^v$ and sequence received in the \msg{propose} message (because of the verification at \Cref{line:only_more_updated_1}) are sequences of views.
    \item There are conflicting views upon delivering a \msg{propose} message (\Cref{line:conflicting_start} to \Cref{line:update_conflicting}):
    Suppose that $p$ has delivered a \msg{propose} message with $seq'$.
    Because of the verification at \Cref{line:only_more_updated_1}, $seq'$ is a sequence of views.
    Moreover, $LCSEQ^v$ variable of $p$ is either $SEQ^v$ or a subset of $SEQ^v$.
    After executing \Cref{line:update_conflicting}, the union of two most recent views results in a view more recent than the most recent view of $LCSEQ^v$ and $LCSEQ^v$ is a sequence of views (because $LCSEQ^v \subseteq SEQ^v$).
    Hence, new value of $SEQ^v$ variable is a sequence of views.
\end{compactitem}

\end{proof}

\Cref{lemma:sequence_of_views} proves that if $seq$ is converged on to replace some valid view $v$, then $seq$ is a sequence of views.
Moreover, since a set with a single view is a sequence of views and a correct process proposes $seq$ to replace some valid view at \Cref{line:propose_system_configuration} or at \Cref{line:propose_rest}, proposed $seq$ is a sequence of views.
Similarly, if $seq$ is accepted to replace some valid view, then $seq$ is a sequence of views.
% This fact is important for showing that \Cref{algorithm:dynamic} implements \name.

\begin{lemma} \label{lemma:weak_accuracy}
Suppose that the current state of the system is $s \in \mathcal{S}$.
Suppose that sequences of views $seq_1$ and $seq_2$ are converged on to replace some view $v \in s$.
Then, either $seq_1 \subseteq seq_2$ or $seq_1 \supset seq_2$.
\end{lemma}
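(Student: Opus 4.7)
The plan is to use quorum intersection inside $v$ together with a monotonicity invariant on the local variable $SEQ^v$ of any correct member of $v$. Concretely, suppose $seq_1$ and $seq_2$ are both converged on to replace $v$. By definition, there exist quorums $Q_1, Q_2$ of $v$ such that every process in $Q_i$ sent $\langle\textsc{converged}, seq_i, v\rangle$. Since $v$ is valid, \Cref{assumption:quorums} guarantees a correct process $q \in Q_1 \cap Q_2$; so $q$ broadcast \textsc{converged} messages for both $seq_1$ and $seq_2$. Without loss of generality, assume $q$ sent the one for $seq_1$ first, at some time $t_1$, and the one for $seq_2$ at some later time $t_2 > t_1$. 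I will show that $seq_1 \subseteq seq_2$, which yields the lemma.

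The key step is the following invariant for every correct process $p \in v$: \emph{once $p$ sends} $\langle\textsc{converged}, seq, v\rangle$ \emph{at time $t$, at every time $t' \geq t$ we have $seq \subseteq SEQ^v$ and $seq \subseteq LCSEQ^v$ at $p$}. At time $t$ itself, $p$ first set $SEQ^v$ to the converged value (so $SEQ^v = seq$) and then executed \Cref{line:lcseq}, so $LCSEQ^v = seq$; the claim holds at time $t$. For $t' > t$, I will argue by induction on the events that modify $SEQ^v$ or $LCSEQ^v$ at $p$. The variable $LCSEQ^v$ only changes by being re-assigned to the current value of $SEQ^v$ right before a subsequent \textsc{converged} is sent (\Cref{line:lcseq}); so it suffices to maintain the invariant for $SEQ^v$. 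The variable $SEQ^v$ changes in exactly two ways upon delivery of a \msg{propose}: (a) the non-conflicting case (\Cref{line:update_nonconflicting}) replaces $SEQ^v$ by $SEQ^v \cup seq'$, which only grows the set and preserves $seq \subseteq SEQ^v$; (b) the conflicting case (\Cref{line:update_conflicting}) replaces $SEQ^v$ by $LCSEQ^v \cup \{\omega \cup \omega'\}$, which, by the inductive hypothesis on $LCSEQ^v$, still contains $seq$. Hence the invariant is preserved in all cases.

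Applying the invariant to $q$ at time $t_2$: $q$ is about to send \textsc{converged} for $seq_2$, which means $SEQ^v = seq_2$ at $q$ at time $t_2$; but the invariant (with $t = t_1$, $seq = seq_1$) yields $seq_1 \subseteq SEQ^v = seq_2$, as desired. The symmetric case gives $seq_2 \subseteq seq_1$ when $t_2 < t_1$, and the case $t_1 = t_2$ forces $seq_1 = seq_2$. In each case, $seq_1$ and $seq_2$ are comparable under set inclusion, which is exactly the statement of the lemma.

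The main obstacle I expect is the conflicting branch (\Cref{line:update_conflicting}): naively, $SEQ^v$ could \emph{shrink} there, so one has to bundle $SEQ^v$ and $LCSEQ^v$ into a joint invariant so that the induction works. Once that invariant is isolated, quorum intersection does the rest of the work, and nothing is needed beyond \Cref{assumption:quorums} and a careful inspection of the two update rules already used in the proof of \Cref{lemma:sequence_of_views}.
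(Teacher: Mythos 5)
Your proposal is correct and follows essentially the same route as the paper's proof: quorum intersection yields a correct process that sent \msg{converged} for both sequences, and then a monotonicity invariant on $SEQ^v$ (preserved through the two update rules at \Cref{line:update_nonconflicting} and \Cref{line:update_conflicting}) gives the inclusion. Your explicit joint invariant on $SEQ^v$ and $LCSEQ^v$ is a slightly more careful statement of what the paper argues implicitly, but it is the same argument.
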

\begin{proof}
Suppose that some process $p \in v$ has collected a quorum of \msg{converged} messages for $seq_1$ to replace a view $v$.
Moreover, suppose that process $q \in v$ has collected a quorum of \msg{converged} messages for $seq_2$ to replace $v$.
This means that there is at least one correct process $k \in v$ that has sent a \msg{converged} message both for $seq_1$ and $seq_2$.
Suppose that $k$ first sent the \msg{converged} message for $seq_1$ and $seq_1 \neq seq_2$.

At the moment of sending a \msg{converged} message for $seq_1$ (\Cref{line:send_converged}), the following holds for $k$'s local variables: $LCSEQ^v = seq_1$ and $SEQ^v = seq_1$.
On the other hand, $SEQ^v = seq_2$ when $k$ sent a \msg{converged} message for $seq_2$.
Thus, we need to prove that $SEQ^v$ includes $seq_1$ from the moment $k$ sent a \msg{converged} message for $seq_1$.

There are two ways for modifying $SEQ^v$ local variable:
\begin{compactitem}
    \item Process $k$ receives a \msg{propose} message for some sequence of views $seq$ and there are no conflicting views in $SEQ^v$ and $seq$ (\Cref{line:update_nonconflicting}):
    Since $SEQ^v$ contains $seq_1$ and the simple union is performed, the resulting sequence of views contains $seq_1$.
    \item Process $k$ receives a \msg{propose} message for some sequence of views $seq$ and there are conflicting views in $SEQ^v$ and $seq$ (\Cref{line:conflicting_start} to \Cref{line:update_conflicting}):
    Since $LCSEQ^v$ contains $seq_1$ and $LCSEQ^v$ participates in the union, the resulting sequence of views contains $seq_1$.
\end{compactitem}
We proved that $SEQ^v$ always contains $seq_1$ after the \msg{converged} message for $seq_1$ is sent.
Consequently, the following holds: $seq_1 \subset seq_2$.

Using similar arguments, it is possible to prove the symmetrical execution: if process $k$ sent \msg{converged} message for $seq_2$ and then for $seq_1$, we have that $seq_1 \supset seq_2$.
Lastly, processes could have collected the \msg{converged} messages for the same sequence and then we have $seq_1 = seq_2$.
\end{proof}

\begin{lemma} \label{lemma:bigger}
Suppose that the current state of the system is $s \in \mathcal{S}$.
Moreover, suppose that a sequence of views $seq$ is converged on to replace some view $v \in s$.
% Suppose that a correct process $p \in v$ converges on a sequence of views $seq_p$ to replace some valid view $v$.
Then, the following holds: $\forall \omega \in seq: v \subset \omega$.
\end{lemma}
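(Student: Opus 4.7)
The plan is to establish, by induction on the protocol steps, the invariant that whenever a correct process disseminates a \msg{propose} message carrying a sequence $seq$ to replace a valid view $v$, every view $\omega \in seq$ satisfies $v \subset \omega$. Once this invariant is shown, the conclusion follows: if $seq$ is converged on to replace $v$, then some correct member of $v$ must have sent a \msg{converged} message for $seq$ to replace $v$ (otherwise no quorum could be assembled, by \Cref{assumption:quorums}); and such a \msg{converged} message is sent at \Cref{line:send_converged} only after that correct process collected a quorum of matching \msg{propose} messages for $seq$ to replace $v$, which in turn requires at least one correct process to have disseminated such a \msg{propose}.

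For the invariant, I would handle the two sites where a correct process $p$ first proposes a sequence to replace $v$. At \Cref{line:propose_system_configuration}, the proposed sequence is $\{v \cup RECV\}$, and the filter at \Cref{line:add_reconfig_brief} ensures every update added to $RECV$ is genuinely absent from $v$; hence $RECV \neq \emptyset$ implies $v \subset v \cup RECV$. At \Cref{line:propose_rest}, the invariant follows directly from the guard at \Cref{line:check_propose_2}, which requires $v \subset \omega$ for every $\omega$ in the proposed sequence. The inductive step concerns the two update paths to $SEQ^v$: on the non-conflicting branch at \Cref{line:update_nonconflicting}, we take a union with a sequence $seq'$ received from a peer; the \emph{well-formedness} check valid$(seq')$ at \Cref{line:only_more_updated_1} enforces exactly that every view of $seq'$ is more recent than $cv = v$, so the invariant is preserved. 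On the conflicting branch at \Cref{line:update_conflicting}, the new $SEQ^v$ is $LCSEQ^v \cup \{\omega \cup \omega'\}$ where $\omega$ and $\omega'$ are respectively the most recent views of the (valid) received sequence and of $SEQ^v$; by the inductive hypothesis both strictly contain $v$, hence so does $\omega \cup \omega'$, and $LCSEQ^v$ inherits the invariant from $SEQ^v$ via the assignment at \Cref{line:lcseq}.

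The main obstacle is to close the induction cleanly in the conflicting case: one must verify that $LCSEQ^v$ also respects the invariant at the moment it is reused at \Cref{line:update_conflicting}. This is handled by observing that $LCSEQ^v$ is only written at \Cref{line:lcseq}, where it is set to a value of $SEQ^v$ that satisfied the invariant by the inductive hypothesis; no other code path modifies $LCSEQ^v$. A small additional check is needed for messages relayed by faulty processes: correct processes only act upon \msg{propose} messages that pass the valid$(\cdot)$ predicate (\Cref{line:only_more_updated_1}), so Byzantine peers cannot smuggle in a view $\omega$ with $v \not\subset \omega$. Combining the base cases and the inductive step yields the invariant, and the lemma follows immediately from the argument sketched in the opening paragraph.
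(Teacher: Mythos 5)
Your proof is correct and rests on exactly the same observations as the paper's (one-line) proof, namely the well-formedness check at \Cref{line:only_more_updated_1} and the guard at \Cref{line:check_propose_2}; you simply spell out the inductive bookkeeping for $SEQ^v$ and $LCSEQ^v$ and the reduction from ``converged on'' to ``some correct process proposed,'' which the paper leaves implicit.
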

\begin{proof}
The lemma holds because of the verifications at \cref{line:only_more_updated_1,line:check_propose_2}.
\end{proof}

\begin{property} \label{property:installable}
Suppose that the current state of the system is $s \in \mathcal{S}$.
The following holds for every view $v \in s$:
\begin{compactenum}
    \item If $v$ is installable in $s$ and $v$ is not the initial view of the system, then a sequence of views $seq = v_1 \to ... \to v_n \to v, (n \geq 0)$ is converged on to replace some view $v' \in s$, where $v'$ is installable in $s$, and views $v_1, ..., v_n \in s$ are not installable in $s$.
    Moreover, a sequence of views $seq_i = v_{i + 1} \to ... \to v_n \to v$ is converged on to replace view $v_i \in s$, for every $i \geq 1$ and $i < n$ and a sequence of views $seq_n = v$ is converged on to replace view $v_n \in s$; and
    \item If $v$ is not installable in $s$, then a sequence of views $seq = v_1' \to ... \to v_m' \to v \to v_1 \to ... \to v_n, (m \geq 0, n \geq 1)$ is converged on to replace some view $v' \in s$, where $v'$ is installable in $s$, and views $v_1', ..., v_m' \in s$ are not installable in $s$.
    Moreover, a sequence of views $seq_i = v_{i + 1}' \to ... \to v_m' \to v \to v_1 \to ... \to v_n$ is converged on to replace view $v_i' \in s$, for every $i \geq 1$ and $i < m$ and a sequence of views $seq_m = v \to v_1 \to ... \to v_n$ is converged on to replace view $v_m' \in s$.
\end{compactenum}
\end{property}

From now on, we assume that the \Cref{property:installable} holds for the current state of the system.
We define properties of the current state of the system given that \Cref{property:installable} holds.
Then, we prove that \Cref{property:installable} also holds for a new state of the system obtained from the current one.

\begin{definition} \label{definition:leads_to}
Suppose that the current state of the system is $s \in \mathcal{S}$.
Consider views $v \in s$ and $v' \in s$, where both $v$ and $v'$ are installable in $s$.
Suppose that a sequence of views $seq = v_1 \to ... \to v_n \to v', (n \geq 0)$ is converged on to replace $v$, where views $v_1, ..., v_n \in s$ are not installable in $s$ (the first claim of \Cref{property:installable}).
Moreover, suppose that a sequence of views $seq_i = v_{i + 1} \to ... \to v_n \to v'$ is converged on to replace view $v_i \in s$, for every $i \geq 1$ and $i < n$ and a sequence of views $seq_n = v'$ is converged on to replace view $v_n \in s$ (the first claim of \Cref{property:installable}).
We say that $v$ \textbf{leads to $v'$ in $s$}.
\end{definition}

\begin{lemma} \label{lemma:installable_views_compare}
Suppose that the current state of the system is $s \in \mathcal{S}$.
Moreover, suppose that view $v \in s$ leads to view $v' \in s$ in $s$.
Then, $v \subset v'$.
\end{lemma}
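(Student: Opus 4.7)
The plan is to derive the conclusion as an essentially immediate application of Lemma \ref{lemma:bigger}. By Definition \ref{definition:leads_to}, the hypothesis that $v$ leads to $v'$ in $s$ unpacks to the existence of a sequence of views $seq = v_1 \to \ldots \to v_n \to v'$ with $n \geq 0$ that is converged on to replace $v$. In particular, $v'$ itself is an element of $seq$ (it is the most recent view of the chain, and this holds uniformly in the edge case $n = 0$, where $seq = \{v'\}$).

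Next, I would invoke Lemma \ref{lemma:bigger} on the pair $(v, seq)$: since $seq$ is converged on to replace the view $v \in s$, every $\omega \in seq$ satisfies $v \subset \omega$. Instantiating with $\omega = v'$ gives exactly $v \subset v'$.

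There is essentially no obstacle here; Definition \ref{definition:leads_to} is tailored so that Lemma \ref{lemma:bigger} fires directly. The auxiliary information in the definition (that each intermediate $v_i$ is non-installable, and that $seq_i = v_{i+1} \to \ldots \to v_n \to v'$ is converged on to replace $v_i$) is not needed for this particular claim — it is bookkeeping that will be exploited in later lemmata about the order structure of installable views. If one additionally wished to conclude that $v_1, \ldots, v_n, v'$ are pairwise comparable under $\subset$, one would apply Lemma \ref{lemma:sequence_of_views} to $seq$, but that strengthening is not required by the statement.
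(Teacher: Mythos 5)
Your proof is correct and matches the paper's argument exactly: the paper likewise derives the claim directly from Definition~\ref{definition:leads_to} (which exhibits a sequence $seq$ containing $v'$ converged on to replace $v$) together with Lemma~\ref{lemma:bigger}. Your write-up simply makes explicit the instantiation $\omega = v'$ that the paper leaves implicit.
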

\begin{proof}
The lemma follows from \Cref{definition:leads_to} and \Cref{lemma:bigger}.
\end{proof}

\begin{lemma} \label{lemma:installable_from_one}
Suppose that the current state of the system is $s \in \mathcal{S}$.
Consider a view $v' \in s$, where $v'$ is installable in $s$ and $v'$ is different from the initial view of the system.
Then, some view $v \in s$, where $v$ is installable in $s$, leads to $v'$ in $s$.
\end{lemma}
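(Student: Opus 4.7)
The plan is to deduce the lemma as a direct instantiation of the first clause of \Cref{property:installable}, using \Cref{definition:leads_to} to repackage what the property gives us.

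First I would rename the bound variable in \Cref{property:installable} to avoid the clash with the $v'$ appearing in the lemma statement: the property, applied to the installable non-initial view $v'$, guarantees the existence of a view $u \in s$, installable in $s$, together with a chain of views $v_1, \dots, v_n \in s$ that are \emph{not} installable in $s$, such that the sequence $seq = v_1 \to \dots \to v_n \to v'$ (with $n \geq 0$) is converged on to replace $u$. Moreover, for every $1 \leq i < n$, the tail sequence $seq_i = v_{i+1} \to \dots \to v_n \to v'$ is converged on to replace $v_i$, and the singleton $seq_n = v'$ is converged on to replace $v_n$. This is exactly the list of conditions required by \Cref{definition:leads_to} to conclude that $u$ leads to $v'$ in $s$.

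Thus I would set $v = u$ and observe that $v$ is installable in $s$ and leads to $v'$ in $s$, which is precisely the statement of the lemma. The proof therefore reduces to bookkeeping; no algorithmic argument is required beyond what is already encoded in \Cref{property:installable}. The only delicate point to keep consistent is the direction of the ``leads to'' relation --- the property provides the predecessor installable view (going backward from $v'$), while the definition is phrased forward from $v$ to $v'$ --- so the main thing I would double-check is the alignment of notation between the property and the definition, to make sure the roles of $v$ and $v'$ are not swapped during the translation.
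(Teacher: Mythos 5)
Your proposal is correct and matches the paper's proof, which likewise derives the lemma directly from the first clause of \Cref{property:installable} together with \Cref{definition:leads_to}; you have merely spelled out the variable renaming and the alignment of the two notations, which the paper leaves implicit.
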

\begin{proof}
The lemma follows directly from \Cref{property:installable} and \Cref{definition:leads_to}.
\end{proof}

\begin{lemma} \label{lemma:only_one_view_leads_to}
Suppose that the current state of the system is $s \in \mathcal{S}$.
Consider a view $v \in s$, where $v$ is installable in $s$.
% Consider an installable view $v \in \Psi$.
If $v$ leads to $v' \in s$ in $s$ and $v$ leads to $v'' \in s$ in $s$, then $v' = v''$.
% Then, $v$ can lead to only one view $v'$.
\end{lemma}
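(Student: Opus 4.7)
The plan is to reduce the statement to the weak accuracy property established in Lemma~\ref{lemma:weak_accuracy} (comparability of any two sequences converged on to replace the same view), combined with the structural information baked into Definition~\ref{definition:leads_to} (the only installable view in a ``leads to'' sequence is the terminal one).

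First I would unpack the hypotheses using Definition~\ref{definition:leads_to}. Since $v$ leads to $v'$ in $s$, there is a sequence $seq_1 = v_1 \to \ldots \to v_n \to v'$ that is converged on to replace $v$, in which $v_1,\ldots,v_n$ are not installable in $s$ and $v'$ is installable in $s$. Analogously, since $v$ leads to $v''$ in $s$, there is a sequence $seq_2 = u_1 \to \ldots \to u_m \to v''$ converged on to replace $v$, with $u_1,\ldots,u_m$ not installable in $s$ and $v''$ installable in $s$.

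Next I would invoke Lemma~\ref{lemma:weak_accuracy} applied to the two sequences $seq_1$ and $seq_2$, both converged on to replace the same view $v \in s$. This yields that either $seq_1 \subseteq seq_2$ or $seq_1 \supseteq seq_2$. Without loss of generality assume $seq_1 \subseteq seq_2$ (the other case is symmetric). Then in particular $v' \in seq_2$. Now I use the structural constraint from Definition~\ref{definition:leads_to}: every view of $seq_2$ other than its terminal view $v''$ is not installable in $s$. Since $v'$ is installable in $s$, it cannot coincide with any of $u_1,\ldots,u_m$, so $v' = v''$.

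I do not expect a major obstacle here: the lemma is essentially a direct corollary of weak accuracy together with the ``exactly one installable view at the end'' shape prescribed by Definition~\ref{definition:leads_to}. The only small subtlety is the symmetric case, but it is handled identically by swapping the roles of $seq_1$ and $seq_2$ and concluding $v'' \in seq_1$, again forcing $v'' = v'$.
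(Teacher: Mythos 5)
Your proof is correct and follows essentially the same route as the paper's: both apply Lemma~\ref{lemma:weak_accuracy} to the two sequences converged on to replace $v$, then use the constraint from Definition~\ref{definition:leads_to} that all non-terminal views in such a sequence are not installable, so the installable view $v'$ appearing in the larger sequence must be its terminal view $v''$. No gaps.
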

\begin{proof}
% Suppose that $v$ leads to views $v'$ and $v''$.
Because of the \Cref{definition:leads_to}, sequences of views $seq' = ... \to v'$ and $seq'' = ... \to v''$ are converged on to replace $v$.
\Cref{lemma:weak_accuracy} ensures that the following holds: $seq' \subseteq seq''$ or $seq' \supset seq''$.
Let us analyze the two possible cases:
\begin{compactitem}
    \item $seq' = seq''$: In this case, $v' = v''$, and the lemma holds.
    \item $seq' \neq seq''$: Without loss of generality, assume that $seq' \subset seq''$. 
    Therefore, $v' \in seq''$ and $v' \subseteq v''$ (\Cref{lemma:sequence_of_views}).
    If $seq' \subset seq''$, then \Cref{definition:leads_to}
    would not be satisfied for $v''$ since $v'$ is installable in $s$.
    It follows that $v' = v''$, and the lemma holds.
\end{compactitem}
\end{proof}

\begin{lemma} \label{lemma:from_only_one}
Suppose that the current state of the system is $s \in \mathcal{S}$.
Consider a view $v \in s$, where $v$ is installable in $s$.
If $v' \in s$ leads to $v$ in $s$ and $v'' \in s$ leads to $v$ in $s$, then $v' = v''$.
\end{lemma}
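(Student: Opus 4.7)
The plan is to prove the lemma by contradiction, assuming $v' \neq v''$. From \Cref{definition:leads_to}, I unfold each leads-to hypothesis into a sequence: $seq' = v_1' \to \cdots \to v_n' \to v$ converged on to replace $v'$ and $seq'' = v_1'' \to \cdots \to v_m'' \to v$ converged on to replace $v''$, with all intermediate views non-installable in $s$ and $v', v''$ installable in $s$. Moreover, the recursive sub-sequence clauses of the definition guarantee that $\{v\}$ is converged on to replace $v_n'$ (or $v'$ itself when $n=0$), and that $\{v\}$ is converged on to replace $v_m''$ (or $v''$ when $m=0$). By \Cref{lemma:installable_views_compare} applied to each leads-to relation, we also get $v' \subsetneq v$ and $v'' \subsetneq v$.

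I would then attempt an induction on $n + m$. For the base case $n = m = 0$, both $v'$ and $v''$ have the one-view sequence $\{v\}$ converged on to replace them; here I expect to exploit the event structure of the state machine (the event that switched $\alpha_s(v)$ to $\top$) together with \Cref{lemma:weak_accuracy} and \Cref{lemma:bigger} applied to appropriately constructed sequences to establish comparability of $v'$ and $v''$, after which \Cref{lemma:only_one_view_leads_to} (used in contrapositive form) forces $v' = v''$. For the inductive step I would examine the last non-installable predecessors $v_n'$ and $v_m''$ of $v$, apply \Cref{property:installable} part 2 to each, and use the forward uniqueness of \Cref{lemma:only_one_view_leads_to} on the installable ancestors obtained this way to reduce to a strictly shorter chain configuration.

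The main obstacle I anticipate is precisely that \Cref{lemma:only_one_view_leads_to} gives only \emph{forward} uniqueness---each installable view has at most one installable successor---whereas the lemma at hand is the dual \emph{backward} statement, and neither direction follows from the other by a syntactic inversion. The cleanest resolution is likely to strengthen the induction and prove, simultaneously with \Cref{property:installable} being preserved across state-machine transitions, the invariant that the installable views of any reachable state $s$ form a totally ordered chain under $\subset$. Once this invariant is available, $v'$ and $v''$ are comparable; WLOG $v' \subsetneq v''$, and since $v''$ is installable and strictly between $v'$ and $v$, it contradicts \Cref{lemma:only_one_view_leads_to} applied to $v'$ (whose unique installable successor would then be forced to be $v''$ rather than $v$), completing the proof.
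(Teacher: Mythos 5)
Your diagnosis of the difficulty is exactly right, and it is worth noting that the paper does not really engage with it: its entire proof of this lemma is the single sentence that it ``follows directly from \Cref{lemma:installable_from_one} and \Cref{lemma:only_one_view_leads_to}.'' As you observe, backward existence of a leading view plus forward uniqueness do not by themselves yield backward uniqueness; one also needs that every installable view sits on the unique forward leads-to path out of the initial view $v_0$ (which in turn uses \Cref{lemma:installable_views_compare} for strict growth and the finiteness of views so that backward predecessor chains terminate, necessarily at $v_0$ by \Cref{lemma:installable_from_one}). Your proposed repair---establishing the chain invariant on installable views as part of the state-machine induction---is a sound and arguably cleaner packaging of exactly this missing content, with one caveat: you cannot simply invoke the paper's \Cref{lemma:unique_installable_views}, since the paper derives that lemma \emph{from} \Cref{lemma:from_only_one}; the invariant must be proved independently (e.g., alongside \Cref{lemma:property_1_part_2}), as you propose.

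One step of your closing argument still has a gap. From $v' \subsetneq v'' \subsetneq v$ with $v''$ installable you conclude that the unique successor of $v'$ ``would be forced to be $v''$ rather than $v$,'' but comparability alone does not tell you that $v'$ leads to $v''$: \Cref{definition:leads_to} only forbids installable views \emph{inside the converged sequence} taking $v'$ to $v$, not installable views that merely lie between them in the $\subset$ order. To close this, use the stronger path statement: follow predecessors of $v''$ (via \Cref{lemma:installable_from_one}) back to $v_0$, do the same for $v'$, and then push forward uniqueness (\Cref{lemma:only_one_view_leads_to}) along the two chains starting from $v_0$ to conclude that they coincide, whence $v' = v''$. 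With that replacement your argument is complete; your preliminary induction on $n+m$ can be dropped, since \Cref{lemma:weak_accuracy} only compares sequences converged on to replace the \emph{same} view and therefore does not apply to $seq'$ and $seq''$.
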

\begin{proof}
The lemma follows directly from \cref{lemma:installable_from_one,lemma:only_one_view_leads_to}.
% Suppose that $v' \neq v''$.
% Suppose that some view $v_{p_1}$ leads to $v'$ in $s$ and $v_{p_2}$ leads to $v''$ in $s$.
% If $v_{p_1} = v_{p_2}$, then this view leads to two different views, which is a contradiction with \Cref{lemma:only_one_view_leads_to}.
% Hence, lemma holds.
% If $v_{p_1} \neq v_{p_2}$, we can consider views that lead to $v_{p_1}$ and $v_{p_2}$ and apply the same argument.
% Because of the fact the initial view of the system is installable and unique and \Cref{lemma:installable_from_one}, we can conclude that there exists an installable view in $s$ that leads to two different views.
% This contradicts \Cref{lemma:only_one_view_leads_to}, which implies that lemma holds.
\end{proof}

\begin{definition} \label{definition:auxiliary_view}
Suppose that the current state of the system is $s \in \mathcal{S}$.
Consider views $v \in s$ and $v' \in s$, such that $v$ is installable in $s$ and $v'$ is not installable in $s$.
Suppose that a sequence of views $seq = v_1 \to ... \to v_m \to v' \to v_1' \to ... \to v_n', (m \geq 0, n \geq 1)$ is converged on to replace $v$, where views $v_1, ..., v_m \in s$ are not installable in $s$ (the second claim of \Cref{property:installable}).
Moreover, a sequence of views $seq_i = v_{i + 1} \to ... \to v_m \to v' \to v_1' \to ... \to v_n'$ is converged on to replace view $v_i \in s$, for every $i \geq 1$ and $i < m$ and a sequence of views $seq_m = v' \to v_1' \to ... \to v_n'$ is converged on to replace view $v_m \in s$ (the second claim of \Cref{property:installable}).
We say that $v'$ is an \textbf{auxiliary view for $v$ in $s$}.
\end{definition}

\begin{lemma} \label{lemma:auxiliary_views_compare_1}
Suppose that the current state of the system is $s \in \mathcal{S}$.
Consider a view $v \in s$, where $v$ is an auxiliary view for $v' \in s$ in $s$.
Then, $v' \subset v$.
\end{lemma}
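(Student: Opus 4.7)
The plan is to unwind the hypothesis using Definition~\ref{definition:auxiliary_view} and then apply Lemma~\ref{lemma:bigger} directly. By Definition~\ref{definition:auxiliary_view}, saying that $v$ is an auxiliary view for $v'$ in $s$ means, in particular, that $v'$ is installable in $s$, that $v$ is not installable in $s$, and that there exists a sequence of views $seq$, converged on to replace $v'$, such that $v$ appears as one of the entries of $seq$ (namely, $seq$ has the form $v_1 \to \dots \to v_m \to v \to v_1' \to \dots \to v_n'$ with $m \geq 0$, $n \geq 1$, where we take $v'$ to play the role of the installable view being replaced).

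Once this is unpacked, I would invoke Lemma~\ref{lemma:bigger}, which asserts that whenever a sequence $seq$ is converged on to replace some view $w \in s$, every $\omega \in seq$ satisfies $w \subset \omega$. Applying this with $w = v'$ and $\omega = v$ (which lies in $seq$ by the unpacked definition) yields $v' \subset v$, which is exactly the claim.

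I do not foresee a serious obstacle: the statement is essentially a one-line corollary of Lemma~\ref{lemma:bigger}, and the only subtlety is being careful about the role reversal between Definition~\ref{definition:auxiliary_view} (where the definition reads ``$v'$ is auxiliary for $v$'') and the lemma statement (which swaps the names, so that $v$ is auxiliary for $v'$, and hence $v'$ is the installable view whose replacement sequence contains $v$). Making that renaming explicit at the start of the proof is the only bookkeeping step needed.
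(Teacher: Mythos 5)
Your proof is correct and matches the paper's own argument, which likewise derives the claim directly from Definition~\ref{definition:auxiliary_view} and Lemma~\ref{lemma:bigger}. The care you take with the role reversal of $v$ and $v'$ between the definition and the lemma statement is exactly the right (and only) point needing attention.
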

\begin{proof}
This follows from \Cref{definition:auxiliary_view} and \Cref{lemma:bigger}.
\end{proof}

\begin{lemma} \label{lemma:auxiliary_views_compare_2}
Suppose that the current state of the system is $s \in \mathcal{S}$.
Consider a view $v \in s$, where $v$ is an auxiliary view for $v' \in s$ in $s$.
Moreover, suppose that $v'$ leads to $v'' \in s$ in $s$.
Then, $v \subset v''$.
\end{lemma}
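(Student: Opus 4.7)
The plan is to combine the structural information provided by the two hypotheses (\Cref{definition:auxiliary_view} for ``$v$ is auxiliary for $v'$'' and \Cref{definition:leads_to} for ``$v'$ leads to $v''$''), both of which yield sequences converged on to replace the same view $v'$, and then apply \Cref{lemma:weak_accuracy} to compare them.

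First, I would unfold \Cref{definition:auxiliary_view}: since $v$ is an auxiliary view for $v'$ in $s$, we know that $v'$ is installable in $s$, $v$ is not installable in $s$, and there exists a sequence
\[
seq_1 = v_1 \to \dots \to v_m \to v \to v_1'' \to \dots \to v_n'',\qquad m\ge 0,\ n\ge 1,
\]
that is converged on to replace $v'$, with $v_1,\dots,v_m$ not installable in $s$. Next, I would unfold \Cref{definition:leads_to}: since $v'$ leads to $v''$ in $s$, both $v'$ and $v''$ are installable in $s$, and there exists a sequence
\[
seq_2 = w_1 \to \dots \to w_k \to v'',\qquad k\ge 0,
\]
that is converged on to replace $v'$, with $w_1,\dots,w_k$ not installable in $s$.

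Now I would apply \Cref{lemma:weak_accuracy} to $seq_1$ and $seq_2$ (both converged on to replace the same view $v'$), obtaining either $seq_1 \subseteq seq_2$ or $seq_1 \supset seq_2$. In the first case, $v \in seq_2$; since $v$ is not installable and $v''$ is the only installable view appearing in $seq_2$, we must have $v = w_j$ for some $j\le k$, which by the ordering of $seq_2$ (a sequence of views, by \Cref{lemma:sequence_of_views}) gives $v \subset v''$. In the second case, $v'' \in seq_1$; since $v''$ is installable while $v_1,\dots,v_m$ and $v$ are all not installable, $v''$ must coincide with one of the suffix views $v_1'',\dots,v_n''$, and again the ordering of $seq_1$ forces $v \subset v''$.

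The main obstacle will be the second case: I have to rule out that $v''$ coincides with $v$ or with one of the $v_i$'s, which is where the installable/non-installable bookkeeping from the two definitions is essential. Once this bookkeeping is in place, the sequence-of-views structure (via \Cref{lemma:sequence_of_views}) makes the inclusion $v \subset v''$ immediate in both cases, completing the proof.
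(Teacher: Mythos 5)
Your proof is correct and follows essentially the same route as the paper: both extract the two converged-on sequences from \Cref{definition:auxiliary_view} and \Cref{definition:leads_to}, compare them via \Cref{lemma:weak_accuracy}, and then use the installable/non-installable bookkeeping together with the sequence-of-views ordering to place $v$ strictly below $v''$ in either containment case.
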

\begin{proof}
A sequence of views $seq'' = v_1 \to ... \to v_n \to v'', (n \geq 0)$ is converged on to replace $v'$ (\Cref{definition:leads_to}).
Moreover, a sequence of views $seq' = v_1' \to ... \to v_m' \to v \to v_1'' \to ... \to v_j'', (m \geq 0, j \geq 1)$ is converged on to replace $v'$ (\Cref{definition:auxiliary_view}).
Lastly, views $v_1, ..., v_n \in s$ and views $v_1', ..., v_m' \in s$ are not installable in $s$.

\Cref{lemma:weak_accuracy} shows that either $seq' \subseteq seq''$ or $seq' \supset seq''$.
Let us investigate two possible scenarios:
\begin{compactitem}
    \item $seq' = seq''$:
    We deduce that $v \subset v''$ (because $v \neq v''$) since \Cref{lemma:sequence_of_views} holds.
    \item $seq' \neq seq''$: We analyze two possible cases:
    \begin{compactitem}
        \item $seq' \subset seq''$:
        In this case, we can conclude that $v \in seq''$.
        Following \Cref{lemma:sequence_of_views} and the fact that $v \neq v''$, we conclude that $v \subset v''$.
        \item $seq' \supset seq''$:
        In this case, we can conclude that $v'' \in seq'$.
        If $v''$ is more recent than $v$ in $seq'$, the lemma holds.
        If $v'' = v$, then we come to the contradiction since $v$ is not installable in $s$.
        Lastly, if $v$ is more recent than $v''$, then $v$ is not an auxiliary view for $v'$ in $s$ (since $v''$ is an installable view in $s$ and $v'' \supset v'$ by \Cref{lemma:bigger}).
        Hence, we reach the contradiction even in this case.
    \end{compactitem}
\end{compactitem}

The lemma holds since in any case the following is satisfied: $v \subset v''$.
\end{proof}

\begin{lemma} \label{lemma:auxiliary_from_one}
Suppose that the current state of the system is $s \in \mathcal{S}$.
Consider a view $v' \in s$, where $v'$ is not installable in $s$.
Then, $v'$ is an auxiliary view for some view $v \in s$ in $s$.
\end{lemma}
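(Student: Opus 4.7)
The plan is to invoke Property~\ref{property:installable} directly: its second clause is, up to renaming, exactly the data needed by Definition~\ref{definition:auxiliary_view}. Since $v' \in s$ and $v'$ is not installable in $s$, clause~(2) of Property~\ref{property:installable} furnishes an installable view $v \in s$ together with a sequence
\[
seq = v_1' \to \ldots \to v_m' \to v' \to v_1 \to \ldots \to v_n, \qquad m \geq 0,\ n \geq 1,
\]
converged on to replace $v$, such that $v_1', \ldots, v_m' \in s$ are not installable in $s$, the sub-sequence $v_{i+1}' \to \ldots \to v_m' \to v' \to v_1 \to \ldots \to v_n$ is converged on to replace $v_i'$ for every $1 \leq i < m$, and $v' \to v_1 \to \ldots \to v_n$ is converged on to replace $v_m'$.

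I would then observe that this matches, verbatim, the hypotheses of Definition~\ref{definition:auxiliary_view} with the installable view $v$ playing the role of the "base" and $v'$ appearing in the specified position in the sequence. In particular, the required converged-on sub-sequences for each intermediate non-installable view are exactly those supplied by Property~\ref{property:installable}. Hence $v'$ is an auxiliary view for $v$ in $s$, which is the desired conclusion.

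The main (and essentially only) obstacle is a bookkeeping check that the quantifiers line up: I must make sure Property~\ref{property:installable}(2) gives $n \geq 1$ rather than $n \geq 0$ and a view $v$ that is installable in $s$, both of which are stated explicitly in the property, and I must verify that the non-installability of all $v_i'$ is preserved when reinterpreted under Definition~\ref{definition:auxiliary_view}. No additional combinatorial work or appeal to Lemmas~\ref{lemma:sequence_of_views}--\ref{lemma:auxiliary_views_compare_2} is required; the lemma is a direct unfolding of definitions from the already-assumed Property~\ref{property:installable}.
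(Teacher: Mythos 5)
Your proposal is correct and matches the paper's own argument, which likewise derives the lemma directly from clause~(2) of \Cref{property:installable} together with \Cref{definition:auxiliary_view}. The careful check that the quantifiers and non-installability conditions line up is exactly the (routine) content of the paper's one-line proof, so nothing further is needed.
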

\begin{proof}
The lemma follows directly from \Cref{property:installable} and \Cref{definition:auxiliary_view}.
\end{proof}

\begin{lemma} \label{lemma:one_auxiliary_view}
Suppose that the current state of the system is $s \in \mathcal{S}$.
Consider a view $v \in s$, where $v$ is an auxiliary view for $v' \in s$ in $s$ and $v'' \in s$ in $s$.
Then, $v' = v''$.
\end{lemma}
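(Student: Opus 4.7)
The plan is to argue by contradiction: I would assume $v' \neq v''$ and derive incompatible inclusions between $v$ and $v''$. First, applying \Cref{lemma:auxiliary_views_compare_1} twice (once with $v$ auxiliary for $v'$, once with $v$ auxiliary for $v''$) yields $v' \subsetneq v$ and $v'' \subsetneq v$; both inclusions are strict because $v'$ and $v''$ are installable in $s$ whereas $v$ is not.

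Next, I would establish that the installable views of $s$ are totally ordered by $\subset$. Together, \Cref{lemma:installable_from_one} and \Cref{lemma:from_only_one} imply that every non-initial installable view of $s$ has a unique installable predecessor under the leads-to relation, so iteratively tracing predecessors from any installable view must terminate at $v_0$. \Cref{lemma:only_one_view_leads_to} guarantees that every installable view has at most one leads-to successor. These three uniqueness properties force the installable views of $s$ to form a single linear chain $v_0 = w_0, w_1, \ldots, w_k$ in which $w_t$ leads to $w_{t+1}$ for every $t < k$; by \Cref{lemma:installable_views_compare}, this is also a strict $\subset$-chain. Since $v'$ and $v''$ are distinct installable views, they both lie on this chain and one strictly precedes the other; without loss of generality, $v' = w_i$ and $v'' = w_j$ with $i < j$, so $v'$ leads to $w_{i+1}$ and $w_{i+1} \subseteq w_j = v''$.

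Applying \Cref{lemma:auxiliary_views_compare_2} to $v$ being auxiliary for $v'$ together with $v'$ leading to $w_{i+1}$ yields $v \subset w_{i+1} \subseteq v''$, which contradicts $v'' \subsetneq v$ obtained in the first paragraph. Hence $v' = v''$. The main obstacle is packaging the earlier uniqueness lemmas into the clean structural observation that the installable views of $s$ form a single chain; once this is granted, \Cref{lemma:auxiliary_views_compare_1} and \Cref{lemma:auxiliary_views_compare_2} collide on $v$ and $v''$ to finish the proof essentially mechanically.
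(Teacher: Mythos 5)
Your proof is correct. The paper itself disposes of this lemma with a bare pointer to \Cref{lemma:installable_from_one}, \Cref{lemma:only_one_view_leads_to} and \Cref{definition:auxiliary_view}, so any honest proof has to supply the missing machinery, and yours does so cleanly: you first re-derive, from \Cref{lemma:installable_from_one}, \Cref{lemma:from_only_one} and \Cref{lemma:only_one_view_leads_to}, the fact that the installable views of $s$ form a single $\subset$-chain rooted at $v_0$ (this is exactly the content of the later \Cref{lemma:unique_installable_views}, and re-proving it here creates no circularity since that lemma does not depend on the present one), and you then obtain the contradiction by playing \Cref{lemma:auxiliary_views_compare_1} ($v''\subsetneq v$) against \Cref{lemma:auxiliary_views_compare_2} ($v\subset w_{i+1}\subseteq v''$). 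The paper's citation list does not even mention the two auxiliary-view comparison lemmas, so the concrete collision you use to close the argument is your own contribution; it is arguably the cleanest way to finish, since without the chain structure one cannot even assert that $v'$ and $v''$ are comparable. The one point worth making explicit is that the backward trace of predecessors terminates because ``leads to'' is strictly $\subset$-increasing (\Cref{lemma:installable_views_compare}) and views are finite sets of changes, which you implicitly rely on when asserting that the trace reaches $v_0$; this is easy to add and does not affect correctness.
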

\begin{proof}
The lemma follows directly from \cref{lemma:installable_from_one,lemma:only_one_view_leads_to} and \Cref{definition:auxiliary_view}.
% Suppose that $v' \neq v''$.
% Let us say that some view $v_{p_1}$ leads to $v'$ in $s$ and that some view $v_{p_2}$ leads to $v''$ in $s$ (\Cref{lemma:from_only_one}).
% If $v_{p_1} = v_{p_2}$, then this view leads to two different views, which is a contradiction with \Cref{lemma:only_one_view_leads_to}.
% Hence, lemma holds.
% If $v_{p_1} \neq v_{p_2}$, we can consider views that lead to $v_{p_1}$ and $v_{p_2}$ and apply the same argument.
% Because of the fact the initial view of the system is installable and unique and \Cref{lemma:installable_from_one}, we can conclude that there exists an installable view in $s$ such that it leads to two different views.
% This contradicts \Cref{lemma:only_one_view_leads_to}, which implies that lemma holds.
% ----This follows directly from \Cref{lemma:one_installable_to_non} and \Cref{definition:auxiliary_view}.
\end{proof}

\begin{property} \label{property:accepted_if_converged}
Suppose that the current state of the system is $s \in \mathcal{S}$.
Consider a view $v \in s$, where $v$ is an auxiliary view for $v' \in s$ in $s$.
If a sequence of views $seq = v_1 \to ... \to v_n, (n \geq 1)$ is accepted to replace $v$ in $s$, then a sequence of views $seq' = v_1' \to ... \to v_m' \to v \to v_1 \to ... \to v_n, (m \geq 0)$ is converged on to replace $v'$, where views $v_1', ..., v_m' \in s$ are not installable in $s$. Moreover, a sequence of views $seq_i = v_{i + 1}' \to ... \to v_m' \to v \to v_1 \to ... \to v_n$ is converged on to replace view $v_i' \in s$, for every $i \geq 1$ and $i < m$ and a sequence of views $seq_m = v \to v_1 \to ... \to v_n$ is converged on to replace view $v_m' \in s$.
% Consider a set of views $\Psi$.
% For every view $v \in \Psi$, such that $v$ is an auxiliary view for $v' \in \Psi$ in $\Psi$, if a sequence of views $seq = v_1 \to ... \to v_n, (n \geq 1)$ is accepted to replace $v$, then a sequence of views $seq' = v_1' \to ... \to v_m' \to v \to v_1 \to ... \to v_n, (m \geq 0)$ is converged on to replace $v'$.
% Moreover, views $v_1', ..., v_m' \in \Psi$ are not installable in $\Psi$. 
\end{property}

When we say that \Cref{property:accepted_if_converged} holds for the current state $s$ of the system, we assume that it holds for every view $v \in s$ that is not installable in $s$.
Again, we assume that \Cref{property:accepted_if_converged} holds for the current state of the system (along with \Cref{property:installable}).
Then, we prove that \Cref{property:accepted_if_converged} also holds for a new state of the system obtained from the current one.

\begin{lemma} \label{lemma:auxiliary_comparable_accepted}
Suppose that the current state of the system is $s \in \mathcal{S}$.
Consider a view $v \in s$, where $v$ is not installable in $s$.
% and for which \Cref{property:accepted_if_converged} holds.
Suppose that sequences of views $seq_1$ and $seq_2 \neq seq_1$ are accepted to replace $v$ in $s$.
Then, either $seq_1 \subset seq_2$ or $seq_1 \supset seq_2$.
\end{lemma}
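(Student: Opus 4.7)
The plan is to reduce comparability of $seq_1$ and $seq_2$ (as sequences accepted to replace $v$) to comparability of two sequences converged on to replace a common installable ancestor, where Lemma~\ref{lemma:weak_accuracy} is already available. Since $v$ is not installable in $s$, Lemma~\ref{lemma:auxiliary_from_one} gives an installable view $v' \in s$ such that $v$ is an auxiliary view for $v'$ in $s$, and Lemma~\ref{lemma:one_auxiliary_view} ensures that this $v'$ is unique. This $v'$ will serve as the common anchor for both accepted sequences.

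Next I would apply \Cref{property:accepted_if_converged} twice, once to $seq_1$ and once to $seq_2$. Each application produces a sequence converged on to replace the same installable view $v'$, of the form $seq_i^{\star} = P_i \to v \to seq_i$, where $P_i$ is a prefix consisting of non-installable views in $s$ (possibly empty). The key structural fact I would use is that both $seq_i^{\star}$ are \emph{sequences of views} by Lemma~\ref{lemma:sequence_of_views}, so in particular every view appearing in $seq_i$ strictly contains $v$, while every view appearing in $P_i$ is strictly contained in $v$.

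The main step is then Lemma~\ref{lemma:weak_accuracy} applied to $seq_1^{\star}$ and $seq_2^{\star}$, both converged on to replace $v'$: without loss of generality, $seq_1^{\star} \subseteq seq_2^{\star}$. Take any $\omega \in seq_1$; then $\omega \in seq_2^{\star}$, and since $v \subset \omega$, $\omega$ cannot equal $v$ nor lie in $P_2$ (whose views are strictly contained in $v$), so $\omega \in seq_2$. Hence $seq_1 \subseteq seq_2$, and because $seq_1 \neq seq_2$ by assumption, $seq_1 \subset seq_2$. The symmetric case gives $seq_1 \supset seq_2$.

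The main obstacle I anticipate is justifying that the common anchor $v'$ is genuinely the \emph{same} installable view for both applications of \Cref{property:accepted_if_converged}, rather than two distinct installable views whose converged sequences cannot be compared by Lemma~\ref{lemma:weak_accuracy}. This is precisely what Lemma~\ref{lemma:one_auxiliary_view} delivers, so the argument goes through cleanly. A secondary bookkeeping point is the degenerate situation when one of $seq_i$ is empty; in that case the inclusion is trivial, so no further work is needed.
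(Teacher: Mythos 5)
Your proof is correct and follows essentially the same route as the paper, whose proof is a one-line citation of exactly the ingredients you assemble: the unique installable anchor from \Cref{lemma:one_auxiliary_view} (with existence via \Cref{lemma:auxiliary_from_one}), two applications of \Cref{property:accepted_if_converged} to lift both accepted sequences to sequences converged on to replace that anchor, and \Cref{lemma:weak_accuracy} to compare them. Your write-up merely makes explicit the suffix-extraction step (views of $seq_i$ strictly contain $v$, so containment of the lifted sequences restricts to containment of $seq_1$ and $seq_2$) that the paper leaves implicit.
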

\begin{proof}
The lemma follows directly from \Cref{property:accepted_if_converged} and \cref{lemma:weak_accuracy,lemma:one_auxiliary_view}.
\end{proof}

\begin{property} \label{property:format_installable}
Suppose that the current state of the system is $s \in \mathcal{S}$.
Consider a view $v \in s$ installable in $s$.
% Let $\mathcal{F}_v$ be a set of sequences of views such that $seq \in \mathcal{F}_v$ is not equal to $\emptyset$ and at least one correct process $p \in v$ has $seq \in FORMAT^v$.
Then, for any $seq_1 \in \beta_s(v)$ and $seq_2 \in \beta_s(v)$, if $seq_1 \neq seq_2$, then either $seq_1 \subset seq_2$ or $seq_1 \supset seq_2$.
\end{property}

When we say that \Cref{property:format_installable} holds for the current state $s$ of the system, we assume that it holds for every view $v \in s$ installable in $s$.
Again, we assume that \Cref{property:format_installable} holds for the current state of the system (along with \cref{property:installable,property:accepted_if_converged}).
Then, we prove that \Cref{property:format_installable} also holds for a new state of the system obtained from the current one.

\begin{lemma} \label{lemma:comparable_format}
Suppose that the current state of the system is $s \in \mathcal{S}$.
Consider a view $v \in s$ and a correct process $p \in v$.
% Suppose that $\emptyset \notin FORMAT^v$ at $p$.
For any sequences of views $seq_1$ and $seq_2$ , where $p$ accepts both $seq_1$ and $seq_2$ to replace $v$ and $p$ does not accept any sequence to replace $v$, if $seq_1 \neq seq_2$, then either $seq_1 \subset seq_2$ or $seq_1 \supset seq_2$.
\end{lemma}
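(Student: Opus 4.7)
The plan is to perform a case analysis on whether the view $v$ is installable in the current state $s$, and in each case invoke one of the already-established facts about the state machine $\Pi$. Before splitting into cases, I first translate $p$'s process-level acceptance into the system-level predicate $\beta_s(v)$. Since $p$ does not accept any sequence to replace $v$, we have $\emptyset \notin \mathit{FORMAT}^v$ at $p$, so $v \neq v_0$ and both $seq_1$ and $seq_2$ are non-empty. The only way a non-empty sequence $seq_i$ can enter $\mathit{FORMAT}^v$ at a correct $p$ is through R-delivery of an \msg{install} message of the form $\langle \msg{install}, v, seq_i \cup \{v\}, v_i'\rangle$ (with $v$ necessarily being the least recent view of $seq_i \cup \{v\}$). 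Because such messages carry a quorum of properly signed \msg{converged} messages, and Byzantine processes cannot forge signatures, some process in $v_i'$ must actually have collected a quorum of \msg{converged} messages for $seq_i \cup \{v\}$ to replace $v_i'$. Hence the event $\langle v_i', seq_i \cup \{v\}\rangle$ was observed by $\Pi$, and by the transition rules (cases~1 and~2 of the definition of $\mathcal{D}$, since $|seq_i \cup \{v\}| > 1$) the sequence $seq_i$ was inserted into $\beta(v)$. Monotonicity of $\beta$ then gives $seq_1, seq_2 \in \beta_s(v)$.

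With this in hand, the case split is immediate. If $v$ is installable in $s$, I apply \Cref{property:format_installable}, which is assumed to hold in the current state: any two distinct elements of $\beta_s(v)$ are comparable, so either $seq_1 \subset seq_2$ or $seq_1 \supset seq_2$. If instead $v$ is not installable in $s$, the state-machine definition directly says that $v$ accepts $seq_i$ to replace $v$ in $s$ for each $i$ (since $seq_i \in \beta_s(v)$ and $\alpha_s(v) = \bot$). Hence both $seq_1$ and $seq_2$ are accepted to replace $v$ in $s$, and \Cref{lemma:auxiliary_comparable_accepted} delivers the same conclusion.

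The only delicate step is the bridge in the first paragraph, connecting the local state of $p$ to $\beta_s(v)$. The subtlety is not in the signature argument itself, but in noticing that the \msg{install} message that put $seq_i$ into $\mathit{FORMAT}^v$ must have $\omega = v$, so $v$ is the least recent view of the bundled sequence $seq_i \cup \{v\}$; this is exactly what is needed to land in the branch of the transition rules that updates $\beta(v)$ (rather than some $\beta(v')$ with $v' \ne v$). Once this alignment between $\omega$ in the \msg{install} message and the least recent view manipulated by $\Pi$ is spelled out, the rest is a direct appeal to \Cref{property:format_installable} and \Cref{lemma:auxiliary_comparable_accepted}.
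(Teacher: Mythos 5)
Your proof is correct and follows essentially the same route as the paper, which disposes of the lemma by exactly this case split: \Cref{property:format_installable} for installable $v$ and \Cref{lemma:auxiliary_comparable_accepted} for non-installable $v$. The only difference is that you additionally spell out the bridge from the process-local $\mathit{FORMAT}^v$ to the state-machine set $\beta_s(v)$ via the quorum of signed \msg{converged} messages carried by \msg{install} messages --- a step the paper's one-line proof leaves implicit.
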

\begin{proof}
The lemma follows directly from \Cref{lemma:auxiliary_comparable_accepted} and \Cref{property:format_installable}.
% If $v$ is not installable in $s$, then lemma folows directly from \Cref{lemma:auxiliary_comparable_accepted}.
% Suppose that $v$ is installable in $s$.
% Since $\emptyset \in FORMAT^{v_0}$, where $v_0$ is the initial view of the system, we conclude that $v$ is not the initial view of the system.
% From \Cref{lemma:installable_from_one,lemma:from_only_one}, we conclude that some view $v' \in s$ leads to $v$ in $s$.
% Process $p$ executed \Cref{line:set_format} of \Cref{algorithm:install} both for $seq_1$ and $seq_2$.
% Therefore, a sequence of views $seq_1' = \{v\} \cup seq_1$ is converged on to replace some view $v_1$ and a sequence of views $seq_2' = \{v\} \cup seq_2$ is converged on to replace some view $v_2$.
% View $v_1$ can either be equal to view $v'$ or can be auxiliary view for $v'$ in $s$.
% The same fact holds for $v_2$.
% Because of \Cref{lemma:weak_accuracy} and \Cref{property:accepted_if_converged}, we conclude that either $seq_1 \subset seq_2$ or $seq_1 \supset seq_2$.
\end{proof}

\begin{lemma} \label{lemma:converged_then_accepted}
Suppose that the current state of the system is $s \in \mathcal{S}$.
Consider a view $v \in s$.
Suppose that a sequence of views $seq$ is converged on to replace $v$.
% , suppose that \Cref{property:accepted_if_converged} holds for $v$.
Then, $seq$ is accepted to replace $v$ in $s$.
\end{lemma}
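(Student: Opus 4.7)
The plan is to split on whether $v$ is installable in $s$. \Cref{lemma:bigger} already gives that every $\omega \in seq$ satisfies $v \subset \omega$, so $seq \in \mathcal{V}_v$. If $v$ is installable in $s$, then by the state-machine definition of acceptance, $v$ accepts every sequence in $\mathcal{V}_v$ to replace itself and therefore accepts $seq$, closing this case immediately.

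For the non-installable case I would show $seq \in \beta_s(v)$ by tracing the convergence back through a correct process. Since $seq$ is converged on to replace $v$, some process collects $v.q$ \msg{converged}$(seq, v)$ messages from members of $v$; by \Cref{assumption:quorums} applied to the valid view $v$, this quorum contains at least one correct $q \in v$. The only place such a message is sent is \Cref{line:send_converged}, and this fires only after $q$ has received a quorum of \msg{propose}$(SEQ^v, v)$ messages whose local value equals $seq$ (\Cref{line:verification3}). The guard of the \msg{propose} handler (\Cref{line:deliver_propose_brief}) forces that at the time of each admission $q$ satisfies $seq \in FORMAT^v$ or $\emptyset \in FORMAT^v$; monotonicity of $FORMAT$ preserves this disjunction into the current state.

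Next I would rule out the $\emptyset \in FORMAT^v$ branch under the assumption that $v$ is not installable. The empty sequence enters $FORMAT^v$ only through two routes: either $v = v_0$, in which case $\alpha_{s_i}(v_0) = \top$ and monotonicity of $\alpha$ keeps $v_0$ installable in $s$; or $q$ R-delivered some \msg{install}$(v, \{v\}, v')$, which required an earlier convergence on the singleton $\{v\}$ to replace $v'$. The corresponding state-machine event $\langle v', \{v\} \rangle$ has $m = 1$, falls in the third or fourth transition case, and sets $\alpha(v) = \top$; so $v$ is installable in $s$. Either outcome contradicts the case assumption, leaving $seq \in FORMAT^v$ at $q$.

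The closing step would be to lift $seq \in FORMAT^v$ at the correct $q$ to $seq \in \beta_s(v)$. The entry $seq$ was placed in $FORMAT^v$ at $q$ via R-delivery of some \msg{install}$(v, v \to seq, v'')$ message (\Cref{line:set_format}), whose transmission required a preceding convergence on $v \to seq$ to replace $v''$. That event $\langle v'', v \to seq \rangle$ has $v$ as its least recent view and length at least $2$, so it triggers the first or second transition case and thereby adds $seq$ to $\beta(v)$; by monotonicity $seq \in \beta_s(v)$, and so $v$ accepts $seq$ to replace $v$ in $s$. The main obstacle I expect is the clean execution of this trace: isolating a correct witness through the Byzantine quorum-intersection argument and then bridging the algorithmic variable $FORMAT$ with the state-machine $\beta$ by identifying each $FORMAT^v$ entry with the convergence event that created it.
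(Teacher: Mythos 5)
Your proof is correct in substance but takes a genuinely different route from the paper's. The paper picks a correct \emph{sender} of a \msg{propose}$(seq, v)$ message (which must exist among the $v.q$ proposers behind any convergence) and case-splits on the three code locations where that send can occur (\cref{line:propose_system_configuration,line:propose_rest,line:propose_second_case}); in the first it deduces $\emptyset \in \mathit{FORMAT}^v$, in the second it points at \Cref{line:set_format}, and in the third it runs a comparability argument via \Cref{lemma:comparable_format} to show the re-disseminated $\mathit{SEQ}^v$ equals an already-accepted proposal. In every branch it stops at ``some correct process accepts $seq$'' in the $\mathit{FORMAT}^v$ sense of \Cref{definition:accepts}, and never touches installability or $\beta_s(v)$. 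You instead split on whether $v$ is installable, dispatch the installable case through $\mathcal{V}_v$ and \Cref{lemma:bigger}, and in the non-installable case pick a correct sender of \msg{converged}$(seq,v)$, exploit the receive-side guard of \Cref{line:deliver_propose_brief}, exclude the $\emptyset \in \mathit{FORMAT}^v$ branch by showing it forces installability, and then translate $\mathit{FORMAT}^v$ membership into $\beta_s(v)$ membership via the \msg{install}/convergence correspondence. Your version buys two things: it lands directly on the state-machine notion of acceptance that \Cref{property:accepted_if_converged} and \Cref{lemma:property_1_part_1} actually consume, and your receiver-side argument (the quorum counted at \Cref{line:verification3} consists of messages whose content \emph{equals} the converged-on $seq$, each guarded on arrival) neatly sidesteps the paper's delicate reasoning about how $\mathit{SEQ}^v$ evolved through unions and conflict merges. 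Two caveats: your claim that the quorum handler at \Cref{line:verification3} inherits the guard of \Cref{line:deliver_propose_brief} is a reading of two formally separate handlers (albeit the same reading the footnote to \Cref{definition:accepts} relies on); and the work you spend ruling out $\emptyset \in \mathit{FORMAT}^v$ is unnecessary under \Cref{definition:accepts}, where that case grants acceptance of \emph{any} sequence outright --- it is only needed because you target $\beta_s(v)$ rather than the paper's weaker conclusion.
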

\begin{proof}
In a case where $v$ is the initial view of the system, $seq$ is accepted by default.

Suppose now that $seq$ is converged on to replace $v$ and $v$ is not the initial view of the system.
This implies that at least $v.q$ processes have sent a \msg{propose} message for $seq$.
There is a correct process $p \in v$ that has sent the aforementioned \msg{propose} message.
Lastly, there are three places in the algorithm where $p$ could have sent the \msg{propose} message:
\begin{compactenum}
    \item \Cref{line:propose_system_configuration}:
    In this case, $p$ has executed \Cref{line:install_complete}.
    This implies that $\emptyset \in FORMAT^v$ at process $p$.
    Hence, $seq$ is accepted to replace $v$.
    
    \item \Cref{line:propose_rest}:
    In this case, $p$ has executed \Cref{line:set_format} for $seq$.
    Hence, $seq$ is accepted to replace $v$ in this case.
    
    \item \Cref{line:propose_second_case}:
    Since $v$ is not the initial view of the system, process $p$ received an \msg{install} message where $\omega$ parameter is equal to $v$.
    Now, we separate two cases with respect to \Cref{line:set_format}:
    \begin{compactenum}
        \item $\emptyset \in \mathit{FORMAT^v}$ at process $p$:
        In this case, $p$ accepts any sequence of views to replace $v$.
        Thus, $seq$ is accepted to replace $v$.
        
        \item $\emptyset \notin \mathit{FORMAT^v}$ at process $p$:
        Suppose that $p$ received a \msg{propose} message with $seq'$ as an argument and that $\mathit{SEQ^v}$ variable of $p$ is equal to $seq'' \neq seq'$.
        We conclude that $p$ accepts $seq'$ (otherwise, $p$ would ignore the \msg{propose} message) and $seq''$.
        By \Cref{lemma:comparable_format}, either $seq' \subset seq''$ or $seq' \supset seq''$.
        
        Since $p$ executes \Cref{line:propose_second_case} upon receiving the \msg{propose} message with $seq'$, we conclude that $seq' \supset seq''$.
        Moreover, after executing \Cref{line:update_nonconflicting}, $\mathit{SEQ^v}$ is equal to $seq'$.
        Given the fact that this \msg{propose} message sent by $p$ leads to a fact that $seq$ is converged on to replace $v$, we conclude that $seq = seq'$.
        Therefore, $p$ accepts $seq$ to replace $v$ even in this case.
    \end{compactenum}
\end{compactenum}
\end{proof}

It is easy to see that \cref{lemma:auxiliary_comparable_accepted,lemma:converged_then_accepted} hold for any view $v$, where \cref{property:accepted_if_converged,property:format_installable} are satisfied for $v$.
% ---------------------------------------------------------------------------------

\begin{lemma} \label{lemma:unique_installable_views}
Suppose that the current state of the system is $s \in \mathcal{S}$.
% Consider a set of views $\Psi$.
Then, views that are installable in $s$ form a sequence of views.
\end{lemma}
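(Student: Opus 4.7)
The plan is to exploit the ``leads to'' relation between installable views, established in \Cref{definition:leads_to}, to show that all installable views in $s$ form a single chain under $\subsetneq$, from which comparability is immediate. By \Cref{lemma:installable_from_one}, every installable view $v \in s$ different from the initial view $v_0$ has some installable view leading to it, and by \Cref{lemma:from_only_one} this predecessor is unique; denote it $\pi(v)$. Dually, by \Cref{lemma:only_one_view_leads_to}, each installable view leads to at most one installable view, so the leads-to relation defines a partial successor function on installable views as well. Finally, \Cref{lemma:installable_views_compare} yields $\pi(v) \subsetneq v$ whenever $\pi(v)$ is defined.

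First I would show that, starting from any installable $v \in s$ and iteratively applying $\pi$, one reaches $v_0$ in finitely many steps. Each application of $\pi$ strictly shrinks the view, and since $s$ has been reached from $s_i = \{v_0\}$ through finitely many transitions in $\mathcal{D}$ (each adding at most one view to the state), only finitely many views are in $s$, so the chain $v, \pi(v), \pi^2(v), \ldots$ cannot repeat and must terminate; the only possible terminus is $v_0$, since every other installable view has a predecessor. Combined with the fact that in-degree and out-degree of the leads-to graph on installable views are both at most one, this forces the graph to consist of a single path rooted at $v_0$ on which every installable view of $s$ lies.

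Consecutive views along this path satisfy $u \subsetneq \pi^{-1}(u)$, so by transitivity of $\subsetneq$ any two installable views of $s$ are related by $\subsetneq$ in one direction or the other. Equivalently, $\{v \in s : \alpha_s(v) = \top\}$ is a set of mutually comparable views, which by definition is a sequence of views. The main obstacle I expect is cleanly justifying the finiteness of the backward chain; the tidiest route is to package it as an induction on the length of the event sequence producing $s$, maintaining jointly \Cref{property:installable}, \Cref{property:accepted_if_converged}, \Cref{property:format_installable}, and the invariant being proved, so that at each step the newly added view either equals an existing one or extends the unique path by exactly one node.
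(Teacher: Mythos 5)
Your proposal is correct and follows essentially the same route as the paper's proof: both reduce comparability of installable views to the uniqueness of predecessors and successors under the ``leads to'' relation (\Cref{lemma:installable_from_one}, \Cref{lemma:from_only_one}, \Cref{lemma:only_one_view_leads_to}) together with \Cref{lemma:installable_views_compare}. The only difference is that you explicitly justify termination of the backward chain at $v_0$, a detail the paper leaves implicit; this is a welcome addition but not a different argument.
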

\begin{proof}
Suppose that a view $\omega_1$ is an installable view.
\Cref{lemma:only_one_view_leads_to} shows that $\omega_1$ leads to at most one view.
Moreover, \cref{lemma:installable_from_one,lemma:from_only_one} shows that there is exactly one installable view $\omega_0$ that leads to $\omega_1$ (if $\omega_1$ is not the initial view of the system).
Lastly, if view $\omega_0$ leads to $\omega_1$, then $\omega_1 \supset \omega_0$ (\Cref{lemma:installable_views_compare}).
Therefore, installable views form a sequence of views starting with the initial view of the system.
% View $\omega_1$ leads to some view $\omega_2$ (\Cref{lemma:only_one_view_leads_to}).
% % System reconfigures from $\omega_1$ to some view $\omega_2$, since $\omega_1$ leads to $\omega_2$.
% The same argument can be applied inductively to any number of views, and thus it is possible to see that the installable views in the system form a sequence $\omega_1 \to \omega_2 \to ... \to \omega_{k-1} \to \omega_k$.
\end{proof}

\begin{lemma} \label{lemma:valid_comparable}
Suppose that the current state of the system is $s \in \mathcal{S}$.
% Consider a set of views $\Psi$.
Consider a view $v \in s$ and a view $v' \in s$, where $v \neq v'$.
Then, either $v \subset v'$ or $v \supset v'$.
\end{lemma}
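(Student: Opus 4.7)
The plan is to prove comparability by a case analysis on whether each of $v$ and $v'$ is installable in $s$ or not, systematically reducing every case to comparability of installable views (Lemma~\ref{lemma:unique_installable_views}) and the auxiliary-view machinery developed above.

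\textbf{Case 1: both $v$ and $v'$ installable in $s$.} This case is immediate: Lemma~\ref{lemma:unique_installable_views} shows that installable views in $s$ form a sequence of views, so $v \subset v'$ or $v \supset v'$.

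\textbf{Case 2: exactly one installable.} Say $v$ is installable and $v'$ is not (the other sub-case is symmetric). By Lemma~\ref{lemma:auxiliary_from_one}, $v'$ is an auxiliary view for some installable $w \in s$, and by Lemma~\ref{lemma:auxiliary_views_compare_1}, $w \subset v'$. Since $v$ and $w$ are both installable, Lemma~\ref{lemma:unique_installable_views} gives three possibilities: $v \subseteq w$, in which case $v \subseteq w \subset v'$ and we are done; or $v \supset w$. In the latter sub-case, $v$ lies strictly further along the installable chain than $w$, so let $w_1$ be the installable view that $w$ leads to in $s$ (which exists because $v$ witnesses an installable strictly above $w$). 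Lemma~\ref{lemma:auxiliary_views_compare_2} yields $v' \subset w_1$; and since $w_1$ is weakly below $v$ in the installable chain (again by Lemma~\ref{lemma:unique_installable_views}), we obtain $v' \subset w_1 \subseteq v$.

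\textbf{Case 3: neither installable.} By Lemma~\ref{lemma:auxiliary_from_one}, $v$ is auxiliary for some installable $u$ and $v'$ is auxiliary for some installable $u'$. If $u = u'$, then by Definition~\ref{definition:auxiliary_view} there exist converged sequences $seq_v$ and $seq_{v'}$ to replace $u$ with $v \in seq_v$ and $v' \in seq_{v'}$. Lemma~\ref{lemma:weak_accuracy} ensures that one of these sequences contains the other, so both $v$ and $v'$ lie in a common sequence of views, and Lemma~\ref{lemma:sequence_of_views} forces $v$ and $v'$ to be comparable. If $u \neq u'$, Lemma~\ref{lemma:unique_installable_views} gives, WLOG, $u \subset u'$; let $u_1$ be the installable view that $u$ leads to (it exists, witnessed by $u'$). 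Lemma~\ref{lemma:auxiliary_views_compare_2} then gives $v \subset u_1$, and since $u_1 \subseteq u' \subset v'$ by Lemma~\ref{lemma:auxiliary_views_compare_1} applied to $v'$ and the chain of installables, we conclude $v \subset v'$.

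The main obstacle I expect is Case~3 with $u = u'$: we need to argue that $v$ and $v'$, which were injected into the state via potentially different converged sequences replacing the same installable view, nonetheless end up comparable. The key insight is that Lemma~\ref{lemma:weak_accuracy} totally orders the converged sequences for a fixed view, so we may collapse to a single sequence containing both views and invoke Lemma~\ref{lemma:sequence_of_views}. The remaining cases are clean chases through the auxiliary-view inclusion lemmas.
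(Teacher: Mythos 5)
Your proof is correct and follows essentially the same route as the paper: a four-way case split on installability, with the installable--installable case handled by Lemma~\ref{lemma:unique_installable_views}, the mixed cases by locating the installable anchor of the auxiliary view via Lemmas~\ref{lemma:auxiliary_from_one}, \ref{lemma:auxiliary_views_compare_1} and \ref{lemma:auxiliary_views_compare_2}, and the doubly-auxiliary case with a common anchor by combining Lemmas~\ref{lemma:weak_accuracy} and \ref{lemma:sequence_of_views}. The only cosmetic difference is that you merge the paper's three-way sub-case split (equal, above, below the anchor) into two sub-cases, which is equivalent.
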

\begin{proof}
% Suppose that neither $v$ nor $v'$ is the initial view of the system.
We introduce four possible scenarios:
\begin{compactitem}
    \item Both $v$ and $v'$ are installable in $s$:
    The lemma follows directly from the definition of a sequence of views and \Cref{lemma:unique_installable_views}.
    % \Cref{lemma:installable_views_compare} shows that if $v_1 \in s$ leads to $v_2 \in s$ in $s$, then $v_1 \subset v_2$.
    % Combining this fact with \Cref{lemma:unique_installable_views} shows that either $v \subset v'$ or $v \supset v'$.
    \item View $v$ is not installable in $s$, whereas view $v'$ is installable in $s$:
    View $v$ is an auxiliary view for some view $v_1 \in s$ in $s$ (\cref{lemma:auxiliary_from_one,lemma:one_auxiliary_view}).
    % Therefore, a sequence of views $v_1' \to ... \to v_m' \to v \to v_1'' \to ... \to v_n'', (m \geq 0, n \geq 1)$ is converged on to replace $v_1$.
    % Views $v_1', ..., v_m'$ are not installable, but are valid.
    
    If $v_1 = v'$, then $v' \subset v$ (\Cref{lemma:auxiliary_views_compare_1}).
    If $v_1 \subset v'$, then $v \subset v'$ (by \cref{lemma:auxiliary_views_compare_2,lemma:unique_installable_views} and the definition of a sequence of views).
    Lastly, if $v_1 \supset v'$, then $v \supset v'$ (by \cref{lemma:unique_installable_views,lemma:auxiliary_views_compare_1} and the definition of a sequence of views).
    \item View $v$ is installable in $s$, whereas view $v'$ is not installable in $s$: 
    This case reduces to the previous one.
    \item Both $v$ and $v'$ are not installable in $s$:
    Suppose that $v$ is an auxiliary view for some view $v_1 \in s$ in $s$ and that $v'$ is an auxiliary view for some view $v_2 \in s$ in $s$ (\cref{lemma:auxiliary_from_one,lemma:one_auxiliary_view}).
    If $v_1 = v_2$, then the lemma holds because of \cref{lemma:sequence_of_views,lemma:weak_accuracy} and \Cref{definition:auxiliary_view}.
    If $v_1 \subset v_2$, then $v \subset v'$ (by \cref{lemma:unique_installable_views,lemma:auxiliary_views_compare_2,lemma:auxiliary_views_compare_1} and the definition of a sequence of views).
    Lastly, if $v_1 \supset v_2$, then $v \supset v'$ (by \cref{lemma:unique_installable_views,lemma:auxiliary_views_compare_2,lemma:auxiliary_views_compare_1} and the definition of a sequence of views).
\end{compactitem}

The lemma holds since $v \subset v'$ or $v \supset v'$ is satisfied in each of four possible cases.
\end{proof}

\begin{lemma} \label{lemma:unique_valid_views}
Suppose that the current state of the system is $s \in \mathcal{S}$.
% Consider a set of views $\Psi$.
Then, $s$ is a sequence of views.
\end{lemma}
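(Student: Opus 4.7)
The plan is to derive this lemma as an immediate corollary of Lemma~\ref{lemma:valid_comparable}. Recall from Section~\ref{sec:building-blocks} that a set of views is a sequence of views precisely when any two of its members are mutually comparable under $\subset$. Thus it suffices to verify the pairwise comparability of any two distinct $v, v' \in s$, and exactly this has already been established in Lemma~\ref{lemma:valid_comparable}.

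Concretely, I would proceed as follows. First, I would fix an arbitrary pair of distinct views $v, v' \in s$. Second, I would invoke Lemma~\ref{lemma:valid_comparable} to obtain $v \subset v'$ or $v \supset v'$. Third, I would observe that since the pair was arbitrary, $s$ satisfies the defining property of a sequence of views, and the claim follows.

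The real work behind this statement has already been done in the earlier infrastructure. Lemma~\ref{lemma:unique_installable_views} handles the ``spine'' of installable views through \cref{lemma:installable_from_one,lemma:only_one_view_leads_to,lemma:from_only_one,lemma:installable_views_compare}, so that the installable views of $s$ already form a chain. Lemmas~\ref{lemma:auxiliary_from_one} and~\ref{lemma:one_auxiliary_view} then attach each non-installable view to a unique installable view as an auxiliary view, and Lemmas~\ref{lemma:auxiliary_views_compare_1} and~\ref{lemma:auxiliary_views_compare_2} together with the chain structure of installable views let one locate every non-installable view relative to the spine. Combining these yields the four-case analysis used inside Lemma~\ref{lemma:valid_comparable}, so the present statement needs no additional case distinctions.

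The main anticipated obstacle is not in the proof itself, which is essentially a one-line reduction, but in the conceptual bookkeeping: one must remember that \Cref{property:installable,property:accepted_if_converged,property:format_installable} were only \emph{assumed} to hold for the current state $s$ as an inductive hypothesis in the development leading to Lemma~\ref{lemma:valid_comparable}. Since the present lemma is phrased solely in terms of $s$ (rather than claiming preservation across transitions), no further inductive step is required here, and the proof reduces to a single appeal to Lemma~\ref{lemma:valid_comparable} plus the definition of a sequence of views.
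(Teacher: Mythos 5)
Your proof is correct and matches the paper's own argument exactly: the paper also derives this lemma as an immediate consequence of \Cref{lemma:valid_comparable} together with the definition of a sequence of views. The additional commentary on where the underlying work was done is accurate but not needed for the reduction itself.
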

\begin{proof}
This follows directly from \Cref{lemma:valid_comparable} and the definition of a sequence of views.
\end{proof}

\begin{lemma} \label{lemma:again_sequence}
Suppose that the current state of the system is $s \in \mathcal{S}$.
Moreover, suppose that some sequence of views $seq = v \to v_1 \to ... \to v_n, (n \geq 0)$ is converged on to replace some view $v' \in s$.
For every view $v'' \in s$, either $v \subseteq v''$ or $v \supset v''$.
\end{lemma}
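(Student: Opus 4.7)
The plan is to fix an arbitrary $v'' \in s$ and show that $v$ and $v''$ are comparable. First, I would observe that by Lemma \ref{lemma:bigger}, every view in $seq$ strictly contains $v'$, so in particular $v' \subsetneq v$. Since both $v', v'' \in s$, Lemma \ref{lemma:valid_comparable} tells us that $v'$ and $v''$ are already comparable. If $v'' \subseteq v'$, then $v'' \subseteq v' \subsetneq v$, so $v \supsetneq v''$ and the conclusion is immediate. The real work is the remaining subcase $v'' \supsetneq v'$.

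For this subcase, I would invoke Property \ref{property:installable} applied to $v''$ to obtain an installable view $v^* \in s$ together with a converged sequence $\sigma$ replacing $v^*$ such that $v'' \in \sigma$ (either as the most recent view of $\sigma$, if $v''$ is itself installable, or as an intermediate view otherwise). By Lemma \ref{lemma:valid_comparable}, $v^*$ and $v'$ are comparable, and I would split on how. If $v^* = v'$, then both $seq$ and $\sigma$ are converged on to replace $v'$, so Lemma \ref{lemma:weak_accuracy} forces one to be contained in the other. Whichever way this goes, $v$ and $v''$ both belong to a single common sequence (either $seq$ or $\sigma$), and their comparability then follows from Lemma \ref{lemma:sequence_of_views}.

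The delicate case is $v^* \neq v'$, i.e., one of $v^*, v'$ is strictly contained in the other. To handle it, I would first ``lift'' $seq$: by Lemma \ref{lemma:converged_then_accepted}, $seq$ is accepted to replace $v'$, and then Property \ref{property:accepted_if_converged} (applied when $v'$ is non-installable) supplies a larger converged sequence $seq^+$ on some installable anchor $\hat{v} \in s$ that contains $v$; when $v'$ is already installable I simply take $\hat{v} = v'$ and $seq^+ = seq$. Now both $v$ and $v''$ live in converged sequences anchored at installable views, namely $seq^+$ on $\hat{v}$ and $\sigma$ on $v^*$, and $\hat{v}, v^*$ both lie in the totally ordered installable chain (Lemma \ref{lemma:unique_installable_views}). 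Walking along this chain and using the uniqueness of installable predecessors/successors (Lemmas \ref{lemma:from_only_one} and \ref{lemma:only_one_view_leads_to}) together with the containment bounds supplied by Lemmas \ref{lemma:installable_views_compare}, \ref{lemma:auxiliary_views_compare_1}, and \ref{lemma:auxiliary_views_compare_2}, I would pin down the position of $v$ relative to $v''$ and conclude that they are comparable.

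The hard part will be precisely this last step: because $seq$ and $\sigma$ are anchored at different installable views, Lemma \ref{lemma:weak_accuracy} does not apply directly, and the comparison between $v$ and $v''$ has to be mediated through the installable chain rather than by a single ``common sequence'' argument. The main obstacle is therefore to show that walking along the installable chain from $\hat{v}$ to $v^*$ (or vice versa) constrains the possible positions of $v$ and $v''$ tightly enough that comparability is forced, and to do so without accidentally relying on the very sequence property we are trying to establish for $s \cup \{v\}$.
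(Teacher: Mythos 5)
Your opening reductions are sound and match the paper's: by Lemma~\ref{lemma:bigger} you get $v' \subsetneq v$, so every $v'' \subseteq v'$ is handled by transitivity; the subcase where $seq$ and the sequence containing $v''$ share the anchor $v'$ is correctly dispatched by Lemma~\ref{lemma:weak_accuracy} plus Lemma~\ref{lemma:sequence_of_views}; and the lifting of $seq$ to an installable anchor $\hat{v}$ via Lemma~\ref{lemma:converged_then_accepted} and Property~\ref{property:accepted_if_converged} is exactly the move the paper makes. But the step you yourself flag as the ``hard part'' is a genuine gap, and your proposed resolution --- comparing the two converged sequences $seq^+$ (anchored at $\hat{v}$) and $\sigma$ (anchored at $v^*$) by ``walking'' the installable chain between the two anchors --- is not how the argument closes and does not obviously work: Lemma~\ref{lemma:weak_accuracy} never lets you relate sequences converged on to replace \emph{different} views, and Lemmas~\ref{lemma:installable_views_compare}, \ref{lemma:auxiliary_views_compare_1}, \ref{lemma:auxiliary_views_compare_2} only locate views of $s$, not the new view $v$.

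The missing idea is that no cross-anchor comparison is needed, because after lifting, $v$ gets \emph{sandwiched}: writing $v_{lead}$ for the unique installable view that $\hat{v}$ leads to, one shows $\hat{v} \subset v' \subset v \subset v_{lead}$. This follows by applying Lemma~\ref{lemma:weak_accuracy} at the single anchor $\hat{v}$ to $seq^+$ and to the sequence witnessing ``$\hat{v}$ leads to $v_{lead}$'', using two facts your plan never invokes: that $v$ is the \emph{least recent} view of $seq$ (so every view of $seq$, in particular $v_{lead}$ when it lies in $seq^+$, contains $v$), and the preliminary reduction to $v \notin s$ (if $v \in s$ the whole lemma is immediate from Lemma~\ref{lemma:unique_valid_views}; the paper uses $v \notin s$ to exclude the cases $seq^+ \subseteq seq_{v_{lead}}$ and to fix the direction of the inclusions). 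Once the sandwich is in place, the structure of $s$ (Lemmas~\ref{lemma:unique_installable_views}, \ref{lemma:auxiliary_from_one}, \ref{lemma:auxiliary_views_compare_1}, \ref{lemma:auxiliary_views_compare_2}) shows that every $v'' \in s$ is either contained in $\hat{v}$'s side of the chain (hence $v'' \subset v$ by transitivity), or contains $v_{lead}$ (hence $v'' \supset v$), or is an auxiliary view of $\hat{v}$ itself --- and in that last case its defining sequence is also converged on to replace $\hat{v}$, so Lemma~\ref{lemma:weak_accuracy} applies directly. You would need to supply this sandwiching argument to complete the proof.
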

\begin{proof}
If $v \in s$, the lemma holds because of \Cref{lemma:unique_valid_views}.
Therefore, we consider a case where $v \notin s$.

By \Cref{lemma:bigger}, we conclude that $v \supset v'$.
For every view $v'' \subset v'$, it follows that $v \supset v''$.

1) Suppose that $v'$ is installable in $s$.
Consider some view $v''$, where $v''$ is an auxiliary view for $v'$ in $s$.
Hence, a sequence of views $seq_{v''} = ... \to v'' \to ...$ is converged on to replace $v'$ (by \Cref{definition:auxiliary_view}).
By \Cref{lemma:weak_accuracy}, there are three possible cases:
\begin{compactitem}
    \item $seq = seq_{v''}$: 
    % We conclude that $v \subset v''$ (by \Cref{lemma:sequence_of_views} and $v \notin s$).
    This is not possible since $v \notin s$.
    \item $seq \subset seq_{v''}$: 
    We conclude that $v \in seq_{v''}$. Because of \Cref{lemma:sequence_of_views} and $v \notin s$, $v \supset v''$.
    % This is not possible since $v \notin s$.
    \item $seq \supset seq_{v''}$: Because of \Cref{lemma:sequence_of_views} and $v \notin s$, we conclude that $v \subset v''$.
\end{compactitem}

Suppose now that $v'$ leads to some view $v_{lead} \in s$ in $s$.
Hence, a sequence of views $seq_{v_{lead}} = ... \to v_{lead}$ is converged on to replace $v'$ (by \Cref{definition:leads_to}).
By \Cref{lemma:weak_accuracy}, there are three possible cases:
\begin{compactitem}
    \item $seq = seq_{v_{lead}}$: 
    % We conclude that $v \subset v_{lead}$ (by \Cref{lemma:sequence_of_views} and $v \notin s$).
    This is not possible since $v \notin s$.
    \item $seq \subset seq_{v_{lead}}$: 
    % We conclude that $v \in seq_{v_{lead}}$. Because of \Cref{lemma:sequence_of_views} and $v \notin s$, we conclude that $v \subset v_{lead}$.
    This is not possible since $v \notin s$.
    \item $seq \supset seq_{v_{lead}}$: Because of \Cref{lemma:sequence_of_views} and $v \notin s$, we conclude that $v \subset v_{lead}$.
\end{compactitem}
% Therefore, $v \subset v_{lead}$.
Hence, for every view $v'' \supset v_{lead}$, where $v'' \in s$, it follows that $v \subset v''$.
Moreover, for every view $v'' \subset v'$, where $v'' \in s$, it follows that $v \supset v''$.
And we proved that either $v \subset v''$ or $v \supset v''$, where $v'' \in s$ is an auxiliary view for $v'$ in $s$.

2) Suppose that $v'$ is not installable in $s$.
By \Cref{property:accepted_if_converged} and \Cref{lemma:converged_then_accepted}, some sequence of views $seq_{v''} = ... \to v' \to v \to v_1 \to ... \to v_n$ is converged on to replace some view $v''$, where $v''$ is installable in $s$.
Following the similar argument as in the previous case, we conclude that the claim of the lemma holds even in this case.
\end{proof}

We prove now that \cref{property:installable,property:accepted_if_converged,property:format_installable} hold for $s'$, where $s'$ is obtained from $s$ such that \cref{property:installable,property:accepted_if_converged,property:format_installable} hold for $s$.

\begin{lemma} \label{lemma:property_1_part_1}
% Suppose that the current state of the system is $s \in \mathcal{S}$.
Consider some state $s \in \mathcal{S}$, such that \cref{property:installable,property:accepted_if_converged,property:format_installable} hold for $s$.
% Consider a set of views $\Psi$, such that \Cref{property:installable,property:accepted_if_converged} hold.
Moreover, consider some event $e = \langle v_{con}, seq = v \to v_1 \to ... \to v_m \rangle$, where $e \in \mathcal{E}$, $m \geq 0$ and $e$ is enabled in $s$.
% and system observes $e$.
% Suppose that there is a new information about some view $v$.
Then, a sequence of views $seq = v_1' \to ... \to v_k' \to v \to v_1 \to ... \to v_m, (k \geq 0, m \geq 0)$ is converged on to replace some view $v' \in s$, where $v'$ is installable in $s$, and views $v_1', ..., v_k' \in s$ are not installable in $s$.
Moreover, a sequence of views $seq_i = v_{i + 1}' \to ... \to v_k' \to v \to v_1 \to ... \to v_m$ is converged on to replace view $v_i' \in s$, for every $i \geq 1$ and $i < k$ and a sequence of views $seq_k = v \to v_1 \to ... \to v_m$ is converged on to replace view $v_k' \in s$.
\end{lemma}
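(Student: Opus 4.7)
The plan is to split on whether $v_{con}$ is installable in $s$, and in the non-installable case to chain the event hypothesis through Lemma~\ref{lemma:converged_then_accepted} (converged implies accepted) and then through \Cref{property:accepted_if_converged} to extend the converged sequence all the way up to some installable ancestor.

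First I would handle the easy case: if $v_{con}$ is installable in $s$, take $v' = v_{con}$ and $k = 0$. Then the conclusion collapses to asserting that $seq = v \to v_1 \to \dots \to v_m$ is converged on to replace $v'$, which is exactly the hypothesis that the event $e = \langle v_{con}, seq\rangle$ fires. The intermediate conditions for $i < k$ are vacuous, and the condition for $seq_k$ becomes precisely $seq$ converged on to replace $v_k' = v'$, which is the same assertion.

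For the main case, suppose $v_{con}$ is not installable in $s$. By \Cref{lemma:auxiliary_from_one} combined with \Cref{lemma:one_auxiliary_view}, there is a unique installable view $v_{inst} \in s$ such that $v_{con}$ is an auxiliary view for $v_{inst}$ in $s$. Since $seq = v \to v_1 \to \dots \to v_m$ is converged on to replace $v_{con}$ by the event hypothesis, \Cref{lemma:converged_then_accepted} gives that $seq$ is accepted to replace $v_{con}$ in $s$. Now I apply \Cref{property:accepted_if_converged} with the property's ``$v$'' instantiated as $v_{con}$, ``$v'$'' as $v_{inst}$, and the accepted sequence instantiated as $seq$: this yields non-installable views $u_1, \dots, u_M \in s$ ($M \geq 0$) and a converged sequence $u_1 \to \dots \to u_M \to v_{con} \to v \to v_1 \to \dots \to v_m$ to replace $v_{inst}$, together with the promised intermediate converged sequences to replace each $u_i$. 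Setting $k = M+1$, $v_i' = u_i$ for $i \leq M$, $v_k' = v_{con}$, and $v' = v_{inst}$ completes the statement: condition~(3) follows from the top-level converged sequence supplied by the property; conditions on $seq_i$ for $i < M$ and for $i = M$ are exactly what the property supplies; and the final condition $seq_k = v \to v_1 \to \dots \to v_m$ converged on to replace $v_k' = v_{con}$ is the original event hypothesis. All $v_i'$ are non-installable, since the $u_i$ are non-installable by the property and $v_{con}$ is non-installable by the case assumption.

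The main obstacle I expect is bookkeeping: making sure the renaming is consistent so that the ``last'' element $v_k'$ of the prepended list coincides with $v_{con}$, and that the intermediate converged-sequence clauses line up with the shifted indexing. A secondary subtlety is verifying that \Cref{property:accepted_if_converged} is applicable, which requires checking that $v_{con}$ is an auxiliary view for some installable $v'$ in $s$ (handled by Lemmas~\ref{lemma:auxiliary_from_one} and~\ref{lemma:one_auxiliary_view}) and that $seq$ qualifies as a ``sequence of length $\geq 1$'' in the sense of the property (immediate since $m \geq 0$ yields at least one view $v$ in $seq$). Also note that the $M = 0$ sub-case of the property is not degenerate: the prepended list is then a single view $v_1' = v_{con}$, and the lemma's requirement on $seq_k = seq_1$ reduces, once more, to the event hypothesis.
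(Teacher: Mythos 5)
Your proof is correct and follows essentially the same route as the paper's: case-split on whether $v_{con}$ is installable, and in the non-installable case chain \Cref{lemma:converged_then_accepted} with \Cref{property:accepted_if_converged} to pull the converged sequence back to the installable ancestor of $v_{con}$. Your version is in fact slightly more explicit than the paper's about the final re-indexing step (setting $k = M+1$ and identifying $v_k'$ with $v_{con}$ so that the last clause reduces to the event hypothesis), which the paper leaves implicit.
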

\begin{proof}
A sequence of views $seq = v \to v_1 \to ... \to v_m, (m \geq 0)$ is converged on to replace some view $v_{con} \in s$.
If $v_{con}$ is installable in $s$, then the lemma holds.

Suppose that $v_{con}$ is not installable in $s$.
Hence, $v_{con}$ is an auxiliary view for some view $v'' \in s$ in $s$.
By \Cref{lemma:converged_then_accepted}, $seq$ is accepted to replace $v_{con}$ in $s$.
Moreover, \Cref{property:accepted_if_converged} shows that a sequence of views $seq' = v_1' \to ... \to v_k' \to v_{con} \to v \to v_1 \to ... \to v_m, (k \geq 0)$ is converged on to replace $v''$ and views $v_1', ..., v_k' \in s$ are not installable in $s$.
Besides that, a sequence of views $seq_i = v_{i + 1}' \to ... \to v_k' \to v_{con} \to v \to v_1 \to ... \to v_m$ is converged on to replace view $v_i' \in s$, for every $i \geq 1$ and $i < k$ and a sequence of views $seq_k = v_{con} \to v \to v_1 \to ... \to v_m$ is converged on to replace view $v_k' \in s$.
Hence, the lemma holds.
\end{proof}

\begin{lemma} \label{lemma:property_1_part_2}
Suppose that the current state of the system is $s \in \mathcal{S}$.
Moreover, consider some event $e = \langle v_{con}, seq = v \to v_1 \to ... \to v_m \rangle$, where $e \in \mathcal{E}$, $m \geq 0$, $e$ is enabled in $s$ and the system observes $e$.
% Consider a set of views $\Psi$, such that \Cref{property:installable,property:accepted_if_converged} hold.
Suppose that $(s, e, s') \in \mathcal{D}$.
% is obtained from $\Psi$ by adding a new information about some view $v$.
Then, \cref{property:installable,property:accepted_if_converged,property:format_installable} hold for $s'$.
\end{lemma}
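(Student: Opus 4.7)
The plan is to proceed by case analysis on the four transition rules defining $\mathcal{D}$: the case is determined by whether $v$ (the first view of $seq$) already belongs to $s$ and whether $m = 0$ or $m > 0$. In every case, $\alpha_{s'}$ and $\beta_{s'}$ agree with $\alpha_s$ and $\beta_s$ on all views $v'' \in s$ with $v'' \neq v$, so the obligations of \Cref{property:installable}, \Cref{property:accepted_if_converged}, and \Cref{property:format_installable} at these views inherit directly from $s$. The argument therefore reduces to (i) verifying the three properties at $v$, and (ii) showing that the installable-view chain required by \Cref{property:installable} still closes when $v$ is added or its status changes.

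For \Cref{property:installable}, the principal tool is \Cref{lemma:property_1_part_1}, which---given that $e$ is enabled in $s$---produces a converged sequence $v_1' \to \cdots \to v_k' \to v \to v_1 \to \cdots \to v_m$ replacing some installable view $v' \in s$, together with converged tails from each non-installable $v_i'$. This chain is exactly what the first clause of \Cref{property:installable} demands when $v$ is installable in $s'$ (the $m = 0$ cases), and what the second clause demands when $v$ is non-installable in $s'$ (the $m > 0$ cases). When $v$ was already present in $s$, the \Cref{lemma:property_1_part_1} chain recomputes the witness that $v$ already had; when $v$ transitions from non-installable to installable (the $v \in s$, $m = 0$ subcase), the clause-2 chain witnessing $v$ in $s$ becomes the clause-1 chain witnessing $v$ in $s'$.

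For \Cref{property:accepted_if_converged}, only non-installable views in $s'$ are constrained, and the only one whose $\beta$-set is modified is $v$ (in the $m > 0$ cases). The one new accepted sequence $seq \setminus \{v\}$ is witnessed by the same converged extension delivered by \Cref{lemma:property_1_part_1}. All other accepted sequences, at $v$ or at any other non-installable view, retain their witnesses across the transition because the set of events observed by the state machine only grows. Hence \Cref{property:accepted_if_converged} carries through.

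For \Cref{property:format_installable}, three of the four cases are immediate: in the $v \notin s$, $m > 0$ case the view $v$ is non-installable; in the $v \notin s$, $m = 0$ case $\beta_{s'}(v) = \emptyset$; and in the $v \in s$, $m = 0$ case where $v$ becomes installable, $\beta_{s'}(v) = \beta_s(v)$ and these sequences are pairwise comparable by \Cref{lemma:auxiliary_comparable_accepted} applied to $v$ while it was still non-installable. The delicate remaining subcase is $v \in s$, $m > 0$ with $v$ installable in $s$: one must show that the freshly added sequence $v_1 \to \cdots \to v_m$ is comparable with every pre-existing $sq' \in \beta_s(v)$. I would handle this by pairing each such $sq'$ with the event that previously inserted it, invoking \Cref{lemma:property_1_part_1} at that earlier state to obtain a converged sequence $\cdots \to v \to sq'$ from some installable view; then, using that installable views form a chain (\Cref{lemma:unique_installable_views}) together with \Cref{lemma:weak_accuracy} applied at a suitable common replaced view, I would force the tails after $v$ in the two converged sequences to be comparable. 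The main obstacle is exactly this step: two converged sequences through $v$ may originate from different installable ancestors, so coordinating them into a single application of \Cref{lemma:weak_accuracy} requires careful bookkeeping along the installable chain.
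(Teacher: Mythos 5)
Your plan is the paper's plan: the same four-way split on $v\in s$ versus $v\notin s$ and $m=0$ versus $m>0$, with \Cref{lemma:property_1_part_1} supplying the converged chain back to an installable ancestor for \Cref{property:installable} and \Cref{property:accepted_if_converged}, and \Cref{lemma:weak_accuracy} supplying comparability for \Cref{property:format_installable}. Two of your steps, however, do not go through as stated. The claim that the obligations at views $v''\neq v$ ``inherit directly from $s$'' because $\alpha$ and $\beta$ are unchanged there is not sound: the witnesses demanded by \Cref{property:installable} and \Cref{property:accepted_if_converged} at a view $v''$ quantify over the installability status of \emph{other} views (``where views $v_1',\dots,v_m'\in s$ are not installable in $s$''), so when $v$ flips from non-installable to installable (the $v\in s$, $m=0$ subcase) every witness sequence passing through $v$ is invalidated and must be rebuilt. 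This is exactly what occupies most of the paper's treatment of that subcase: it shows that $v$ now leads (in the sense of \Cref{definition:leads_to}) to the view that its installable ancestor $v'$ formerly led to, and re-establishes the second clause of \Cref{property:installable} and the clause of \Cref{property:accepted_if_converged} for each auxiliary view above $v$ by arguing, via \Cref{lemma:weak_accuracy}, that $v$ lies on each of their witness sequences. Your one-line summary of this subcase (``the clause-2 chain becomes the clause-1 chain'') handles $v$ itself but skips the re-verification at the other views.

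The subcase you flag as the ``main obstacle'' ($v\in s$ installable, $m>0$: comparability of the new element of $\beta_{s'}(v)$ with the pre-existing ones) does close, and essentially by the route you sketch: every converged sequence through an installable $v$ extends, by \Cref{lemma:property_1_part_1}, to a converged sequence replacing an installable ancestor, and that ancestor is the same for all of them by \Cref{lemma:unique_installable_views} and \Cref{lemma:from_only_one}, so a single application of \Cref{lemma:weak_accuracy} at that common replaced view makes the full sequences nested and hence makes their tails after $v$ comparable. The paper compresses this into one sentence (``since $seq'$ is converged on to replace $v'$ and \Cref{lemma:weak_accuracy}\ldots''); since you leave it explicitly open, your proposal as submitted does not prove this case, but the missing ingredient is only the uniqueness of the installable ancestor, not new machinery.
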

\begin{proof}
We first show that $s'$ is a sequence of views. 
Notice that $seq = v \to v_1 \to ... \to v_m, (m \geq 0)$ is converged on to replace some view $v_{con} \in s$.
We conclude that a sequence of views $seq' = v_1' \to ... \to v_n' \to v \to v_1 \to ... \to v_m, (n \geq 0, m \geq 0)$ is converged on to replace some view $v' \in s$, where $v'$ is installable in $s$, and views $v_1', ..., v_n' \in s$ are not installable in $s$.
Moreover, a sequence of views $seq_i = v_{i + 1}' \to ... \to v_n' \to v \to v_1 \to ... \to v_m$ is converged on to replace view $v_i' \in s$, for every $i \geq 1$ and $i < n$ and a sequence of views $seq_n = v \to v_1 \to ... \to v_m$ is converged on to replace view $v_n' \in s$ (by \Cref{lemma:property_1_part_1}).
Hence, \Cref{property:installable} is satisfied for $v$ in $s'$.
Following \Cref{lemma:again_sequence}, we conclude that $s'$ is a sequence of views.

We divide the rest of the proof into four parts.

1) Suppose that $m = 0$ and $v \notin s$:
Since $m = 0$, we conclude that $v \in s'$ is installable in $s'$.

We argue that $v$ is the most recent view that is installable in $s'$.
Suppose that this is not the case.
Hence, $v' \in s$ leads to some installable view $v'' \in s$ in $s$.
% (because $v'$ leads to $v''$ in $\Psi$).
Therefore, a sequence of views $seq'' = v_1''' \to ... \to v_p''' \to v'', (p \geq 0)$ is converged on to replace $v'$ (by \Cref{definition:leads_to}).
By \Cref{lemma:weak_accuracy}, the following holds: either $seq' \subseteq seq''$ or $seq' \supset seq''$.
Let us examine all possible cases:
\begin{compactitem}
    \item $seq' = seq''$: We conclude that $v = v''$, which means that $v \in s$. This represents the contradiction with the fact that $v \notin s$.
    \item $seq' \subset seq''$: We conclude that $v \in seq''$, which means that $v \in s$. This represents the contradiction with the fact that $v \notin s$.
    \item $seq' \supset seq''$: We conclude that $v'' \in seq$. Hence, this is a contradiction with the fact that views $v_1', ..., v_n' \in s$  are not installable in $s$.
\end{compactitem}
Hence, $v$ is the most recent installable view in $s'$.
Similarly, we prove that there is no view $v'' \in s'$, such that $v''$ is an auxiliary view for $v$ in $s'$.
Since the previous two claims hold and the fact that $\beta_{s'}(v) = \emptyset$, we conclude that \cref{property:installable,property:accepted_if_converged,property:format_installable} hold for $s'$.

\begin{figure}[htp]
    \centering
    \includegraphics[width=.8\linewidth]{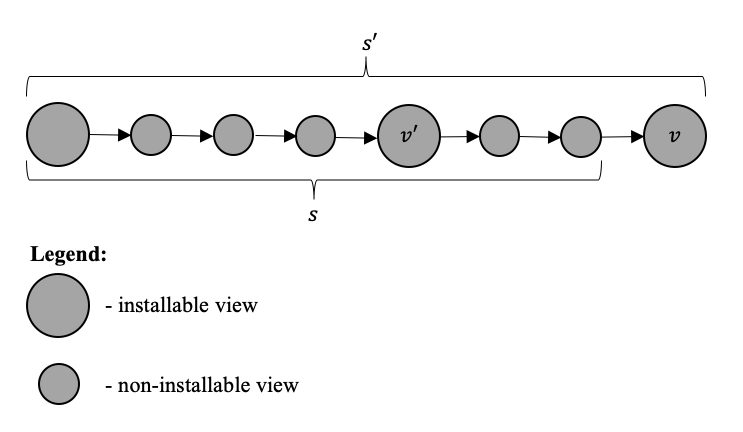}
    \vspace*{-7mm}
    \caption{Picture for the part 1 of \Cref{lemma:property_1_part_2}.}
    \label{fig:part1}
\end{figure}

2) Suppose that $m = 0$ and $v \in s$:
If $v$ is installable in $s$, then \cref{property:installable,property:accepted_if_converged,property:format_installable} trivially hold for $s'$.
Hence, we consider a case where $v$ is not installable in $s$.

% Because of \Cref{lemma:weak_accuracy} and \Cref{definition:leads_to,definition:auxiliary_view}, 
We conclude that $v$ is an auxiliary view for $v'$ in $s$.
Suppose that $v'$ leads to $v'' \in s$ in $s$.
For every view $v_{ok} \in s$, such that $v_{ok} \subset v$ or $v_{ok} \supset v''$, claims stated by \cref{property:installable,property:accepted_if_converged,property:format_installable} hold in $s'$.

Now, we prove that $v$ leads to $v''$ in $s'$ and that the claim of \Cref{property:installable} is satisfied for $v''$ in $s'$.
Since $v'$ leads to $v''$ in $s$, a sequence of views $seq_{v''} = ... \to v''$ is converged on to replace $v'$ (by \Cref{definition:leads_to}).
By \Cref{lemma:weak_accuracy}, let us examine first two possible cases:
\begin{compactitem}
    \item $seq' = seq_{v''}$: We conclude that $v = v''$, which is a contradiction with the fact that $v$ is not installable in $s$.
    \item $seq' \supset seq_{v''}$: We conclude that $v'' \in seq'$, which is a contradiction with the fact that views $v_1', ..., v_n' \in s$  are not installable in $s$.
\end{compactitem}
Therefore, we conclude that $seq' \subset seq_{v''}$, that $v$ leads to $v''$ in $s'$ and that the first claim of \Cref{property:installable} is satisfied for $v''$ in $s'$.

Consider now a view $v'' \in s$, such that $v'' \supset v$ and $v''$ is auxiliary view for $v'$ in $s$.
We show that the second claim of \Cref{property:installable} is satisfied for $v''$ in $s'$.
Since $v''$ is an auxiliary view for $v'$ in $s$, a sequence of views $seq_{v''} = ... \to v'' \to ...$ is converged on to replace $v'$.
By \Cref{lemma:weak_accuracy}, let us examine first two possible cases:
\begin{compactitem}
    \item $seq' = seq_{v''}$: We conclude that $v \supset v''$, which is a contradiction with the fact that $v'' \supset v$.
    \item $seq' \supset seq_{v''}$: We conclude that $v'' \in seq'$, which is a contradiction with the fact that $v'' \supset v$.
\end{compactitem}
Therefore, we conclude that $v \in seq_{v''}$ and that the second claim of \Cref{property:installable} is satisfied for $v''$ in $s'$.

Consider now a view $v'' \in s$, such that $v'' \supset v$ and $v''$ is auxiliary view for $v'$ in $s$.
We show that the claim of \Cref{property:accepted_if_converged} is satisfied for $v''$ in $s'$.
Suppose that a sequence of views $seq_{v''}$ is converged on to replace $v'$ (\Cref{property:accepted_if_converged}).
Similarly to the previous arguments, the following holds: $v \in seq_{v''}$.
Hence, the claim of \Cref{property:accepted_if_converged} is satisfied for $v''$ in $s'$.

Lastly, \Cref{property:format_installable} holds for $v$ in $s'$ since \Cref{property:accepted_if_converged} holds for $v$ in $s$ and \Cref{lemma:weak_accuracy}.

In this case, \cref{property:installable,property:accepted_if_converged,property:format_installable} hold for $s'$.

\begin{figure}[htp]
    \centering
    \includegraphics[width=.8\linewidth]{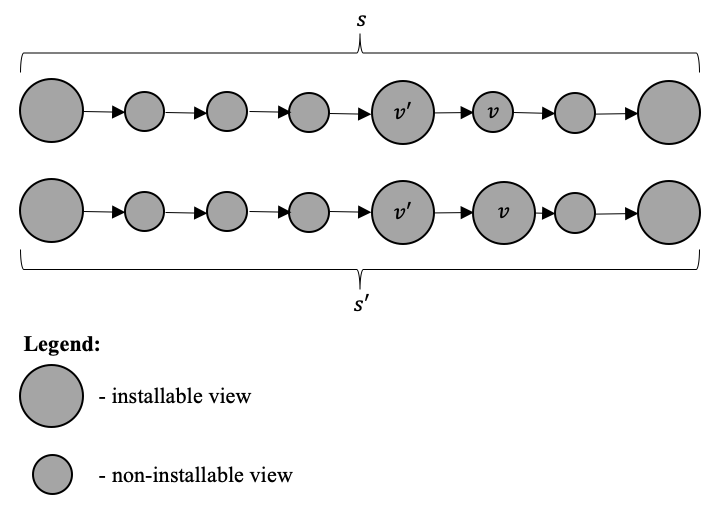}
    \vspace*{-4mm}
    \caption{Picture for the part 2 of \Cref{lemma:property_1_part_2}.}
    \label{fig:part2}
\end{figure}

3) Suppose that $m > 0$ and $v \notin s$:
Since $m > 0$, we conclude that $v \in s'$ is not installable in $s'$.
Because of \Cref{lemma:property_1_part_1} and the way $s'$ is obtained, we conclude that \Cref{property:installable} holds for $s'$.

The only statement left to prove is that \Cref{property:accepted_if_converged} holds for $s'$.
It is easy to see that the claim of \Cref{property:accepted_if_converged} holds for every view $v'' \in s$, because the set of sequences of views accepted to replace $v''$ is the same as in $s$.
Hence, we need to prove that the claim of \Cref{property:accepted_if_converged} holds for $v$.

We conclude that $v$ is an auxiliary view for $v'$ in $s'$.
Now, sequence of views $seq_{v} = v_1 \to ... \to v_m$ is accepted to replace $v$.
Since $v_{con}$ could be equal to $v'$ or \Cref{property:accepted_if_converged} holds for $v_{con}$ in $s$ and \Cref{lemma:converged_then_accepted}, we conclude that a sequence of views $seq_{v'} \supset seq_{v}$ is converged on to replace $v'$.
Hence, \Cref{property:accepted_if_converged} holds for $v$.

\Cref{property:installable,property:accepted_if_converged,property:format_installable} hold for $s'$.

\begin{figure}[htp]
    \centering
    \includegraphics[width=.8\linewidth]{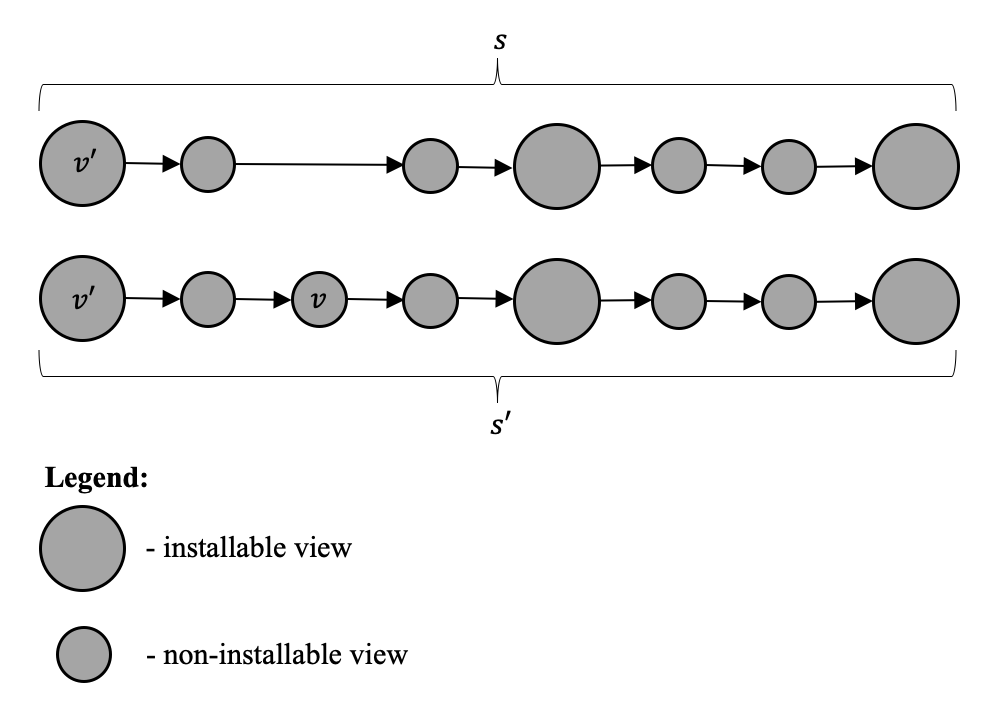}
    \vspace*{-6mm}
    \caption{Picture for the part 3 of \Cref{lemma:property_1_part_2}.}
    \label{fig:part3}
\end{figure}

4) Suppose that $m > 0$ and $v \in s$:
If $v$ is installable in $s$, then it is installable in $s'$.
In this case, \cref{property:installable,property:accepted_if_converged} hold for $s'$.
Moreover, since $seq'$ is converged on to replace $v'$ and \Cref{lemma:weak_accuracy}, we conclude that \Cref{property:format_installable} holds for $v$ (and consequently for $s'$).
% \Cref{property:accepted_if_converged} holds for $s$, then 

Suppose that $v$ is not installable in $s$.
We conclude that $v \in s'$ is not installable in $s'$ and is an auxiliary view for $v' \in s'$ in $s'$.
Trivially, we conclude that \Cref{property:installable} holds for $s'$.
Moreover, we conclude that \Cref{property:accepted_if_converged} holds for every view $v'' \in s'$, where $v'' \neq v$.
Now, we need to prove that \Cref{property:accepted_if_converged} also holds for $v$.

A sequence of views $seq_{v} = v_1 \to ... \to v_m$ is also accepted to replace $v$, since a sequence of views $seq = v \to v_1 \to ... \to v_m$ is converged on to replace some view $v_{con} \in s'$.
Since $v_{con}$ could be equal to $v'$ or \Cref{property:accepted_if_converged} holds for $v_{con}$ and \Cref{lemma:converged_then_accepted}, we conclude that a sequence of views $seq_{v'} \supset seq_{v}$ is converged on to replace $v'$.
Hence, \Cref{property:accepted_if_converged} holds for $v$.

\Cref{property:installable,property:accepted_if_converged,property:format_installable} hold for $s'$ in this case, which concludes the lemma.
% It is easy to conclude that claim of \Cref{property:accepted_if_converged} holds for every view $v_{ok} \in s'$, such that $v_{ok} \subset v$.
% % and $v_{ok}$ is an auxiliary view for $v'$ in $s'$.
% Moreover, claim of \Cref{property:accepted_if_converged} holds for every view $v_{ok}' \in s'$, such that $v_{ok}' \supset v_{lead}$, where $v'$ leads to $v_{lead} \in s'$ in $s'$.
% Thus, we need to prove that claim of \Cref{property:accepted_if_converged} holds for every view $v'' \supseteq v$, such that $v''$ is an auxiliary view for $v'$ in $s'$.
% Consider a view $v$.
% Now, a sequence of views $seq_{v} = v_1 \to ... \to v_n$ is also accepted to replace $v$, since a sequence of views $seq = v \to v_1 \to ... \to v_n$ is converged on to replace some view $v_{con} \in s'$.
% Because of the fact that \Cref{property:accepted_if_converged} holds for $v_{con}$, we conclude that claim of \Cref{property:accepted_if_converged} holds for $v$.
% By showing that claim of \Cref{property:accepted_if_converged} holds for $v$, we also showed that claim of \Cref{property:accepted_if_converged} holds for $v'' \supset v$, such that $v''$ is an auxiliary view for $v'$ in $s'$.
% Therefore, \Cref{property:accepted_if_converged} holds for $s'$ which concludes this lemma.
\end{proof}

As we have stated, at the beginning of an execution of \name, the state of the system $s_i$ includes only $v_0$, where $v_0$ is the initial view of the system.
Therefore, \cref{property:installable,property:accepted_if_converged,property:format_installable} trivially hold for $s_i$.
\Cref{lemma:property_1_part_2} proves that if $s \in \mathcal{S}$, for which \cref{property:installable,property:accepted_if_converged,property:format_installable} hold, evolves into $s' \in \mathcal{S}$, then \cref{property:installable,property:accepted_if_converged,property:format_installable} hold for $s'$.

So far, we proved the ``safety'' of \name.
Namely, we proved (\Cref{lemma:valid_comparable}) that all views in the current state of the system $s$ are comparable, i.e., $\forall v_1, v_2 \in s: (v_1 \neq v_2) \implies (v_1 \subset v_2 \lor v_1 \supset v_2)$.
Moreover, we showed that installable views in $s$ form a sequence of views (\Cref{lemma:unique_installable_views}).

Now, we prove the ``liveness'' of \name.

We start by introducing the Byzantine Reliable Broadcast (\brb) used to disseminate \msg{install} and \msg{state-update} messages (\cref{line:send_reliable,line:install_seq_delivery,line:send_state_update}).
% and \Cref{line:install_seq_delivery}).
\brb{} exposes an interface with two primitives:
\begin{compactitem}
    \item \textit{R-multicast}$(\Psi, m)$: allows a process to send an \msg{install} or \msg{state-update} message $m$ to all processes in $\Psi$.
    \item \textit{R-delivery}$(\Psi, m)$: this callback triggers at process to handle the reception of an \msg{install} or \msg{state-update} message $m$.
\end{compactitem}
\brb{} satisfies following properties:
\begin{compactitem}
    \item \textbf{Validity}: If a correct process sends an \msg{install} or a \msg{state-update} message $m$ to set $\Psi$ of processes, then every correct process $p \in \Psi$ eventually \msg{brb}-delivers $m$ or leaves the \name system.
    \item \textbf{No duplication}: No correct process \msg{brb}-delivers the same \msg{install} or the same \msg{state-update} message more than once.
    \item \textbf{Agreement}: If an \msg{install} or a \msg{state-update} message $m$ is \msg{brb}-delivered by some correct process in $\Psi$, then every correct process $p \in \Psi$ eventually \msg{brb}-delivers $m$ or leaves the \name system. 
\end{compactitem}

\begin{algorithm}
\footnotesize
\caption{\brb{} algorithm. Code for process $p$.}
\label{algorithm:brb}
\begin{algorithmic}[1]

\Variables
    % \State $cv$ \Comment{current view of the system}
    \State $\mathit{received} = \emptyset$
\EndVariables

\algvspace
\Procedure{\textit{R-multicast}}{\Psi, m}
    \State $\text{disseminate} \text{ } \langle \Psi, m \rangle \text{ to all } q \in \Psi$ 
    % \operation{prepare}{m, \mathit{cv}} to $\mathit{cv}$
    % \State $\forall k \in \Psi, \text{send} \langle m, \Psi \rangle \text{ to } k$
\EndProcedure

\algvspace
\UponReceipt{$\langle \Psi, m \rangle$}{q}
    \If{$\mathit{is\_install\_type}(m) = \top \land \Psi = m.v.\mathit{members} \cup m.\omega.\mathit{members} \land \mathit{verify\_converged\_messages}(m) = \top \land \mathit{is\_least\_recent}(m.\omega, m.seq)$} \label{line:conditions_brb}
        \If{$m \notin \mathit{received}$} \label{line:received}
            \State $\text{disseminate} \text{ } \langle \Psi, m \rangle \text{ to all } q' \in \Psi$ \label{line:resend_relable}
            \State $\mathit{received} = \mathit{received} \cup \{m\}$ \label{line:delivered_brb}
            % \State $\forall k \in \Psi, \text{send} \langle m, \Psi \rangle \text{ to } k$
            \State \textbf{invoke } \textit{R-delivery}$(\Psi, m)$
        \EndIf
    \EndIf
    
    \If{$\mathit{is\_state-update\_type}(m) = \top$} \label{line:conditions_brb_2}
        \If{$m \notin \mathit{received}$} \label{line:received_2}
            \State $\text{disseminate} \text{ } \langle \Psi, m \rangle \text{ to all } q' \in \Psi$ \label{line:resend_relable_2}
            \State $\mathit{received} = \mathit{received} \cup \{m\}$ \label{line:delivered_brb_2}
            % \State $\forall k \in \Psi, \text{send} \langle m, \Psi \rangle \text{ to } k$
            \State \textbf{invoke } \textit{R-delivery}$(\Psi, m)$
        \EndIf
    \EndIf
\EndUponReceipt

\end{algorithmic}
\end{algorithm}

\begin{lemma}
\Cref{algorithm:brb} implements \brb{}.
\end{lemma}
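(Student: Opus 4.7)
The plan is to verify each of the three BRB properties in turn, exploiting the fact that \Cref{algorithm:brb} is essentially an ``echo-and-deliver'' pattern whose acceptance predicates at \cref{line:conditions_brb,line:conditions_brb_2} are purely deterministic functions of the message $m$ (and the destination set $\Psi$), not of the receiver's local state. Once that observation is isolated, all three properties reduce to straightforward applications of reliable point-to-point communication together with the $\mathit{received}$ bookkeeping.

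First I would handle \emph{No duplication}, which is immediate: for each message $m$, the checks $m \notin \mathit{received}$ at \cref{line:received,line:received_2} together with the insertion into $\mathit{received}$ at \cref{line:delivered_brb,line:delivered_brb_2} guarantee that a correct process invokes \textit{R-delivery}$(\Psi,m)$ at most once. Next, for \emph{Validity}, suppose a correct process $p$ invokes \textit{R-multicast}$(\Psi,m)$; it sends $\langle \Psi, m\rangle$ to every $q \in \Psi$, so by reliable communication every correct $q \in \Psi$ that has not left eventually receives this pair. I would then argue that each predicate tested at \cref{line:conditions_brb,line:conditions_brb_2} is satisfied at every receiver because: (i) \textsc{install} and \textsc{state-update} are the only two message types R-multicast from \Cref{algorithm:dynamic,algorithm:install}; (ii) a correct sender only issues an \textsc{install} message with $\Psi = v.\mathit{members}\cup \omega.\mathit{members}$, with $\omega$ least-recent in $seq$ (\Cref{line:calculate_least_updated}), and with a valid quorum of signed \textsc{converged} messages (\Cref{line:seq_conv}); and (iii) signature verification is deterministic under the cryptographic assumption. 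Hence every correct process in $\Psi$ either R-delivers $m$ or leaves the system.

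For \emph{Agreement}, suppose a correct process $p\in\Psi$ R-delivers $m$. Before performing the delivery, $p$ executes the re-dissemination at \cref{line:resend_relable} or \cref{line:resend_relable_2}, forwarding $\langle\Psi,m\rangle$ to every $q' \in \Psi$. By reliable channels, every correct $q \in \Psi$ that has not left eventually receives this pair. The crucial step is to observe that the acceptance predicate depends only on $(m,\Psi)$: the tests $\mathit{is\_install\_type}(m)$, $\Psi = m.v.\mathit{members}\cup m.\omega.\mathit{members}$, $\mathit{verify\_converged\_messages}(m)$, and $\mathit{is\_least\_recent}(m.\omega,m.seq)$ are functions of the message content and public verification keys only; likewise for \textsc{state-update}. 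Therefore the predicate holds at $q$ precisely because it held at $p$, and $q$ R-delivers $m$ (unless it has left). This yields Agreement.

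The only delicate point I anticipate is justifying that the conditions in the \textsc{state-update} branch (\cref{line:conditions_brb_2}) are truly state-independent; the algorithm as written only requires the receiver to recognise the message type, so no additional argument is needed. The remaining subtlety concerns leaving processes: Validity and Agreement are stated modulo ``or leaves the \name system,'' which exactly matches the caveat that a process may stop participating before echoing or delivering; since a leaving correct process exits only after executing \Cref{line:left}, prior to that point it will still forward any received message, preserving the Agreement chain among the remaining correct members of $\Psi$.
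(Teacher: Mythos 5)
Your proof is correct and follows essentially the same route as the paper's (much terser) argument: reliable channels give Validity, the $\mathit{received}$ set gives No duplication, and retransmission before delivery gives Agreement. Your additional observation that the acceptance predicates at \cref{line:conditions_brb,line:conditions_brb_2} depend only on $(m,\Psi)$ and public keys---not on receiver state---is a detail the paper's proof leaves implicit, but it does not change the structure of the argument.
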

\begin{proof}
Validity of \brb{} is ensured because of the reliable communication we assume.
% and the fact that correct process always \brbBroadcastAlone{} a message $m$ if $m$ satisfies conditions of \Cref{line:conditions_brb} of \Cref{algorithm:brb}.
No duplication is ensured by condition of \cref{line:received,line:received_2} of \Cref{algorithm:brb}.
Lastly, the agreement property is ensured since a correct process always retransmits a message $m$ before \msg{brb}-delivering it (\cref{line:resend_relable,line:resend_relable_2} of \Cref{algorithm:brb}).
\end{proof}

We start by showing that the system eventually ``stabilizes'', i.e., the system eventually stops evolving in any execution of \name.

\begin{lemma} \label{lemma:reconfiguration_done}
The reconfiguration of the system eventually finishes, i.e., there exists a state $s_{final} \in \mathcal{S}$ such that the system does not evolve from $s_{final}$ to a state $s' \in \mathcal{S}$.
\end{lemma}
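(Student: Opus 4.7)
The plan is to combine Assumption~\ref{assumption:requests} with the monotonicity of the transition function. First I would argue that the universe of changes that can ever appear in any valid view is finite. By Assumption~\ref{assumption:requests}, only finitely many processes ever invoke \dbrbJoin{} or \dbrbLeave{}, and any \operation{reconfig}{\langle c,q\rangle, v} message that a correct process acts upon (\Cref{line:add_reconfig_brief}) must be signed by $q$. Since Byzantine processes cannot forge signatures and each process invokes \dbrbJoin{}/\dbrbLeave{} at most once, the set $C$ of distinct changes $\langle c,q\rangle$ that can appear inside any view during the execution is finite.

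Next I would observe that every view $v$ is determined by $v.\mathit{changes} \subseteq C$, so the set of syntactically possible views is bounded by $2^{|C|}$. Combined with \Cref{lemma:valid_comparable}, which guarantees that the current state $s$ is a chain under $\subset$, any reachable state $s \in \mathcal{S}$ satisfies $|s| \le |C|+1$. Thus only finitely many distinct states (as sets of views) are reachable, and for each view $v$ the annotation $\alpha_s(v) \in \{\bot,\top\}$ and $\beta_s(v)$ (a collection of sequences over a finite universe of views) also have finite range.

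The third step is to read off monotonicity from the four cases of the transition relation $\mathcal{D}$: every transition either (i) adds a new view to $s$, (ii) flips $\alpha_s(v_1)$ from $\bot$ to $\top$ (never the other way), or (iii) enlarges $\beta_s(v_1)$ by inserting $seq \setminus \{v_1\}$. In all cases the full configuration $\bigl(s,\{\alpha_s(v)\}_{v\in s},\{\beta_s(v)\}_{v\in s}\bigr)$ moves weakly upward in the componentwise product order on a finite lattice. A strictly increasing sequence in a finite poset has bounded length, so after finitely many observed events the configuration can no longer change; from that point on, every enabled event $e=\langle v_{con}, seq\rangle$ satisfies the fixed-point conditions $v_1 \in s$, $\alpha_s(v_1)$ already set, and $seq\setminus\{v_1\} \in \beta_s(v_1)$, so $(s,e,s)\in\mathcal{D}$ and the state is unchanged. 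Taking $s_{final}$ to be this stabilized set of views proves the lemma.

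The main obstacle I anticipate is the bookkeeping of what exactly counts as "the state" for the purposes of the claim: the paper defines $s$ as a set of views but endows each $s$ with annotations $\alpha_s,\beta_s$ that can change even when the underlying set does not. I would make explicit at the start of the proof that stabilization is meant with respect to the full annotated configuration, so that the finite-and-monotonic argument applies uniformly; then the three monotonicity properties can be read directly off the case analysis of $\mathcal{D}$ without further subtlety.
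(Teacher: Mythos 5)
Your proof is correct and takes essentially the same approach as the paper, which simply observes that the claim follows directly from Assumption~\ref{assumption:requests}. Your version fills in the details the paper leaves implicit (the finite universe of signed changes, hence finitely many views and states, plus the monotonicity of the transition relation on a finite configuration lattice), which is a sound and more rigorous elaboration of the same argument.
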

\begin{proof}
This follows directly from the fact that a finite number of processes can invoke \dbrbJoin{} or \dbrbLeave{} operation in any execution of \name (\Cref{assumption:requests}).
\end{proof}

\Cref{lemma:reconfiguration_done} shows that there exists a ``final'' state of the system $s_{final} \in \mathcal{S}$ such that the system does not evolve from $s_{final}$.

\begin{lemma} \label{lemma:join_safety}
Consider a valid view $v$.
Suppose that $p \in v$, where $p$ is a correct process.
Then, $p$ has invoked \dbrbJoin{} operation.
\end{lemma}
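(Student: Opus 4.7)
The plan is to argue by induction along the sequence of valid views, leveraging \Cref{lemma:unique_valid_views} and \Cref{property:installable} which guarantee that valid views are totally ordered by inclusion, so any non-initial valid view $v$ arises from a sequence converged on to replace a strictly smaller valid view $v'$. The inductive hypothesis is: for every valid view $v$ and every correct $p \in v$, either $p$ belongs to the initial view $v_0$ (in which case $p$ is a participant by \Cref{assumption:init}) or $p$ has invoked \dbrbJoin{}.

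For the base case $v = v_0$ the claim holds by \Cref{assumption:init}, interpreting initial participants as having implicitly joined at time $0$. For the inductive step, I would take a non-initial valid view $v$ and a correct process $p \in v$. If $p \in v'$ for some strictly smaller valid view $v'$, the inductive hypothesis applies and I am done. Otherwise, $\langle +, p \rangle \in v.\mathit{changes} \setminus v'.\mathit{changes}$, where $v'$ is the valid view immediately preceding $v$ in the sequence from \Cref{lemma:unique_valid_views}. The fact that $v$ is valid means that some sequence $seq$ containing $v$ was converged on to replace $v'$, which by the definition of convergence requires $v'.q$ matching \msg{propose} messages; by quorum intersection (\Cref{assumption:quorums}) this quorum contains at least one correct process $q \in v'$ that actually proposed $seq$.

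I would then trace how $\langle +, p \rangle$ entered $q$'s proposal. By inspection of the pseudocode, a correct process only inserts $\langle +, p \rangle$ into $\mathit{SEQ}^{v'}$ (via \Cref{line:add_reconfig_brief} feeding into \Cref{line:propose_system_configuration} or \Cref{line:update_nonconflicting,line:update_conflicting}) after having received from $p$ a \msg{reconfig} message carrying $\langle +, p \rangle$ equipped with $p$'s valid signature (signatures on \msg{reconfig} updates are implicit per \Cref{sec:universe_processes}, and correct processes discard messages whose signatures do not verify). By the cryptographic assumption, this signed message cannot be forged by Byzantine processes, so it must have been created and sent by $p$. Since $p$ is correct, the only place in the algorithm where $p$ disseminates \operation{reconfig}{\langle +, p\rangle, \cdot} is \Cref{line:send_join}, which lies inside the \dbrbJoin{} procedure. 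Hence $p$ has invoked \dbrbJoin{}.

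The main obstacle is the case where $\langle +, p \rangle$ is not injected into the algorithm's flow via a fresh proposal but rather piggybacks through a conflict-resolution step (\Cref{line:update_conflicting}) or is carried inside \msg{state-update} messages during the install procedure (\Cref{line:recv}). I would handle this by observing that those mechanisms only propagate changes that already originated from some proposal of a correct process at a previous step, so by a secondary induction on the order in which updates are first added to any correct process's $\mathit{RECV}$ or $\mathit{SEQ}^{v'}$, the originating event is always the reception of a properly signed \msg{reconfig} from $p$, which by the argument above forces $p$ to have executed \Cref{line:send_join}.
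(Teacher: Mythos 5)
Your proposal is correct and rests on the same key idea as the paper's proof: the change $\langle +, p \rangle$ can only enter a valid view via a \msg{propose} message that correct processes accept only when accompanied by a properly signed \msg{reconfig} from $p$, and since signatures are unforgeable and $p$ is correct, the only source is \Cref{line:send_join} inside \dbrbJoin{} (with the initial view handled by \Cref{assumption:init}). Your induction over the sequence of valid views and over the propagation of updates is simply a more explicit elaboration of what the paper states in two sentences.
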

\begin{proof}
Since process $p$ is a member of a valid view $v$, it follows that some process $z$ proposed a sequence of views containing view $v'$ such that $\langle +, p \rangle \in v'.\mathit{changes}$ to replace some view $pv \subset v$.  
Because correct processes do not process \msg{propose} messages that contain a view with $\langle +, p \rangle$ without a signed \msg{reconfig} message from process $p$ (which we omit in \Cref{algorithm:dynamic} for brevity), $p$ must have invoked a \dbrbJoin{} operation.

If $v$ is the initial view of the system, then $p$ has invoked \dbrbJoin{} operation by default (\Cref{assumption:init}).
\end{proof}

\begin{lemma} \label{lemma:leave_safety}
Consider a valid view $v$.
Suppose that $p \notin v$, where $p$ is a correct process, and there exists a valid view $v' \subset v$ such that $p \in v'$.
Then, $p$ has invoked \dbrbLeave{} operation.
\end{lemma}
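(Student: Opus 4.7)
The plan is to mirror the argument used for \Cref{lemma:join_safety}, substituting leave updates for join updates. From the hypotheses, $p$ is a correct process with $p \in v'$ and $p \notin v$, where both $v'$ and $v$ are valid views and $v' \subset v$. Unpacking the definition of $v.\mathit{members}$, the only way for $p$ to belong to $v'$ but not to $v$ is for $\langle -, p\rangle$ to be in $v.\mathit{changes}\setminus v'.\mathit{changes}$. So somewhere in the chain of updates that led from $v'$ to $v$, the change $\langle -, p\rangle$ was introduced.

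Next I would trace how that change could have entered a valid view. By \Cref{property:installable} (together with \cref{lemma:sequence_of_views,lemma:valid_comparable,lemma:unique_valid_views}), every valid view distinct from the initial view is part of a sequence that was converged on to replace an installable predecessor. Consequently, there is some view $v''$ in the state of the system with $v' \subseteq v'' \subset v$ (or $v'' = v$) such that $\langle -, p\rangle \in v''.\mathit{changes}$ was first introduced in $v''$ via some correct process proposing a sequence containing $v''$. This is exactly the step at \cref{line:propose_system_configuration} (or \cref{line:propose_rest}), which is only triggered once an update has been added to $\mathit{RECV}$ at \cref{line:add_reconfig_brief} of \Cref{algorithm:dynamic}; and that, in turn, requires a properly signed \msg{reconfig} message carrying $\langle -, p\rangle$ to have been received.

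Finally, I would invoke the cryptographic assumption that signatures cannot be forged: the signed \msg{reconfig} carrying $\langle -, p\rangle$ must originate from $p$ itself. Inspecting \Cref{algorithm:dynamic}, the only code path in which a correct process $p$ signs and disseminates a \msg{reconfig} message with argument $\langle -, p\rangle$ is inside the \dbrbLeave{} procedure (at \cref{line:send_leave}). Hence $p$ has invoked \dbrbLeave{}, which is the desired conclusion.

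The only delicate point — and the step I would be most careful with — is justifying that the \msg{reconfig} message \emph{must} have been sent by $p$ rather than merely be a forgery or replay. This relies on two things stated in the paper: the standard unforgeability of signatures (\Cref{sec:assumptions}), and the fact that \msg{propose} messages are only processed by correct processes when the embedded \msg{reconfig} messages carry valid signatures (the parenthetical remark preceding the pseudocode and repeated for \Cref{lemma:join_safety}). Once this is spelled out, the remainder is a purely structural trace through the definitions of valid views and the membership function.
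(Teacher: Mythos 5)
Your proof is correct and follows essentially the same route as the paper's: a valid view containing $\langle -, p\rangle$ can only arise from a \msg{propose} message, which correct processes accept only with a properly signed \msg{reconfig} from $p$, and a correct $p$ signs such a message only inside \dbrbLeave{}. Your version merely spells out the membership-definition step and the structural tracing in more detail than the paper's terse argument.
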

\begin{proof}
Since process $p$ is not a member of a valid view $v$ and was a member of a valid view $v' \subset v$, it follows that some process $z$ proposed a sequence of views containing view $v''$ such that $\langle -, p \rangle \in v''.\mathit{changes}$ to replace some view $pv \subset v$ (because $p$ was a member of $v' \subset v$).
Because correct processes do not process \msg{propose} messages that contain a view with $\langle -, p \rangle$ without a signed \msg{reconfig} message from process $p$ (which we omit in \Cref{algorithm:dynamic} for brevity), $p$ must have invoked a \dbrbLeave{} operation.
\end{proof}

\Cref{lemma:join_safety,lemma:leave_safety} prove that valid views can incorporate $\langle +, p \rangle$/$\langle -, p \rangle$ change only if $p$ invoked \dbrbJoin{}/\dbrbLeave{} operations.
This is an important fact on which we rely to prove the correctness of our \name.

\begin{definition} \label{definition:active-view}
Valid view $v$ is \textbf{active} at time $t$ if at least $v.q$ correct processes that are members of $v$ ``know'' at $t$\footnote{This means that these correct processes has a proof that $v$ is ``instantiated'' by \name. For a more thorough explanation, see \Cref{subsection:view_discovery}.} that $v$ is a valid view and has not left the system before $t$ (i.e., executed \Cref{line:left}).
Otherwise, valid view $v$ is \textbf{inactive} at time $t$.
\end{definition}

\begin{lemma} \label{lemma:termination_active}
Suppose that the current state of the system is $s \in \mathcal{S}$.
Moreover, suppose that a correct process $p \in v$ disseminates a \msg{propose} message with a sequence of views $seq_p$ at time $t$ to replace some view $v \in s$.
Then, $p$ eventually converges on a sequence of views $seq_p'$ to replace $v$ or $v$ is inactive at some time $t' \geq t$.
\end{lemma}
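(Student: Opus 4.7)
The plan is to argue by contradiction: suppose $v$ is active at every time $t' \geq t$ yet $p$ never converges on a sequence to replace $v$. By \Cref{assumption:requests} and the activity hypothesis, after all (finitely many) reconfiguration requests have been processed there is a fixed set $Q \subseteq v.\mathit{members}$ of at least $v.q$ correct processes that never leave the system and that eventually recognize $v$ as valid; each such $r \in Q$ processes \msg{propose} messages tagged with $v$ and accepts every well-formed proposal it receives, since either $r$ has R-delivered an \msg{install} with $\omega = v$ (populating $\mathit{FORMAT}^v$) or $v$ is the initial view of the system (for which $\emptyset \in \mathit{FORMAT}^v$ by default).

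The core of the argument is to show that $\mathit{SEQ}^v$ stabilizes at every $r \in Q$. By \Cref{assumption:requests} the set of updates that can ever appear in any correct process's $\mathit{RECV}$ is finite, so the set of views, and hence the set of sequences of such views, that can ever populate $\mathit{SEQ}^v$ is finite. I would then introduce the well-founded lexicographic measure $\Phi(r) = (|\omega^\star_r|, |\mathit{SEQ}^v_r|)$, where $\omega^\star_r$ denotes the most recent view of $\mathit{SEQ}^v_r$, and show that every modification of $\mathit{SEQ}^v_r$ at a correct process strictly increases $\Phi(r)$: in the non-conflicting branch (\Cref{line:update_nonconflicting}) $\mathit{SEQ}^v_r$ becomes a proper superset, so either $|\omega^\star_r|$ or $|\mathit{SEQ}^v_r|$ grows, while in the conflicting branch (\Cref{line:update_conflicting}) the new most recent view is $\omega \cup \omega'$, which strictly contains both $\omega$ and $\omega'$, so $|\omega^\star_r|$ strictly grows. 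Since $\Phi$ is bounded above, $\mathit{SEQ}^v_r$ stabilizes after finitely many updates.

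I would then show that these stabilized values coincide across $Q$. If not, let $r, r' \in Q$ have distinct stabilized values; the last \msg{propose}$(\mathit{SEQ}^v_{r'}, v)$ that $r'$ disseminated is delivered to $r$ by reliable communication, passes the acceptance check of \Cref{line:deliver_propose_brief}, and, since its payload differs from $\mathit{SEQ}^v_r$, triggers one of the two update branches, contradicting stability. Thus every $r \in Q$ eventually holds a common value $\mathit{SEQ}^v_\star$ and disseminates \msg{propose}$(\mathit{SEQ}^v_\star, v)$ to every member of $v$; each $r \in Q$ in turn collects $v.q$ matching \msg{propose}$(\mathit{SEQ}^v_\star, v)$ messages, fires \Cref{line:verification3}, and sends \msg{converged}$(\mathit{SEQ}^v_\star, v)$ to every member of $v$. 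By reliable communication, $p$ (which is still alive, having disseminated a \msg{propose} at time $t$) collects at least $v.q$ such \msg{converged} messages and converges on $\mathit{SEQ}^v_\star$ to replace $v$, contradicting our assumption.

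The main obstacle is the stabilization step: \Cref{line:update_conflicting} can strictly \emph{shrink} $|\mathit{SEQ}^v_r|$ (by resetting to $\mathit{LCSEQ}^v \cup \{\omega \cup \omega'\}$), so a naive cardinality-based measure fails; the lexicographic measure $\Phi$ built on the size of the most recent view is what rules out infinite oscillation. A secondary subtlety is that the acceptance check at \Cref{line:deliver_propose_brief} requires every $r \in Q$ to have $\mathit{FORMAT}^v$ populated by the time it processes the final \msg{propose} messages; this is precisely where the strong form of activity and the View Discovery subprotocol come into play.
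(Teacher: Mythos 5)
Your proposal follows essentially the same route as the paper's own (much sketchier) proof: reliable communication among the at least $v.q$ correct members of an active $v$, plus the finiteness of reconfiguration requests (\Cref{assumption:requests}), forces all correct members to eventually agree on a common value of $\mathit{SEQ}^v$, which yields a quorum of matching \msg{propose} and hence \msg{converged} messages; the "otherwise $v$ becomes inactive" branch is handled identically. Your version is more rigorous, but two steps need patching. First, the termination measure: since the paper defines $|v|$ as $|v.\mathit{members}|$, your first component $|\omega^\star_r|$ does \emph{not} strictly increase when the merged most recent view extends the old one by a $\langle -, q\rangle$ change (more changes, fewer members); you should measure $|\omega^\star_r.\mathit{changes}|$ instead, under which strict containment of views does give strict growth. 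Second, in the coincidence step, if $r'$'s stabilized $\mathit{SEQ}^v_{r'}$ is a non-conflicting proper subset of $\mathit{SEQ}^v_r$, the union branch at $r$ is a no-op and yields no contradiction from $r$'s side; you need the symmetric direction ($r$'s last \msg{propose} reaching $r'$ strictly enlarges $\mathit{SEQ}^v_{r'}$, contradicting its stability). With those repairs the argument goes through, and it is in fact a more careful justification than the paper supplies, which simply asserts that finitely many reconfiguration requests imply that $v.q$ processes eventually send a \msg{converged} message for the same sequence.
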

\begin{proof}
When $p$ proposes $seq_p$ to replace $v$, $p$ disseminates a \operation{propose}{seq_p, v} message.
Since we assume the reliable communication between processes, messages sent by correct processes are received by correct processes.
For process $p$ to converge on a sequence of views $seq_p'$ to replace $v$, at least $v.q$ processes should send a \msg{converged} message for $seq_p'$.
During this period, $p$ changes the value of its variable $\mathit{SEQ^v}$ with sequences of views received from other processes.
Since there are at least $v.q$ correct processes in $v$ and a finite number of reconfiguration requests (i.e., even a malicious process can not propose new views an infinite number of times; \Cref{assumption:requests}), at least $v.q$ processes are going to send a \msg{converged} message for the same sequence. 
Hence, $p$ eventually converges on $seq_p'$ to replace a view $v$.

Otherwise, less than $v.q$ correct processes did not leave the system.
Hence, $v$ is inactive at some time $t' \geq t$.
\end{proof}

\begin{definition} [Path between views] \label{definition:path}
Suppose that the current state of the system is $s \in \mathcal{S}$.
Let $v, v' \in s$ be two different views in $s$.
We say that there exists a \textbf{path} between $v$ and $v'$ if there exist valid views $v_1, v_2, ..., v_n, (n \geq 0)$, such that at least one correct process $p_1 \in v \cup v_1$ receives $v.q$ \msg{state-update} messages associated with $v$, at least one correct process $p_{v'} \in v_{n} \cup v'$ receives $v_n.q$ \msg{state-update} messages associated with $v_n$ and at least one correct process $p_i \in v_{i - 1} \cup v_i$ receives $v_{i - 1}.q$ \msg{state-update} messages associated with $v_{i - 1}$, for every $2 \leq i \leq n$.
\end{definition}

\begin{lemma} \label{lemma:installable_reach_1}
Suppose that the current state of the system is $s \in \mathcal{S}$.
Consider views $v, v' \in s$, such that $v$ leads to $v'$ in $s$.
If every correct process $q \in v$ eventually sets $cv_q = v$, then every correct process $p \in v'$ eventually sets $cv_p = v'$.
\end{lemma}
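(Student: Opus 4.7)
The plan is to prove this by induction on the length $n \geq 0$ of the chain of non-installable intermediate views that \Cref{definition:leads_to} supplies. By that definition, there is a sequence $seq = v_1 \to \dots \to v_n \to v'$ converged on to replace $v$, where every $v_i$ is non-installable in $s$ and, for each $1 \leq i < n$, the tail $seq_i = v_{i+1} \to \dots \to v_n \to v'$ is converged on to replace $v_i$ (with $seq_n = \{v'\}$ converged on to replace $v_n$). I would strengthen the statement to: for every consecutive pair $w, w^+$ in the chain $v, v_1, \dots, v_n, v'$, if every correct process in $w$ eventually sets $cv = w$, then every correct process in $w^+$ eventually sets $cv = w^+$. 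The lemma then follows by chaining $n+1$ instances of this single-step claim starting from the hypothesis on $v$.

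For a single step from $w$ to $w^+$, the argument proceeds in three stages. First, the relevant tail sequence (namely $seq$ when $w = v$, or $seq_i$ when $w = v_i$) has been converged on to replace $w$, meaning some process collected $w.q$ matching \msg{converged} messages; by \Cref{assumption:quorums} at least one correct member of $w$ was among the senders, so by reliable communication every correct member of $w$ eventually sees that \msg{converged} message. Combined with the hypothesis that all correct members of $w$ eventually reach $cv = w$ and therefore participate in the propose/converged protocol, \Cref{lemma:termination_active} forces at least one correct process in $w$ to eventually collect $w.q$ matching \msg{converged} messages for a sequence whose least recent view is $w^+$, and hence to R-multicast \operation{install}{w^+, \cdot, w} to $w.\mathit{members} \cup w^+.\mathit{members}$. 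Second, the agreement and validity properties of \brb{} ensure that every correct process in $w \cup w^+$ eventually R-delivers this install message, at which point every correct member of $w$ R-multicasts a \msg{state-update}. Third, every correct $p \in w^+$ eventually receives \msg{state-update} messages from $w.q$ processes in $w$ (supplied by the at least $w.q$ correct members of $w$), unblocks the wait in \Cref{algorithm:install}, and executes the assignment $cv_p = w^+$.

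The main obstacle is the first stage: the hypothesis ``$seq_i$ is converged on'' is purely existential about a quorum of \msg{converged} messages that may have been assembled by a Byzantine process, whereas downstream progress requires at least one \emph{correct} process in $w$ to collect such a quorum itself and trigger the install R-multicast. Closing this gap requires ruling out the ``$w$ becomes inactive'' branch of \Cref{lemma:termination_active}: this uses \Cref{assumption:quorums} to guarantee $w.q$ correct members in the valid view $w$, together with the discipline imposed on a correct process invoking \dbrbLeave{} (it waits until its outstanding obligations with respect to in-flight broadcasts and reconfigurations have been discharged, see \Cref{line:totality_validity} and \Cref{line:finish_wait_updated} of \Cref{algorithm:install}), so that enough correct members of $w$ remain active long enough for convergence and the install dissemination to complete. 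Once that is established, the remaining two stages follow mechanically from the BRB properties and the pseudocode.
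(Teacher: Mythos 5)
There is a genuine gap. Your whole argument rests on a lockstep induction along the \emph{fixed} chain $v, v_1, \dots, v_n, v'$, with the inductive premise ``every correct process in $v_i$ eventually sets $cv = v_i$.'' But \Cref{definition:leads_to} only asserts that \emph{some} sequence $seq = v_1 \to \dots \to v_n \to v'$ is converged on to replace $v$; by \Cref{lemma:weak_accuracy} other sequences $seq' \supset seq$ (or $\subset seq$) may \emph{also} be converged on to replace $v$, and likewise for each intermediate $v_i$. When that happens, distinct \msg{install} messages with different $\omega$ parameters circulate, and different correct members of $v$ may set their current view to the least recent view of \emph{different} converged-on sequences -- e.g.\ to some $v_1' \subset v_1$ that is not in your chain at all. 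Your inductive premise for $w = v_1$ is then simply false: not every correct process in $v_1$ passes through $v_1$. This is precisely the hard case, and your proposal only treats (implicitly) the easy case in which $seq$ is the unique sequence ever converged on to replace each view in the chain. The paper's proof spends most of its length on the multi-sequence case: it introduces the notion of a \emph{path} between views (\Cref{definition:path}), tracks a moving frontier $v_{new}$ that may hop between views of $seq$ and $seq'$, and argues the recursion terminates because only finitely many sequences can ever be converged on (\Cref{assumption:requests}).

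A secondary, smaller issue: you identify the ``main obstacle'' as ruling out the inactive branch of \Cref{lemma:termination_active} via the leave discipline, but that is not where the difficulty lies (the paper obtains activity of the intermediate views from the fact that correct processes reach them, which in turn is what the path analysis establishes). Your second and third stages (BRB agreement/validity delivering the \msg{install}, and the quorum of \msg{state-update} messages from the $w.q$ correct members unblocking the wait at \Cref{line:state-update}) are fine and match the paper, but they only apply once you know which install messages the correct processes actually act on -- which is exactly what the missing case analysis must supply.
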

\begin{proof}
By \Cref{lemma:installable_from_one}, there is a view $v \in s$, where $v$ is installable in $s$, that leads to $v'$ in $s$.
By \Cref{definition:leads_to}, a sequence of views $seq = v_1 \to ... \to v_n \to v', (n \geq 0)$ is converged on to replace $v$.
Moreover, a sequence of views $seq_{i} = v_{i + 1} \to ... \to v_n \to v'$ is converged on to replace view $v_i$, for every $1 \leq i < n$, and a sequence of views $seq_{n} = v'$ is converged on to replace $v_n$.
We can conclude that some correct process $q_0 \in v$ sent a \msg{propose} message that led to a fact that sequence of views $seq = v_1 \to ... \to v_n \to v'$ is converged on to replace $v$.
Similarly, there is at least one correct process $q_i \in v_i$ that sent a \msg{propose} message that led to a fact that sequence of views $seq_i = v_{i + 1} \to ... \to v_n \to v'$ is converged on to replace $v_i$ (for every $1 \leq i < n$).
Lastly, there is at least one correct process $q_n \in v_n$ that sent a \msg{propose} message that led to a fact that sequence of views $seq_n = v'$ is converged on to replace $v_n$.
Note that $\exists v_p \in \{v_1, v_2, ..., v_n, v'\}\not\exists v_p' \in \{v_1, v_2, ..., v_n, v'\}: p \in v_p \land p \in v_p' \land v_p' \subset v_p$.
In order to show that the lemma holds, we will show that eventually there exists a path between $v$ and $v'$.

Suppose that no different sequence of views is ever converged on to replace $v$ (i.e., $seq$ is the only sequence of views that is \emph{ever} converged on to replace $v$).
We conclude that there exists a path between $v$ and $v_n$, since $seq$ is the only sequence of views converged on to replace $v$ and all correct processes in $v$ eventually ``reach'' $v$ (by the claim of the lemma).
Moreover, it is easy to see that if a correct process $q' \in v'$ has already set $cv_{q'} = v'$, there exists a path between $v$ and $v'$.
However, it is not certain that there exists such a process $q'$.
Let us investigate both cases:
\begin{compactitem}
    \item There exists such a process $q'$: Hence, the properties of \brb{}, the fact that there exists a path between $v$ and $v'$ and the fact that there exists a view in $seq$ with $p$ being its member ensure that a correct process $p$ eventually sets $cv_p = v'$.
    \item There does not exist such a process $q'$:
    In this case, it is clear that $v_n$ eventually becomes active once every correct member of $v_n$ considers this view as its current view of the system (and that happens because of the fact that there exists a path between $v$ and $v_n$).
    Because of the fact that $seq$ is the only view converged on to replace $v$, $v_n$ is an auxiliary view for $v$ and \Cref{lemma:termination_active}, a correct process $r \in v_n$ eventually converges on a sequence of views $seq_n = v'$, ensuring that a correct process $p$ eventually receives an \msg{install} message for $v'$.
    Moreover, $p$ eventually receives the quorum of \msg{state-update} messages associated with $v_n$ (\Cref{line:state-update}) and updates its current view to $v'$ (\Cref{line:set_cv}).
\end{compactitem}

\bigskip
Suppose that a sequence of views $seq' = v_1' \to ... \to v_m' \to v'', (m \geq 0)$ is also converged on to replace $v$, where $seq' \neq seq$.
By \Cref{lemma:weak_accuracy}, either $seq \subset seq'$ or $seq \supset seq'$.
We investigate both cases:
\begin{compactitem}
    \item $seq \subset seq'$: 
    Hence, $v_1, v_2, ..., v_n, v'$ are in $seq'$.
    Eventually, either a correct process $r \in v_1$ or a correct process $r' \in v_1'$ receives the quorum of \msg{state-update} messages associated with $v$.
    Hence, eventually there exists a path between $v$ and either $v_1$ or $v_1'$.
    Let say that there exists a path between $v$ and $v_{new}$, where $v_{new} = v_1$ or $v_{new} = v_1'$.
    We consider two cases now:
    \begin{compactitem}
        \item $v_{new} \notin seq$: 
        Hence, there exists a path between $v$ and $v_{new}$.
        There can be at most one sequence of views accepted (and converged on) to replace $v_{new}$: $seq_1 = \{\omega \in seq': v_{new} \subset \omega\}$.
        We conclude that $v' \in seq_1$.
        Eventually, either a correct process $r \in v_{new}'$ receives a quorum of \msg{state-update} message associated with $v_{new}$ (since there exists a path between $v$ and $v_{new}$ and all correct processes from $v_{new}$ could ``reach'' $v_{new}$) or some correct process $r' \in v_{new}''$ receives the quorum of \msg{state-update} messages associated with some view, where $v_{new}'' \in seq$ or $v_{new}'' \in seq'$.
        Therefore, eventually there exists a path between $v$ and $v_{new}'$ or between $v$ and $v_{new}''$.
        Assume that there exists now a path between $v$ and $v_{new}$, where $v_{new}$ is ``updated'' (i.e., it is not the ``old'' $v_{new}$).
        % Suppose that there exists a path between $v$ and $v_{new-path}$, where $v_{new-path} = v_{new}'$ or $v_{new-path} = v_{new}''$.
        % Therefore, $p$ eventually sets $cv_p = v_{new}'$ (either because $seq_1$ is converged on to replace $v_{new}$ or because $p$ receives some \msg{install} message that instructs $p$ to set $cv_p = v_{new}'$).
        If $v_{new} = v'$, then the lemma holds in case where $seq \subset seq'$.
        Otherwise, this or the second case apply to $v_{new}$.
        
        \item $v_{new} \in seq$: There exists a path between $v$ and $v_{new}$.
        Since $seq \subset seq'$, we conclude that $v_{new} \in seq'$.
        There can be at most two sequences of views accepted (and converged on) to replace $v_{new}$: $seq_1 = \{\omega \in seq: v_{new} \subset \omega\}$ and $seq_2 = \{\omega \in seq': v_{new} \subset \omega\}$.
        Note that $seq_1$ is converged on to replace $v_{new}$.
        Because of \Cref{lemma:weak_accuracy} and the fact that $seq \subset seq'$, we conclude that $seq_1 \subseteq seq_2$, $v' \in seq_1$ and $v' \in seq_2$.
        
        Every correct process from $v_{new}$ eventually ``reaches'' $v_{new}$.
        Eventually, a correct process $r \in v_{new}'$ receives a quorum of \msg{state-update} message associated with $v_{new}$ (since there exists a path between $v$ and $v_{new}$ and all correct processes from $v_{new}$ ``reach'' $v_{new}$).
        Assume that there exists now a path between $v$ and $v_{new}$, where $v_{new}$ is ``updated'' (i.e., it is not the ``old'' $v_{new}$).
        If $v_{new} = v'$, then the lemma holds in case where $seq \subset seq'$.
        Otherwise, this or the first case apply to $v_{new}$.
    \end{compactitem}
    Since both $seq$ and $seq'$ have finite number of views (because of \Cref{assumption:requests}), the recursion eventually stops with $v_{new} = v'$.
    The lemma holds in this case.
    
    \item $seq \supset seq'$:
    Hence, $v_1', v_2', ..., v_m', v''$ are in $seq$.
    Eventually, either a correct process $r \in v_1$ or a correct process $r' \in v_1'$ receives the quorum of \msg{state-update} messages associated with $v$.
    Hence, eventually there exists a path between $v$ and either $v_1$ or $v_1'$.
    Let say that there exists a path between $v$ and $v_{new}$, where either $v_{new} = v_1$ or $v_{new} = v_1'$.
    We consider three cases now:
    \begin{compactenum}
        \item $v_{new} \in seq$ and $v_{new} \notin seq'$:
        There can be at most one sequence of views accepted (and converged on) to replace $v_{new}$: $seq_1 = \{\omega \in seq: v_{new} \subset \omega\}$.
        Note that $seq_1$ is converged on to replace $v_{new}$.
        We conclude that $v' \in seq_1$.
        Eventually, either a correct process $r \in v_{new}'$ receives a quorum of \msg{state-update} message associated with $v_{new}$ (since there exists a path between $v$ and $v_{new}$ and all correct processes from $v_{new}$ could ``reach'' $v_{new}$) or some correct process $r' \in v_{new}''$ receives the quorum of \msg{state-update} messages associated with some view, where $v_{new}'' \in seq$ or $v_{new}'' \in seq'$.
        Therefore, eventually there exists a path between $v$ and $v_{new}'$ or between $v$ and $v_{new}''$.
        Assume that there exists now a path between $v$ and $v_{new}$, where $v_{new}$ is ``updated'' (i.e., it is not the ``old'' $v_{new}$).
        If $v_{new} = v'$, then the lemma holds in case where $seq \supset seq'$.
        Otherwise, this or other cases apply to $v_{new}$.
        
        \item $v_{new} \in seq$, $v_{new} \in seq'$ and $v_{new} \neq v''$:
        There can be at most two sequences of views accepted (and converged on) to replace $v_{new}$: $seq_1 = \{\omega \in seq: v_{new} \subset \omega\}$ and $seq_2 = \{\omega \in seq': v_{new} \subset \omega\}$.
        Note that $seq_1$ is converged on to replace $v_{new}$.
        
        Every correct process from $v_{new}$ eventually ``reaches'' $v_{new}$.
        Eventually, a correct process $r \in v_{new}'$ receives a quorum of \msg{state-update} message associated with $v_{new}$ (since there exists a path between $v$ and $v_{new}$ and all correct processes from $v_{new}$ ``reach'' $v_{new}$).
        Assume that there exists now a path between $v$ and $v_{new}$, where $v_{new}$ is ``updated'' (i.e., it is not the ``old'' $v_{new}$).
        If $v_{new} = v'$, then the lemma holds in case where $seq \supset seq'$.
        Otherwise, this or other cases apply to $v_{new}$.
        
        \item $v_{new} \in seq$, $v_{new} \in seq'$ and $v_{new} = v''$:
        We know that $seq_1 = \{\omega \in seq: v_{new} \subset \omega\}$ is converged on to replace $v_{new}$ (because of \Cref{definition:leads_to} and $v'' \in seq$).
        Without loss of generality, suppose that $v'' = v_1$.
        Hence, a sequence of views $seq_1 = v_2 \to ... \to v_n \to v'$ is converged on to replace $v''$.
        Now, there can be a finite number of sequences of views converged on to replace $v''$ (one is $seq_1$).
        Consider such a sequence of views $seq_2$.
        By \Cref{lemma:weak_accuracy}, we conclude that either $seq_1 \subseteq seq_2$ or $seq_1 \supset seq_2$.
        Therefore, we reduce this case to the case from the beginning (now $v''$ is what view $v$ was originally).
    \end{compactenum}
\end{compactitem}

Since only a finite number of sequences can be converged on to replace some valid view (because of \Cref{assumption:requests}), we conclude that the (potential) recursion eventually stops.
For the sake of simplicity, we have showed the proof in a case there are two different sequences of views converged on to replace $v$.
It is easy to generalize this proof for any (finite) number of different sequences of views converged on to replace $v$ (note that a finite number of different sequences of views can be converged on to replace $v$ because of \Cref{assumption:requests}).
\end{proof}

\begin{lemma} \label{lemma:installable_reach_2}
Suppose that the current state of the system is $s \in \mathcal{S}$.
Consider a view $v' \in s$, where $v'$ is installable in $s$.
Let $p \in v'$ be a correct process.
% Moreover, consider views $v \in s$ and $v' \in s$, where both $v$ and $v'$ are installable in $s$ and $v$ leads to $v'$ in $s$.
% Consider a correct process $p \in v'$.
% Process $p$ eventually sets $\mathit{cv_p} = v'$.
Then, $p$ sets $\mathit{cv_p} = v'$ (i.e., executes \Cref{line:set_cv}).
\end{lemma}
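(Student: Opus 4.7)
The plan is to proceed by induction along the sequence of installable views in $s$. By \Cref{lemma:unique_installable_views}, the views installable in $s$ form a sequence of views starting with the initial view $v_0$, and by \Cref{assumption:requests} this sequence is finite, so there is a well-defined notion of ``position in the sequence of installable views'' that I can induct on.

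For the base case, I consider $v' = v_0$. By \Cref{assumption:init}, the initial view $v_0$ is known to every process, and each correct $p \in v_0$ initializes $\mathit{cv_p} = v_0$ (see the variable declarations in \Cref{algorithm:variables}, where $\mathit{cv} = v_0$). Thus $p$ trivially satisfies $\mathit{cv_p} = v'$ without needing to execute \Cref{line:set_cv} per se; since the statement allows this initial assignment to count as ``setting'' the current view to $v_0$, the base case holds.

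For the inductive step, I assume $v'$ is installable in $s$ with $v' \neq v_0$ and that the claim has been established for the unique installable predecessor of $v'$. By \Cref{lemma:installable_from_one}, there exists an installable view $v \in s$ that leads to $v'$ in $s$; by \Cref{lemma:from_only_one} this $v$ is unique, and by \Cref{lemma:installable_views_compare} it is strictly less recent than $v'$, so it sits one step earlier in the sequence of installable views. Applying the inductive hypothesis to $v$, every correct process $q \in v$ eventually sets $\mathit{cv_q} = v$. Then \Cref{lemma:installable_reach_1} immediately yields that every correct process $p \in v'$ eventually sets $\mathit{cv_p} = v'$, completing the induction.

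The main obstacle is not the induction itself but confirming that its hypothesis is legitimately used, namely that \Cref{lemma:installable_reach_1} (whose statement is precisely of the form ``if all correct members of the predecessor reach it, then all correct members of the successor reach it'') already encapsulates the delicate liveness reasoning about paths between views, \brb\ dissemination of \msg{install} and \msg{state-update} messages, and possibly conflicting concurrent proposals. Given that heavy lifting is already packaged in \Cref{lemma:installable_reach_1}, the present proof is a short structural induction with no further case analysis required.
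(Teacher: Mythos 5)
Your proof is correct and follows essentially the same route as the paper: the paper's own (one-sentence) argument is exactly this induction, anchored at the initial view $v_0$ where every correct member starts with $\mathit{cv}=v_0$, and stepping along the sequence of installable views via \Cref{lemma:installable_reach_1}. You have merely made explicit the well-foundedness of the induction (via \Cref{lemma:unique_installable_views}, \Cref{lemma:installable_from_one}, and \Cref{lemma:from_only_one}), which the paper leaves implicit.
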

\begin{proof}
This follows directly from the fact that all correct processes that are members of the initial view $v_0$ have their current view set to $v_0$ and \Cref{lemma:installable_reach_1}.
\end{proof}

\begin{lemma} \label{lemma:last_installable}
Suppose that current state of the system is $s \in \mathcal{S}$.
Consider a view $v \in s$ installable in $s$.
If $v \neq v_{final}$, then either $v$ leads to some view $v' \in s$ or the system eventually transits to a state $s' \in \mathcal{S}$, where $v$ leads to some view $v' \in s'$ in $s'$.
\end{lemma}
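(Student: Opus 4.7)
My plan is to split on whether $v$ already leads to some view in the current state $s$. If it does, there is nothing to prove. Otherwise, I would argue that the system must evolve through a finite number of transitions to a state in which $v$ leads to some view.

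First, I would invoke \Cref{lemma:installable_reach_2} to observe that every correct member of $v$ eventually adopts $v$ as its current view. By \Cref{assumption:quorums}, at least $v.q$ members of $v$ are correct, so $v$ remains active as long as these processes do not leave. Since $v \neq v_{final}$ and $v$ is installable, the eventual state $s_{final}$ (whose existence follows from \Cref{lemma:reconfiguration_done}) must contain some installable view strictly more recent than $v$. Installability is monotone under the transition relation of $\Pi$, and new installable views arise only from convergence events fired at current installable (or auxiliary) views, so some correct member of $v$ must accumulate at least one pending reconfiguration request in $\mathit{RECV}$ and disseminate a \msg{propose} message at \Cref{line:propose_system_configuration}.

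Next, I would apply \Cref{lemma:termination_active}: because correct members of $v$ do not leave the system before discharging their pending duties (enforced by the wait at \Cref{line:totality_validity}), $v$ stays active, and every correct proposer eventually converges on some sequence $seq$ to replace $v$. This triggers the event $\langle v, seq \rangle$, and the system transitions to a new state in which the least recent view of $seq$ is added. If $|seq| = 1$, the added view is installable in the new state and $v$ leads to it immediately, so we are done.

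The main obstacle is the case $|seq| > 1$, where the newly added view is only an auxiliary view of $v$ and not yet installable. Here I would iterate the previous argument: by \Cref{lemma:installable_reach_1}, correct members of the new auxiliary view eventually adopt it as their current view, and by \Cref{lemma:termination_active} they in turn converge on a sequence that installs the next view along the chain. Since \Cref{assumption:requests} bounds the total number of reconfiguration requests, only finitely many auxiliary views can be chained before a length-one convergence produces an installable view $v^* \supset v$. Combining \Cref{property:installable} with \Cref{definition:leads_to}, the chain of converged sub-sequences connecting $v$ to $v^*$ witnesses that $v$ leads to $v^*$ in the resulting state, completing the proof.
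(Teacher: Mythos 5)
Your overall strategy coincides with the paper's: make $v$ active via \Cref{lemma:installable_reach_2}, obtain a sequence converged on to replace $v$ via \Cref{lemma:termination_active}, and then chain through the auxiliary views of that sequence until a single-view convergence yields an installable $v' \supset v$, with \Cref{assumption:requests} guaranteeing that the chain terminates. Two steps need repair, however.

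First, your justification for why any \msg{propose} message is ever sent runs in the wrong direction. You infer the existence of a pending request in $\mathit{RECV}$ from the claim that $s_{final}$ ``must contain some installable view strictly more recent than $v$.'' That claim is essentially \Cref{lemma:v_final_installable}, which the paper derives \emph{from} the present lemma, so relying on it here is circular. The non-circular argument is the converse: $v \neq v_{final}$ means some requested change $\langle \pm, p\rangle$ is absent from $v.\mathit{changes}$; the requesting process keeps disseminating its \msg{reconfig} message to each installed view it learns of, so correct members of $v$ (which they eventually install, by \Cref{lemma:installable_reach_2}) populate $\mathit{RECV}$ and propose at \Cref{line:propose_system_configuration}, after which \Cref{lemma:termination_active} applies.

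Second, you treat the converged-on sequence as unique. \Cref{lemma:weak_accuracy} only guarantees that the (possibly several) sequences converged on to replace $v$ are comparable under inclusion, not equal. When $seq \subset seq'$ are both converged on to replace $v$, which view along the chain ends up installable and which remain auxiliary depends on the whole family of converged sequences, and the plain ``iterate along $seq$'' argument does not directly establish that \Cref{definition:leads_to} is satisfied for the view you reach. The paper isolates this as a separate case and resolves it by the case analysis in the proof of \Cref{lemma:installable_reach_1}; your proof needs either that case split or an explicit reduction to it.
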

\begin{proof}
Suppose that $v \neq v_{final}$ does not lead to any view in $s$, i.e., $v$ is the most recent installable view in $s$.
We now show that the system transits to a state $s' \in \mathcal{S}$, such that $v$ leads to some view $v' \in s'$ in $s'$.

Consider a correct process $p \in v$.
A sequence of views $seq$ is eventually converged on to replace $v$ since $v$ eventually becomes active (because of \Cref{lemma:installable_reach_2}).

Now, there are two possible scenarios:
\begin{compactenum}
    \item Number of different sequences of views ever converged on to replace $v$ is equal to 1:
    Let us denote that sequence of views with $seq = v_1 \to ... \to v_n \to v', (n \geq 0)$.
    Hence, every correct process $r \in v_1$ eventually receives an \msg{install} message associated with $v$, where $\omega$ parameter is equal to $v_1$.
    Moreover, $r$ executes \Cref{line:propose_rest}.
    By \Cref{lemma:termination_active}, the fact that $v_1$ is active at some time $t'$ and the fact that only $seq$ is converged on to replace $v$, $r$ converges on a sequence of views $seq_1 = v_2 \to ... \to v_n \to v'$ to replace $v_1$.
    This repeats until a correct process $r \in v'$ sets $v'$ as $cv_r$.
    Moreover, $v'$ is installable in the current state $s'$ of the system at that time and views $v_1, ..., v_n \in s'$ are not installable in $s'$, i.e., \Cref{definition:leads_to} is satisfied.
    Hence, the lemma holds.
    
    \item Number of sequences of views ever converged on to replace $v$ is bigger than 1:
    We can use the similar argument as in the proof of \Cref{lemma:installable_reach_1} in this case.
\end{compactenum}
Therefore, the lemma holds.
\end{proof}

\begin{lemma} \label{lemma:v_final_installable}
Suppose that the current state of the system is $s_{final} \in \mathcal{S}$.
Then, $v_{final}$ is installable in $s_{final}$.
\end{lemma}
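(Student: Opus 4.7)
The plan is to identify $v_{final}$ with the $\subset$-maximal installable view in $s_{final}$ and derive a contradiction if any other installable view in $s_{final}$ were the maximum. First I would observe that the set of installable views in $s_{final}$ is nonempty: the initial view $v_0$ is installable in $s_i$ by definition, and an inspection of the transition rules in $\mathcal{D}$ shows that once $\alpha_s(v_0)=\top$ holds, it is preserved by every subsequent transition, so $v_0$ remains installable in $s_{final}$ as well. Combined with Assumption~\ref{assumption:requests} (only finitely many reconfiguration requests ever occur), the set of installable views in $s_{final}$ is finite, and by Lemma~\ref{lemma:unique_installable_views} it is totally ordered by $\subset$; therefore it contains a unique $\subset$-maximal element, call it $v^{\star}$.

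Next I would argue $v^{\star}=v_{final}$ by contradiction. Suppose $v^{\star}\neq v_{final}$. Since $v^{\star}$ is installable in $s_{final}$, Lemma~\ref{lemma:last_installable} applies: either $v^{\star}$ already leads to some view $v'\in s_{final}$ in $s_{final}$, or the system eventually transits from $s_{final}$ to some state $s'$. The second alternative contradicts Lemma~\ref{lemma:reconfiguration_done}, which characterizes $s_{final}$ precisely as a state from which no further transitions occur. The first alternative is also impossible: by Definition~\ref{definition:leads_to}, the view $v'$ would have to be installable in $s_{final}$, and by Lemma~\ref{lemma:installable_views_compare} we would get $v^{\star}\subset v'$, contradicting the $\subset$-maximality of $v^{\star}$ among installable views in $s_{final}$. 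Hence $v^{\star}=v_{final}$, and since $v^{\star}$ is installable in $s_{final}$ by construction, so is $v_{final}$.

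The main obstacle I anticipate is not the chain of implications above, which is essentially mechanical once Lemmas~\ref{lemma:unique_installable_views},~\ref{lemma:installable_views_compare}, and~\ref{lemma:last_installable} are in hand, but rather the bookkeeping needed to justify that the installable-view chain in $s_{final}$ is actually finite and therefore admits a maximal element; this is where Assumption~\ref{assumption:requests} is essential, because without it the $\subset$-chain of installable views could in principle have no $\subset$-maximum, in which case Lemma~\ref{lemma:last_installable} alone would not suffice to pin down $v_{final}$. Everything else is a straightforward elimination of the two branches offered by Lemma~\ref{lemma:last_installable}.
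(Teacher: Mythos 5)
Your proof is correct and follows essentially the same route as the paper's: take the most recent installable view in $s_{final}$, apply Lemma~\ref{lemma:last_installable}, and rule out both branches (leading to a more recent installable view contradicts maximality; a further transition contradicts Lemma~\ref{lemma:reconfiguration_done}). The only difference is cosmetic --- you identify $v_{final}$ with the $\subset$-maximal installable view directly rather than assuming $v_{final}$ is not installable, and you add the (welcome but routine) justification that a maximal installable view exists.
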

\begin{proof}
Suppose that $v_{final}$ is not installable in $s_{final}$.
Let $v \neq v_{final}$ be the most recent installable view in $s_{final}$.
By \Cref{lemma:last_installable}, $v$ either leads to some view in $s_{final}$ or the system evolves into a new state.
If $v$ leads to some view in $s_{final}$, we reach a contradiction with the fact that $v$ is the most recent installable view in $s_{final}$.
Otherwise, we reach a contradiction with the fact that the system does not evolve anymore (since it is already in the state $s_{final}$).
\end{proof}

\begin{lemma} \label{lemma:install_v_final}
Every correct process $p \in v_{final}$ installs $v_{final}$.
\end{lemma}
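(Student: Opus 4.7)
The plan is to combine \Cref{lemma:v_final_installable}, which asserts that $v_{final}$ is installable in $s_{final}$, with \Cref{lemma:installable_reach_2} applied to $v' = v_{final}$, which guarantees that every correct $p \in v_{final}$ eventually reaches \Cref{line:set_cv} and assigns $cv_p := v_{final}$. Any such assignment is triggered by $p$'s processing of some \msg{install} message of the form $(v_{final}, seq^*, v^*)$, where by definition $v_{final} = \mathrm{least\_recent}(seq^*)$. The remaining work is to show that the conditional at \Cref{line:more_updated_views_1} evaluates to false along that execution path, so $p$ falls through to the \textbf{else} branch and executes $installed(cv) = true$ at \Cref{line:installed_cv}, which is exactly the claim that ``$p$ installs $v_{final}$''.

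The core subclaim to establish is that no $\omega' \in seq^*$ satisfies $v_{final} \subsetneq \omega'$. I would argue this by contradiction. Suppose such an $\omega_{\mathit{more}}$ exists. Then $p$ sets $SEQ^{v_{final}}$ to a non-empty set of views strictly more recent than $v_{final}$ and disseminates the corresponding \msg{propose} message at \Cref{line:propose_rest}. By \Cref{lemma:installable_reach_2}, every correct member of $v_{final}$ eventually reaches $cv = v_{final}$; combined with \Cref{assumption:requests} and \Cref{lemma:reconfiguration_done}, no correct member of $v_{final}$ leaves the system after $s_{final}$ has been reached, so $v_{final}$ becomes and remains active in the sense of \Cref{definition:active-view}. \Cref{lemma:termination_active}, applied to $p$'s \msg{propose}, then forces some sequence $seq'$ to be converged on to replace $v_{final}$, yielding an event $\langle v_{final}, seq' \rangle$ enabled at $s_{final}$. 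By the definition of the transition relation $\mathcal{D}$, this event drives a transition out of $s_{final}$, contradicting the choice of $s_{final}$ as the stabilizing state from \Cref{lemma:reconfiguration_done}.

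Once the subclaim is secured, the check at \Cref{line:more_updated_views_1} fails, $p$ takes the \textbf{else} branch, sets $installed(cv) = true$ at \Cref{line:installed_cv}, and the lemma follows. The main obstacle is the contradiction step: it requires a careful alignment of the ``active view'' notion of \Cref{definition:active-view} with the liveness guarantee of \Cref{lemma:termination_active}, and it must handle the edge case in which $seq'$ does not add a genuinely new view to $s_{final}$ but merely updates the $\alpha$- or $\beta$-labelling of a view already present; by the case analysis in the definition of $\mathcal{D}$, any such event still moves the system to a distinct state, which is enough for the contradiction.
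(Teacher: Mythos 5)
Your proposal is correct and follows essentially the same route as the paper's own proof: combine \Cref{lemma:v_final_installable} and \Cref{lemma:installable_reach_2} to get $cv_p = v_{final}$, then derive a contradiction from the check at \Cref{line:more_updated_views_1} via \Cref{lemma:termination_active} and the finality of $s_{final}$ from \Cref{lemma:reconfiguration_done}. The extra care you take with the edge case of the transition relation $\mathcal{D}$ goes slightly beyond what the paper writes down, but the argument is the same.
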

\begin{proof}
Since $v_{final}$ is installable in $s_{final} \in \mathcal{S}$ (by \Cref{lemma:v_final_installable}), every correct process $p \in v_{final}$ eventually sets $cv_p = v_{final}$ (by \Cref{lemma:installable_reach_2}).
A correct process $p \in v_{final}$ would not install $v_{final}$ only if the condition at \Cref{line:more_updated_views_1} would hold.
Suppose that is the case. 
Hence, $p$ proposes some view $v' \supset v_{final}$ to replace $v_{final}$.
Because of \Cref{lemma:installable_reach_2}, $v_{final}$ eventually becomes active.
Hence, $p$ converges on some sequence of views to replace $v_{final}$ (because of \Cref{lemma:termination_active}).
However, this is a contradiction with the fact that $s_{final}$ is the state of the system from which the system does not evolve.
Hence, the lemma holds.
\end{proof}

\begin{lemma} \label{lemma:install_v_final_2}
Consider a correct process $p$ that has completed the \dbrbJoin{} operation (i.e., $p$ has joined the system) and does not invoke the \dbrbLeave{} operation ever.
Then, $p \in v_{final}$ installs $v_{final}$.
\end{lemma}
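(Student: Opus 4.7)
The plan is to reduce this lemma to the previous one (\Cref{lemma:install_v_final}) by showing that the hypothesis forces $p \in v_{final}$; then $p$'s installation of $v_{final}$ follows immediately. The two obligations are therefore: (i) establish membership $p \in v_{final}$, and (ii) invoke \Cref{lemma:install_v_final}.

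For membership, I would argue in two steps. First, since $p$ has completed \dbrbJoin{}, the event \emph{joinComplete} must have been triggered at $p$ (\Cref{line:join_completed}), which happens only when $p$ sets $\mathit{cv}_p = \omega$ for some valid view $\omega$ containing $\langle +, p\rangle$. In particular, there exists some valid view $v \ni p$. Because the current state $s_{final}$ of the system is a sequence of views (\Cref{lemma:unique_valid_views}) and $v_{final}$ is installable in $s_{final}$ (\Cref{lemma:v_final_installable}), we have $v \subseteq v_{final}$, which yields $\langle +, p\rangle \in v_{final}.\mathit{changes}$.

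Second, I would rule out $\langle -, p\rangle \in v_{final}.\mathit{changes}$ by the contrapositive of \Cref{lemma:leave_safety}: if such a \emph{leave} update were recorded in a valid view strictly extending $v$, then $p$ would have had to invoke \dbrbLeave{}, contradicting the hypothesis that $p$ never does so. Combining the two observations gives $p \in v_{final}.\mathit{members}$ by the definition of $v.\mathit{members}$.

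Having established $p \in v_{final}$, \Cref{lemma:install_v_final} directly yields that $p$ installs $v_{final}$, completing the proof. The only conceptual subtlety — and the step I would double-check carefully — is the first one: ensuring that the view $\omega$ at which \emph{joinComplete} was triggered is genuinely a valid view in $s_{final}$, so that comparability with $v_{final}$ via \Cref{lemma:unique_valid_views} applies. This follows because \emph{joinComplete} is triggered only inside the installation handler at \Cref{line:install_seq_delivery}, which is guarded by an \msg{install} message carrying a quorum of \msg{converged} signatures for $\omega$, certifying validity of $\omega$ in the system state at that time, hence also in $s_{final}$ (since the state only grows and valid views remain valid).
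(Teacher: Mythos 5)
Your proof is correct and follows essentially the same route as the paper, whose proof is simply the one-line citation of \Cref{lemma:install_v_final,lemma:leave_safety}: you use \Cref{lemma:leave_safety} (contrapositive) together with comparability of valid views to get $p \in v_{final}$, and then apply \Cref{lemma:install_v_final}. Your additional care about the validity of the view at which \emph{joinComplete} fires is a reasonable elaboration of details the paper leaves implicit.
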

\begin{proof}
Follows directly from \cref{lemma:install_v_final,lemma:leave_safety}.
\end{proof}

\begin{lemma} \label{lemma:eventually_joins}
Suppose that the current state of the system is $s \in \mathcal{S}$.
Consider a correct process $p$ and views $v \in s$ and $v' \in s$.
Suppose that $v \not\ni p$ leads to $v' \ni p$ in $s$.
% and that some correct process $q \in v'$ set $cv_q = v'$ at time $t' \leq t$.
Then, $p$ eventually joins the system, i.e., executes \Cref{line:join_completed}.
\end{lemma}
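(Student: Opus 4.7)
The plan is to identify a specific install message whose R-delivery by $p$ forces the check at Line~\ref{line:join_completed} to succeed, and to argue that this message is both R-multicast by some correct process and R-delivered by $p$. By Lemma~\ref{lemma:join_safety}, membership of $p$ in the valid view $v'$ already implies that $p$ has invoked dbrb-join, so the join procedure is active at $p$. Using Definition~\ref{definition:leads_to}, the chain witnessing ``$v$ leads to $v'$ in $s$'' is a sequence $v_1 \to \cdots \to v_n \to v'$ converged on to replace $v$, with each tail converged on to replace the corresponding $v_i$, and with $v_1,\dots,v_n$ not installable in $s$. Set $v_0 := v$ and $v_{n+1} := v'$; since $p \notin v_0$ and $p \in v_{n+1}$, there is a least index $k$ with $p \in v_k$. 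Write $\omega := v_k$ and $u := v_{k-1}$, so that $p \in \omega$, $p \notin u$, and $u \subsetneq \omega$.

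Next I would show that some correct process eventually R-multicasts $\operation{install}{\omega, seq, u}$ at Line~\ref{line:send_reliable}, by iterating the chain argument used in Lemma~\ref{lemma:installable_reach_1}: starting from the installable view $v$, Lemma~\ref{lemma:installable_reach_2} places enough correct processes at $v$, and then Lemma~\ref{lemma:termination_active} forces convergence on the appropriate tail replacing $v$, producing the install message $v \to v_1$; walking this reasoning along $v_1, \dots, v_{k-1}$ yields the desired install for $u \to \omega$. The target set of this R-multicast is $u.\mathit{members} \cup \omega.\mathit{members}$, which contains $p$. Since the dbrb interface permits dbrb-leave only after a process has joined, $p$ cannot yet have invoked dbrb-leave, and so BRB validity delivers this install message to $p$.

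At the moment $p$ R-delivers $\operation{install}{\omega, seq, u}$, its local $\mathit{cv}$ satisfies $\mathit{cv} \subsetneq \omega$: because $p$ has not yet completed dbrb-join and never leaves, $p \notin \mathit{cv}.\mathit{members}$, while $p \in \omega$; applying Lemma~\ref{lemma:valid_comparable} to the valid views $\mathit{cv}$ and $\omega$ then forces $\mathit{cv} \subsetneq \omega$, since $\omega \subseteq \mathit{cv}$ would require $p \in \mathit{cv}$. Hence the handler at Line~\ref{line:install_seq_delivery} passes the checks at Line~\ref{line:omega_more} and Line~\ref{line:p_in}, reaches Line~\ref{line:set_cv}, and since $p \notin u.\mathit{members}$, triggers joinComplete at Line~\ref{line:join_completed}.

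The main obstacle will be justifying the invariant ``$p \notin \mathit{cv}$ while $p$ has not yet completed dbrb-join''. Updates to $\mathit{cv}$ come both from the install handler at Line~\ref{line:set_cv} (where the invariant follows inductively, using the observation about $u$ made above) and from view\_discovery at Line~\ref{line:start_view_discovery_1}; ruling out the possibility that view\_discovery hands $p$ a view already containing $p$ requires appealing to the design of the View Discovery subprotocol detailed in Section~\ref{subsection:view_discovery}, together with Lemma~\ref{lemma:join_safety} to rule out any valid view containing $p$ that the protocol has not yet led $p$ through.
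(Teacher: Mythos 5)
Your proposal is correct and follows essentially the same route as the paper's (very terse) proof: both extract the chain $v_1 \to \cdots \to v_n \to v'$ from \Cref{definition:leads_to}, locate a view in it containing $p$, and invoke the propagation machinery of \Cref{lemma:installable_reach_1} to argue that the corresponding \msg{install} message reaches $p$ and triggers \textit{joinComplete}. You are merely more explicit than the paper in two useful places: picking the \emph{least} index $k$ with $p \in v_k$ so that the guard $p \notin v.\mathit{members}$ at \Cref{line:join_completed} is satisfied, and flagging the invariant $p \notin \mathit{cv}$ needed for the check at \Cref{line:omega_more} — details the paper's one-line proof glosses over.
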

\begin{proof}
By \Cref{definition:leads_to}, a sequence of views $seq = v_1 \to ... \to v_n \to v', (n \geq 0)$ is converged on to replace $v$.
Moreover, $\exists v_p \in seq: p \in v_p$.
Hence, $p$ eventually receives an \msg{install} message for $v_p$ and joins the system (similarly to the proof of \Cref{lemma:installable_reach_1}).
\end{proof}

\begin{lemma} \label{lemma:installable_leaves}
Suppose that the current state of the system is $s \in \mathcal{S}$.
Moreover, suppose that some correct process $p$ has $cv_p = v$ at current time $t$, where $v \in s$ and $v \ni p$ is an installable view in $s$.
Suppose that $v$ leads to $v' \not\ni p$ in $s$.
% and that some correct process $q \in v'$ set $cv_q = v'$ at time $t' \leq t$.
Then, $p$ eventually executes \Cref{line:finish_wait_updated}.
% i.e., collects a quorum of \msg{view-updated} messages associated with some view.
% leaves the system, i.e., executes \Cref{line:left} of \Cref{algorithm:install}.
\end{lemma}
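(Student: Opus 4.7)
The plan is to mirror the arguments of \Cref{lemma:installable_reach_1,lemma:eventually_joins}. Since $v$ leads to $v'$ in $s$, \Cref{definition:leads_to} supplies a sequence $seq = v_1 \to \dots \to v_n \to v'$ converged on to replace $v$ together with suffixes $seq_i = v_{i+1} \to \dots \to v_n \to v'$ converged on to replace $v_i$ for $1 \le i < n$, and $seq_n = v'$ converged on to replace $v_n$. Write $V_0 \defeq v$, $V_i \defeq v_i$ for $1 \le i \le n$, and $V_{n+1} \defeq v'$; by \Cref{lemma:bigger} we have $V_0 \subset V_1 \subset \dots \subset V_{n+1}$. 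Since $p \in V_0$ and $p \notin V_{n+1}$, there is a smallest index $k \in \{1,\dots,n+1\}$ with $p \notin V_k$, and $p \in V_j$ for all $0 \le j \le k-1$.

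I would then show by induction on $i \in \{0, \dots, k-2\}$ that $p$ eventually R-delivers the install message for $V_i \to V_{i+1}$ and sets $\mathit{cv}_p = V_{i+1}$ at \Cref{line:set_cv}. \Cref{lemma:installable_reach_2} places every correct member of $V_i$ at current view $V_i$, so reliable communication delivers to at least one such correct member the $V_i.q$ matching \msg{converged} messages witnessing the convergence of $seq_i$ on $V_i$, after which it R-multicasts the corresponding install at \Cref{line:send_reliable}. By the agreement and validity properties of \brb{}, $p$ R-delivers this install (unless $p$ has already triggered \emph{leaveComplete}, in which case $p$ executed \Cref{line:finish_wait_updated} on that earlier path and the lemma holds). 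While processing the install, the guards at \Cref{line:stop_processing,line:omega_more} both fire since $p \in V_i.\mathit{members}$ and $\mathit{cv}_p = V_i \subset V_{i+1}$; once $p$ gathers $V_i.q$ \msg{state-update} messages at \Cref{line:state-update}, the check at \Cref{line:p_in} succeeds because $i+1 \le k-1$ implies $p \in V_{i+1}$, and $p$ updates $\mathit{cv}_p$ to $V_{i+1}$.

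At iteration $i = k-1$ (which is the base case when $k=1$), the same argument R-delivers the install for $V_{k-1} \to V_k$ to $p$ while $\mathit{cv}_p = V_{k-1}$; the guards at \Cref{line:stop_processing,line:omega_more} fire as before, but now the check at \Cref{line:p_in} fails since $p \notin V_k.\mathit{members}$, so control falls into the \emph{else} branch whose first statement is \Cref{line:finish_wait_updated}, as required.

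The main obstacle will be stitching the liveness pieces together at every induction step: I need some correct member of $V_i$ to remain in the system long enough to R-multicast the install for $V_i \to V_{i+1}$, and $p$ to actually receive $V_i.q$ \msg{state-update} messages at \Cref{line:state-update}. Both follow from \Cref{assumption:quorums} (each valid view carries a quorum of correct members), from the observation that a correct process in $V_i$ R-multicasts its \msg{state-update} at \Cref{line:send_state_update} before ever triggering \emph{leaveComplete}, and from the agreement property of \brb{} delivering these messages to $p$ as long as $p$ has not itself already left.
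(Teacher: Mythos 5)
Your proposal is correct and follows essentially the same route as the paper: both arguments observe that the converged-on sequence $seq$ leading from $v$ to $v'$ must contain a first view $V_k$ with $p \notin V_k$, show that $p$ eventually R-delivers the corresponding \msg{install} message after walking through the preceding views, and conclude that the failed membership check at \Cref{line:p_in} drops $p$ into the branch beginning at \Cref{line:finish_wait_updated}, with the multiple-converged-sequences complication deferred to the machinery of \Cref{lemma:installable_reach_1} in both cases. One small citation slip: \Cref{lemma:installable_reach_2} only covers installable views, so for the intermediate (auxiliary, non-installable) views $V_1,\dots,V_{k-1}$ you need the path argument inside the proof of \Cref{lemma:installable_reach_1} rather than that lemma's statement, but this does not change the substance.
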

\begin{proof}
By \Cref{definition:leads_to}, a sequence of views $seq = v_1 \to ... \to v_n \to v', (n \geq 0)$ is converged on to replace $v$.
Moreover, $\exists v_p \in seq: p \notin v_p$.
Suppose that $seq$ is the only sequence of views ever converged on to replace $v$.
Hence, $p$ eventually receives an \msg{install} message for a view $v_p \not\ni p$ and then executes \Cref{line:finish_wait_updated}.
% Hence, all correct processes from $v_p$ disseminate the \msg{view-updated} message to all processes in the universe (\Cref{line:send_view_updated}).
% Moreover, $p$ eventually waits for these messages since there is a path from $v$ to the first view ``before'' $v_p$ in $seq$.
Hence, the lemma holds in this case.

Suppose now that $seq$ is not the only sequence of views ever converged on to replace $v$.
We can use the similar argument as in the proof of \Cref{lemma:installable_reach_1} in this case.
% Hence, $p$ eventually receives ``enough'' \msg{view-updated} messages and the lemma holds.
% By \Cref{definition:leads_to}, a sequence of views $seq = v_1 \to ... \to v_n \to v', (n \geq 0)$ is converged on to replace $v$.
% Moreover, $\exists v_p \in seq: p \notin v_p$.
% Hence, $p$ eventually receives an \msg{install} message for $v_p$ and executes \Cref{line:finish_wait_updated} since it receives \msg{view-updated} messages (similarly to the proof of \Cref{lemma:installable_reach_2}).
\end{proof}
\subsection{View Discovery Protocol}
\label{subsection:view_discovery}

\vspace*{-2mm}
The \emph{View Discovery} protocol is of the utmost importance for \name. 
Informally, there are two crucial roles View Discovery protocol plays in \name:
\begin{compactitem}
    \item A correct process $p$ that wants to join the system must be able to discover the current constitution of the system in order to join.
    Process $p$ achieves this using the View Discovery protocol.
    \item A correct process $p$ that joined the system must be able to discover valid views.
    % which views are properly ``produced" by \name during its execution.
    Otherwise, $p$ can be tricked into delivering some invalid message $m$ ``planted'' by a malicious process, i.e., the consistency of \name can be violated.
    Moreover, totality and validity properties can be violated.
    Again, the View Discovery protocol provides the key part of the solution for these problems.
    % ensures that $p$ can not be tricked.
\end{compactitem}

% A correct process $q$ executes the View Discovery protocol until $q$ joins the system (i.e., until \dbrbJoin{} operation completes).
% Moreover, process $q$ also executes the View Discovery protocol repeatedly when it executes \Cref{line:discover} until it triggers the \textit{leaveComplete} event.

% In the rest of this section (\Cref{sec:proof}), we do not explicitly mention a sequence of views $\Psi$.
% However, we assume that 

% Once a correct process $p$ learns that the sequence 
% % that is a participant in the system (i.e., $p$ completed \dbrbJoin{} operation and did not complete \dbrbLeave{} operation) 
% stores an alternating sequence of views and messages denoted by $his_p$.
% We say that $his_p$ represents a \emph{view history} of $p$.

\begin{definition} [Valid view history]
An alternating sequence of views and messages $his = v_0\, m_0\, v_1\, m_1\, ... \,v_{n - 1}\, m_{n - 1}\, v_n$ is said to be a \textbf{valid view history} if:
\begin{compactenum}
    \item View $v_0$ is the initial view of the system; and
    \item For each message $m_i (0 \leq i \leq n - 1)$, $m_i$ is an \msg{install} message with the following parameters: i) $\omega = v_{i+1}$; ii) $seq = v_{i + 1} \to ...$; iii) $v = v_i$; and iv) $m_i$ carries $v_i.q$ signed appropriate \msg{converged} messages for $seq$.
\end{compactenum}
Moreover, we say that $his$ is a valid view history up to a view $v_i$, for every $0 \leq i \leq n$.
\end{definition}

Let us now give an overview of the protocol:
Whenever a correct process starts trusting a sequence of views\footnote{Note that this implies that the process has not left the system.} (i.e., obtains a valid view history up to some view $v$), it disseminates that information (i.e., the valid view history) to all processes in the universe.
A correct process $q$ receives view histories sent by other processes, checks them and if the verification is successful, it starts trusting the view history, i.e., it knows that the most recent view of the view history is valid.

When a correct process invokes the \dbrbJoin{} operation, it floods the network asking for valid view histories of the processes in order to obtain the most recent view of the system.
Once it receives an answer, it can verify whether the received view history is valid.
Moreover, whenever a correct process executes \Cref{algorithm:install}, it does so because it has received an \msg{install} message and ``knows'' that the view $v$ from the \msg{install} message is valid.
Hence, whenever a correct process executes \Cref{algorithm:install}, it disseminates to all processes a valid view history up to view $v$.
The reason is that processes that are members of $\omega$ (from the \msg{install} message), but are not members of $v$, might not ``know'' that $v$ is indeed a valid view.
In this way, they eventually receive the valid view history up to $v$ and can verify that $v$ is indeed valid.

In summary, the View Discovery protocol simply ensures that correct processes are able to reason about the validity of a view (along with allowing correct processes to discover the current membership of the system).
Another simple principle for implementing the ``validity'' part of the protocol is to simply associate a valid view history with a view.
For example, if a message is associated with some view $v$, view $v$ is associated with a valid view history up to $v$.
Otherwise, the message is considered invalid.
% Process $q$ asks any process in the system for its view history and
% then asks for the histories of the included views' members recursively to learn view histories of other processes.
% A correct process $p$ sends its view history to $q$ only if $p$ did not leave the system, i.e., processes that left the system do not participate in any way in the View Discovery protocol (thus, not violating the non-triviality property of \name).

% \begin{definition} [Aware of view]
% Suppose that some process $p$ has a valid view history $seq$.
% We say that $p$ is \textbf{aware of view} $v$ if $v \in seq$.
% \end{definition}

\begin{lemma} \label{lemma:valid_view_history_up_to_v}
Suppose that the current state of the system is $s \in \mathcal{S}$.
Moreover, suppose that some correct process $p$ executed \Cref{algorithm:install} since $p$ has received a message $m = \text{\operation{install}{v', seq, v}}$, where $m$ carries $v.q$ appropriate \msg{converged} messages and views $v, v' \in s$.
Then, $p$ has a valid view history up to view $v$.
\end{lemma}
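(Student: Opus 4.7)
The plan is to prove this by strong induction on the position of $v$ in the sequence of installable views of $s$, which is well-defined by \Cref{lemma:unique_installable_views}.

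For the base case, suppose $v = v_0$, the initial view of the system. By \Cref{assumption:init}, every correct process (including $p$) knows $v_0$ at time $0$, so the singleton sequence $his = v_0$ is trivially a valid view history up to $v = v_0$ held by $p$, and the claim holds.

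For the inductive step, assume $v \neq v_0$. Since $v \in s$ is a valid view and $p$ processed the message $m = \text{install}(v', seq, v)$ (which, per the View Discovery protocol, requires $p$ to have evidence that $v$ itself is a valid view), there must have been an earlier event in which $v$ was installed. Concretely, by the definition of a valid view together with the pseudocode of \Cref{algorithm:dynamic,algorithm:install}, there exists an \msg{install} message $m_{\text{prev}} = \text{install}(v, seq_{\text{prev}}, v_{\text{prev}})$ carrying $v_{\text{prev}}.q$ appropriate signed \msg{converged} messages for a sequence $seq_{\text{prev}} = v \to \ldots$, where $v_{\text{prev}} \in s$ is a valid view that precedes $v$ in the sequence of installable views. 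By the inductive hypothesis, whichever correct process first executed \Cref{algorithm:install} for $m_{\text{prev}}$ did so while holding a valid view history $his_{\text{prev}}$ up to $v_{\text{prev}}$. By the View Discovery protocol, that process disseminated $his_{\text{prev}}$ together with $m_{\text{prev}}$ to the rest of the universe.

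Hence, since $p$ was able to verify that $v$ is a valid view before processing $m$, $p$ must have received (directly or by relay via the weak broadcast primitive) both $his_{\text{prev}}$ and $m_{\text{prev}}$. Concatenating them yields the sequence $his = his_{\text{prev}} \cdot m_{\text{prev}} \cdot v$, which satisfies all conditions of a valid view history up to $v$: its first view is $v_0$; each \msg{install} message in the chain has $\omega$ equal to the next view, a sequence starting with that view, the preceding view as its $v$-parameter, and a quorum of signed \msg{converged} messages. Thus $p$ holds a valid view history up to $v$, completing the induction.

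The main obstacle in fleshing this out rigorously is the semi-formal treatment of the View Discovery protocol: one has to pin down exactly what it means for $p$ to ``process'' $m$ (i.e., establish that correct processes only act on install messages whose $v$-parameter they have already certified as valid through a view history), and show that the relevant view history is either already in $p$'s possession or arrives alongside $m$ via the piggybacking/gossip mechanism described in \Cref{subsection:view_discovery}. The inductive step itself is then just a matter of checking that $his_{\text{prev}}$ extended by $m_{\text{prev}}$ meets every clause of the definition of a valid view history.
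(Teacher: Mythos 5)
Your proposal reaches the right conclusion, but it takes a much heavier route than the paper, and the extra machinery partly obscures where the real content lies. The paper's proof is a single sentence: by the specification of the View Discovery protocol, a correct process executes \Cref{algorithm:install} for an \msg{install} message associated with view $v$ \emph{only if} it already holds a valid view history up to $v$ (a message associated with a view for which no such history is available is simply treated as invalid). The lemma is therefore an immediate consequence of this guard. Your induction ultimately bottoms out in exactly the same appeal --- the step ``since $p$ was able to verify that $v$ is a valid view before processing $m$, $p$ must have received $his_{\text{prev}}$ and $m_{\text{prev}}$'' is just a restatement of that guard --- so the surrounding induction does not buy additional rigor; what it does offer is an explicit account of \emph{how} such a history can come to exist (by chaining the \msg{install} messages back to $v_0$), which is closer to the constructive content of \Cref{subsection:view_discovery} than to the proof of this particular lemma. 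One concrete slip: you induct on the position of $v$ in the sequence of \emph{installable} views (\Cref{lemma:unique_installable_views}), but the view $v$ appearing as the replaced-view parameter of an \msg{install} message need not be installable --- sequences are also converged on to replace auxiliary views (see the second clause of \Cref{property:installable}), so some instances of the lemma fall outside your induction as framed. This is repairable by inducting over the full sequence of valid views (\Cref{lemma:unique_valid_views}), but as written the case analysis is incomplete.
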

\begin{proof}
A correct process executes \Cref{algorithm:install} upon receiving an \msg{install} message associated with view $v$ only if it has a valid view history up to $v$.
Therefore, the lemma holds.
\end{proof}

\begin{lemma} \label{lemma:vd_first}
Suppose that the current state of the system is $s \in \mathcal{S}$.
Moreover, suppose that some correct process $p$ executes \Cref{line:set_cv}, which assigns view $v' \ni p$ as $p$'s current view and $v' \in s$.
Then, $p$ has a valid view history up to $v'$.
\end{lemma}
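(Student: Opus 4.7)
The plan is to leverage Lemma~\ref{lemma:valid_view_history_up_to_v} and simply append one more step (the triggering \msg{install} message together with the new view) to the view history that $p$ already possesses.

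First, I would observe where \Cref{line:set_cv} sits in the pseudocode: it is executed only inside the block guarded by the R-delivery of a message $m = \text{\operation{install}{\omega, seq, v}}$ at \Cref{line:install_seq_delivery}, under the additional condition $\mathit{cv} \subset \omega$ and $p \in \omega.\mathit{members}$; the assignment $\mathit{cv} = \omega$ means that $v' = \omega$. In particular, $p$ has just reacted to a specific \msg{install} message with parameters of the required shape.

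Next, by Lemma~\ref{lemma:valid_view_history_up_to_v}, the very fact that $p$ has reached the body of \Cref{algorithm:install} on account of $m$ implies that $p$ already holds a valid view history $his = v_0\, m_0\, v_1\, m_1\, \ldots\, v_{n-1}\, m_{n-1}\, v_n$ with $v_n = v$. It then remains to argue that appending $m$ and $v' = \omega$ to $his$ yields a valid view history up to $v'$. For this I would inspect the two clauses of the definition of a valid view history: clause (1) is inherited from $his$; for clause (2) I need $m$ to be an \msg{install} message whose $\omega$ parameter equals $v'$, whose $seq$ begins with $v'$, whose ``from'' view equals $v_n = v$, and which carries $v.q$ appropriate signed \msg{converged} messages. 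The first three are immediate from the shape of $m$ and the definition of $v'$; the last one is exactly the predicate $\mathit{verify\_converged\_messages}(m) = \top$ that the BRB layer (see \Cref{line:conditions_brb} of \Cref{algorithm:brb}) checks before handing $m$ up via R-delivery, so a correct process never enters \Cref{algorithm:install} without this guarantee. Hence $his \,\cdot\, m \,\cdot\, v'$ is a valid view history up to $v'$.

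The only delicate point, and the one I would be most careful about, is making the ``bookkeeping'' link between the message $m$ that triggers \Cref{line:set_cv} and the message used to extend the view history: one must argue that the same $m$ that is R-delivered is both the message checked by BRB (which certifies the quorum of \msg{converged} messages and the least-recent property of $\omega$ in $seq$) and the message whose parameters $(\omega, seq, v)$ match the required extension format. This is straightforward from the code, but it is the step that formally ties the View Discovery invariant to the actions of \Cref{algorithm:install}, and is what makes the induction on history length go through silently rather than requiring an explicit inductive argument here.
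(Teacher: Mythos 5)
Your proposal is correct and follows essentially the same route as the paper's proof: invoke Lemma~\ref{lemma:valid_view_history_up_to_v} to obtain a valid view history $his_v$ up to the view $v$ of the triggering \msg{install} message $m$, and then observe that the concatenation $his_v \,\|\, m \,\|\, v'$ is a valid view history up to $v' = \omega$. Your additional care in checking clause (2) of the definition against the BRB-level verification of the \msg{converged} quorum is a detail the paper leaves implicit, but the argument is the same.
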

\begin{proof}
A correct process $p$ executes \Cref{line:set_cv} since it has received a corresponding \msg{install} message with $\omega = v'$.
Suppose that the aforementioned \msg{install} message is associated with some view $v \in s$.
By \Cref{lemma:valid_view_history_up_to_v}, $p$ has a valid view history up to $v$.
Let us denote $p$'s valid view history up to $v$ by $his_v$.

Given the fact that $p$ has received the \msg{install} message associated with $v$, we can conclude that $p$ also has a valid view history up to $v'$ ($his_v || m || v'$, where $m$ is the received \msg{install} message and $||$ represents the concatenation) and the statement of the lemma follows.
\end{proof}

\begin{lemma} \label{lemma:history_installable}
Suppose that the current state of the system is $s \in \mathcal{S}$.
Moreover, suppose that some correct process $p$ has a valid view history $his_{v'}$ up to some view $v' \in s$.
Suppose that $v'$ is installable in $s$ and $v'$ is a view different from the initial view of the system.
Then, $p$ has a valid view history up to some installable view $v \in s$, where $v$ leads to $v'$ in $s$.
\end{lemma}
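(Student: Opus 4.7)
The plan is to apply Lemma~\ref{lemma:installable_from_one} (with uniqueness via Lemma~\ref{lemma:from_only_one}) to obtain the unique installable $v \in s$ that leads to $v'$, and then to show that $v$ itself already occurs among the views listed in $p$'s valid view history $his_{v'}$; once that is done, the corresponding prefix of $his_{v'}$ ending at $v$ is already a valid view history up to $v$, since each install message in a prefix still carries its own quorum of signed converged messages and thus prefixes of valid view histories are themselves valid view histories. Write $his_{v'} = v_0\, m_0\, v_1\, m_1\, \ldots\, m_{n-1}\, v_n$ with $v_n = v'$. By Lemma~\ref{lemma:bigger} applied to every $m_i$, one obtains a strictly increasing chain of valid views $v_0 \subsetneq v_1 \subsetneq \cdots \subsetneq v_n$. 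Let $i^\ast$ be the largest index strictly less than $n$ such that $v_{i^\ast}$ is installable in $s$; such an $i^\ast$ exists because $v_0$ is installable by default. My claim is that $v_{i^\ast} = v$, which immediately places $v$ in $his_{v'}$.

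The core step is to prove that $v_{i^\ast}$ leads to $v_n$. Since $v_{i^\ast}$ is installable, by Property~\ref{property:installable} Part~1 together with Lemma~\ref{lemma:only_one_view_leads_to}, it has a unique leads-to successor $w$ witnessed by a canonical sequence $x_1 \to \cdots \to x_k \to w$ (with each $x_j$ non-installable) converged on to replace $v_{i^\ast}$. I will apply Lemma~\ref{lemma:weak_accuracy} to this canonical sequence and to the sequence $seq_{i^\ast}$ attested by $m_{i^\ast}$ (which is converged on to replace $v_{i^\ast}$ and has $v_{i^\ast+1}$ as its least recent view). Their comparability as sets of views, combined with the choice of $i^\ast$ (so that $v_{i^\ast+1}, \ldots, v_{n-1}$ are non-installable while $v_n$ is installable) and with the uniqueness of the installable view to which each non-installable view is auxiliary (Lemma~\ref{lemma:one_auxiliary_view}), forces the canonical chain to coincide with $v_{i^\ast+1} \to \cdots \to v_{n-1} \to v_n$, yielding $w = v_n$ and hence $v_{i^\ast} = v$ by Lemma~\ref{lemma:from_only_one}.

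The hard part is precisely this alignment, i.e., ruling out the ``gap'' possibilities $w \neq v_n$. If $w \subsetneq v_n$, then $w$ would be an installable view strictly between $v_{i^\ast}$ and $v_n$ in the linear installable sub-chain (Lemma~\ref{lemma:unique_installable_views}); but $v_n$ is also installable, contradicting that $w$ is the immediate leads-to successor of $v_{i^\ast}$ (no other installable can lie strictly between them). If $w \supsetneq v_n$, then $v_n$ itself is an installable view strictly between $v_{i^\ast}$ and $w$ in the same sub-chain, and Lemma~\ref{lemma:from_only_one} applied to $w$ rules this out. I expect these set-comparability arguments, together with the careful bookkeeping of which views can or cannot lie in $seq_{i^\ast}$ (given that all its elements are $\supseteq v_{i^\ast+1}$), to be the trickiest step of the proof.
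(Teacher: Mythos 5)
Your overall strategy is the same as the paper's: identify the unique installable $v$ that leads to $v'$ (\Cref{lemma:installable_from_one,lemma:from_only_one}) and then argue that $v$ already occurs among the views of $his_{v'}$, so that $his_{v'}$ is itself a valid view history up to $v$. The paper's own proof asserts the containment $v \in his_{v'}$ in a single sentence; you rightly try to actually prove it, by taking $v_{i^\ast}$, the most recent installable view occurring in the history strictly before $v_n = v'$, and showing that $v_{i^\ast}$ leads to $v_n$.

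That alignment step is where your argument has a genuine gap. A single application of \Cref{lemma:weak_accuracy} to the canonical sequence $x_1 \to \cdots \to x_k \to w$ and to $seq_{i^\ast}$ only yields $v_{i^\ast+1} \subseteq w$; it does not force the canonical chain to coincide with $v_{i^\ast+1} \to \cdots \to v_n$, because for $i^\ast+1 < j \le n$ the view $v_j$ is attested by a sequence converged on to replace $v_{j-1}$, not to replace $v_{i^\ast}$, so \Cref{lemma:weak_accuracy} cannot be applied to compare it with the canonical sequence directly. Moreover, your dismissal of the case $w \subsetneq v_n$ is a non sequitur: there $v_n$ lies \emph{above} $w$, not strictly between $v_{i^\ast}$ and $w$, so nothing contradicts $w$ being the immediate leads-to successor of $v_{i^\ast}$. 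To close the gap you need to chain the intermediate views back to $v_{i^\ast}$: show inductively, using \Cref{lemma:converged_then_accepted} together with \Cref{property:accepted_if_converged} and \Cref{lemma:one_auxiliary_view}, that each $v_j$ with $i^\ast < j < n$ is an auxiliary view for $v_{i^\ast}$ and that some sequence of the form $\cdots \to v_j \to v_{j+1} \to \cdots$ is converged on to replace $v_{i^\ast}$; the last step of this induction exhibits a sequence converged on to replace $v_{i^\ast}$ whose intermediate views are non-installable and whose first installable element is $v_n$, whence $v_{i^\ast}$ leads to $v_n$ and $v_{i^\ast} = v$ by \Cref{lemma:from_only_one}. (The paper's proof elides this entirely, so your instinct to supply the argument is sound; it is the execution of the case analysis that is incomplete.)
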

\begin{proof}
By \Cref{lemma:unique_installable_views}, installable views in $s$ form a sequence of views.
Moreover, one view can lead to at most one other view (\Cref{lemma:only_one_view_leads_to}).
Lastly, there is exactly one view $v$ that leads to $v'$ (\Cref{lemma:from_only_one}).

Since $p$ has a valid view history up to an installable view $v'$ (different from the initial view of the system), only view that leads to $v'$ in $s$ is $v$ and \Cref{definition:leads_to}, we conclude that $v \in his_{v'}$ and hence the lemma holds.
\end{proof}

\begin{lemma} \label{lemma:history_installable_2}
Suppose that the current state of the system is $s \in \mathcal{S}$.
Moreover, suppose that some correct process $p$ has a valid view history $his_{v'}$ up to some view $v' \in s$.
Suppose that $v'$ is installable in $s$, $v'$ is a view different from the initial view of the system and some view $v \in s$ leads to $v'$ in $s$.
Let the system transit into the state $s' \in \mathcal{S}$, where some view $v'' \in s$, $v'' \neq v$, leads to $v'$.
Then, $p$ has a valid view history up to $v''$.
\end{lemma}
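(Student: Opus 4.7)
The plan is to reduce this lemma to an application of \Cref{lemma:history_installable} in the new state $s'$ rather than the old state $s$. The crucial observation is that the notion of ``valid view history'' defined in this subsection is state-independent: it only constrains a chain of install messages and their accompanying signed \textsc{converged} messages that $p$ holds locally, together with the quorum thresholds $v_i.q$ (which depend solely on the views themselves). Therefore, the sequence $his_{v'}$ that $p$ possesses in $s$ remains, syntactically, a valid view history in state $s'$ as well.

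Next, I would check that the hypotheses of \Cref{lemma:history_installable} are satisfied with respect to the state $s'$. Inspecting the four cases of the transition set $\mathcal{D}$, one sees that a transition can only add views to the state, and can only flip $\alpha$ from $\bot$ to $\top$, never the other way; hence $v' \in s'$ and $v'$ remains installable in $s'$. The hypothesis that $v'$ differs from the initial view of the system carries over unchanged from $s$ to $s'$.

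Applying \Cref{lemma:history_installable} in state $s'$ then yields that $p$ has a valid view history up to some installable view $v^{\ast} \in s'$ such that $v^{\ast}$ leads to $v'$ in $s'$. By the hypothesis of the current lemma, $v''$ also leads to $v'$ in $s'$. By \Cref{lemma:from_only_one} applied in $s'$, the installable view that leads to a given installable view is unique, so $v^{\ast} = v''$, and we conclude that $p$ has a valid view history up to $v''$.

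The main delicate point is convincing oneself that the single transition event $s \to s'$ really does change which view leads to $v'$. Since the only way for $v''$ (with $v'' \neq v$) to newly lead to $v'$ is for $v''$ to become installable via a singleton convergence event of the form $\langle v_{con}, \{v''\}\rangle$, and since by \Cref{lemma:unique_installable_views} the installable views form a sequence, $v''$ must be inserted strictly between $v$ and $v'$ in that sequence; once this picture is accepted, the proof reduces to the essentially mechanical combination of \Cref{lemma:history_installable} and \Cref{lemma:from_only_one}.
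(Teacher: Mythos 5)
Your proof is correct, but it takes a genuinely different route from the paper's. The paper argues directly: from $v$ leads to $v'$ in $s$ and $v''$ leads to $v'$ in $s'$ it deduces that $v$ leads to $v''$ in $s'$, extracts the two sequences converged on to replace $v$ (one ending at $v'$, one ending at $v''$), invokes \Cref{lemma:weak_accuracy} to make them comparable, rules out $seq' \subset seq''$ by the fact that $v'' \subset v'$, and concludes $v'' \in seq'$, so that $v''$ already sits inside the chain of \msg{install} messages recorded in $his_{v'}$. You instead observe that the hypotheses of \Cref{lemma:history_installable} survive the transition --- the history is a static local object, states only grow, and installability is monotone under the transition relation (a point you correctly verify against the four cases of $\mathcal{D}$) --- and then apply that lemma in $s'$ together with the uniqueness guarantee of \Cref{lemma:from_only_one}. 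Your version is more modular and avoids re-deriving the comparability argument, at the cost of leaning entirely on \Cref{lemma:history_installable}, whose own (terse) proof is where the \Cref{lemma:weak_accuracy}-style work implicitly happens; the paper's version re-does that work explicitly and is therefore self-contained. Your closing ``delicate point'' about $v''$ becoming installable via a singleton convergence event is accurate intuition but is not actually needed once the reduction to \Cref{lemma:history_installable} and \Cref{lemma:from_only_one} is in place.
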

\begin{proof}
By \Cref{definition:leads_to}, a sequence of views $seq' = v_1 \to ... \to v_n \to v', (n \geq 1)$ is converged on to replace $v'$.
In the state $s'$ of the system, a view $v''$ leads to $v'$. 
This implies that $v'' \supset v$.
In order to show that the lemma holds, it suffices to show that $v'' \in seq'$.

We deduce that $v$ leads to $v''$ in $s'$ (since $v$ leads to $v'$ in $s$ and $v''$ leads to $v'$ in $s'$).
By \Cref{definition:leads_to}, a sequence of views $seq'' = v_1' \to ... \to v_m \to v'', (m \geq 1)$ is converged on to replace $v$.
By \Cref{lemma:weak_accuracy} and the fact that $v' \neq v''$, either $seq' \subset seq''$ or $seq' \supset seq''$.
Let us consider both cases:
\begin{compactitem}
    \item $seq' \subset seq''$: Hence, $v' \in seq''$. However, this is a contradiction with the fact that $v'' \subset v'$.
    \item $seq' \supset seq''$: In this case, we have that $v'' \in seq'$.
\end{compactitem}
Since $v'' \in seq'$ and \Cref{lemma:valid_view_history_up_to_v}, the lemma holds.
\end{proof}

\begin{lemma} \label{lemma:history_auxiliary}
Suppose that the current state of the system is $s \in \mathcal{S}$.
Suppose that some correct process $p$ has a valid view history $his_{v'}$ up to some view $v' \in s$.
Moreover, suppose that $v'$ is not an installable view in $s$, i.e., it is an auxiliary view for some installable view $v \in s$ in $s$.
Then, $p$ has a valid view history up to $v$.
\end{lemma}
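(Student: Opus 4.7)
I would prove the lemma by induction on the length $k$ of the valid view history $his_{v'} = v_0\, m_0\, w_1\, m_1\, \ldots\, m_{k-1}\, v'$. The base case $k = 0$ is vacuous since it forces $v' = v_0$, which is installable in $s$ by the initial condition of the state machine, contradicting the hypothesis. For the inductive step, set $u = w_{k-1}$: the install message $m_{k-1}$ certifies that a sequence $seq^{his}$ with least recent view $v'$ is converged on to replace $u$ (and $|seq^{his}| \geq 2$, since otherwise the state-machine transition would make $v'$ installable in $s$). Meanwhile, since $v'$ is auxiliary for $v$, \Cref{definition:auxiliary_view} provides a sequence $seq^* = u_1 \to \ldots \to u_p \to v' \to \ldots$ converged on to replace $v$, where every view $u_j$ preceding $v'$ is non-installable in $s$. \Cref{lemma:valid_comparable} makes $u$ and $v$ comparable, so the argument splits into three cases.

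If $u = v$, then $v$ already appears in $his_{v'}$. If $v \subsetneq u$, I distinguish two sub-cases on the status of $u$ in $s$. When $u$ is auxiliary in $s$, \Cref{lemma:converged_then_accepted} followed by \Cref{property:accepted_if_converged} extends $seq^{his}$ into a longer sequence $seq^{big}$ converged on to replace the unique installable $v^\star$ for which $u$ is auxiliary, with every view preceding $v'$ in $seq^{big}$ (the non-installable prefix together with $u$ itself) non-installable. Hence $v'$ is auxiliary for $v^\star$; uniqueness (\Cref{lemma:one_auxiliary_view}) forces $v^\star = v$, so $u$ is itself auxiliary for $v$, and the inductive hypothesis applied to the truncation $v_0\, m_0\, \ldots\, u$ delivers a valid view history up to $v$. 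When $u$ is installable, \Cref{definition:leads_to} (together with the installable-chain structure) yields a sequence $seq^{lead}$ replacing $v$ and ending at the next installable view $v^{next} \subseteq u$; \Cref{lemma:weak_accuracy} then compares $seq^*$ and $seq^{lead}$ and either inclusion produces a contradiction (the installable $v^{next}$ would sit among the non-installable views preceding $v'$ in $seq^*$, or $v' \subseteq v^{next} \subseteq u$ would contradict $u \subsetneq v'$).

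The remaining case $u \subsetneq v$ is ruled out by symmetric reasoning. If $u$ were auxiliary in $s$, the Property~\ref{property:accepted_if_converged}/Lemma~\ref{lemma:one_auxiliary_view} argument from above would force $u$ to be auxiliary for $v$, contradicting $u \subsetneq v$ via \Cref{lemma:auxiliary_views_compare_1}. If $u$ were installable, \Cref{lemma:unique_installable_views} places $u$ strictly before $v$ in the installable chain, so $u$ leads to some $u^{next} \subseteq v$ via a converged sequence $seq^{u\text{-}lead}$ replacing $u$; applying \Cref{lemma:weak_accuracy} to $seq^{u\text{-}lead}$ and $seq^{his}$ contradicts \Cref{lemma:bigger}, which forces every view of $seq^{his}$ to contain $v' \supsetneq v$.

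The main obstacle will be executing the weak-accuracy comparisons cleanly in the two impossibility sub-cases: the structural fact that views strictly preceding $v'$ in $seq^*$ must all be non-installable, combined with \Cref{lemma:bigger} (every view of a converged sequence strictly contains the view it replaces), is what rules out every alignment except the one where $u$ is auxiliary for exactly $v$. That alignment is the single productive case on which the induction closes, so the proof ultimately hinges on translating ``$u$ auxiliary'' into ``$u$ auxiliary for exactly $v$'' via \Cref{property:accepted_if_converged} and \Cref{lemma:one_auxiliary_view}.
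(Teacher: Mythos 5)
Your proposal is correct and rests on the same underlying idea as the paper's proof: the uniqueness of the installable view for which $v'$ is auxiliary (\Cref{lemma:one_auxiliary_view}) forces $v$ to appear in $his_{v'}$. The paper asserts this in two sentences directly from \Cref{lemma:one_auxiliary_view} and \Cref{definition:auxiliary_view}, whereas your induction on the history length --- with the case analysis on the history's predecessor $u$ of $v'$ and the weak-accuracy/\Cref{lemma:bigger} arguments ruling out every alignment except $u = v$ or ``$u$ auxiliary for exactly $v$'' --- is precisely the detail that the paper's assertion elides.
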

\begin{proof}
By \Cref{lemma:one_auxiliary_view}, exactly one view $v$ can exist such that $v'$ is an auxiliary view for $v$.
Since $p$ has a valid view history up to $v'$ and only view $v$ exists such that $v'$ is an auxiliary view for $v$ in $s$ and \Cref{definition:auxiliary_view}, then $v \in his_{v'}$ and we can conclude that the lemma holds.
\end{proof}

\begin{lemma} \label{lemma:history_auxiliary_2}
Suppose that the current state of the system is $s \in \mathcal{S}$.
Moreover, suppose that some correct process $p$ has a valid view history $his_{v'}$ up to some view $v' \in s$.
Suppose that $v'$ is not installable in $s$, i.e., it is an auxiliary view for some installable view $v \in s$ in $s$.
Let the system transit into the state $s' \in \mathcal{S}$, such that $v'$ is an auxiliary view for $v'' \in s'$ in $s'$, $v'' \neq v$.
Then, $p$ has a valid view history up to $v''$.
\end{lemma}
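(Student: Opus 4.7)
The plan is to first reduce the statement to an installable-view version of the claim and then mirror the structure of the proof of \Cref{lemma:history_installable_2}. First, I would apply \Cref{lemma:history_auxiliary} in the state $s$: since $v'$ is auxiliary for the installable view $v$ in $s$ and $p$ has a valid view history $his_{v'}$ up to $v'$, I obtain a valid view history $his_v$ up to $v$. Observe that $v$ is installable in $s$ and remains installable in $s'$, since the $\alpha$-predicate is monotone under every transition rule in $\mathcal{D}$. Hence both $v$ and $v''$ are installable in $s'$ and, by \Cref{lemma:unique_installable_views} applied to $s'$, they lie in a totally ordered sequence; as $v'' \neq v$ by hypothesis, either $v \subset v''$ or $v'' \subset v$.

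Next, I compare the relevant converged-on sequences in the two cases. By \Cref{definition:auxiliary_view} in $s$, there is a sequence $\sigma_v$ converged on to replace $v$ with $v'$ strictly inside; by the same definition in $s'$, there is a sequence $\sigma_{v''}$ converged on to replace $v''$ with $v'$ strictly inside. Consider the case $v \subset v''$. A short induction along the installable chain of $s'$ (or a direct argument via \Cref{lemma:installable_from_one} and \Cref{lemma:only_one_view_leads_to}) shows that $v$ leads to $v''$ in $s'$, producing a sequence $\sigma_{v \to v''}$ converged on to replace $v$ in $s'$ whose last view is the installable view $v''$. Applying \Cref{lemma:weak_accuracy} to $\sigma_v$ and $\sigma_{v \to v''}$ (both converged on to replace $v$ in $s'$), I rule out equality (which would force $v' = v''$, contradicting that $v'$ is auxiliary, hence not installable, in $s'$) and $\sigma_v \subsetneq \sigma_{v \to v''}$ (which would make $v'$ a non-terminal view of $\sigma_{v \to v''}$ and hence an auxiliary view for $v$ in $s'$ as well; together with \Cref{lemma:one_auxiliary_view} this contradicts $v'$ being auxiliary for $v'' \neq v$ in $s'$). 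Thus $\sigma_{v \to v''} \subseteq \sigma_v$, so $v'' \in \sigma_v$. The symmetric case $v'' \subset v$ is handled dually, by comparing $\sigma_{v''}$ with a sequence tracing the installable chain from $v''$ up to $v$ in $s'$ and applying the same elimination to show that $v$ must appear inside $\sigma_{v''}$.

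Finally, I extract a valid view history up to $v''$. Because $v'' \in \sigma_v$ (respectively $v \in \sigma_{v''}$), the install message in $his_{v'}$ whose base view is $v$---which exists since the proof of \Cref{lemma:history_auxiliary} places $v \in his_{v'}$---already witnesses a converged sequence to replace $v$ whose sub-sequence structure, by the second claim of \Cref{definition:auxiliary_view} and \Cref{property:accepted_if_converged}, certifies a converged sequence terminating at $v''$. Combined with \Cref{lemma:valid_view_history_up_to_v}, this yields the desired valid view history up to $v''$. I expect the main obstacle to be the last step: one must verify that the $v''.q$-quorum of signed \msg{converged} messages required by a valid view history entry at $v''$ is actually carried by the install messages already in $his_{v'}$, rather than merely asserting that the abstract sequence is converged on in $\mathcal{S}$. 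Carefully tracing which signed \msg{converged} quorums are attached to the install messages in $his_{v'}$, and using that in $s'$ the sub-sequences of $\sigma_{v''}$ (respectively $\sigma_v$) are converged on to replace each intermediate auxiliary view, should close this gap.
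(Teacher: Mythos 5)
Your proof is correct and takes essentially the same route as the paper, whose proof of this lemma simply defers to the argument of \Cref{lemma:history_installable_2}: identify the two converged-on sequences through $v'$ (one replacing $v$, one witnessing the new parent $v''$), compare them via \Cref{lemma:weak_accuracy}, eliminate the impossible inclusions, and conclude that $v''$ lies inside the sequence already recorded in $his_{v'}$. The residual concern you flag at the end is closed by observing that $v''$ must then appear as one of the intermediate views of $his_{v'}$ between $v$ and $v'$ (each history entry's $\omega$ is the least recent view of the corresponding converged sequence), so the required prefix of $his_{v'}$, with its signed \msg{converged} quorums, is already a valid view history up to $v''$.
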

\begin{proof}
Similarly to the proof of \Cref{lemma:history_installable_2}.
\end{proof}

\begin{lemma} \label{lemma:vd_second}
Suppose that the current state of the system is $s \in \mathcal{S}$.
Moreover, suppose that some correct process $p$ has a valid view history up to some view $v' \in s$.
For every view $v \subseteq v'$ installable in $s$, $p$ has a valid view history up to $v$.
\end{lemma}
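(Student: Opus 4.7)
The plan is to first reduce to the case where $v'$ is installable in $s$, and then to proceed by descending induction along the chain of installable views of $s$ given by Lemma~\ref{lemma:unique_installable_views}. The starting point is that $p$ already possesses a valid view history up to $v'$ by hypothesis; the task is to ``peel off'' views from $v'$ down to $v$ while preserving that property.

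For the reduction, if $v'$ is not installable, then by Lemmas~\ref{lemma:auxiliary_from_one} and~\ref{lemma:one_auxiliary_view} there is a unique installable $v^\star \in s$ for which $v'$ is an auxiliary view, and Lemma~\ref{lemma:history_auxiliary} supplies $p$ with a valid view history up to $v^\star$. Lemma~\ref{lemma:auxiliary_views_compare_1} gives $v^\star \subsetneq v'$. I will argue that any installable $v \in s$ with $v \subseteq v'$ actually satisfies $v \subseteq v^\star$: by Lemma~\ref{lemma:unique_installable_views} the installable views of $s$ form a linearly ordered chain, so $v$ and $v^\star$ are comparable; the alternative $v^\star \subsetneq v$ would force, via Lemmas~\ref{lemma:installable_from_one} and~\ref{lemma:only_one_view_leads_to}, some installable $v^{\star\star}$ with $v^\star$ leading to $v^{\star\star}$ and $v^{\star\star} \subseteq v$, and then Lemma~\ref{lemma:auxiliary_views_compare_2} would yield $v' \subsetneq v^{\star\star} \subseteq v$, contradicting $v \subseteq v'$. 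Thus it suffices to prove the lemma assuming $v'$ is installable.

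Assume now that $v, v'$ are both installable in $s$ with $v \subseteq v'$. I would induct on the number of installable views of $s$ lying strictly between $v$ and $v'$ in the chain of Lemma~\ref{lemma:unique_installable_views}. The base case $v = v'$ is immediate. Otherwise $v \subsetneq v'$, so in particular $v' \neq v_0$, and Lemma~\ref{lemma:installable_from_one} supplies a unique installable $\tilde v \in s$ that leads to $v'$ in $s$. By Lemma~\ref{lemma:history_installable}, $p$ already has a valid view history up to $\tilde v$. Since no installable view can sit strictly between $\tilde v$ and $v'$ (otherwise uniqueness of the view leading to $v'$, Lemma~\ref{lemma:from_only_one}, would fail), the installable $v$ with $v \subsetneq v'$ must satisfy $v \subseteq \tilde v$, and the induction hypothesis applied with $\tilde v$ in place of $v'$ yields a valid view history up to $v$.

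The main obstacle will be the reduction step: combining Lemmas~\ref{lemma:auxiliary_views_compare_1} and~\ref{lemma:auxiliary_views_compare_2} to rule out installable views sandwiched between $v^\star$ and $v'$ needs a careful case analysis on whether $v^\star$ itself leads to anything in $s$, and also a careful check that the comparability of $v$ and $v^\star$ delivered by Lemma~\ref{lemma:unique_installable_views} is genuinely the tight bound one needs. Once this reduction is secured, the inductive step is essentially a single application of Lemma~\ref{lemma:history_installable}, and the whole argument compresses into a short descent along the chain of installable views.
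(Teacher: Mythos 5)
Your proof is correct and follows essentially the same route as the paper's: reduce the non-installable case to the installable one via Lemma~\ref{lemma:history_auxiliary}, then descend along the chain of installable views by backward induction using Lemma~\ref{lemma:history_installable}. You simply spell out the ordering details (comparability from Lemma~\ref{lemma:unique_installable_views}, uniqueness of predecessors/successors) that the paper leaves implicit in its phrase ``using the backward induction.''
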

\begin{proof}
Let us denote $p$'s valid view history up to $v'$ with $his_{v'}$.
Hence, we should show that $v \in his_{v'}$, for every installable view $v \subseteq v'$.

Suppose that $v'$ is an installable view different from the initial view of the system.
Using the backward induction (\Cref{lemma:history_installable}), we deduce that the lemma holds.

Suppose that $v'$ is not installable and that $v'$ is an auxiliary view for some view $v \subset v'$ in $s$. 
Then, \Cref{lemma:history_auxiliary} claims that $p$ has a valid view history up to $v$.
Similarly to the previous case, we reach the conclusion that the lemma holds (using the backward induction and \Cref{lemma:history_installable}).
\end{proof}

\begin{lemma} \label{lemma:vd_temp}
Suppose that some correct process $p$ has a valid view history up to some valid view $v'$.
Consider the state $s_{final}$ of the system.
Then, $p$ has a valid view history up to any view $v \in s_{final}$ installable in $s_{final}$, where $v \subseteq v'$.
\end{lemma}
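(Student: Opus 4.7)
The plan is to reduce this lemma to a direct application of \Cref{lemma:vd_second}, invoked at the final state $s_{final}$ rather than at the state which was current when $p$ first acquired its valid view history.

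First, I would observe that the predicate ``$p$ has a valid view history up to $v'$'' is purely a statement about the messages held in $p$'s local state (the initial view together with the \msg{install} messages certifying each transition, each carrying a quorum of signed \msg{converged} messages); it makes no reference to the global system state $s \in \mathcal{S}$. Consequently, once $p$ holds such a history at some point in time, $p$ continues to hold it at every later point in time, including the moment the system reaches $s_{final}$.

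Second, I would establish monotonicity of the state: by inspecting the four cases of the transition set $\mathcal{D}$, every transition $(s, e, s') \in \mathcal{D}$ satisfies $s \subseteq s'$, since each case either leaves $s$ unchanged or adjoins a single view $v_1$ to it, and neither $\alpha$ nor $\beta$ can ever cause a view to be removed. Hence, if $v'$ was valid at the moment $p$ acquired the history (i.e.\ $v'$ belonged to the then-current state), then $v' \in s_{final}$ as well.

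Finally, with $v' \in s_{final}$ and $p$ still holding a valid view history up to $v'$, I would apply \Cref{lemma:vd_second} in the state $s_{final}$ to conclude that for every view $v \subseteq v'$ installable in $s_{final}$, process $p$ has a valid view history up to $v$. The only mildly delicate point is the second step: recognizing that installability in $s_{final}$ may genuinely differ from installability in the earlier state (an auxiliary view can later become installable as further convergences occur), which is exactly why it is necessary to invoke \Cref{lemma:vd_second} at $s_{final}$ rather than at the original state. Monotonicity of $s$ together with state-independence of the view-history property guarantees that the hypotheses of \Cref{lemma:vd_second} remain satisfied, and the lemma is then immediate.
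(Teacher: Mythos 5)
Your reduction is the same one the paper uses---\Cref{lemma:vd_temp} is obtained from \Cref{lemma:vd_second}---and your two supporting observations (the ``valid view history'' predicate is a property of $p$'s local message log only, hence persists; every transition in $\mathcal{D}$ satisfies $s \subseteq s'$, hence $v' \in s_{final}$) are both correct and genuinely needed. The difference is that the paper's proof additionally cites \Cref{lemma:history_installable_2} and \Cref{lemma:history_auxiliary_2}, and that omission is where your argument has a gap. The conclusion of \Cref{lemma:vd_second} ultimately amounts to the claim that every installable view $v \subseteq v'$ \emph{occurs in the fixed history object} $his_{v'}$ that $p$ acquired at some earlier state $s$. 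But the proof of \Cref{lemma:vd_second} goes through \Cref{lemma:history_installable,lemma:history_auxiliary}, whose key step identifies the predecessor view recorded in $his_{v'}$ with the \emph{unique view that leads to $v'$ in the current state}. Between $s$ and $s_{final}$ the leads-to and auxiliary-for relations restructure: a view that was auxiliary (or not yet installable) when $p$ built $his_{v'}$ can become installable in $s_{final}$, at which point the installable predecessor of $v'$ changes from some $v$ to some $v'' \neq v$. One must then argue that $v''$ nevertheless appears in the history $p$ already holds---this is precisely the content of \Cref{lemma:history_installable_2} (and its auxiliary-view counterpart \Cref{lemma:history_auxiliary_2}), which show $v''$ lies on the converged sequence recorded by the install messages in $his_{v'}$.

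You do flag this phenomenon (``an auxiliary view can later become installable''), but your resolution---``the hypotheses of \Cref{lemma:vd_second} remain satisfied, so invoke it at $s_{final}$''---only works under a literal black-box reading of that lemma's statement, which carries no qualifier on \emph{when} the history was obtained. Read that way your proof is formally a valid instantiation; read the way the authors evidently intend (the history reflects the leads-to structure of the state at which it was acquired), the propagation across state transitions is an unproven step, and it is exactly the step the paper discharges by citing the two ``\_2'' lemmas. To close the gap you should either justify that \Cref{lemma:vd_second} holds for histories acquired at any earlier state, or simply interpose \Cref{lemma:history_installable_2,lemma:history_auxiliary_2} as the paper does.
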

\begin{proof}
Follows directly from \cref{lemma:history_installable_2,lemma:history_auxiliary_2,lemma:vd_second}.
\end{proof}

% Let $I(t)$ be the most recent installable view at time $t$ in the current state of the system
% $V(t)$ represents the most recent view installed at time $t$.

\begin{lemma} \label{lemma:vd_final}
Suppose that the current state of the system is $s \in \mathcal{S}$.
Moreover, suppose that a view $v' \in s$ is the most recent view in $s$ such that a correct process $p \in v'$ assigned $v'$ as its current view (executed \Cref{line:set_cv}).
% Moreover, suppose that some correct process $p$ sets a view $v' \in s$ as its current view at current time $t$, such that $v' = I(t)$.
For every view $v \subseteq v'$ installable in $s_{final} \in \mathcal{S}$, $p$ has a valid view history up to $v$.
\end{lemma}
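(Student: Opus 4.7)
The plan is to chain \Cref{lemma:vd_first} with \Cref{lemma:vd_temp}, which together already essentially encapsulate the two steps needed.

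First, I would invoke \Cref{lemma:vd_first} on the process $p$ and the view $v'$. By hypothesis, $p$ executed \Cref{line:set_cv} assigning $v' \ni p$ as its current view, and $v' \in s$ (so $v'$ is valid). \Cref{lemma:vd_first} then guarantees that $p$ possesses a valid view history up to $v'$. This provides the anchor we need: $p$'s knowledge extends at least as far as $v'$ in the form of an alternating sequence of views and properly certified \msg{install} messages all the way from $v_0$ to $v'$.

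Next, I would apply \Cref{lemma:vd_temp} to $p$ and $v'$. Since $v'$ is valid (it belongs to the current state $s$), the hypothesis of \Cref{lemma:vd_temp} is satisfied, and its conclusion is exactly the claim we want: for every view $v \subseteq v'$ that is installable in $s_{final}$, process $p$ has a valid view history up to $v$. The proof therefore reduces to a direct composition of the two lemmas, with no additional case analysis required.

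The potentially subtle point worth double-checking would be that the validity of $v'$ in $s$ is sufficient to feed into \Cref{lemma:vd_temp}, whose hypothesis speaks of a valid view; but since $v' \in s$ by assumption and validity is defined via membership in the current state (\Cref{definition:valid-view}), this is immediate. No obstacle remains: the lemma is essentially a corollary that packages together \Cref{lemma:vd_first} (which lifts the event ``setting $v'$ as current view'' into a valid view history up to $v'$) and \Cref{lemma:vd_temp} (which propagates a valid view history downward along the installable backbone of $s_{final}$).
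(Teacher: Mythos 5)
Your proof is correct and matches the paper's own argument exactly: the paper likewise derives \Cref{lemma:vd_final} directly from \Cref{lemma:vd_first} (to obtain a valid view history up to $v'$) combined with \Cref{lemma:vd_temp} (to propagate it to every installable view $v \subseteq v'$ in $s_{final}$). No further commentary is needed.
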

\begin{proof}
The lemma follows directly from \cref{lemma:vd_first,lemma:vd_temp}.
\end{proof}

Consider a correct process $p$ that sets a view $v'$ as its current view at time $t$, where $v'$ is the most recent view such that a correct process assigned the view as its current view.
\Cref{lemma:vd_final} proves that $p$ has a valid view history up to $v$, where this holds for every installable view $v \subseteq v'$.
Hence, at any time $t$ there is at least one correct process $p$ that has a valid view history up to every installable view $v \subseteq v'$.
Moreover, \Cref{lemma:installable_reach_2} shows that processes ``progress'' and update their current views.
For example, if a correct process $q$ wants to join the system, $q$ executes the View Discovery protocol by flooding the network in order to learn view histories of processes and eventually $q$ receives the valid view history of process $p$.
In this way, $q$ will be able to bootstrap itself, eventually join the system and learn all the installable views ``created'' by \name during an execution.
% (in order to safely \name-deliver messages).

However, once $q$ joins the system it stops executing the View Discovery protocol.
Then, it is important to ensure that $q$ will ``observe'' every installable view that is ``instantiated'' by \name after $q$'s join is complete.

\begin{lemma} \label{lemma:valid_view_history_after_join}
Suppose that the current state of the system is $s \in \mathcal{S}$.
Moreover, suppose that some correct process $p$ completes its join (i.e., executes \Cref{line:join_completed}).
Suppose that $cv = v$ at process $p$ at this time, where $v \in s$.
Then, $p$ eventually has a valid view history up to a view $v' \supseteq v$ in some state $s' \in \mathcal{S}$, for every installable view $v' \ni p$, where $s'$ is a state to which the system may transit.
\end{lemma}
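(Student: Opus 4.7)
The plan is to combine two results already established: Lemma \ref{lemma:vd_first}, which says that whenever a correct process executes Line \ref{line:set_cv} with some view $w$, it simultaneously holds a valid view history up to $w$; and Lemma \ref{lemma:installable_reach_2}, which says that every correct member of an installable view eventually executes Line \ref{line:set_cv} for that view. If I can reduce the claim to an application of these two lemmas, I am done.

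For the base case of the statement, take $v' = v$. At the moment $p$ executes Line \ref{line:join_completed}, it has just set $cv_p = v$, so Lemma \ref{lemma:vd_first} immediately yields a valid view history up to $v$. For the general case, let $s'$ be any reachable state and let $v' \in s'$ be installable in $s'$ with $v' \supseteq v$ and $p \in v'$. The plan is to apply Lemma \ref{lemma:installable_reach_2} in state $s'$: it guarantees that the correct member $p$ of the installable view $v'$ eventually executes Line \ref{line:set_cv} with $cv_p = v'$. At that very moment, Lemma \ref{lemma:vd_first} provides the desired valid view history up to $v'$.

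The only subtlety is that Lemma \ref{lemma:installable_reach_2} implicitly requires $p$ to still be participating in the protocol (not to have left before $v'$ becomes installable). This is where the membership hypothesis $p \in v'$ does the work. Since $v'.\mathit{members}$ is defined as $\{q : \langle +,q\rangle \in v'.\mathit{changes} \land \langle -,q\rangle \notin v'.\mathit{changes}\}$ and views accumulate changes monotonically along the installable sequence (by Lemma \ref{lemma:installable_views_compare} and the structure of \Cref{algorithm:dynamic}), the absence of $\langle -,p\rangle$ in $v'$ means $p$ has never invoked \dbrbLeave{} prior to the installation of $v'$; otherwise \Cref{line:send_leave} would eventually lead some correct member to propose a sequence containing $\langle -,p\rangle$, and by Lemmas \ref{lemma:weak_accuracy} and \ref{lemma:again_sequence} that change would be present in every installable view succeeding the one in which $p$'s leave is recorded, contradicting $p\in v'$. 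Thus $p$ is still participating when $v'$ becomes installable, and the application of Lemma \ref{lemma:installable_reach_2} is legitimate.

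The main obstacle I anticipate is precisely this participation argument: the View Discovery protocol is only executed by processes that have joined and have not yet left, so the proof must justify that $p$'s membership in $v'$ is enough to keep $p$ engaged until the install message for $v'$ is R-delivered. Once this is pinned down via the monotonicity of recorded changes, the remainder of the proof is a direct invocation of Lemmas \ref{lemma:vd_first} and \ref{lemma:installable_reach_2}.
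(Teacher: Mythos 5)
Your proof is correct and follows essentially the same route as the paper's: both rest on the combination of Lemma~\ref{lemma:installable_reach_2} (every correct member of an installable view eventually executes \Cref{line:set_cv} for it) with Lemma~\ref{lemma:vd_first} (at that moment it holds a valid view history up to that view). The paper additionally walks the chain of installable views explicitly via Lemma~\ref{lemma:last_installable} and splits on whether $v$ is installable or auxiliary, but since Lemma~\ref{lemma:installable_reach_2} already encapsulates that induction, your direct application of it to each target view $v'$ (together with your participation remark, which the membership condition $p \in v'$ indeed guarantees) is legitimate.
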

\begin{proof}
Suppose that $v$ is installable.
\Cref{lemma:vd_first} ensures that $p$ has a valid view history up to $v$.
If $v$ leads to $v' \ni p$ in some state of the system, \cref{lemma:installable_reach_2,lemma:last_installable} ensure that $p$ eventually sets $v'$ as $cv_p$.
Again, \Cref{lemma:vd_first} ensures that $p$ has a valid view history up to $v'$.
This reasoning can be now repeated if $v'$ leads to some view $v''$ in some state of the system.

Suppose now that $v$ is not installable.
This means that $v$ is an auxiliary view for some installable view $v''$.
If $v''$ leads to $v' \ni p$ in some state of the system, \cref{lemma:installable_reach_2,lemma:last_installable} ensure that $p$ eventually sets $v'$ as $cv_p$.
\Cref{lemma:vd_first} then ensures that $p$ has a valid view history up to $v'$.
Now, if $v'$ leads to some view $v''' \ni p$, the reasoning from the case above can be applied.
The lemma holds.
\end{proof}
\subsection{Dynamicity} \label{sec:dynamicity_proof}

\vspace*{-2mm}
Note that the set of installed views is a subset of the set of installable views.
Hence, if a view $v$ is installed, then $v$ is installable.
We say that $V(t)$ is the most recent view installed in the system at a global time $t$ and $I(t)$ is the most recent view installable in the system at a global time $t$.

\begin{lemma} \label{lemma:collect_rec_confirm}
Suppose that a correct process $p$ invokes a \dbrbJoin{} operation at time $t$.
There exists a time $t' \geq t$ such that $p$ receives $V(t').q$ \msg{rec-confirm} messages or the system transits to a state $s \in \mathcal{S}$ where a view $v' \in s$ is installable in $s$ and $p \in v'$.
\end{lemma}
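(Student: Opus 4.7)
I would work with the eventually stable configuration of the system. By \Cref{lemma:reconfiguration_done} there is a final state $s_{final}$, and by \Cref{lemma:v_final_installable} the view $v_{final}$ is installable in $s_{final}$. I split on whether $p \in v_{final}$. If $p \in v_{final}$, the second disjunct of the lemma holds immediately, since the system eventually transits to $s_{final}$, in which $v_{final}$ is installable and contains $p$.

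The interesting case is $p \notin v_{final}$. I would show that $p$ eventually collects $v_{final}.q$ \msg{rec-confirm} messages tagged with $v_{final}$, so that the first disjunct is satisfied at the arrival time $t'$ of the last such message (with $V(t') = v_{final}$). The argument has three ingredients. First, by \Cref{lemma:install_v_final} every correct member of $v_{final}$ eventually installs $v_{final}$, and since the system never evolves beyond $s_{final}$, each such process keeps $\mathit{cv} = v_{final}$ forever. Second, any such correct installer holds a valid view history up to $v_{final}$ (\Cref{lemma:vd_first}), and the flooding carried out by the View Discovery protocol of \Cref{subsection:view_discovery}, together with reliable communication, eventually forces $p$'s view\_discovery call to return $v_{final}$; hence in some iteration of the repeat-until loop of \dbrbJoin{}, $p$ disseminates \operation{reconfig}{\langle +, p \rangle, v_{final}} to every member of $v_{final}$. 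Third, because \dbrbJoin{} is invoked at most once and $p \notin v_{final}$, we have $\langle +, p \rangle \notin v_{final}.\mathit{changes}$, so the guard of the \msg{reconfig} handler is satisfied at every correct $r \in v_{final}$, producing an \operation{rec-confirm}{v_{final}} reply to $p$. By \Cref{assumption:quorums} there are at least $v_{final}.q$ correct processes in $v_{final}$, all of which eventually reply, so $p$ collects the required quorum and the exit condition of the repeat-until loop fires.

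The main obstacle is the second ingredient: arguing that $p$'s view discovery actually converges to $v_{final}$ rather than lagging on some strictly older view or being tricked by a forged history. I expect this to follow from the flooding of valid view histories in \Cref{subsection:view_discovery} together with \cref{lemma:vd_first,lemma:vd_final}: at least one correct installer of $v_{final}$ ultimately broadcasts a verifiable valid view history up to $v_{final}$, and because the system is in $s_{final}$, no strictly more recent installable view ever comes into existence to draw $p$'s view discovery past $v_{final}$ or to invalidate the history it learns.
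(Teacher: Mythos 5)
Your proof is correct, and it rests on the same two pillars as the paper's: View Discovery flooding plus reliable communication gets $p$'s \msg{reconfig} message to the correct members of the relevant installed view, and \Cref{assumption:requests} (via \Cref{lemma:reconfiguration_done}) guarantees that the chase eventually stops. The decomposition is genuinely different, though. The paper argues iteratively: $p$ targets the currently most recent installable view $v = I(t)$; either it collects the quorum of \msg{rec-confirm} messages for $v$, or the system has meanwhile reconfigured and $p$ retargets the next view; finiteness of reconfiguration requests bounds the number of iterations, terminating at a ``quasi-final'' view. You instead jump directly to the terminal object $v_{final}$ (using \Cref{lemma:v_final_installable,lemma:install_v_final}, which the paper's proof of this lemma does not invoke) and case-split on whether $p \in v_{final}$, with the first disjunct handled by membership and the second by a direct quorum-of-replies argument in the stable view. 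Your route buys a cleaner, non-iterative argument and makes explicit why the \msg{reconfig} handler's guard fires at every correct member of $v_{final}$; the paper's route has the advantage of not presupposing knowledge of the final view and of covering, along the way, the intermediate case where $p$'s request is absorbed into some earlier installable view (which is what its closing sentence addresses). One small point worth noting in your write-up: if the repeat-until loop of \dbrbJoin{} exits before $p$ ever targets $v_{final}$, it does so either via \textit{joinComplete} (so $p$ is a member of an installable view, second disjunct) or via an earlier \msg{rec-confirm} quorum (first disjunct), so all exit paths are covered.
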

\begin{proof}
Eventually, a correct process $q \in v$ has a valid view history up to $v = I(t)$ (\cref{lemma:installable_reach_2,lemma:vd_final}).
During the View Discovery protocol, $p$ floods the universe of processes in order to update its view history.
Eventually, $p$ learns the view history of process $q$.

Then, $p$ broadcasts a \msg{reconfig} message to $v$.
There are two possible scenarios:
\begin{compactenum}
    \item Process $p$ receives $v.q$ \msg{rec-confirm} messages.
    Then, $v$ is installed (since correct processes send \msg{rec-confirm} messages only if a view is installed) and the lemma holds.
    \item Process $p$ does not receive $v.q$ \msg{rec-confirm} messages.
    This means that the system reconfigured from $v$ to some view $v'$, where $v'$ is the ``new'' most recent view installable in the system.
    Again, some correct process $q' \in v'$ eventually obtains a valid view history up to $v'$.
    Process $p$ learns the valid view history up to $v'$ (from $q'$) and broadcasts a \msg{reconfig} message to $v'$.
    
    It is possible that $p$ does not collect $v'.q$ \msg{rec-confirm} messages again.
    In this case, $p$ repeats the broadcasting of a \msg{reconfig} message to some view $v''$.
    
    Recall the fact that there exists a finite number of reconfiguration requests in every execution of \name (\Cref{assumption:requests}).
    This guarantees that the system eventually reaches the view from which it is impossible to reconfigure.
    Let us denote that view with $v_{quasi-final}$.
    When that happens, $p$ receives $v_{quasi-final}.q$ \msg{rec-confirm} messages and the lemma holds.
\end{compactenum}

If $p$ does not receive enough \msg{rec-confirm} messages, that is because there exists an installable view that contains $p$ as its member.
Therefore, the lemma holds. 
\end{proof}

\begin{lemma} \label{lemma:collect_rec_confirm_leave}
Suppose that a correct process $p$ invokes a \dbrbLeave{} operation at time $t$.
There exists a time $t' \geq t$ such that $p$ receives $V(t').q$ \msg{rec-confirm} messages or the system transits to a state $s \in \mathcal{S}$ where a view $v' \in s$ is installable in $s$ and $p \notin v'$.
\end{lemma}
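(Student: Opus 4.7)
The plan is to mirror closely the argument used for \Cref{lemma:collect_rec_confirm}, adapting it to the leaving case. First I would note that, according to \Cref{algorithm:dynamic}, after invoking \dbrbLeave{}, process $p$ disseminates a \operation{reconfig}{\langle -, p \rangle, cv} message to the members of every view it installs (the \textbf{repeat} loop at \Cref{line:send_leave}), so $p$ keeps sending \msg{reconfig} messages as long as it keeps installing new views that contain it and has not yet collected a quorum of \msg{rec-confirm} messages.

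The structure of the argument would be as follows. Let $t_0 = t$ and let $v_0 = I(t_0)$ be the most recent installable view at the time of invocation. By \Cref{lemma:installable_reach_2} and \Cref{lemma:vd_final}, eventually some correct member of $v_0$ obtains a valid view history up to $v_0$, so via the View Discovery protocol $p$ eventually learns $v_0$ and disseminates \operation{reconfig}{\langle -, p\rangle, v_0}. Two cases arise: either $p$ collects $v_0.q$ \msg{rec-confirm} messages and the first disjunct of the lemma holds with $t' \geq t_0$, or fewer than $v_0.q$ correct members of $v_0$ still process \msg{reconfig} messages associated with $v_0$. The latter means that the system has reconfigured past $v_0$; by \Cref{lemma:last_installable} and the fact that installable views form a sequence (\Cref{lemma:unique_installable_views}), the system transits to a state containing a strictly more recent installable view $v_1$.

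I would then iterate the same argument on $v_1, v_2, \ldots$. At each step, either $p$ collects enough \msg{rec-confirm} messages in some $v_i$ (so the lemma holds), or the system transits to a yet more recent installable view $v_{i+1}$. If at any point $v_i \not\ni p$, the second disjunct of the lemma is directly witnessed by that state. Otherwise, $p \in v_i$ for every $i$, and by \Cref{lemma:install_v_final_2} combined with the \textbf{repeat} loop $p$ eventually installs $v_i$ and disseminates a new \msg{reconfig} message for $\langle -, p \rangle$ in it.

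The main obstacle — and where I would spend care — is ruling out an infinite chain of view changes in which $p$ never collects a quorum and is never removed. This is precisely where \Cref{assumption:requests} (finite number of reconfiguration requests) and \Cref{lemma:reconfiguration_done} come in: the sequence of installable views is eventually stationary at some $v_{final}$. If $p \notin v_{final}$, we are in the second disjunct. Otherwise $p \in v_{final}$, and by \Cref{lemma:install_v_final} $p$ installs $v_{final}$; since no further reconfigurations occur and correct members of $v_{final}$ reply with \msg{rec-confirm} to $p$'s \msg{reconfig} message (the condition at \Cref{line:add_reconfig_brief} is satisfied because $\langle -, p\rangle \notin v_{final}$ while $\langle +, p\rangle \in v_{final}$), $p$ collects $v_{final}.q$ such messages at some time $t' \geq t$, establishing the first disjunct.
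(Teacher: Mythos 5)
Your proposal is correct and follows essentially the same route as the paper: the paper's own proof of this lemma is a one-line deferral to the argument for \Cref{lemma:collect_rec_confirm}, which is exactly the iteration you spell out (re-disseminate \msg{reconfig} in each newly discovered installable view, and use \Cref{assumption:requests} to terminate the chain of reconfigurations). Your version is in fact more explicit than the paper's about the two disjuncts and about why the \msg{rec-confirm} condition is met in the final view, but the underlying argument is identical.
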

\begin{proof}
Following the similar arguments as in the proof of \Cref{lemma:collect_rec_confirm}, we conclude that the lemma holds.
\end{proof}

\begin{lemma} \label{lemma:v(t')}
Suppose that a correct process $p$ invokes a \dbrbJoin{}/\dbrbLeave{} operation at time $t$, by disseminating a \msg{reconfig} message to processes that are members of $V(t)$.
Moreover, suppose that $p$ receives $V(t).q$ \msg{rec-confirm} messages (\cref{lemma:collect_rec_confirm,lemma:collect_rec_confirm_leave}).
Then, the system eventually transits to a state $s \in \mathcal{S}$ where a view $v \in s$ is installable in $s$, $v \neq V(t)$ and $\langle +, p \rangle/\langle -, p \rangle \in v$.
% Then, there will be a valid view $V(t' > t)$, such that $V(t) \neq V(t')$, where $\langle +, p \rangle/\langle -, p \rangle \in V(t')$.
\end{lemma}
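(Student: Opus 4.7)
The plan is to exhibit some installable view $v$ distinct from $V(t)$ that contains the change $c \in \{\langle +, p\rangle, \langle -, p\rangle\}$. I will proceed in three stages.

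First I would use \Cref{assumption:quorums} to extract at least $V(t).q - \lfloor(|V(t)|-1)/3\rfloor \geq 1$ correct rec-confirm senders in $V(t)$ among the $V(t).q$ such senders that $p$ received. Every correct $r$ in that set added $c$ to its local $\mathit{RECV}$ at \Cref{line:add_reconfig_brief}, whose guard also yields $c \notin V(t).\mathit{changes}$; the latter will give $v \neq V(t)$ at the end.

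Next, along the totally ordered sequence of installable views emanating from $V(t)$ (\Cref{lemma:unique_installable_views}), I would prove by induction the following invariant: for every installable view $v' \supseteq V(t)$ with $c \notin v'.\mathit{changes}$, every quorum of $v'.q$ \msg{state-update} messages collected during the install of the next installable view contains at least one state-update carrying $c$. The base case $v' = V(t)$ follows from the first stage via a direct Byzantine-quorum count. For the inductive step, every installer of the next view $v''$ adds $c$ to its own $\mathit{RECV}$ through \Cref{line:recv} (since $c \notin v''.\mathit{changes}$); by the monotonicity of $\mathit{RECV}$ throughout the algorithm, all $v''.q$ correct members of $v''$ permanently carry $c$, which more than suffices to sustain the invariant when transitioning from $v''$ to its successor in the sequence.

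Finally, suppose for contradiction that no installable view contains $c$. By \Cref{lemma:reconfiguration_done,lemma:v_final_installable} the final installable view $v_{final}$ exists, and by the induction above every correct member of $v_{final}$ carries $c$ in $\mathit{RECV}$. Since $v_{final}$ is installed at all of them (\Cref{lemma:install_v_final}), the ``$\mathit{RECV} \neq \emptyset \land \mathit{installed}(cv)$'' trigger eventually fires at some correct process whose $\mathit{SEQ}^{v_{final}}$ is still $\emptyset$, disseminating a \msg{propose} whose most recent view contains $c$; examining the two merge branches at \Cref{line:update_nonconflicting} and \Cref{line:conflicting_start,line:update_conflicting} shows that $c$ is then preserved in the most recent view of $\mathit{SEQ}^{v_{final}}$ at every correct process under any subsequent merge. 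Since $v_{final}$ is active (\Cref{lemma:installable_reach_2}) and remains so, \Cref{lemma:termination_active} forces some sequence to be converged on to replace $v_{final}$, and \Cref{lemma:last_installable} then produces a strictly more recent installable view, contradicting the definition of $v_{final}$. Hence the desired installable view $v \ni c$ exists, and $v \neq V(t)$ by the first stage.

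The principal obstacle is the inductive step of the second stage: the propagation of $c$ across every view transition rests on a delicate Byzantine-quorum count showing that every install-quorum of $v'.q$ \msg{state-update}s intersects the set of $c$-carrying correct members of the outgoing view, which in turn forces the induction to be strengthened to ``all correct members of $v'$ carry $c$'' for $v' \neq V(t)$. A secondary subtlety in the final stage is arguing that $c$ actually reaches some correct process's $\mathit{SEQ}^{v_{final}}$ despite potentially earlier Byzantine \msg{propose}s; this is precisely where the ``$\mathit{SEQ} = \emptyset$'' trigger path and the merge-rule preservation analysis are both needed.
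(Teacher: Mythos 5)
Your argument is correct and rests on the same underlying mechanism as the paper's proof: the change $c$ is recorded in $\mathit{RECV}$ by correct members of $V(t)$ that sent \msg{rec-confirm}, and is then relayed across every view transition through the \msg{state-update} quorum at \Cref{line:recv} until it is forced into a proposed, and hence converged-on, view. The decomposition is genuinely different, though. The paper argues forward with a two-case analysis: either every correct member of $V(t)$ proposes a sequence whose views contain $c$ (and then the next installed view contains $c$), or some do not, in which case the next installed view either already contains $c$ or its members acquire $c$ at \Cref{line:recv}, ``reducing'' the situation to the first case; the termination of that reduction, and the claim that $c$ survives into the converged sequence once all correct proposals carry it, are both left implicit. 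You instead make the propagation quantitative (the Byzantine quorum count showing every collected \msg{state-update} quorum intersects the correct $c$-carriers), promote it to an invariant proved by induction along the installable views, and close by contradiction against $v_{final}$ using \cref{lemma:termination_active,lemma:last_installable}. What your route buys is an explicit termination argument (the paper's recursion terminates only by \Cref{assumption:requests}, which its proof of this lemma never invokes) and an explicit merge-preservation analysis for the step the paper merely asserts. Two remarks: the $\mathit{RECV}$ relay in fact passes through the auxiliary, non-installable views interposed between consecutive installable ones, so your induction should formally step through those as well (the same quorum count applies, since auxiliary views are valid and satisfy \Cref{assumption:quorums}); and your ``secondary subtlety'' in the final stage can be sidestepped, because the contradiction at $v_{final}$ does not require $c$ to enter $\mathit{SEQ}^{v_{final}}$ --- a nonempty $\mathit{RECV}$ at an installed, active final view already forces some proposal to circulate and some sequence to be converged on, which by itself yields a more recent installable view and the desired contradiction.
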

\begin{proof}
Let us say that a correct process $p$ disseminates a $m_{join} = \text{\operation{reconfig}{\langle +, p \rangle, cv}}$ message, where $\mathit{cv} = V(t)$, to processes that are members of $V(t)$ at time $t$. 
Two cases are possible:
\begin{compactenum}
    \item All correct processes in $V(t)$ propose a sequence of views with a view that contains $c = \langle +, p \rangle$ to replace $V(t)$. 
    In this case, the fact that all proposed sequences of correct processes include a view with change $c$ ensures that there will be a view $v$, such that $v \neq V(t)$, where $c \in v$.
    % In this case, the fact that all proposed sequences of correct processes include a view with change $c$ ensures that there will be a view $V(t')$, such that $t' > t \land V(t') \neq V(t)$, where $c \in V(t')$.
    
    \item Some correct processes in $V(t)$ propose a sequence, before the receipt of $m_{join}$, with a view that does not contain $c$.
    In this case, considering a time $t' > t$ when there is a new view $V(t')$ installed, two scenarios are possible:
    \begin{compactenum}
        \item $c \in V(t')$: since the proposed sequences from processes that received $m_{join}$ and proposed a view with $c$ were computed in $V(t')$.
        \item $c \not\in V(t')$: however, $c$ was sent to all processes that are members of $V(t')$ (\Cref{line:recv}) that will propose a sequence that contains a view $v'$, such that $c \in v'$, to replace $V(t')$, reducing this situation to the first case.
    \end{compactenum}
\end{compactenum}
Clearly, the same argument can be made when a correct process $p$ wants to leave the system.
Therefore, in any case there will be an installable view $v$, such that $v \neq V(t)$, where $\langle +, p \rangle/\langle -, p \rangle \in v$.
% Therefore, in any case there will be a view $V(t')$, such that $V(t') \neq V(t) \land t' > t$, where $\langle +, p \rangle/\langle -, p \rangle \in V(t')$.
\end{proof}

\begin{lemma} \label{lemma:join_liveness}
% Eventually, every correct process that has invoked a $join()$ operation becomes the participant in the system.
A correct process that has invoked a \dbrbJoin{} operation eventually joins the system.
\end{lemma}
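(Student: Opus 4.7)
The plan is to combine the earlier safety/liveness lemmas about reconfiguration with the specific structure of the \dbrbJoin{} procedure and the \emph{joinComplete} trigger point at \Cref{line:join_completed}. First, I would show that the repeat-loop eventually causes an installable view containing $p$ to emerge; then, I would show that $p$ actually installs such a view and triggers \emph{joinComplete}.

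Concretely, let $p$ invoke \dbrbJoin{} at time $t$. By \Cref{lemma:collect_rec_confirm}, there are two possible outcomes of the repeat-loop at \Cref{line:start_view_discovery_1,line:send_join,line:stop_asking_to_join}: either (a) $p$ eventually collects $V(t'').q$ \msg{rec-confirm} messages for some view $V(t'')$ that was installable at some time $t''\ge t$, or (b) the system directly transits to a state $s\in\mathcal{S}$ containing an installable view $v'\in s$ with $p\in v'$. In case (a), \Cref{lemma:v(t')} applied to the view $V(t'')$ (to which $p$ disseminated \msg{reconfig}) guarantees that the system eventually transits to a state $s\in\mathcal{S}$ containing an installable view $v$ with $v\neq V(t'')$ and $\langle +,p\rangle\in v$; in particular $p\in v$. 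Thus in both cases, an installable view $v\ni p$ is eventually present in the current state of the system.

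Next, I would invoke \Cref{lemma:eventually_joins}. There exists a view $u\not\ni p$ in the current state that leads to some view $w\ni p$ in that state: we can take $u$ to be the unique installable view that leads to $v$ (via \Cref{lemma:from_only_one}, unless $v$ itself is the result of an intermediate step, in which case we walk up the sequence of installable views constructed by \Cref{lemma:unique_installable_views} until we find the first installable view containing $p$; by \Cref{assumption:init} the initial view $v_0$ does not contain $p$, so such a first containing view exists). Applying \Cref{lemma:eventually_joins} to this $u\to w$ pair, $p$ eventually reaches \Cref{line:join_completed}: it processes the corresponding \msg{install} message for a view $\omega\ni p$ through \Cref{line:set_cv}, observes $p\notin v.\mathit{members}$ (where $v$ is the view being replaced), and triggers \emph{joinComplete}.

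Finally, once \emph{joinComplete} is triggered, the \textbf{wait} statement at the end of the \dbrbJoin{} procedure returns, so $p$ joins the system. The main subtlety is handling the asynchronous interleaving in case (a): between the moment $p$ collects its $V(t'').q$ \msg{rec-confirm} messages and the moment a new view is installed, further reconfigurations may be in flight; however, \Cref{assumption:requests} bounds the total number of reconfiguration requests, and \Cref{lemma:v(t')} guarantees that $p$'s own request is propagated into the $\mathit{RECV}$ sets of subsequent views (through \Cref{line:recv}) and therefore appears in some eventually installable view. This is the only non-routine step of the argument; the rest is an application of previously established lemmas.
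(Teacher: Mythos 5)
Your proof is correct and follows essentially the same route as the paper's: the case split from \Cref{lemma:collect_rec_confirm}, the application of \Cref{lemma:v(t')} in the quorum case, and the final appeal to \Cref{lemma:eventually_joins}. The extra detail you supply on how to instantiate the $u \to w$ pair for \Cref{lemma:eventually_joins} (walking back to the first installable view containing $p$) is a reasonable filling-in of a step the paper leaves implicit, not a different argument.
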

\begin{proof}
Suppose that a correct process $p$ executes a \dbrbJoin{} operation at time $t$, by sending a $m_{join} = \text{\operation{reconfig}{\langle +, p \rangle, cv = V(t)}}$ message to processes that are members of $V(t)$ and receives $V(t).q$ \msg{rec-confirm} messages.
\Cref{lemma:v(t')} claims that there will be an installable view $v$, such that $v \neq V(t)$, where $\langle +, p \rangle \in v$.

If $p$ does not receive a quorum of \msg{rec-confirm} messages, \Cref{lemma:collect_rec_confirm} shows that there exists an installable view $v \ni p$.
% \Cr ef{lemma:v(t')} claims that there will be a view $V(t')$, such that $V(t') \neq V(t) \land t' > t$, where $\langle +, p \rangle \in V(t')$.

% Consider an installable view $v \not\ni p$, such that $v$ leads to $v' \ni p$.
By \Cref{lemma:eventually_joins}, the join of process $p$ is eventually completed.
Thus, the lemma holds.
% Consequently, when $V(t')$ is installed at process $p$, the join of process $p$ is completed.
\end{proof}

\begin{lemma} \label{lemma:leave_liveness}
% Eventually, every correct process that has invoked a $join()$ operation becomes the participant in the system.
A correct participant that has invoked a \dbrbLeave{} operation eventually leaves the system.
\end{lemma}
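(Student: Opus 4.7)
The plan is to mirror the structure of \Cref{lemma:join_liveness}: show that a correct leaving process $p$ both escapes the set of installable views containing it and eventually satisfies the $\mathit{can\_leave}$ guard, so that \Cref{line:left} is executed and $\mathit{leaveComplete}$ is triggered. Let $t$ be the time at which $p$ invokes \dbrbLeave{} and let $cv = V(t)$ be $p$'s current view at that moment.

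First, I would handle the reconfiguration side exactly as in the join proof. The guarded loop at \Cref{line:send_leave} sends $\langle -, p \rangle$ via \msg{reconfig} messages in each view $p$ installs. By \Cref{lemma:collect_rec_confirm_leave}, either $p$ collects a quorum of \msg{rec-confirm} messages for some installed view, or the system already transits to a state containing an installable view without $p$. In the former case, \Cref{lemma:v(t')} applied to the change $\langle -, p \rangle$ ensures that the system subsequently transits to a state containing some installable view $v$ with $\langle -, p \rangle \in v.\mathit{changes}$, i.e., $p \notin v$. Either way, using the sequence structure of installable views (\Cref{lemma:unique_installable_views}) and iterating \Cref{lemma:installable_leaves} through the finitely many hops between $p$'s current installable view and the first installable view not containing $p$, I conclude that $p$ eventually executes \Cref{line:finish_wait_updated}.

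The main obstacle, and the second ingredient of the proof, is showing that $\mathit{can\_leave}$ eventually becomes true whenever $p$ must wait for it: at \Cref{line:totality_validity} if $p$ has already delivered or $p = s$, and inside the while loop at \Cref{line:finish_wait_updated} if $p$ has stored a message. The key observation is that while waiting $p$ remains an active participant: by \Cref{lemma:installable_reach_2} it keeps installing successor views it belongs to, and through the \emph{new-view} procedure invoked at \Cref{line:trigger_view_installed} it re-disseminates its outstanding \msg{prepare} or \msg{commit} message in each such view; in the inner while loop it additionally runs View Discovery at \Cref{line:discover} to catch up on newer views in which it is no longer a member, and re-broadcasts its \msg{commit} there. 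Combined with \Cref{lemma:reconfiguration_done} (the system stabilizes) and \Cref{lemma:install_v_final} (all correct members of $v_{final}$ install it), a standard quorum argument---which will be formalized in \Cref{sec:broadcast_proof}---shows that $p$ eventually collects a quorum of \msg{deliver} messages in some installed view and hence sets $\mathit{can\_leave}$ at \Cref{line:set_can_leave}.

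Combining the two parts, once $\mathit{can\_leave}$ holds and $p$ has reached \Cref{line:finish_wait_updated}, the while loop terminates and $p$ triggers $\mathit{leaveComplete}$ at \Cref{line:left}, so \dbrbLeave{} returns. The subtlety to defend most carefully is the interaction between the ongoing reconfiguration---which is busy removing $p$ from the membership---and the broadcast-side obligation of $p$ to keep relaying: the code precisely defers triggering $\mathit{leaveComplete}$ until both the storing duties are discharged and the installable views have moved past $p$, ensuring that neither totality nor validity of a concurrent broadcast is jeopardized by $p$'s departure.
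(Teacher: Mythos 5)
Your proposal is correct and follows essentially the same route as the paper's proof: first use \Cref{lemma:collect_rec_confirm_leave} and \Cref{lemma:v(t')} to obtain an installable view excluding $p$, then \Cref{lemma:installable_reach_2}/\Cref{lemma:installable_leaves} to reach \Cref{line:finish_wait_updated}, and finally \Cref{lemma:reconfiguration_done} to argue that $\mathit{can\_leave}$ is eventually set via a quorum of \msg{deliver} messages. Your extra attention to the wait at \Cref{line:totality_validity} is a welcome refinement but does not change the argument.
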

\begin{proof}
Suppose that a correct participant $p$ executes a \dbrbLeave{} operation at time $t$, by sending a $m_{leave} = \text{\operation{reconfig}{\langle -, p \rangle, cv = V(t)}}$ message to processes that are members of $V(t)$ and receives $V(t).q$ \msg{rec-confirm} messages.
\Cref{lemma:v(t')} claims that there will be an installable view $v$, such that $v \neq V(t)$, where $\langle -, p \rangle \in v$.
% \Cref{lemma:v(t')} claims that there will be a view $V(t')$, such that $V(t') \neq V(t) \land t' > t$, where $\langle -, p \rangle \in V(t')$.

If $p$ does not receive a quorum of \msg{rec-confirm} messages, \Cref{lemma:collect_rec_confirm_leave} shows that there exists an installable view $v \not\ni p$.

% Consider an installable view $v \ni p$, such that $v$ leads to $v' \not\ni p$.
By \cref{lemma:installable_reach_2,lemma:installable_leaves}, process $p$ eventually executes \Cref{line:finish_wait_updated}.
Suppose that $\mathit{stored} = \mathit{false}$ or $\mathit{can\_leave} = \mathit{true}$, then $p$ leaves the system.

However, suppose that $\mathit{stored} = \mathit{true}$ and $\mathit{can\_leave} = \mathit{false}$, then $p$ needs to set its $\mathit{can\_leave}$ variable to $\mathit{true}$.
A correct process $p$ sets $\mathit{can\_leave} = \mathit{true}$ at \Cref{line:set_can_leave} and $p$ does that once it receives $v.q$ of \msg{deliver} messages associated with some valid (and installed) view $v$.
\Cref{lemma:reconfiguration_done} ensures that the reconfiguration of \name eventually stops which implies that process $p$ eventually collects the aforementioned \msg{deliver} messages and then leaves.
\end{proof}

\subsection{Broadcast} \label{sec:broadcast_proof}

\vspace*{-2mm}
Recall that some valid view $v$ is said to be installed if a correct process $p \in v$ 
% considers $v$ as its current view and 
processed \msg{prepare}, \msg{commit} and \msg{reconfig} messages associated with $v$. 
Moreover, a view $v$ can only be installed if it is installable (in some state of the system).

% Suppose that sender $s$ of the message $m$ receives $v.q$ \emph{\operation{ack}{m, \sigma, v}} messages for some valid view $v$ (\Cref{line:certificate_collected_brief} of \Cref{algorithm:b}).
% We say that $s$ \textbf{collects a message certificate} for $m$ in $v$.

\begin{lemma} \label{lemma:collected_in_installed}
Suppose that some (correct or faulty) process $s$ collects a message certificate for a message $m$ in some valid view $v$.
Then, $v$ is installed in the system.
\end{lemma}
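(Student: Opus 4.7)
The plan is to exhibit a correct process in $v$ that processed a \msg{prepare} message associated with $v$; by the very definition of an installed view given at the start of \Cref{sec:correctness-informal}, this suffices.

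First I would unpack what it means for $s$ to collect a message certificate in $v$. By \Cref{line:certificate_collected_brief} of \Cref{algorithm:b}, the certificate consists of $v.q$ signed \msg{ack} messages, each coming from a distinct process in $v.\mathit{members}$ and each carrying a valid signature asserting that the sender acknowledged $m$ in view $v$. Since $v$ is valid by hypothesis, \Cref{assumption:quorums} applies to $v$: the number of Byzantine members of $v$ is at most $\lfloor (|v|-1)/3 \rfloor$, while a quorum has size $v.q = |v| - \lfloor (|v|-1)/3 \rfloor$. Thus the set of senders of the $v.q$ \msg{ack} messages contains at least $v.q - \lfloor (|v|-1)/3 \rfloor \geq 1$ correct process $q \in v$. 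Because signatures cannot be forged, the \msg{ack} attributed to $q$ was indeed produced by $q$.

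Next I would trace back why $q$ produced such an \msg{ack}. A correct process only sends an \operation{ack}{m,\sigma,v} message on \Cref{line:send_ack_brief}, and this line is reached only via the handler guarded at \Cref{line:deliver_prepare_brief}, which fires upon processing a \operation{prepare}{m,v} message for which $v = \mathit{cv}$ at $q$ at the time of processing. In particular, $q$ actually processed a \msg{prepare} message associated with $v$ rather than having suspended its processing per \Cref{line:stop_processing}.

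Finally, the definition of installed view in \Cref{sec:correctness-informal} states that a valid view $v$ is installed once a correct process $p \in v$ has processed \msg{prepare}, \msg{commit}, or \msg{reconfig} messages associated with $v$ during the execution. Since $q$ is such a correct process and it processed a \msg{prepare} message associated with $v$, it follows that $v$ is installed, which is what we wanted to show. The only subtle point, and the place I would double-check, is that the quorum-size assumption is indeed available here: this is guaranteed precisely because the lemma hypothesizes that $v$ is valid, so \Cref{assumption:quorums} is in force and the intersection argument producing a correct \msg{ack}-sender goes through.
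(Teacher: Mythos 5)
Your proof is correct and follows essentially the same route as the paper's: identify a correct member of $v$ among the quorum of \msg{ack} senders (using \Cref{assumption:quorums} on the valid view $v$ and unforgeability of signatures), observe that a correct process only sends an \msg{ack} from the handler that processes a \operation{prepare}{m,v} message, and conclude that $v$ is installed. The paper states this last step by noting that a correct process processes \msg{prepare} messages associated with $v$ only after executing \cref{line:installed_cv,line:install_complete}, while you appeal directly to the definition of an installed view; these are the same observation.
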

\begin{proof}
Since $s$ receives $v.q$ \msg{ack} messages associated with $v$ for $m$, a correct process $p \in v$ sent an \msg{ack} message associated with $v$ for $m$ to the sender $s$.
Because of the fact that $p$ processes \msg{prepare} messages associated with view $v$ only if $v$ is installed by $p$ (\cref{line:installed_cv,line:install_complete}), the lemma holds.
\end{proof}

\begin{lemma} \label{lemma:commited_message}
If message $m$ is delivered by a correct process $p$, then the sender $s$ has collected a message certificate for $m$ in some installed view $v$.
\end{lemma}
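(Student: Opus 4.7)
The plan is to trace delivery backward through the protocol: from the quorum of \msg{deliver} messages that enables $p$'s delivery, through a correct process that verified the certificate, down to the existence of the certificate itself, and then invoke \Cref{lemma:collected_in_installed} to conclude installedness.

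First, I would observe that a correct process $p$ triggers \dbrbDeliver{m} only at \Cref{line:deliver}, which is guarded by \Cref{line:ack-store_quorum_brief}: $p$ must have collected matching \operation{deliver}{m, v} messages from at least $v.q$ members of some valid view $v$. By \Cref{assumption:quorums}, any set of $v.q$ members of a valid view contains at least one correct process $q$. Thus, at least one correct $q \in v$ sent \operation{deliver}{m, v} to $p$.

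Next, I would inspect the code to see where a correct process emits a \msg{deliver} message. Inspection shows the only such location is \Cref{line:send_ack-store_brief}, which is executed only after the certificate-verification check at \Cref{line:verify_cert_3_brief} succeeds on a received \operation{commit}{m, cer, v_{cer}, v'} message. Therefore $cer$ is a valid message certificate for $m$ in some valid view $v_{cer}$, i.e., a collection of $v_{cer}.q$ properly signed \msg{ack} messages of the form $\operation{ack}{m, \sigma, v_{cer}}$, one from each of $v_{cer}.q$ distinct members of $v_{cer}$. By the unforgeability of signatures (standard cryptographic assumption in \Cref{sec:assumptions}), these signatures must genuinely originate from those members; so a message certificate for $m$ in $v_{cer}$ has indeed been assembled, meaning $s$ has collected a certificate for $m$ in $v_{cer}$ in the sense required by the lemma statement.

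Finally, I would apply \Cref{lemma:collected_in_installed} to the valid view $v_{cer}$: since a message certificate for $m$ was collected in $v_{cer}$, $v_{cer}$ is installed in the system, which yields the claim. The main obstacle is a conceptual rather than technical one: reconciling ``$s$ collected a certificate'' (a statement about an algorithmic step executed by $s$, which could be skipped if $s$ is Byzantine) with ``a valid certificate for $m$ associated with $s$ exists''. I would handle this by reading the lemma in its natural sense, i.e., that a certificate with $s$ as sender has been assembled somewhere in the system, since this is what \Cref{lemma:collected_in_installed} actually needs to conclude installedness of $v_{cer}$.
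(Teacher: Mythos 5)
Your proposal is correct and follows essentially the same route as the paper's proof: trace the quorum of \msg{deliver} messages back to a correct sender, note that this sender passed the certificate check at \Cref{line:verify_cert_3_brief}, and then invoke \Cref{lemma:collected_in_installed} to conclude installedness. Your version is in fact slightly more careful than the paper's in distinguishing the delivery view $v$ from the certificate view $v_{cer}$ and in addressing what ``$s$ collected a certificate'' means when $s$ may be Byzantine, but these are refinements of the same argument, not a different one.
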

\begin{proof}
Process $p$ delivers some message $m$ upon receiving $v.q$ of appropriate \msg{deliver} messages associated with some installed view $v$ (\Cref{line:ack-store_quorum_brief}).
Since there is at least one correct process that sent the \msg{deliver} message to $p$ and that process firstly checks a message certificate (\Cref{line:verify_cert_3_brief}), we conclude that $s$ has collected a message certificate for $m$ in some installed view $v$ (\Cref{lemma:collected_in_installed}).
% Process $q$ verifies the message certificate for $m$ (\Cref{line:verify_cert_1_brief,line:verify_cer_2_brief} of \Cref{algorithm:b}) prior to storing $m$.
% % Before $q$ delivers $m$, it verifies the message certificate for $m$ (\Cref{line:verify_cert_1_brief,line:verify_cer_2_brief} of \Cref{algorithm:b}).
% Hence, $s$ has collected a message certificate for $m$ in some installed view $v$ (\Cref{lemma:collected_in_installed}).
\end{proof}

\begin{lemma} \label{lemma:only_most_updated}
Suppose that a view $v$ is the most recent view installed in the system at time $t$, i.e., $V(t) = v$.
If process $s$ broadcasts a message $m$ at time $t$, then $s$ can not collect a message certificate for $m$ in some view $v' \subset v$.
\end{lemma}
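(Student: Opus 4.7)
The plan is to argue by contradiction via a counting bound on the \msg{ack} messages in $v'$ that $s$ can ever receive. Suppose $s$ collects a message certificate for $m$ in some $v' \subset v$. Then $v'.q$ distinct members of $v'$ sent \msg{ack}$(m, \cdot, v')$ to $s$, and by \Cref{assumption:quorums} at least $v'.q - f$ of these senders are correct members of $v'$, where $f = \lfloor (|v'|-1)/3 \rfloor$. By \Cref{lemma:collected_in_installed}, $v'$ is installed, so by \Cref{lemma:unique_installable_views} the view $v'$ is one of the views of the totally ordered chain $v_0 \subset \cdots \subset v_n = v$ of installed views at time $t$; I write $v' = v_i$ with $i < n$.

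Next, I use that $s$ is correct (implicit in ``$s$ broadcasts $m$ at time $t$'') and invokes \dbrbBroadcast{m} only at time $t$, so no \msg{prepare}$(m, \cdot)$ signed by $s$ is sent before $t$, and hence every correct \msg{ack}$(m, \cdot, v')$ is issued at some time $\geq t$. By the guard at \Cref{line:deliver_prepare_brief}, the sender $q \in v'$ must satisfy $cv_q = v'$ at that moment and must not yet have executed \Cref{line:stop_processing} for $v'$.

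The heart of the argument is a bound of $f$ on the number of correct members of $v'$ that can satisfy these conditions at any time $\geq t$. Because $v_{i+1}$ is installed by time $t$, some correct $p^\star$ has executed \Cref{line:installed_cv} for $v_{i+1}$ by $t$, so $p^\star$ previously completed the wait at \Cref{line:state-update} and collected $v'.q$ \msg{state-update} messages from members of $v'$ (taking the ``old view'' parameter of the driving \msg{install} to be $v'$). Each correct responder, by \Cref{line:stop_processing}, stopped processing \msg{prepare}s in $v'$ strictly before sending its \msg{state-update}. At least $v'.q - f = |v'| - 2f$ of the $v'.q$ responders are correct, hence at most $(|v'| - f) - (|v'| - 2f) = f$ correct members of $v'$ can still contribute a $v'$-\msg{ack} at time $\geq t$. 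Combined with at most $f$ Byzantine members of $v'$, the total is at most $2f < v'.q$, contradicting the existence of a certificate of size $v'.q$ in $v'$.

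The main obstacle is justifying the parenthetical: in general the \msg{install}$(v_{i+1}, seq, v^{old})$ that drove $p^\star$'s transition may carry a $v^{old} \subsetneq v'$ rather than $v^{old} = v'$, because a process with earlier current view could have proposed a sequence jumping directly to $v_{i+1}$. To close this gap I plan to classify correct members of $v'$ according to whether their $cv$ ever reached $v'$: those that never did are irrelevant by \Cref{line:deliver_prepare_brief}; those that did must have processed the \msg{install} driving them past $v'$ and then stopped (\Cref{line:stop_processing}) before any \msg{prepare} in $v'$ arriving after time $t$ could be acknowledged. A chain-induction on the installed views, leveraging in particular the state-transfer that installed $v'$ itself, then provides the needed $|v'| - 2f$ correct-stop count in all cases.
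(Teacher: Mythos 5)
Your core argument is, up to phrasing, the paper's own proof: the paper intersects the quorum of \msg{ack}-senders in $v'$ with the quorum of \msg{state-update}-senders (who, being correct, executed \Cref{line:stop_processing} before responding and hence before $t$) and derives the contradiction that some correct process first stopped processing \msg{prepare} messages in $v'$ and later acknowledged one; your counting bound ($\leq f$ correct members of $v'$ not yet stopped, plus $\leq f$ Byzantine, totalling $2f < v'.q$) is arithmetically equivalent to that intersection, and it does check out since the two $f$'s you subtract cancel against the actual number $b\le f$ of Byzantine members. The paper additionally reduces, via \Cref{lemma:unique_installable_views}, to the case where $v'$ is the immediate predecessor of $v$ in the installed chain, which is essentially your $v'=v_i$ versus $v_{i+1}$ setup.

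Two remarks on the obstacle you flag. First, the quorum of stoppers you need is the one attached to the \msg{install} message whose \emph{replaced-view} parameter equals $v'$ (the first step of the chain from $v_i$ towards $v_{i+1}$): by \Cref{definition:leads_to} and \Cref{property:installable}, some sequence is converged on to replace $v'$ itself, so such an \msg{install} exists and its \Cref{line:state-update} wait collects $v'.q$ \msg{state-update} messages from members of $v'$; this is not ``the state-transfer that installed $v'$ itself'', which concerns members of the view preceding $v'$ and is irrelevant here. Second, your closing classification overclaims: a correct member of $v'$ that reaches $cv=v'$ need \emph{not} have processed the \msg{install} driving it past $v'$ before acknowledging a \msg{prepare} arriving after $t$ --- a slow process may install $v'$ after $t$, acknowledge, and only then move on. The argument cannot exclude such processes individually; it can only bound their number, which is exactly what your quorum count does, so you should drop the universal claim and rely solely on the counting. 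With that adjustment your proof coincides with the paper's, including sharing the same elision (the paper likewise asserts, rather than derives in detail, that installation of $v\supsetneq v'$ by time $t$ forces $v'.q$ members of $v'$ to have stopped by $t$).
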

\begin{proof}
Note that in order for $s$ to collect a message certificate for $m$ in some view $v'$, $v'$ must be installed in the system and $s$ must send a $m_{prepare} = \text{\operation{prepare}{m, v'}}$ message to processes that are members of $v'$ (\Cref{line:send_prepare_brief}).
A correct process $q$ sends an appropriate \msg{ack} message for $m_{prepare}$ only if $v'$ is the current view of $q$ and $q$ processes \msg{prepare} messages associated with $v'$.

Let us say that there is a correct process $z$ that considers $v = V(t)$ as its current view at time $t$ and processes \msg{prepare}, \msg{commit} and \msg{reconfig} messages (i.e., $v = V(t)$ is installed by $z$).
Without loss of generality, consider that $v'$ is a view immediately before $v$ in the sequence of installed views (follows directly from \Cref{lemma:unique_installable_views}).
Consequently, at least $v'.q$ processes that are members of $v'$ stopped processing \msg{prepare} messages associated with $v'$ (\Cref{line:stop_processing}).*
Suppose that $s$ has collected a message certificate in $v'$.
That means that at least $v'.q$ processes have sent an \msg{ack} message when they have received a \operation{prepare}{m, v'} message.**
From * and **, we can conclude that at least one correct process has firstly stopped processing \msg{prepare} messages associated with $v'$ and then processed a \msg{prepare} message associated with $v'$.
This is conflicting with the behaviour of a correct process.
Hence, process $s$ can not collect a valid message certificate for $m$ in $v'$.
\end{proof}

\begin{theorem}[No duplication]
% No correct participant delivers more than one message.
No correct process delivers more than one message.
\end{theorem}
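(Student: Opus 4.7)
The plan is to argue directly from the structure of the algorithm that the callback \dbrbDeliverAlone{} is invoked at most once at any correct process, since the single-sender instance being analyzed in this section deals with at most one application-level message per correct sender.

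First, I would observe that a correct process triggers \dbrbDeliver{m} at exactly one place in the pseudocode: \Cref{line:deliver}, inside the \textbf{upon exists} block guarded by the predicate at \Cref{line:ack-store_quorum_brief}. That predicate is explicitly qualified with the clause ``for the first time'', which by the semantics of our \textbf{upon exists} construct fires the enclosing handler at most once over the entire execution at any given process: once the handler has been entered for some view $v$ satisfying the quorum condition, the trigger is disabled and will not fire again for any other view $v'$. Consequently, a correct process executes \Cref{line:deliver} at most once.

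Next, I would note that this is sufficient for no duplication in the single-instance setting of \Cref{sec:properties}: a correct process $p$ delivers at most the single message $m$ that appears as the argument of that unique invocation. No other code path triggers \dbrbDeliverAlone{}, so $p$ cannot deliver twice.

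There is no substantive obstacle: the proof is a one-line inspection of the pseudocode, matching the informal justification already given in \Cref{sec:correctness-informal}. The only subtlety worth flagging is that the ``for the first time'' clause is interpreted globally (across all views $v$), not per-view; this is the intended semantics of the construct as used elsewhere in the algorithm, and it is what makes the argument go through without needing to invoke consistency or any properties of the view-installation machinery established in \Cref{sec:preliminary_lemmata}.
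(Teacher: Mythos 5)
Your proof is correct and follows essentially the same route as the paper's, which likewise justifies no duplication by a one-line appeal to the ``for the first time'' guard at \Cref{line:ack-store_quorum_brief}. Your additional remark that this clause must be read globally across all views (rather than per view) is a reasonable clarification of the intended semantics, but it does not change the argument.
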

\begin{proof}
The verification at \Cref{line:ack-store_quorum_brief} ensures that no correct process delivers more than one message.
% message is delivered more than once.
\end{proof}

\begin{theorem}[Integrity]
% If some correct participant delivers a message $m$ with sender $p$ and process $p$ a is correct participant, then $m$ was previously broadcast by $p$.
% If some correct participant delivers a message $m$ with sender $p$ and process $p$ is \TODO{or was} a correct participant, then $m$ was previously broadcast by $p$.
% If some correct participant delivers a message $m$ with sender $s$ and $s$ is correct, then $s$ previously broadcast $m$.
If some correct process delivers a message $m$ with sender $s$ and $s$ is correct, then $s$ previously broadcast $m$.
\end{theorem}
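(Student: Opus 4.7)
The plan is to trace back from delivery to broadcast via the message certificate, using quorum intersection and the unforgeability of signatures. The argument is essentially short and rests on machinery already established in the excerpt.

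First, I would invoke Lemma~\ref{lemma:commited_message}: since the correct process $p$ delivers $m$, the sender $s$ must have collected a message certificate for $m$ in some installed (and hence valid) view $v$. By construction, this certificate consists of $v.q$ signed \msg{ack} messages for $m$ associated with view $v$. Next, I would appeal to Assumption~\ref{assumption:quorums}, which guarantees that any quorum of a valid view contains at least one correct process; thus some correct process $q \in v$ contributed a signed \msg{ack}$(m, \sigma, v)$ message.

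Then I would trace the behavior of $q$: a correct process only emits an \msg{ack} at \Cref{line:send_ack_brief}, and only in response to a received \msg{prepare}$(m, v)$ message (the \textbf{upon receipt} clause at \Cref{line:deliver_prepare_brief}). Since correct processes only act on messages with valid signatures (\Cref{sec:universe_processes}), and the adversary cannot forge signatures under our cryptographic assumptions, the \msg{prepare}$(m, v)$ message received by $q$ must genuinely have been signed (and therefore sent) by $s$. A correct process $s$ only disseminates \msg{prepare}$(m, \cdot)$ messages at \Cref{line:send_prepare_brief} of the \dbrbBroadcastAlone{} procedure, or at \Cref{line:rebroadcast_prepare_brief} of \emph{new-view()}, and in both cases only after $s$ itself has invoked \dbrbBroadcast{m}. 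Hence $s$ previously broadcast $m$.

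The main (and essentially only) subtlety will be being careful with the signature reasoning: we must note that the signatures inside the \msg{ack} messages authenticate $q$'s action (that $q$ acknowledged $m$ in $v$), but what we need is that $q$'s acknowledgment itself was triggered by a legitimately $s$-signed \msg{prepare}. This is exactly the implicit signature check on incoming \msg{prepare} messages described in \Cref{sec:universe_processes}, so no new ingredient is required; the proof is a direct composition of Lemma~\ref{lemma:commited_message}, the quorum assumption, and unforgeability of signatures.
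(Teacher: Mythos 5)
Your proposal is correct and follows essentially the same route as the paper's own proof: apply \Cref{lemma:commited_message} to obtain a certificate collected in an installed view, extract a correct acknowledger from the quorum, and trace its \msg{ack} back to an $s$-signed \msg{prepare}, hence to the broadcast. Your added care about which signatures authenticate what is a welcome but minor elaboration of the same argument.
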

\begin{proof}
Suppose that a correct process $q$ delivers a message $m$.
That means that there is a message certificate for $m$ collected in some installed view $v$ by $s$ (\Cref{lemma:commited_message}).
A message certificate for $m$ is collected since a quorum of processes in $v$ have sent an appropriate \msg{ack} message for $m$.
A correct process sends an \msg{ack} message only when it receives an appropriate \msg{prepare} message from $s$.
Since $s$ is correct, that means that it sent the aforementioned \msg{prepare} message.
Consequently, $s$ broadcast $m$.
\end{proof}

\begin{lemma} \label{lemma:no_eq_same_view}
Suppose that process $s$ has collected a message certificate for a message $m$ in some installed view $v$.
If $s$ has also collected a message certificate for a message $m'$ in $v$, then $m = m'$.
\end{lemma}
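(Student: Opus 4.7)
The plan is to invoke a standard quorum intersection argument together with the behaviour of a correct process at the \msg{ack} step. First, I would recall that by Assumption~\ref{assumption:quorums}, in the installed view $v$ any two quorums intersect in at least one correct process. Since $s$ has collected a message certificate for $m$ in $v$, there is a set $Q_m \subseteq v.\mathit{members}$ of size at least $v.q$ from which $s$ received signed \msg{ack} messages associated with $v$ for $m$; similarly there is a set $Q_{m'}$ of size at least $v.q$ that acknowledged $m'$ in $v$. The intersection $Q_m \cap Q_{m'}$ therefore contains a correct process $p$ who sent valid \msg{ack} messages for both $m$ and $m'$ in view $v$.

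Next I would exploit the discipline of $\mathit{allowed\_ack}$ at $p$. Whenever a correct process sends an \msg{ack} for a message $x$ in response to a \msg{prepare} message (\Cref{line:deliver_prepare_brief}), the check at \Cref{line:check_allowed_brief} requires $\mathit{allowed\_ack} \in \{\bot, x\}$, and the assignment at \Cref{line:set_state_ack} subsequently sets $\mathit{allowed\_ack} = x$. Hence once $p$ sends an \msg{ack} for $m$ (respectively $m'$), its $\mathit{allowed\_ack}$ is pinned to that message, and the same check then forbids sending an \msg{ack} for any message distinct from it. So the two \msg{ack}s from $p$ must be for the same message, giving $m=m'$.

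The small subtlety I would be careful about is that $\mathit{allowed\_ack}$ can also be modified by the state-transfer procedure (\Cref{line:set_can_ack_brief} and \cref{line:no_ack_1,line:no_ack_2}), which could in principle set it to $\top$ or leave it at $\bot$ across view changes. But the relevant time window here is entirely within the single installed view $v$: both \msg{ack}s are associated with $v$, so $p$'s current view is $v$ at both sending times, and state-transfer only runs while transitioning views. Any transition out of $v$ would cause $p$ to stop processing \msg{prepare} messages associated with $v$ (\Cref{line:stop_processing}), so $p$ cannot send a \msg{ack} for $v$ after such a transition. Thus the relevant value of $\mathit{allowed\_ack}$ when $p$ sends the second \msg{ack} is exactly whatever it was set to when $p$ sent the first, and the argument above applies. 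I expect the main obstacle in the formal write-up to be precisely this bookkeeping — namely, ruling out the possibility that $p$ sends the two \msg{ack}s with a view change in between that resets $\mathit{allowed\_ack}$, which requires pointing explicitly at \Cref{line:stop_processing} and the fact that both certificates carry $v$ in their signed \msg{ack} payloads.
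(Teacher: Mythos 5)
Your proof is correct and follows essentially the same route as the paper's: quorum intersection within $v$ yields a correct process that acknowledged both $m$ and $m'$, and the check at \Cref{line:check_allowed_brief} (with the assignment at \Cref{line:set_state_ack}) forbids a correct process from acknowledging two different messages. The extra bookkeeping about \emph{state-transfer} resetting $\mathit{allowed\_ack}$ is harmless but not needed here — the paper omits it, and indeed \Cref{line:set_can_ack_brief} can never switch $\mathit{allowed\_ack}$ from one message to a different one anyway.
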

\begin{proof}
Because of the quorum intersection, there is at least one correct process that has sent an \msg{ack} message for both $m$ and $m'$.
The verification at \Cref{line:check_allowed_brief} prevents a correct process from sending an \msg{ack} message for two different messages.
% more than one \msg{ack} message associated with some view $v$.
Hence, \Cref{lemma:no_eq_same_view} holds.
\end{proof}

\begin{lemma}\label{lemma:same_ack}
Suppose that process $s$ has collected a message certificate for $m$ in some installed view $v$.
Moreover, suppose that $s$ has collected a message certificate for $m'$ in some installed view $v' \supset v$.
Then, $m = m'$.
\end{lemma}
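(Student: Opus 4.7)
The plan is to argue by induction along the linearly ordered sequence of installed views from $v$ to $v'$ (a chain under $\subsetneq$, by \Cref{lemma:unique_installable_views}) that no correct member of $v'$ ever issues an \msg{ack} for any message $m' \neq m$ while its current view is $v'$. Since a certificate in $v'$ requires $v'.q$ matching \msg{ack}s and $v'.q > f_{v'}$, the Byzantine processes of $v'$ alone cannot produce such a certificate, forcing $m' = m$.

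The invariant to propagate along the chain is: for every installed view $w$ with $v \subseteq w \subseteq v'$, each correct $p \in w$ that has completed \textit{state-transfer} for $w$ has $\mathit{allowed\_ack} \in \{m, \top\}$ and carries a \emph{witness of $m$} in its $\mathit{State}$ field --- either $\mathit{State}.\mathit{ack}$ is a prepare for $m$, or $\mathit{State}.\mathit{conflicting}$ is a pair of two distinct prepares at least one of which is for $m$. For the base case $w = v$, the certificate hypothesis forces at least $v.q - f_v$ correct members of $v$ to execute \Cref{line:send_ack_brief} for $m$, setting $\mathit{allowed\_ack} = m$ and planting a prepare for $m$ into $\mathit{State}.\mathit{ack}$. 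For the inductive step, let $w \subsetneq w'$ be adjacent installed views; a correct $p' \in w'$ waits for $w.q$ \msg{state-update}s from members of $w$ before invoking \textit{state-transfer}, and by the quorum property at least $w.q - f_w \geq f_w + 1$ of these state-updates come from correct members each carrying the $m$-witness. Thus \textit{state-transfer} at $p'$ sees $m$ either as an acknowledged message (via a correct $\mathit{State}.\mathit{ack}$) or through a propagated conflict pair involving $m$; inspecting the three branches, the only possible outcomes are $\mathit{allowed\_ack} = m$ with $\mathit{State}.\mathit{ack}$ a prepare for $m$ (branch~1, when $m$ is the sole acknowledged message), or $\mathit{allowed\_ack} = \top$ with $\mathit{State}.\mathit{conflicting}$ inheriting a prepare for $m$ (branches~2 and~3), so the invariant propagates to $w'$.

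Once the invariant is established at $w = v'$, the guard at \Cref{line:check_allowed_brief} prevents any correct member of $v'$ from acknowledging any $m' \neq m$, leaving at most $f_{v'} < v'.q$ Byzantine \msg{ack}s available for a certificate for $m'$ in $v'$ --- a contradiction unless $m' = m$. The most delicate step is the branch~3 case of the induction: one must argue that the conflict pair inherited into $p'$'s $\mathit{State}.\mathit{conflicting}$ can be chosen from a correct state whose pair already contains a prepare for $m$ (guaranteed by the inductive hypothesis) rather than from a Byzantine-fabricated pair that omits $m$, which turns on reading the ``some state'' quantifier of branch~3 as preferring a correct witness whenever one exists.
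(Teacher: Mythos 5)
Your proof is correct and follows essentially the same route as the paper's: an induction along the chain of installed views from $v$ to $v'$, using quorum intersection on the \msg{state-update} messages to propagate the fact that correct processes may acknowledge only $m$ (or nothing), so that a certificate for $m'\neq m$ in $v'$ is impossible. The one point you flag as delicate --- ensuring the inherited $\mathit{State}.\mathit{conflicting}$ pair contains a prepare for $m$ --- is actually not needed: once $\mathit{allowed\_ack}=\top$ at a correct process, \emph{any} pair of distinct prepares signed by $s$ in its state suffices to drive successors into branch~2 or~3 and hence into $\top$ as well, which is all the lemma requires.
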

\begin{proof}
Since $s$ has collected a message certificate for $m$ in $v$, at least $v.q$ processes have acknowledged $m$ by sending an appropriate \msg{ack} message associated with $v$ to $s$ and their respective states reflect this fact (\cref{line:set_state_ack,line:send_ack_brief}).
Note that $\mathit{State}.\mathit{ack} = $ \operation{prepare}{m, v''}, for some $v'' \subseteq v$, at all correct processes from the set of $v.q$ processes that have acknowledged $m$ in $v$.

In order to prove \Cref{lemma:same_ack}, it is enough to prove that every correct process $q$ that is member of a view $v' \supset v$ will send an \msg{ack} message associated with $v'$ for $m'$ only if $m' = m$.
We prove this by induction. 
% \TODO{Jovan: this is really not an induction}

\textit{Base Step:} Suppose that $v'$ is an installed view which directly succeeds view $v$ in the sequence of installed views (follows directly from \Cref{lemma:unique_installable_views}). 
Because of the fact that members of $v$ have stopped processing \msg{prepare} messages associated with $v$ (\Cref{line:stop_processing}), every correct process $q \in v'$ receives from at least one process $z \in v$ that $z$ has acknowledged $m$ in $v$ (\cref{line:state-update,line:set_can_ack_brief,line:set_state_ack}). 
Hence, every correct process $q \in v'$ knows that it is allowed to send an \msg{ack} message associated with $v'$ for $m'$ only if $m' = m$ (\Cref{line:check_allowed_brief}).
Note that there might be some auxiliary views ``between'' $v$ and $v'$ (\Cref{property:installable} and \Cref{definition:leads_to}).
However, the outcome is the same: every correct process that is a member of $v'$ ``discovers'' that $m$ was acknowledged in the past and sends an \msg{ack} message associated with $v'$ for $m'$ only if $m' = m$.

\textit{Induction Step:} There is an installed view $v'$ such that every correct process $q \in v'$ knows that it can send an \msg{ack} message associated with $v'$ for $m'$ only if $m' = m$. 
We should prove that every correct process $r \in v''$, such that $v''$ is an installed view that directly succeeds view $v'$, will know that it is allowed to send an \msg{ack} message associated with $v''$ for $m''$ only if $m'' = m$.

Process $r$ receives from at least one process in $v'$ (\cref{line:state-update,line:set_can_ack_brief}) that $r$ is allowed to send an \msg{ack} message associated with $v''$ for $m''$ only if $m'' = m$.
Thus, $r$ ``discovers'' that message $m$ has been acknowledged in some view that precedes $v''$.
% it is only allowed to acknowledge a message $m''$ if $m'' = m$.

Therefore, every correct process $p \in v'$, where view $v' \supset v$ is installed, acknowledges a message $m'$ in $v'$ only if $m' = m$.
Consequently, if $s$ collects a message certificate for $m'$ in $v'$, then $m' = m$.
\end{proof}

\begin{lemma} \label{lemma:no_eq_multiple_views}
Suppose that process $s$ has collected a message certificate for $m$ in some installed view $v$.
If $s$ has also collected a message certificate for $m'$ in some installed view $v'$, then $m = m'$.
% Suppose that extended message $m_e$ with identifier $id(m_e)$ is committed in view $v$. 
% If an extended message $m_e'$ with identifier $id(m_e') = id(m_e)$ is committed in view $v'$, then $m_e.M = m_e'.M$.
\end{lemma}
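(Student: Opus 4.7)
The plan is to reduce this to a case analysis on how the two installed views $v$ and $v'$ compare, and then invoke the two preceding lemmas.

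First, I would argue that $v$ and $v'$ are comparable under $\subset$. Recall that every installed view is installable in some state of the system, and by Lemma \ref{lemma:unique_installable_views} the installable views in any reachable state form a sequence, i.e., are totally ordered by $\subset$. Since the sequence of installable views only grows monotonically as the system evolves (installable views accumulate in the state $s$), any two installed views $v$ and $v'$ lie in a common later state $s \in \mathcal{S}$ and are therefore comparable: either $v = v'$, $v \subset v'$, or $v' \subset v$.

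Next, I would handle the three cases. If $v = v'$, then $s$ has collected message certificates for both $m$ and $m'$ in the same installed view $v$, so Lemma \ref{lemma:no_eq_same_view} immediately gives $m = m'$. If $v \subset v'$, then $s$ has a certificate for $m$ in $v$ and a certificate for $m'$ in $v' \supset v$, which is exactly the hypothesis of Lemma \ref{lemma:same_ack}, yielding $m = m'$. If $v' \subset v$, the situation is symmetric: $s$ has a certificate for $m'$ in $v'$ and a certificate for $m$ in $v \supset v'$, and Lemma \ref{lemma:same_ack} (applied with the roles of $(m,v)$ and $(m',v')$ interchanged) gives $m' = m$.

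The proof is essentially routine since the two key technical ingredients are already in hand: comparability of installed views (via the sequence-of-views structure proven earlier) and the propagation of acknowledgments across views (via the state-update mechanism captured by Lemma \ref{lemma:same_ack}). The only subtlety worth double-checking is that the comparability argument indeed applies to installed (not merely installable) views; this is fine because being installed in any state implies being installable in that state, and installable views remain installable in all subsequent states by the monotonicity captured in Lemma \ref{lemma:property_1_part_2}. I do not anticipate any real obstacle in this proof beyond articulating the case analysis cleanly.
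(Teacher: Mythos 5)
Your proof is correct and follows essentially the same route as the paper: the case $v = v'$ is handled by Lemma~\ref{lemma:no_eq_same_view} and the case $v \neq v'$ by Lemma~\ref{lemma:same_ack}, with your additional remark on the comparability of installed views merely making explicit a step the paper leaves implicit.
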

\begin{proof}
\Cref{lemma:no_eq_multiple_views} follows directly from \Cref{lemma:no_eq_same_view} if $v = v'$.

\noindent If $v \neq v'$, \Cref{lemma:no_eq_multiple_views} is a consequence of \Cref{lemma:same_ack}.
\end{proof}

\begin{theorem}[Consistency]\label{theorem:consistency}
% If some correct participant delivers a message $m$ and another correct participant delivers a message $m'$, then $m=m'$.
If some correct process delivers a message $m$ and another correct process delivers a message $m'$, then $m=m'$.
\end{theorem}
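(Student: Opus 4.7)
The plan is to reduce the theorem directly to the previously established \Cref{lemma:commited_message} and \Cref{lemma:no_eq_multiple_views}, which together have already done essentially all the heavy lifting. The structure is short: we first translate the hypothesis about two correct processes delivering messages into a statement about the sender accumulating message certificates, and then invoke the quorum/state-transfer argument already packaged in \Cref{lemma:no_eq_multiple_views}.

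Concretely, I would proceed as follows. Let $p$ be a correct process that delivers $m$ and let $p'$ be a correct process that delivers $m'$. By \Cref{lemma:commited_message} applied to $p$, the sender $s$ has collected a valid message certificate for $m$ in some installed view $v$. Applying the same lemma to $p'$, the sender $s$ has collected a valid message certificate for $m'$ in some installed view $v'$. Note that the sender is the same process in both cases: deliveries carry the implicit sender identity (see the sender-field convention of \Cref{sec:universe_processes} and the integrity discussion), and a certificate is a set of signatures that is verified against that sender, so no other process can mint certificates attributed to $s$.

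With both certificates in hand, \Cref{lemma:no_eq_multiple_views} immediately gives $m = m'$, completing the proof.

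The main obstacle is not in this wrapper theorem itself but in the preceding \Cref{lemma:same_ack}, whose inductive state-transfer argument is what really forbids equivocation across views; the present proof is essentially a two-line corollary, so I expect the write-up to be only a few sentences and I do not anticipate any technical surprises beyond making sure the ``same sender'' observation is stated explicitly.
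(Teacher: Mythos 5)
Your proof is correct and follows exactly the paper's own argument: apply \Cref{lemma:commited_message} to each delivery to obtain two certificates collected by the sender $s$ in installed views, then conclude $m = m'$ via \Cref{lemma:no_eq_multiple_views}. The explicit remark that the sender is the same in both cases is a harmless (and reasonable) addition; the paper leaves it implicit since the specification fixes a single predefined sender $s$.
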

\begin{proof}
\Cref{lemma:commited_message} claims that if message $m$ is delivered by a correct process, then sender $s$ has collected a message certificate for $m$ in some installed view $v$. 
% Hence, $s$ has collected a message certificate for $m$ in some installed view $v$. 
Besides that, $s$ has also collected a message certificate for $m'$ in some installed view $v'$ (\Cref{lemma:commited_message}).
\Cref{lemma:no_eq_multiple_views} states that if $s$ has collected a message certificate for both $m$ and $m'$, then $m = m'$. 
Consequently, \Cref{theorem:consistency} holds.
\end{proof}

\begin{lemma} \label{lemma:store_deliver}
Consider a correct process $p$ such that $p$ stored a message $m$.
Then, $p$ eventually delivers $m$.
\end{lemma}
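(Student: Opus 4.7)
The plan is to show that $p$ eventually collects a quorum of \msg{deliver} messages associated with some installed view, so that the guard at \Cref{line:ack-store_quorum_brief} fires and \dbrbDeliver{m} is invoked at \Cref{line:deliver}. I would reduce the argument to the stable view $v_{final}$, whose existence and installation are guaranteed by \Cref{lemma:reconfiguration_done,lemma:v_final_installable,lemma:install_v_final}, by establishing two complementary facts: (i) $p$ eventually disseminates a valid \msg{commit} for $m$ to the members of $v_{final}$; and (ii) every correct member of $v_{final}$ eventually stores $m$ and responds to $p$ with a \msg{deliver} message associated with $v_{final}$.

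For (i), I would observe that when $p$ executes \Cref{line:store} or \Cref{line:store_first_part}, it records a valid message certificate, sets $\mathit{stored}$, and updates $\mathit{State}.\mathit{stored}$. I would then split into two subcases. If $p$ never invokes \dbrbLeave{}, then \Cref{lemma:install_v_final_2} ensures that $p$ installs $v_{final}$, and the \emph{new-view} procedure at \Cref{line:rebroadcast_commit_brief} (if $p = s$) or \Cref{line:relay_commit_2_brief} (otherwise) disseminates \msg{commit} for $m$ associated with $v_{final}$. If $p$ has invoked \dbrbLeave{}, then \Cref{lemma:collect_rec_confirm_leave,lemma:v(t'),lemma:installable_reach_2,lemma:installable_leaves} imply that $p$ eventually reaches \Cref{line:finish_wait_updated} with $\mathit{stored}$ true; the loop at \Cref{line:discover,line:relay_store_brief_2} then refreshes $cv$ via View Discovery and retransmits \msg{commit} until $\mathit{can\_leave}$ becomes true. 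Since $\mathit{can\_leave}$ is only set at \Cref{line:set_can_leave} after $p$ delivers, this loop cannot terminate before delivery and must, by \Cref{lemma:installable_reach_2}, pass through $v_{final}$.

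For (ii), I would argue by induction along the sequence of installed views, which is totally ordered by \Cref{lemma:unique_installable_views}. Let $v^\ast$ denote the installed view in which $p$ first stores $m$. For any correct process joining a successor installed view $v'$, the wait at \Cref{line:state-update} collects $v.q$ \msg{state-update} messages from the preceding view $v$; by quorum intersection at least one such message comes from a correct process whose $\mathit{State}.\mathit{stored}$ already carries the \msg{commit} for $m$ with a valid certificate, so the check at \Cref{line:verify_cer_2_brief} fires and the joining process itself stores $m$ at \Cref{line:store_first_part,line:state_stored_first_part}. Iterating this up to $v_{final}$ yields that every correct member of $v_{final}$ stores $m$ and therefore, upon receiving $p$'s \msg{commit} in $v_{final}$, passes the certificate check at \Cref{line:verify_cert_3_brief} and sends \msg{deliver} back to $p$ at \Cref{line:send_ack-store_brief}.

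Combining (i) and (ii), \Cref{assumption:quorums} and reliable communication guarantee that $p$ collects at least $v_{final}.q$ \msg{deliver} messages associated with $v_{final}$, so \dbrbDeliver{m} is invoked. The hard part will be the leaving subcase of (i): I must carefully show that the View Discovery loop at \Cref{line:discover} indeed advances $cv$ all the way to $v_{final}$ rather than stalling at some intermediate view, and that while $p$ is leaving it is still a member of every installed view it traverses up to $v_{final}$, so that the quorum-intersection argument of (ii) still applies for each intermediate successor view -- both will rely on \Cref{lemma:valid_view_history_after_join,lemma:installable_reach_2,lemma:installable_leaves}.
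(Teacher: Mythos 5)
Your overall route is the same as the paper's: keep retransmitting the \msg{commit} (via \emph{new-view} at \cref{line:rebroadcast_commit_brief,line:relay_commit_2_brief} if $p$ stays, or via the leaving loop at \cref{line:discover,line:relay_store_brief_2} if $p$ is leaving) until the system stabilizes at $v_{final}$, where a quorum of \msg{deliver} replies is finally guaranteed. Part (i) of your argument is essentially the paper's proof.

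Part (ii), however, contains a genuine gap as stated. Your induction propagates ``stored'' through \msg{state-update} messages by quorum intersection, but quorum intersection only gives you a correct storer in the intersection if a \emph{quorum} of members of the predecessor view stored $m$ --- that is precisely the hypothesis of \Cref{lemma:f+1_store}, which is a separate lemma with a stronger precondition. Here the hypothesis is only that the single process $p$ stored $m$, so the $v.q$ \msg{state-update} messages collected at \Cref{line:state-update} need not include $p$'s state at all, and the induction does not get off the ground. Fortunately the induction is also unnecessary: a correct member of $v_{final}$ sends \msg{deliver} at \Cref{line:send_ack-store_brief} as soon as the certificate check at \Cref{line:verify_cert_3_brief} passes and the view matches, \emph{independently} of whether it had previously stored $m$ (the $\lnot\mathit{stored}$ guard at \Cref{line:verify_stored_brief} only gates the storing/relaying, not the \msg{deliver} reply). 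So once $p$ disseminates its \msg{commit} to $v_{final}$, every correct member of $v_{final}$ replies, and \Cref{assumption:quorums} plus reliable links give the quorum. Relatedly, your worry that a leaving $p$ must remain a member of the views it traverses is unfounded in the wrong direction: a leaving $p$ is eventually \emph{not} a member of $v_{final}$, but membership is irrelevant because \msg{deliver} is sent back to the sender of the \msg{commit} and the quorum at \Cref{line:ack-store_quorum_brief} is counted over members of $v$, not over recipients that include $p$.
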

\begin{proof}
Correct process $p$ stores a message $m$ at \cref{line:store_first_part,line:store}.

Let us first investigate a case where $p$ stores $m$ at \Cref{line:store_first_part}.
Suppose that $cv_p = v$ at that time and consider a view $v' \supseteq v$ such that $v'$ is the first next view installed by $p$ (potentially, $v' = v$).
Since $v'$ is installed by $p$, we know that $p \in v'$ and $p$ disseminates a \msg{commit} message with $m$ at \Cref{line:relay_commit_2_brief}, if $p \neq s$.
Otherwise, $p$ disseminates a \msg{commit} message with $m$ at \Cref{line:rebroadcast_commit_brief}.

However, it is also possible that $p$ does not install any new view (since it has previously invoked the \dbrbLeave{} operation).
In that case, $p$ disseminates a \msg{commit} message with $m$ at \Cref{line:relay_store_brief_2}.

% If $p \in v'$ and $v'$ is installed by $p$, then $p$ disseminates a \msg{commit} message with $m$ at \Cref{line:relay_commit_2_brief}.
% If $p \in v'$ and $v'$ is not installed by $p$, then $p$ eventually installs some view $v''$ and reduces this case to the previous one.
% If $p \notin v'$, then $p$ disseminates a \msg{commit} message with $m$ at \Cref{line:relay_store_brief_2}.

There are two possible scenarios:
\begin{compactenum}
    \item Process $p$ collects a quorum of appropriate \msg{deliver} messages associated with $v'$ and delivers $m$ (\Cref{line:ack-store_quorum_brief}).
    \item Process $p$ does not collect a quorum of appropriate \msg{deliver} messages associated with $v'$.
    However, $p$ disseminates a \msg{commit} message with $m$ to a view $v'' \supset v'$.
    Process $p$ collects a quorum of \msg{deliver} messages in view $v''$ or disseminates a \msg{commit} message again to members of some new view $v'''$.
    Eventually, view $v'''$ ``becomes'' $v_{final}$ (\Cref{lemma:reconfiguration_done}) and then $p$ delivers $m$ since the system can not reconfigure further.
\end{compactenum}

Suppose now that $p$ stores $m$ at \Cref{line:store}.
Process $p$ disseminates a \msg{commit} message right away (at \Cref{line:relay_store_brief}).
Similar argument as in the case above proves that $p$ eventually collects a quorum of \msg{deliver} messages and delivers $m$.
\end{proof}

\begin{lemma} \label{lemma:f+1_store}
Consider an installed view $v$ such that $v$ is the least recent view such that at least $v.q - \lfloor \frac{|v| - 1}{3} \rfloor$ correct processes that are members of $v$ have stored message $m$.
% upon receiving \msg{commit} message for $m$ associated with $v$.
% Suppose that a process $p$ has received $v.q$ \emph{\operation{deliver}{m, v}} messages (\Cref{line:ack-store_quorum_brief} of \Cref{algorithm:b}), where $v$ is some installed view.
Then, every correct process $q \in v'$, where $v'$ is an installed view and $v' \supseteq v$, stores $m$.
\end{lemma}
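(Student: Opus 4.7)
The plan is to induct on the position of $v'$ in the sequence of installed views $v = v_0 \subset v_1 \subset v_2 \subset \cdots$ starting at $v$, which is well-defined by \Cref{lemma:unique_installable_views}. Setting $f := \lfloor (|v|-1)/3 \rfloor$, the premise furnishes $v.q - f = |v| - 2f \geq f + 1$ correct members of $v$ that have stored $m$; by a size argument every $v.q$-sized subset of $v$ must contain at least one of them. This is the kernel of the propagation argument: whenever correct members of a later view wait for a quorum of \msg{state-update} messages from an earlier view, at least one correct state-update will carry the stored status forward, triggering the store at \Cref{line:verify_cer_2_brief} in the \textit{state-transfer} procedure.

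\textbf{Base case ($v' = v$).} Each of the $v.q - f$ correct processes $p \in v$ that stored $m$ also disseminated a \msg{commit} message with a valid certificate to every member of $v$ at \Cref{line:relay_store_brief}. Hence any correct $q \in v$ that is still processing \msg{commit} messages associated with $v$ eventually receives such a message, passes the certificate check at \Cref{line:verify_cert_3_brief}, and stores $m$ at \Cref{line:store}. A correct $q \in v$ that has stopped processing messages for $v$ must have $cv_q \supset v$ and therefore belongs to some later installed view, which is covered by the inductive step.

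\textbf{Inductive step, and the main obstacle.} Assume that every correct process in the previously installed view $v_{i-1}$ has stored $m$, and consider a correct $q \in v_i$. By \Cref{lemma:installable_reach_2} $q$ eventually sets $cv_q = v_i$, which happens only after $q$ R-delivers an \msg{install}$(\cdot, \cdot, v^*)$ message and gathers $v^*.q$ \msg{state-update} messages from members of $v^*$, for some valid view $v^*$ lying between $v_{i-1}$ and $v_i$. Applying the quorum-intersection kernel inductively to every transition on the chain from $v_{i-1}$ up to $v^*$, every correct member of $v^*$ has stored $m$; so the quorum of state-updates received by $q$ contains at least one correct message whose $\mathit{State}.\mathit{stored}$ field carries a valid commit for $m$, and the check at \Cref{line:verify_cer_2_brief} forces $q$ to set $\mathit{stored} = \mathit{true}$ and store $m$. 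The main technical obstacle is precisely this intermediate chain: one has to strengthen the induction to track the propagation of $\mathit{stored}$ through every valid view, installable or auxiliary, sitting between two consecutive installed views, and to check that the $\mathit{update\_if\_bot}$ assignments at \Cref{line:state_stored_first_part} and \Cref{line:update_state_stored} indeed preserve a correct commit witness at each hop, so that every state-update produced by a correct process on the chain continues to certify $m$ to the next view.
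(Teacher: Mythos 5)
Your argument is essentially the paper's: an induction along the sequence of installed views whose engine is the observation that a quorum of $v.q$ \msg{state-update} messages must intersect the set of at least $v.q - \lfloor \frac{|v|-1}{3} \rfloor$ correct storers (since $|v| - 3\lfloor \frac{|v|-1}{3} \rfloor \geq 1$), so the check at \Cref{line:verify_cer_2_brief} propagates the stored value from each installed view to its successor; the auxiliary-view chain you flag as the main obstacle is glossed over in the paper's own proof as well. One small repair to your base case: a correct member of $v$ that has stopped processing \msg{commit} messages for $v$ need not belong to any later installed view, so it is covered by the state-transfer quorum intersection it performs upon receiving the \msg{install} message (and, when $v = v_{final}$ and no further install ever fires, by the relayed \msg{commit} messages together with reliable channels, which is exactly how the paper isolates that terminal case), rather than by your inductive step.
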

\begin{proof}
We prove this lemma by induction.
Note that $\mathit{State}.\mathit{stored} = $ \operation{commit}{m, cer, v_{cer}, v''}, for some $v_{cer} \subseteq v$ and $v'' \subseteq v$, at the aforementioned correct processes.

\textit{Base Step:} View $v'$ is a view which directly succeeds view $v$ in a sequence of installed views.
Since correct processes stop processing \msg{commit} messages at \Cref{line:stop_processing}, a process $q \in v$ receives from at least one process in $v$ that message $m$ was stored, checks the message certificate and stores $m$ (\cref{line:verify_cer_2_brief,line:store_first_part,line:state_stored_first_part}).
Moreover, this also holds for a correct process $r \in v'$.
The reason is that \msg{state-update} messages are sent to members of both $v$ and $v'$ (\Cref{line:send_state_update}).
% Hence, process $q \in v'$ receives from at least one process in $v$ that message $m$ was stored, checks the message certificate and stores $m$ (\Cref{line:verify_cer_2_brief} of \Cref{algorithm:b}).

\textit{Induction Step:} A view $v'$ is installed in the system such that every correct process $q \in v'$ has stored message $m$. 
We should prove that every correct process $r \in v''$, such that $v''$ is a view that directly succeeds view $v'$ in a sequence of installed views, will store $m$.
Similarly to the base step, process $r$ receives from at least one process in $v'$ that $m$ was stored and stores $m$ (\cref{line:verify_cer_2_brief,line:store_first_part,line:state_stored_first_part}).

Lastly, we need to show that the lemma holds even if $v = v_{final}$ (\Cref{lemma:reconfiguration_done}).
If this is the case, then at least one correct process sent a \msg{commit} message associated with $v_{final}$ for $m$ (\Cref{line:relay_store_brief}).
Since the system cannot reconfigure further, every correct process eventually receives the \msg{commit} message and stores $m$.
\end{proof}

\begin{theorem}[Validity]\label{theorem:validity}
% If a correct participant $p$ broadcasts a message $m$, then every correct permanent participant $q$ eventually delivers $m$.
% If a correct participant $s$ broadcasts a message $m$, then every correct participant $q$ eventually delivers $m$ or leaves the system.
% If a correct participant $s$ broadcasts a message $m$, then every correct participant $q$ eventually delivers $m$ or leaves the system.
% If a correct participant $s$ broadcasts a message $m$ at time $t$, then every correct participant at time $t' \geq t$ eventually delivers $m$ or leaves the system.
% If a correct participant $s$ broadcasts a message $m$ at time $t$, then every correct process, if it is a participant at time $t' \geq t$, eventually delivers $m$ or leaves the system.
If a correct participant $s$ broadcasts a message $m$ at time $t$, then every correct process, if it is a participant at time $t' \geq t$ and never leaves the system, eventually delivers $m$.
\end{theorem}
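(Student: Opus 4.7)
The plan is to anchor everything at the final stable view $v_{final}$: by \Cref{assumption:requests} together with \Cref{lemma:reconfiguration_done,lemma:v_final_installable}, $v_{final}$ is installable in the final state $s_{final}$, and \Cref{lemma:install_v_final_2} gives $p \in v_{final}$ and $p$ installs $v_{final}$ for every correct $p$ that is a participant at some $t'\ge t$ and never invokes \dbrbLeave{}. For $s$ itself, the guard at \Cref{line:totality_validity} together with the assignment at \Cref{line:set_can_leave} prevents $s$ from ever signing a $\langle -, s\rangle$ \msg{reconfig} before $m$ is delivered; hence, by \Cref{lemma:leave_safety}, $s$ is a member of every valid view until it delivers $m$, so either $s$ has already delivered and we are done for $s$, or $s \in v_{final}$ and $s$ installs $v_{final}$ by \Cref{lemma:install_v_final}.

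\textbf{Certificate collection.} Next I would show that $s$ eventually obtains a message certificate $\mathit{cer}\neq\bot$ in some installed view $v_{cer}\subseteq v_{final}$. Each time $s$ installs a new view while $\mathit{cer}$ is still $\bot$, the \emph{new-view} procedure redisseminates a \msg{prepare} for $m$ (\Cref{line:rebroadcast_prepare_brief}); in particular this happens inside $v_{final}$. Since $s$ is correct it never signs a second \msg{prepare}, so the branches of the state-transfer routine that would set $\mathit{allowed\_ack}=\top$ against $s$ (\Cref{line:no_ack_1,line:no_ack_2}) never fire at a correct process; hence $\mathit{allowed\_ack}\in\{\bot,m\}$ at every correct member of $v_{final}$ and the check at \Cref{line:check_allowed_brief} passes, producing an \msg{ack} for $m$. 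By \Cref{assumption:quorums} at least $v_{final}.q$ correct members send such \msg{ack}s and no further reconfiguration occurs, so the condition at \Cref{line:certificate_collected_brief} eventually fires at $s$.

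\textbf{From certificate to delivery.} With $\mathit{cer}$ in hand, $s$ disseminates a \msg{commit} in $v_{final}$ (\Cref{line:send_commit_brief,line:rebroadcast_commit_brief}). Every correct member of $v_{final}$ verifies the certificate at \Cref{line:verify_cert_3_brief}, stores $m$ at \Cref{line:store}, relays the \msg{commit}, and returns a \msg{deliver} at \Cref{line:send_ack-store_brief}. Thus at least a quorum of correct members of $v_{final}$ store $m$, so \Cref{lemma:f+1_store} forces every correct member of $v_{final}$ to store $m$, and \Cref{lemma:store_deliver} then produces \dbrbDeliver{m} at each of them, covering $s$ itself and every correct $p$ satisfying the hypothesis of the theorem.

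\textbf{Main obstacle.} The delicate point is that while the broadcast is in flight the membership can shift indefinitely, so $s$ need not accumulate a quorum of \msg{ack} or \msg{deliver} responses in any particular intermediate view. Closing this gap requires two mechanisms to work in tandem: the \emph{new-view} procedure, which forces $s$ and every storing relay to re-emit the pending \msg{prepare}/\msg{commit} on each freshly installed view, and the state-transfer step at \Cref{line:state-transfer}, which carries the ``allowed \msg{ack}'' and ``stored'' flags across reconfigurations so that processes installing $v_{final}$ cooperate with $s$ rather than reject $m$. Together with the stabilization granted by \Cref{assumption:requests} and reliable communication, these ensure that both quorums are collected by the time the system has settled at $v_{final}$.
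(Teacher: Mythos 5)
Your proposal is correct and follows essentially the same route as the paper's proof: anchor the argument at the final view $v_{final}$ guaranteed by \Cref{assumption:requests} and \Cref{lemma:reconfiguration_done}, argue that the certificate-collection and storing phases must succeed there at the latest (via re-dissemination in \emph{new-view} and state transfer), and then invoke \Cref{lemma:f+1_store} and \Cref{lemma:store_deliver} to propagate storage and delivery to every correct member of $v_{final}$. The only cosmetic difference is how the case of a leaving sender is dispatched — the paper introduces a penultimate view $v_{final}'$ in which all requests except $\langle -, s\rangle$ are processed, whereas you observe directly that the guard at \Cref{line:totality_validity} prevents $s$ from even disseminating its leave request before delivering — but both rest on the same mechanism.
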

\begin{proof}
\Cref{lemma:reconfiguration_done} claims that there is a view $v_{final}$ which includes every possible change that can be proposed in the execution.
Therefore, we should prove that every correct process $q \in v_{final}$ eventually delivers $m$.
Moreover, every correct process $q \in v_{final}$ has a valid view history up to every installable view $v \subseteq v_{final}$ (\cref{lemma:vd_second,lemma:vd_final,lemma:valid_view_history_after_join}).
Lastly, every correct process $p \in v_{final}$ eventually sets $cv_p = v_{final}$ (\Cref{lemma:installable_reach_2}).

When process $s$ broadcasts $m$, $s$ includes its view of the system inside of a \msg{prepare} message (\Cref{line:send_prepare_brief}).
Suppose that the current view of $s$ is $v$.
There are two possible scenarios:
\begin{compactenum}
    \item Process $s$ collects a message certificate for $m$ in $v$ (\Cref{line:certificate_collected_brief}).
    \item Process $s$ does not collect a message certificate for $m$ in $v$.
    However, $s$ eventually installs a new view $v'$ and rebroadcasts $m$ to processes in $v'$ (\Cref{line:rebroadcast_prepare_brief}).
    Process $s$ collects a message certificate in view $v'$ or rebroadcasts the message again to processes in a new view $v''$.
    Eventually, view $v''$ ``becomes'' $v_{final}$ and then $s$ collects a message certificate.
    
    Consider a case where $s$ wants to leave the system.
    Then, $s$ will not collect a message certificate for $m$ in $v_{final}$, because $s$ is not a member of $v_{final}$.
    However, $s$ eventually reaches a view $v_{final}'$ in which all reconfiguration requests are processed, except for $\langle -, s \rangle$.
    And the system can not reconfigure from $v_{final}'$.
    % (verification at \Cref{line:totality_validity_1} of \Cref{algorithm:dynamic}).
    Hence, $s$ collects a message certificate for $m$ in $v_{final}'$.
\end{compactenum}

The same argument can be made for the \msg{commit} and \msg{deliver} messages.
Hence, when $s$ receives a quorum of \msg{deliver} messages associated with some view $v'$, \cref{lemma:store_deliver,lemma:f+1_store} prove that every correct process that is a member of an installed view $v'' \supseteq v'$ delivers $m$.
This ensures that every correct process that is a member of $v_{final}$  delivers $m$.
Thus, the validity property is satisfied.
\end{proof}

\begin{lemma} \label{lemma:broadcast_liveness}
\dbrbBroadcastAlone{} operation invoked by a correct process eventually terminates.
\end{lemma}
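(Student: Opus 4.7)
The plan is to reduce this to the argument already developed for \Cref{theorem:validity}, observing that the \dbrbBroadcastAlone{} call by a correct process $s$ completes precisely when $s$ sets $\mathit{can\_leave} = \mathit{true}$ at \Cref{line:set_can_leave}, which in turn requires $s$ to collect $v.q$ matching \msg{deliver} messages in some installed view $v$ (\Cref{line:ack-store_quorum_brief}). So it suffices to show that $s$ itself eventually delivers its own broadcast message $m$.

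First, I would invoke \Cref{lemma:reconfiguration_done} to obtain the final state $s_{final}$ and the associated view $v_{final}$. I then split into two cases depending on whether $s$ has invoked \dbrbLeave{}. If $s$ never leaves, then $s \in v_{final}$ is a non-leaving correct participant and \Cref{theorem:validity} directly gives that $s$ eventually delivers $m$, at which point \Cref{line:set_can_leave} fires and the broadcast returns. If instead $s$ has invoked \dbrbLeave{}, note that the guard at \Cref{line:totality_validity} of \dbrbLeave{} forces $s$ to block until $\mathit{can\_leave}$ holds, so $s$ cannot actually depart the system before completing the broadcast; thus, for the purposes of this argument, $s$ behaves as a non-leaving participant with respect to the in-flight broadcast of $m$ until \Cref{line:set_can_leave} is reached.

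Next, I would replay the internal structure of the proof of \Cref{theorem:validity} specialized to $s$ itself. Namely: (i) by \Cref{lemma:installable_reach_2} and \Cref{lemma:valid_view_history_after_join}, $s$ eventually installs a most-recent installable view $v^*$ with $s \in v^*$ (either $v_{final}$ if $s$ does not leave, or the analogous view $v_{final}'$ in which every pending change except $\langle -, s\rangle$ has been processed, as in case~2 of the proof of \Cref{theorem:validity}); (ii) the rebroadcast of \msg{prepare} at \Cref{line:rebroadcast_prepare_brief} on each newly installed view, together with the fact that no further reconfiguration is possible from $v^*$, guarantees that $s$ eventually gathers a quorum of \msg{ack} messages in $v^*$ and hence assembles a message certificate at \Cref{line:certificate_collected_brief}; (iii) $s$ then disseminates the corresponding \msg{commit} in $v^*$ (\Cref{line:rebroadcast_commit_brief} or \Cref{line:send_commit_brief}), stores $m$ itself at \Cref{line:store}, and by \Cref{lemma:f+1_store} every correct member of $v^*$ stores $m$ and replies with \msg{deliver}; (iv) applying \Cref{lemma:store_deliver} (or directly the quorum collection in $v^*$, since the system cannot reconfigure further), $s$ eventually collects $v^*.q$ matching \msg{deliver} messages and triggers \dbrbDeliver{m} at \Cref{line:deliver}, immediately setting $\mathit{can\_leave} = \mathit{true}$ at \Cref{line:set_can_leave}, which is what it means for \dbrbBroadcastAlone{} to complete.

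The main subtlety, and the only thing that needs a careful argument rather than a citation of \Cref{theorem:validity} as-is, is the case $s \in \dbrbLeave{}$: one must verify that the blocking semantics of \Cref{line:totality_validity} together with the rebroadcast mechanism in \textit{new-view} ensure that $s$ does not leave the system before the broadcast completes, and that $s$ does remain in some $v_{final}'$ from which it can collect its own certificate and \msg{deliver} quorum. Once this is pinned down, the remainder is a direct specialization of the validity argument, and the lemma follows.
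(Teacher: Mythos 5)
Your proposal is correct and takes essentially the same route as the paper: the paper's own proof of this lemma is a one-line reduction to the argument of \Cref{theorem:validity}, which is exactly the reduction you carry out (you simply make explicit the details the paper leaves implicit, including the $v_{final}'$ case for a leaving sender and the completion criterion at \Cref{line:set_can_leave}).
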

\begin{proof}
The proof follows the same argument as the proof of \Cref{theorem:validity}.
\end{proof}

\begin{theorem} [Liveness]
Every operation invoked by a correct process eventually completes.
\end{theorem}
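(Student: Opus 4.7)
The plan is to reduce the Liveness theorem to the three operation-specific liveness lemmas already established in the appendix, one per operation exposed by the \name interface. Recall from \Cref{sec:spec} that a correct process may invoke exactly three operations: \dbrbJoin{}, \dbrbLeave{}, and \dbrbBroadcastAlone{}. So it suffices to argue case by case.

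First, I would consider a correct process $p$ that invokes \dbrbJoin{}. By \Cref{lemma:join_liveness}, $p$ eventually joins the system, i.e., the \emph{joinComplete} event is triggered at $p$, which is exactly the condition on which the \dbrbJoin{} procedure returns in \Cref{algorithm:dynamic}. Thus the invocation of \dbrbJoin{} completes. Second, I would consider a correct participant $p$ that invokes \dbrbLeave{}. By \Cref{lemma:leave_liveness}, $p$ eventually executes \Cref{line:left} and triggers \emph{leaveComplete}, which is the return condition of \dbrbLeave{}; hence the invocation completes. Third, if a correct participant $p$ invokes \dbrbBroadcastAlone{m}, then by \Cref{lemma:broadcast_liveness} the operation terminates at $p$. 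Combining the three cases yields the theorem.

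The main technical content is not in this final wrap-up but in the three supporting lemmas, whose proofs rely on \Cref{assumption:requests} (finite reconfiguration requests, hence the existence of the stabilized view $v_{final}$ from \Cref{lemma:reconfiguration_done}), on the fact that eventually every correct process in $v_{final}$ installs it (\Cref{lemma:install_v_final_2}), and on the ``retry in every newly installed view'' mechanism realized by the \emph{new-view} procedure together with \Cref{line:rebroadcast_prepare_brief,line:rebroadcast_commit_brief,line:relay_commit_2_brief}. The only subtlety at the wrap-up level is to observe that \Cref{lemma:leave_liveness} is stated for a correct \emph{participant}; but by the state-transition rules of \Cref{fig:process-states}, a correct process can invoke \dbrbLeave{} only while it is participating, so the hypothesis of \Cref{lemma:leave_liveness} is satisfied whenever \dbrbLeave{} is invoked by a correct process, and the case analysis is exhaustive.
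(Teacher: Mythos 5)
Your proposal is correct and follows exactly the paper's own proof, which likewise derives the theorem directly from \cref{lemma:join_liveness,lemma:leave_liveness,lemma:broadcast_liveness} via the same three-case decomposition over the interface operations. The extra remarks you add (on where the technical content lies and on the participant precondition for \dbrbLeave{}) are accurate but not needed beyond what the paper states.
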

\begin{proof}
The theorem follows directly from \cref{lemma:join_liveness,lemma:leave_liveness,lemma:broadcast_liveness}.
\end{proof}

\begin{lemma} \label{lemma:totality_helper}
Consider a correct process $p$ that invokes \dbrbLeave{} operation at time $t$.
Suppose that $v$ is the most recent valid view in the system at time $t$.
% Suppose that $v = V(t)$.
Then, $p \in v$.
% Consider the least recent installed view $v' \not\ni p$ such that $v' \supset v''$, where $v'' \ni p$ is an installed view, then $v' \supset v$.
\end{lemma}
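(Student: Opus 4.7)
The plan is to derive the claim by contradiction, combining \Cref{lemma:leave_safety} with the membership rules governing correct processes. First, I would establish the existence of a valid view $v_{\mathit{join}}$ with $p \in v_{\mathit{join}}$ whose timestamp is at most $t$. There are two cases: if $p$ was in the system at time $0$, then $v_{\mathit{join}} = v_0$ (the initial view, which is valid by \Cref{definition:valid-view} since $v_0 \in s_i$); otherwise $p$ completed its \dbrbJoin{} operation prior to invoking \dbrbLeave{}, meaning $p$ executed \Cref{line:join_completed}, which happened only after $p$ set $\mathit{cv} = \omega$ via an \msg{install} message for a view $\omega \ni p$ that is valid by construction. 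In either case, such a $v_{\mathit{join}}$ exists and is valid at time $t$.

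Next, I would use \Cref{lemma:valid_comparable} to compare $v_{\mathit{join}}$ and $v$: since both are valid at time $t$, either they are equal or one strictly contains the other. Since $v$ is by hypothesis the most recent valid view at time $t$, we have $v_{\mathit{join}} \subseteq v$. If $v_{\mathit{join}} = v$, then $p \in v$ and we are done.

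Otherwise $v_{\mathit{join}} \subsetneq v$, and I would suppose for contradiction that $p \notin v$. Then by \Cref{lemma:leave_safety} applied to $v$ and $v_{\mathit{join}}$, $p$ must have invoked \dbrbLeave{} prior to time $t$. But by rule (iv) of the protocol (see \Cref{sec:spec}), \dbrbLeave{} can only be invoked by a participating process, i.e., one that has not yet invoked \dbrbLeave{}; hence a correct process invokes \dbrbLeave{} at most once. Since $p$ invokes \dbrbLeave{} at time $t$, it had not invoked \dbrbLeave{} at any time strictly before $t$, contradicting the consequence of \Cref{lemma:leave_safety}. Therefore $p \in v$.

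The main obstacle is essentially bookkeeping rather than substance: one must carefully invoke the right membership rule to conclude that a correct participant cannot have invoked \dbrbLeave{} previously, and one must justify the existence of the valid ``anchor'' view $v_{\mathit{join}}$ containing $p$, since that hypothesis is what allows \Cref{lemma:leave_safety} to be applied. Both points rely on definitions already established in \Cref{sec:spec} and \Cref{sec:preliminary_lemmata}, so the argument reduces to a short contradiction once the setup is in place.
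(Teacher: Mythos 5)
Your proof is correct and takes essentially the same route as the paper: the paper's (one-sentence) argument also combines \Cref{lemma:leave_safety} with the fact that $p$ did not invoke \dbrbLeave{} before time $t$. You merely fill in the details the paper elides, namely the existence of a valid ``anchor'' view containing $p$ and the comparability of valid views needed to place it below $v$.
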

\begin{proof}
We conclude that $p \in v$ because of the fact that $p$ did not invoke \dbrbLeave{} before time $t$ and the fact that \msg{propose} messages carry signed \msg{reconfig} messages for changes they propose, i.e., \Cref{lemma:leave_safety}.
% Let $v''$ be the least recent valid view such that $v'' \ni p$.
% For every valid view $v_p$, where $v_p \supseteq v''$ and $v_p \subseteq v$, $p \in v_p$ (\Cref{lemma:unique_valid_views}).
% Hence, lemma holds.
\end{proof}

\begin{theorem}[Totality]
% If a correct participant $p$ delivers a message $m$, then every correct permanent participant $q$ eventually delivers $m$.
% If a correct participant $p$ delivers a message $m$, then every correct participant $q$ eventually delivers $m$ or process $q$ leaves the system.
% If a correct participant $p$ delivers a message $m$ at time $t$, then every correct participant $q$ that did not invoke \dbrbLeave{} before $t$ eventually delivers $m$.
% If a correct process $p$ delivers a message $m$ at time $t$, then every correct participant at $t' \geq t$ eventually delivers $m$.
If a correct process $p$ delivers a message $m$ at time $t$, then every correct process, if it is a participant at time $t' \geq t$, eventually delivers $m$.
\end{theorem}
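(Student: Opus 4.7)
My plan is to mirror the structure of the proof of Validity (\Cref{theorem:validity}), using $p$'s delivery of $m$ as the seed that propagates the stored state along the chain of installed views. First, from the hypothesis that $p$ delivers $m$ at time $t$, I would extract (via \Cref{line:ack-store_quorum_brief} together with \Cref{lemma:commited_message}) that $p$ received a quorum of \msg{deliver} messages associated with some installed view $v_p$, which means that at least $v_p.q - \lfloor (|v_p| - 1)/3 \rfloor$ correct members of $v_p$ stored $m$ before sending those \msg{deliver} messages. \Cref{lemma:f+1_store} then guarantees that every correct process in every installed view $v' \supseteq v_p$ eventually stores $m$, and by \Cref{lemma:store_deliver} every such process subsequently delivers $m$.

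Next, I would fix a correct process $q$ that is a participant at some $t' \geq t$ and split on whether $q$ ever invokes \dbrbLeave. In the case where $q$ never leaves, \Cref{lemma:install_v_final_2} gives that $q \in v_{final}$ and that $q$ installs $v_{final}$; since both $v_p$ and $v_{final}$ are installable in $s_{final}$, \Cref{lemma:unique_installable_views} makes them comparable, and hence $v_p \subseteq v_{final}$, so \Cref{lemma:f+1_store} and \Cref{lemma:store_deliver} finish the argument. In the other case, $q$ invokes \dbrbLeave at some $t_{leave} \geq t'$; let $v^*$ denote the most recent installed view containing $q$, which exists because $q$ joined and installed at least one view before $t'$. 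The key step is to show $v^* \supseteq v_p$: otherwise, by \Cref{lemma:unique_installable_views}, $v_p$ strictly succeeds $v^*$ in the sequence of installed views, so $v_p$ contains the change $\langle -, q \rangle$; by \Cref{lemma:leave_safety} this forces $q$ to have invoked \dbrbLeave before $v_p$ was constructed and thus before $t \leq t'$, contradicting the assumption that $q$ is a participant at $t'$. With $v^* \supseteq v_p$ established, \Cref{lemma:f+1_store} gives that $q$ eventually stores $m$. Finally, the loop at \Cref{line:finish_wait_updated} together with the assignment $\mathit{can\_leave} = \mathit{true}$ at \Cref{line:set_can_leave} prevents $q$ from returning from \dbrbLeave until $q$ actually delivers $m$; consequently, $q$ delivers $m$ before leaving.

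The main obstacle I anticipate is the comparison $v^* \supseteq v_p$ in the leaving case: one must combine the sequence-of-views machinery (\Cref{lemma:unique_installable_views}) with the non-repudiation fact (\Cref{lemma:leave_safety}) that a change $\langle -, q \rangle$ in a valid view demands a prior \dbrbLeave invocation from $q$, together with the temporal fact that $v_p$ was installed no later than $t \leq t'$. Once this is in place, the rest reuses the store-propagation argument already developed for \Cref{theorem:validity}, plus the observation that a leaving process whose $\mathit{stored}$ flag is set is trapped in the while-loop at \Cref{line:finish_wait_updated} until it has delivered.
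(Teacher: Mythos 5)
Your proposal is correct and follows essentially the same route as the paper's proof: extract from the quorum of \msg{deliver} messages that enough correct members of an installed view stored $m$, propagate storage to all later installed views via \Cref{lemma:f+1_store}, conclude delivery via \Cref{lemma:store_deliver}, and handle leavers by showing they belong to an installed view containing the delivery view. The only difference is that where the paper cites \Cref{lemma:totality_helper} to place a leaving process $q$ in a valid view $\supseteq V(t)$, you inline that argument directly from \Cref{lemma:leave_safety} and the comparability of installed views, which is the same underlying reasoning.
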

\begin{proof}
% \Cref{lemma:reconfiguration_done} claims that there is a view $v_{final}$ which includes every possible change that can be proposed in the execution.
% % Therefore, we should prove that every correct process $q \in v_{final}$ eventually delivers $m$.
% Moreover, every correct process $q \in v_{final}$ has a valid view history up to every installable view $v \subseteq v_{final}$ (\Cref{lemma:vd_second,lemma:valid_view_history_after_join}).
% Lastly, every correct process $p \in v_{final}$ eventually sets $cv_p = v_{final}$ (\Cref{lemma:installable_reach_2}).

Suppose that a correct participant $p$ delivers a message $m$ at time $t$.
Process $p$ does this since it received $v.q$ \msg{deliver} messages for $m$ associated with a view $v$.
Moreover, $v \subseteq V(t)$.
By \Cref{lemma:f+1_store}, every correct process $r \in v'$ stores $m$, where $v' \supseteq v$ is an installed view.
Moreover, process $r$ has a valid view history up to every installable view $v_p \subseteq v'$ (\cref{lemma:vd_second,lemma:vd_final,lemma:valid_view_history_after_join}).
Then, by \Cref{lemma:store_deliver}, every correct process $r \in v'$ delivers $m$.
It follows that every correct process $r \in v_{final}$ (\cref{lemma:reconfiguration_done,lemma:install_v_final}) delivers $m$.

Suppose that some correct participant $q$ invokes \dbrbLeave{} operation at time $t' \geq t$.
By \Cref{lemma:totality_helper}, $q \in v_{mr}$ and $v_{mr} \supseteq V(t)$, where $v_{mr}$ is the most recent valid view in the system at time $t'$.
We conclude that there exists an installed view $v_{installed} \supseteq V(t)$ such that $q \in v_{installed}$.
Hence, $q$ eventually delivers $m$ and the lemma holds.
\end{proof}

\begin{theorem}[Non-triviality]
No correct process sends any message before invoking {\normalfont \dbrbJoin{}} or after returning from {\normalfont \dbrbLeave{}} operation.
\end{theorem}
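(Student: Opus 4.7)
The plan is to proceed by a straightforward case analysis over the locations in \Cref{algorithm:dynamic,algorithm:install,algorithm:b} (and in the View Discovery subprotocol) at which a correct process $p$ may emit a protocol message, and to show that each such location is unreachable both strictly before $p$ invokes \dbrbJoin{} and strictly after $p$ returns from \dbrbLeave{}. The proof should essentially be by inspection, relying on the typing rules of \Cref{sec:spec} together with \Cref{lemma:join_safety,lemma:leave_safety}.

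First I would classify the message-emitting statements into three groups: (i) statements in the bodies of the three interface procedures \dbrbJoin{}, \dbrbLeave{}, and \dbrbBroadcastAlone{} (e.g., \Cref{line:send_join}, \Cref{line:send_leave}, \Cref{line:send_prepare_brief}); (ii) statements inside \emph{upon-receipt} or \emph{upon-condition} handlers whose guards require $p$ to act on its local variable $\mathit{cv}$ or on a message that references some valid view containing $p$ (the \msg{rec-confirm}, \msg{propose}, \msg{converged}, \msg{ack}, \msg{commit}, and \msg{deliver} emissions); and (iii) statements inside the install handler of \Cref{algorithm:install} and inside the \emph{new-view} subroutine (e.g., \Cref{line:send_state_update,line:rebroadcast_prepare_brief,line:rebroadcast_commit_brief,line:relay_commit_2_brief}).

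For group (i), the typing rules of \Cref{sec:spec} immediately give the claim: a correct process cannot invoke \dbrbJoin{} unless it is not participating, nor \dbrbLeave{} or \dbrbBroadcast{} unless it is participating, so every message emitted by these bodies is emitted within the intended window. For group (ii), each guard forces $p$ either to reference its own $\mathit{cv}$ (which only transitions at \Cref{line:set_cv} inside the install handler, i.e., never before $p$ has obtained an appropriate \msg{install} message) or to act on a view $v$ with $p \in v$; by \Cref{lemma:join_safety}, $p$ appears in a valid view only after it has itself invoked \dbrbJoin{} (or has been placed in the initial view per \Cref{assumption:init}, in which case $p$ is joined by default). Symmetrically, by \Cref{lemma:leave_safety}, once $p$ has invoked \dbrbLeave{} every strictly more recent valid view excludes $p$, and the guard $\lnot \mathit{can\_leave}$ together with the wait at \Cref{line:totality_validity} guarantees that any pending \msg{prepare}/\msg{commit} retransmissions are completed \emph{before} \Cref{line:left} fires, after which $p$ has returned and ceases executing the algorithm. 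Group (iii) reduces to (ii), since the install handler only runs after R-delivery of an \msg{install} message whose recipient set is $v.\mathit{members} \cup \omega.\mathit{members}$, and \cref{lemma:join_safety,lemma:leave_safety} again constrain the possible $p$'s.

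The main obstacle I anticipate is the View Discovery subprotocol, which intuitively must let a joining process communicate with the system \emph{before} it is a member of any valid view. I would handle this by observing that every invocation of \text{view\_discovery} lies inside either \dbrbJoin{} (\Cref{line:start_view_discovery_1}) or the leaving branch of the install handler (\Cref{line:discover}), both of which are reached only within the window between a \dbrbJoin{} invocation and the corresponding \dbrbLeave{} return; hence even the auxiliary flood/gossip messages of View Discovery are correctly scoped. Combining the three cases yields the non-triviality property.
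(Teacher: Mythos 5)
Your proof is correct and takes essentially the same approach as the paper, whose entire proof consists of the two-sentence observation that a correct process starts sending messages only once it invokes \dbrbJoin{} and halts (hence stops sending) once it leaves the system; your three-way case analysis over the message-emitting statements, the handler guards, and the View Discovery invocations is simply a much more detailed justification of that same by-inspection argument. The only informality worth noting is that guards such as $v = \mathit{cv}$ do not by themselves exclude a non-member (since $\mathit{cv}$ is initialized to $v_0$ at every process), but this is covered by the same implicit convention the paper's proof relies on, namely that a process outside the system does not execute the protocol's handlers at all.
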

\begin{proof}
A correct process $p$ starts sending messages once it invokes a \dbrbJoin{} operation.
Moreover, $p$ halts and stops sending any messages when $p$ leaves the system.
% (i) A correct process $p$ never joins the system.
% Therefore, according to \name it does not send any messages.
% (ii) A correct process $p$ joined the system at time $t$, i.e., $p$ returned from $join()$ operation.
% From this moment, $p$ participates in approving joins and leaves of other processes and broadcasting of messages.
% The only messages sent by $p$ before $t$ are \msg{reconfig} (\Cref{line:send_join,line:send_leave} of \Cref{algorithm:dynamic}) messages and messages asking for current views of the processes in the system.
% (iii) Previous case accounts for messages sent before time $t$.
% When $p$ leaves the system, $p$ does not send any messages since it does not participate in \name anymore.
% First, we analyze a case when $p$ never joins the system.
% Observe that no correct process installs view $v$ such that $p \in v$.
% Hence, no correct process sends \msg{prepare}, \msg{ack}, \msg{commit} and \msg{con-commit} messages to $p$.
% Let us analyze the scenario where $p$ had joined and then left the system at time $t$.
% This means that some view $v'$ such that $p \notin v'$ is installed by some correct process $q$.
% Since an \msg{install} message is reliably disseminated, other correct processes eventually deliver it and install $v'$ (or some other view $v'' \not\ni p$).
% Thus, at some time $t' \geq t$ no correct process considers $p$ as a participant and does not send any \msg{broadcast} messages to $p$.
\end{proof}
% appendix B

%\section{Proof of \name Optimality} 
%\label{sec:impossibility}

\subsection{Impossibility Proofs}
\label{sec:optimality-impossibilities}

\vspace*{-2mm}
As we have already stated, our \name is, in a precise sense, maximal.
Namely, we prove in this section that even if only one process in the system can fail, and it can fail by crashing, then it is impossible to implement a stronger primitive denoted with \emph{Strong Dynamic Byzantine Reliable Broadcast} (\msg{sdbrb}).
\msg{sdbrb} satisfies the same set of properties as \name (\cref{definition:spec,definition:non-triv-spec}), except that validity and totality properties are exchanged for strong validity and strong totality, respectively.

First, we define strong validity and strong totality.
Second, we present the model we consider and prove that neither strong validity nor strong totality can be implemented in an asynchronous system where processes can leave and one process can crash. %(\Cref{sec:optimality-impossibilities}).
% , ensuring that if any correct process in the system delivers message $m$ at time $t$, then every correct process that is a participant at time $t$ delivers $m$.
% We present a Strong Totality, model and the algorithm which uses Strong Totality in order to solve consensus problem.

\begin{definition}[Strong Validity]
If a correct participant $s$ broadcasts a message $m$ at time $t$, then every correct process that is a participant at time $t$ eventually delivers $m$.
% $q \in S(t)$ delivers $m$.
\end{definition}

\begin{definition}[Strong Totality]
% If a correct participant $p$ delivers a message $m$ at time $t$, then every correct participant $q \in S(t)$ delivers $m$.
If a correct process $p$ delivers a message $m$ at time $t$, then every correct process $q$ that did not leave the system before $t$ eventually delivers $m$.
\end{definition}

\noindent\textbf{Model.}
We consider an asynchronous system of $N$ processes that communicate by exchanging messages.
There are no bounds on message transmission delays and processing times.
Moreover, at most one process can crash in any execution.
Process that crashes at any point in time is called \emph{faulty}.
A process that is not faulty is said to be \emph{correct}.
Moreover, \emph{reliable} links connect all pairs of processes. 

The described model $\mathcal{M}$ is the same as the model we assume in \name, except the failure model considered is crash-stop (not arbitrary failure model). 
From that perspective, this model is even stronger.
% Moreover, as we will present in the rest of the section, we analyze simple executions in which at most two processes (out of $N$) leave the system.

Let $R_1$ be a run of some deterministic algorithm $A$, in which some process $p$ has taken action $a$ at time $t$.
Let $R_2$ be another run of $A$, such that (i) $p$'s initial state is the same in $R_1$ and $R_2$, (ii) until time $t$, process $p$ observes the same environment (e.g., sequence of messages delivered) in $R_1$ and $R_2$.
It follows that $p$ also takes action $a$ at time $t$ in run $R_2$.

%\paragraph{Strong Validity impossibility.}
\textbf{\begin{theorem}[Strong Validity Impossibility] No algorithm can implement Strong Validity in $\mathcal{M}$.
\end{theorem}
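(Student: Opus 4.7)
The plan is a standard asynchrony-based indistinguishability argument. I will exhibit two runs of a hypothetical algorithm $A$ satisfying Strong Validity that are indistinguishable to a would-be leaver $q$, yet would have to produce different deliveries in the two runs. The intuition is that in an asynchronous system a process whose only contact with the sender $s$ has been delayed cannot tell whether $s$ is slow or crashed, so if Liveness forces it to return from {\normalfont \dbrbLeave{}} before any message carrying the content of $m$ has arrived, it cannot have acquired the information needed to deliver $m$ --- yet the spec disallows a post-leave delivery.

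Concretely, I would take the initial participant set to be just two processes $s$ and $q$ (permitted by \Cref{assumption:init}) and fix two distinct messages $m \neq m'$. Run $\alpha(m)$: $s$ invokes {\normalfont \dbrbBroadcast{m}} at time $0$ and then crashes at some time $\epsilon > 0$ before sending any protocol message, while the correct process $q$ invokes {\normalfont \dbrbLeave{}} at time $0$. Only $s$ crashes, so this respects the single-failure bound of $\mathcal{M}$. By Liveness, $q$'s leave returns at some finite time $T$. Since no message is ever delivered to $q$ in $[0,T]$ (there are no other processes and $s$ sent nothing before crashing), $q$'s local state at time $T$ is determined entirely by its initial state and is therefore independent of $m$; in particular, the value $x$ that $q$ delivers by $T$ (or $\bot$ if none) does not depend on $m$. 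Moreover, the spec requires that the {\normalfont \dbrbDeliver{m}} callback cannot fire after a leave returns, so the eventual delivery of $q$ in $\alpha(m)$ equals $x$. Run $\beta(m)$: identical to $\alpha(m)$, except that $s$ does \emph{not} crash; instead the adversary postpones every message $s$ sends in $[0,T]$ until strictly after $T$, which is a legitimate finite delay under asynchrony with reliable channels. From $q$'s perspective, $\alpha(m)$ and $\beta(m)$ are indistinguishable on $[0,T]$, so $q$'s leave again returns at $T$ with the same state, delivering the same $x$. But in $\beta(m)$ both $s$ and $q$ are correct and $q$ is a participant at time $0$, so Strong Validity forces $q$ to eventually deliver $m$; hence $x = m$. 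Replaying the whole construction with $m'$ in place of $m$ yields $x = m'$ by the same reasoning, while the $\alpha$-side arguments show $x$ is independent of the sender's chosen message. Therefore $m = m'$, contradicting $m \neq m'$.

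The main delicate point will be justifying the indistinguishability rigorously: I need to argue inductively that, because $q$'s trajectory is a deterministic function of its received messages and $q$ receives nothing in $[0,T]$ in both runs, $q$'s outgoing messages, internal events, and the moment its leave returns coincide in $\alpha(m)$ and $\beta(m)$. Two subsidiary points must also be checked: that $T$ is well-defined (handled by applying Liveness to $\alpha(m)$ \emph{first} and only then defining the delay used in $\beta(m)$, so the delay is a fixed finite value), and that the other spec properties are respected by the constructed runs --- Integrity is vacuous in $\alpha(m)$ because $s$ is faulty there, and Liveness of $s$'s broadcast in $\beta(m)$ holds because $s$'s messages are eventually delivered after $T$. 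These are routine but must be spelled out so that the contradiction is with Strong Validity alone.
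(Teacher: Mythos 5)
Your proof is correct and takes essentially the same route as the paper's: an indistinguishability argument in which the would-be leaver cannot tell a sender that crashed from one whose messages are merely delayed, so Liveness forces it to return from \dbrbLeave{} before any delivery is possible, contradicting Strong Validity. The only difference is in the last step --- the paper directly asserts that $p$ does not deliver $m$ in the crash run, whereas you close that step more carefully by observing that whatever the leaver delivers is independent of the broadcast content and replaying the construction with two distinct messages $m \neq m'$.
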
}
% \needsrev{we should state that algorithm A is implementing \msg{sdbrb}, this is important; Jovan: this holds for every deterministic algorithm, but we can add that it implements \msg{sdbrb}}
% Let $R_1$ be a run of some deterministic algorithm $A$, in which some process $p$ has taken action $a$ at time $t$.
% Let $R_2$ be another run of $A$, such that (i) $p$'s initial state is the same in $R_1$ and $R_2$, (ii) until time $t$, process $p$ observes the same environment (e.g., sequence of messages delivered) in $R_1$ and $R_2$.
% It follows that $p$ also takes action $a$ at time $t$ in run $R_2$.
\begin{proof}
Suppose that process $s$ crashes at time $t_s = 0$ in run $R_1$.
Moreover, some process $p$ invokes a \sdbrbLeave{} operation of \msg{sdbrb} at time $t' > 0$.
Because of the leave liveness of \msg{sdbrb}, $p$ leaves the system at time $t \geq t'$.
% \needsrev{Time ``t'' was already defined above. Is it the same here?; Jovan: Not really. The previous "t" is used in general case. Here we focus on one specific time "t".}

Consider run $R_2$ in which process $s$ broadcasts a message $m$ at time $t_s = 0$, but all messages from $s$ are delayed until time $t^+$.
Moreover, process $p$ invokes a \sdbrbLeave{} operation of \msg{sdbrb} at time $t'$ and until time $t$ observes the exact same environment as in $R_1$.
From $p$'s perspective, run $R_2$ is indistinguishable from $R_1$.
Hence, $p$ leaves the system at time $t$ and does not deliver $m$ which violates the strong validity of \msg{sdbrb}.
\end{proof}

%\paragraph{Strong Totality impossibility.}
\textbf{\begin{theorem} [Strong Totality Impossibility] No algorithm can implement Strong Totality in $\mathcal{M}$.
\end{theorem}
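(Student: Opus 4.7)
The plan is to adapt the argument used for the Strong Validity Impossibility, this time exhibiting two runs that a leaving process $q$ cannot distinguish and in one of which another correct process $p$ does deliver a message $m$, so that strong totality is forced to demand a delivery from $q$ that cannot actually occur.

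I would fix an initial participant set containing at least processes $s$, $p$, $q$, and one further process, and begin with a reference run $R_1$: process $s$ crashes at time $0$ without broadcasting anything; process $q$ invokes \sdbrbLeave{} at some time $t' > 0$; by leave liveness $q$ returns from that invocation at a finite time $t_L$; since nothing is ever broadcast, no process ever delivers anything in $R_1$. I would then build a companion run $R_2$ in which $s$ is correct and invokes \sdbrbBroadcastAlone{} on a message $m$ at time $0$. In $R_2$ the adversary delivers every message exchanged among the non-$q$ participants on the fastest possible schedule, while delaying beyond $t_L$ every message addressed to $q$ that carries information about $m$ (the \msg{prepare} and \msg{commit} messages from $s$ as well as any $m$-related relays from the other processes); every other message received by $q$, in particular the leave-protocol responses, follows exactly the same schedule as in $R_1$.

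The next step is to argue indistinguishability up to $t_L$: $q$ receives the same sequence of messages at the same times in both runs, so its deterministic transcript up to $t_L$ is identical and, in particular, $q$ returns from \sdbrbLeave{} at $t_L$ in $R_2$ as well. By the interface rule of \Cref{sec:spec}, the \sdbrbDeliverAlone{} callback cannot fire at $q$ after $t_L$, so $q$ never delivers $m$ in $R_2$. On the other hand, $p$ is a correct participant at time $0$ that never leaves, so validity of \msg{sdbrb} applied in $R_2$ forces $p$ to deliver $m$ at some time $t_p$; by making the non-$q$ messages arbitrarily fast and by throttling (but not stalling) $q$'s leave-protocol traffic, the adversary can arrange $t_p < t_L$. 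Hence $q$ has not left the system before $p$'s delivery, and strong totality demands that $q$ eventually deliver $m$, contradicting the conclusion that $q$ never triggers \sdbrbDeliverAlone{} in $R_2$.

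The main obstacle I expect is showing that in $R_2$ process $p$ can indeed deliver $m$ without any cooperation from $q$: the argument relies on the hypothesis that at most one process may crash, so the algorithm must tolerate the loss of a single member and cannot force $p$ to wait for $q$ specifically; I would make this precise by invoking the standard quorum assumption on valid views together with liveness of $p$'s broadcast operation once all non-$q$ channels carry messages instantaneously. A secondary subtlety is arranging the two adversarial constraints simultaneously, namely hiding every trace of $m$ from $q$ while keeping $q$'s leave transcript identical to $R_1$; this is always feasible under asynchrony, since message delays are unbounded and the scheduler can act on the basis of a message's payload.
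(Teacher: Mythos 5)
There is a genuine gap at the indistinguishability step. You need $q$'s transcript up to $t_L$ to be literally identical in $R_1$ and $R_2$, and you propose to achieve this by delaying every $m$-related message to $q$ while delivering the leave-protocol responses ``on the same schedule as in $R_1$.'' But the impossibility must hold for \emph{every} algorithm, and an algorithm is free to entangle its leave protocol with the broadcast state: a process that has stored $m$ may answer $q$'s leave request with a reply that piggybacks $m$, or may withhold its approval until $q$ acknowledges $m$. In $R_2$ the set of processes that enabled $p$'s delivery at $t_p < t_L$ must intersect (by one-crash tolerance) any set $q$ has to hear from in order to complete its leave, so at least one of $q$'s leave-protocol responses necessarily has a different payload than its $R_1$ counterpart. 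The scheduler can delay such a message but cannot rewrite it; delaying it only postpones the completion of $q$'s leave (leave liveness demands eventual completion, not completion by $t_L$), after which $q$ learns $m$ and can deliver it before returning. Against such an algorithm your two runs produce no contradiction, so the claim that reconciling the two adversarial constraints ``is always feasible under asynchrony'' is precisely the step that fails.

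This is why the paper's proof avoids message scheduling entirely. It fixes a single run in which the leaving process completes its leave \emph{without delivering anything} and some other process delivers $m$ strictly afterwards, and then perturbs only the timing of the leaver's final, purely local step --- the triggering of \textit{leaveComplete} --- pushing it past the delivery time $t_m$. No message content or schedule changes, so every process behaves identically in both runs; yet the predicate ``has left before $t_m$'' flips for the leaver, and strong totality now demands from it a delivery that its unchanged execution never performs. The fragility being exploited lives in the definition of leaving (a local return event whose timing is unobservable to the rest of the system), not in the adversary's ability to hide $m$ from a leaving process; your argument targets the latter, which a sufficiently entangled algorithm can defend against.
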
}
\begin{proof}
Recall that a correct process $p$ triggers an internal event \textit{leaveComplete} that signals that $p$ left the system and this process halts.
We introduce an internal event \textit{almostLeaveComplete}, which any correct process $p$ triggers before actually triggering \textit{leaveComplete}.
Intuitively, \textit{almostLeaveComplete} represents a symbolic action, the penultimate step of any leaving process $p$, which does \emph{not} execute atomically with \textit{leaveComplete}.

Consider a run $R_1$.
Suppose that a correct process $p$ invokes a \sdbrbLeave{} operation of \msg{sdbrb} at time $t' > 0$.
Because of the leave liveness of \msg{sdbrb}, $p$ leaves the system at time $t \geq t'$ without having delivered any message.
Specifically, $p$ triggers \textit{almostLeaveComplete} at time $t_a$ and then triggers \textit{leaveComplete} at time $t$.
Suppose that some correct process $q$ delivers a message $m$ at time $t_m > t$ in $R_1$.

Consider a run $R_2$.
Suppose that a correct process $p$ invokes a \sdbrbLeave{} operation of \msg{sdbrb} at time $t' > 0$.
Suppose that run $R_2$ is the same as run $R_1$, including the internal event at time $t_a$, except that process $p$ triggers \textit{leaveComplete} at time $t_m^+ > t_m$.
Run $R_2$ is indistinguishable from $R_1$ to every process.
Hence, $p$ does not deliver message $m$, whereas process $q$ delivers $m$ which violates the strong totality of \msg{sdbrb}.
\end{proof}

\end{document}